\definecolor{catcol}{HTML}{EAF7FB}
\definecolor{concol}{HTML}{FAF5DC}
\definecolor{abscol}{HTML}{E0CFE8}
\newcounter{algline}
\newcommand{\noadjust}{\hspace{1sp}} %
\newtheorem{theorem}{Theorem}
\newtheorem{proposition}[theorem]{Proposition}
\newtheorem{corollary}[theorem]{Corollary}
\newtheorem{lemma}[theorem]{Lemma}
\theoremstyle{definition}
\newtheorem{definition}[theorem]{Definition}
\newtheorem{example}[theorem]{Example}
\newtheorem{remark}[theorem]{Remark}
\newtheorem{assumption}[theorem]{Assumption}
\newtheorem{questions}[theorem]{Questions}
\newtheorem{construction}[theorem]{Construction}
\newtheorem{fact}[theorem]{Fact}
\Crefname{assumption}{Assumption}{Assumptions}
\crefname{assumption}{assumption}{assumptions}
\Crefname{construction}{Construction}{Constructions}
\crefname{construction}{construction}{constructions}
\Crefname{fact}{Fact}{Facts}
\crefname{fact}{fact}{facts}
\newcommand{\mahsa}[1]{\color{green!50!black}$\clubsuit$ Mahsa: #1 \color{black}\xspace}
\newcommand{\sam}[1]{\color{blue} Sam: #1 \color{black}\xspace}
\newcommand{\quentin}[1]{\color{teal} #1 (Quentin) \color{black}\xspace}
\newcommand{\daniela}[1]{\color{magenta} Daniela: #1 \color{black}\xspace}
\Crefname{ALC@unique}{Line}{Lines}
\newcounter{myalg}
\renewcommand{\algorithmiccomment}[1]{\texttt{\small // \textcolor{blue}{#1}}}
\newcommand{\MATCH}[1]{\STATE \textbf{match} {#1} \textbf{with} \begin{ALC@g}}
\newcommand{\CASE}[2][]{\STATE \textbf{some} {#2}\textbf{:} \ifx\\#1\\\else

\hspace{1em}\algorithmiccomment{#1}\fi \begin{ALC@g}}
\newcommand{\ENDCASE}{\end{ALC@g}}
\newcommand{\ENDMATCH}{\end{ALC@g}}
\newcommand{\IFF}[2]{\STATE \algorithmicif{} {#1} \algorithmicthen{} {#2}}
\newenvironment{problem}[1][htb]{%
  \renewcommand{\ALG@name}{Problem}%
  \begin{algorithm}[#1]%
  }{\end{algorithm}
}
\Crefname{problem}{Problem}{Problems}
\crefname{problem}{problem}{problems}
\newcommand{\NN}{\mathbb{N}}
\newcommand{\ZZ}{\mathbb{Z}}
\newcommand{\QQ}{\mathbb{Q}}
\newcommand{\KK}{K}
\newcommand{\OK}{\mathcal{O}_{\KK}}
\newcommand{\CC}{\mathbb{C}}
\newcommand{\adjoined}[1]{\ZZ\mathopen{}\left[#1\right]\mathclose{}}
\newcommand{\adjoinedff}[1]{\QQ\mathopen{}\left(#1\right)\mathclose{}}
\newcommand{\cat}[1]{\mathbf{#1}}
\newcommand{\C}{\cat{C}}
\newcommand{\D}{\cat{D}}
\newcommand{\I}{\cat{I}}
\renewcommand{\O}{\cat{O}}
\newcommand{\In}{\textsf{in}}
\newcommand{\St}{\textsf{st}}
\newcommand{\Out}{\textsf{out}}
\newcommand{\Auto}[1]{\cat{Auto}\mathopen{}\left({#1}\right)\mathclose{}}
\DeclareMathOperator{\Min}{Min}
\DeclareMathOperator{\Obs}{Obs}
\DeclareMathOperator{\Reach}{Reach}
\newcommand{\Set}{\cat{Set}}
\newcommand{\Top}{\cat{Top}}
\newcommand{\Meas}{\cat{Meas}}
\newcommand{\KVec}[1]{\cat{Vec}_{#1}}
\renewcommand{\Vec}[1]{\KVec{#1}}
\newcommand{\Mod}[1]{\cat{Mod}_{#1}}
\newcommand{\TFMod}[1]{\Mod{#1}^{\mathit{tf}}}
\newcommand{\FreeMod}[1]{\cat{FreeMod}_{#1}}
\newcommand{\klto}{\mathrel{\ooalign{\hfil$\mapstochar\mkern5mu$\hfil\cr$\to$\cr}}}
\newcommand{\func}[1]{\mathcal{#1}}
\newcommand{\A}{\func{A}}
\newcommand{\Ainit}{\A_{\mathit{init}}}
\newcommand{\Afinal}{\A_{\mathit{final}}}
\newcommand{\B}{\func{B}}
\newcommand{\F}{\func{F}}
\newcommand{\liftF}{\overline{\func{F}}}
\newcommand{\G}{\func{G}}
\newcommand{\Fflat}{\func{F}_{\flat}}
\renewcommand{\L}{\func{L}}
\newcommand{\sem}[1]{\left\llbracket #1 \right\rrbracket}  %
\newcommand{\flip}{\rotatebox[origin=c]{180}{!}}
\newcommand{\unit}{\eta}
\newcommand{\counit}{\varepsilon}
\DeclareMathOperator{\dom}{dom}
\DeclareMathOperator{\cod}{cod}
\DeclareMathOperator{\ob}{ob}
\DeclareMathOperator{\mor}{mor}
\newcommand{\id}{1}
\newcommand{\FreeCat}[1]{\mathsf{Free}(#1)}
\newcommand{\Fun}{\cat{Fun}}
\newcommand{\E}{\mathcal{E}}
\newcommand{\M}{\mathcal{M}}
\newcommand{\im}{\mathrm{im}}
\newcommand{\Surj}{\mathrm{Surj}}
\newcommand{\Inj}{\mathrm{Inj}}
\newcommand{\powser}[2]{#1\langle\!\langle #2 \rangle\!\rangle}
\newcommand{\powserw}[1]{\powser{#1}{\Sigma^*}}
\newcommand{\frMod}[2]{{#1}^{#2}}
\DeclareMathOperator{\sat}{sat}
\newcommand{\localize}[1][R]{{#1}^{-1}}
\newcommand{\norm}[1]{\mathcal{N}(#1)}
\newcommand{\inver}[1]{#1^{\times}}
\DeclareMathSymbol{:}{\mathpunct}{operators}{"3A} %
\newcommand{\isdef}{\stackrel{\mathrm{def}}{=}}
\newcommand{\oracle}[1]{\textsc{Oracle}_{#1}}
\newcommand{\wordImage}[2]{{#1}(\triangleright {#2})}
\NewDocumentCommand{\framecolorbox}{oommm}
 {%
  \IfValueTF{#1}
   {%
    \IfValueTF{#2}
     {\fcolorbox{#3}{#4}{\makebox[#1][#2]{#5}}}
     {\fcolorbox{#3}{#4}{\makebox[#1]{#5}}}%
   }
   {\fcolorbox{#3}{#4}{#5}}%
 }
 \def\mahsa#1{{\color{green!40!black}Mahsa: #1\color{black}}} %
 \newcommand{\yellowblock}[1]{
\noindent\fcolorbox{white}{concol}{
\begin{minipage}{0.48\textwidth}
#1
\end{minipage}}~~~~~
}
\newcommand{\blueblock}[1]{
\fcolorbox{white}{catcol}{
\begin{minipage}{0.48\textwidth}
#1
\end{minipage}}
}
\newcommand{\purpleblock}[1]{
\fcolorbox{white}{abscol}{
\begin{minipage}{0.48\textwidth}
#1
\end{minipage}}
}
\make@display@tag\ltx@label{#1}}}}}}
\newcommand{\coloredblock}[2]{
\noindent\begin{tcolorbox}[colback=#1,%
                  colframe=white,%
                  width=0.5\textwidth,%
                  boxrule=0pt,
                  breakable,
                  oversize,
                  left=0mm,
                  right=0mm,
                  top=0mm,
                  bottom=0mm,
                  beforeafter skip balanced=0.2\baselineskip,
            ]
#2
\end{tcolorbox}%
}
\newtcbox{\mybox}[1]{on line,colback=#1,colframe=#1,boxsep=0pt,boxrule=0pt,left=2pt,right=2pt,top=2pt,bottom=1pt,}
\newcommand{\phantomspacing}{\vphantom{Aq}}
\newcommand{\inlineblock}[2]{\mybox{#1}{\phantomspacing{}#2}}
\renewcommand{\yellowblock}[1]{\coloredblock{concol}{#1}}
\renewcommand{\blueblock}[1]{\coloredblock{catcol}{#1}}
\renewcommand{\purpleblock}[1]{\coloredblock{abscol}{#1}}
\begin{document}

\title{Learning Weighted Automata over Number Rings, Concretely and Categorically }%

\author{
\IEEEauthorblockN{Quentin Aristote\IEEEauthorrefmark{1}\orcidlink{0009-0001-4061-7553}, Sam van Gool\IEEEauthorrefmark{2}, Daniela Petri\c{s}an\IEEEauthorrefmark{1}\orcidlink{0000-0001-9712-930X} and Mahsa Shirmohammadi\IEEEauthorrefmark{2}\orcidlink{0000-0002-7779-2339} 
}
\IEEEauthorblockA{\IEEEauthorrefmark{1}Université Paris-Cité, CNRS, Inria, IRIF \qquad \IEEEauthorrefmark{2}Université Paris-Cité, CNRS, IRIF \\ F-75013, Paris, France}
}

\newif\ifhidecomments
\hidecommentstrue
\ifhidecomments
  \renewcommand{\sam}[1]{}
  \renewcommand{\mahsa}[1]{}
  \renewcommand{\daniela}[1]{}
  \renewcommand{\quentin}[1]{}
\fi

\maketitle

\begin{abstract}
We develop a generic reduction procedure for active learning problems. Our approach is inspired by a recent polynomial-time reduction  of the exact learning problem for weighted automata over integers to that for weighted automata over rationals (Buna-Marginean et al. 2024). Our procedure improves the efficiency of a category-theoretic automata learning algorithm, and poses new questions about the complexity of its implementation when instantiated to concrete categories. 

As our second main contribution, we address these complexity aspects in the concrete setting of learning weighted automata over number rings, that is, rings of integers in an algebraic number field. Assuming a full representation of a number ring $\OK$, we obtain an exact learning algorithm of $\OK$-weighted automata that runs in polynomial time in the size of the target automaton, the logarithm of the length of the longest counterexample, the degree of the number field, and the logarithm of its discriminant. Our algorithm produces an automaton that has at most one more state than the minimal one, and we prove that doing better requires solving the principal ideal problem, for which the best currently known algorithm is in quantum polynomial time.
\end{abstract}

\begin{IEEEkeywords}
active learning, weighted automata, number rings, category theory, functorial automata, Fatou extension, computational algebraic number theory.
\end{IEEEkeywords}

\section{Introduction}
\label{sec:intro}

A celebrated result in computational learning theory is Angluin's algorithm for  learning deterministic finite-state automata (DFAs) within the minimally adequate teacher (MAT) framework~\cite{Angluin87:Learning-regular-sets-from-queries-and-counterexamples}.
Angluin's algorithm, also known as the $\mathsf{L}^*$ algorithm,  learns DFAs for unknown target regular languages by iteratively interacting with an oracle using  \emph{membership} and \emph{equivalence} queries.
    A membership query asks whether a specific word belongs to the target language, whereas an equivalence query asks whether a hypothesis automaton recognizes the target language. In response to an equivalence query, the oracle either accepts the hypothesis as correct or provides a counterexample: a word on which the target language disagrees with the hypothesis language.    
In the MAT model,  a given learning algorithm is \emph{efficient} if its running
time is polynomial in the minimal representation of the target concept and the length of the longest
counterexample. The running time is an upper bound on the query complexity, that is, the total number of membership and equivalence queries. 
In contrast to the computational intractability of the problem of \emph{passively} learning DFAs~\cite{valiant1984},  $\mathsf{L}^*$  is an efficient procedure. %

Extensions of Angluin-style model learning have been applied for uncovering errors in software and hardware systems, including bank cards, network  protocols, and legacy software~\cite{vaandrager2017model,PeledVY02,higuera2010,BergGJLRS05,ChaluparPPR14,Leucker2006}. Such  
 applications are constrained by  limitations on the expressiveness of learning procedures. 
 This has motivated various extensions of the algorithm to more expressive models, such as tree, timed, register and B\"uchi automata,  weighted automata  over  fields, and sequential deterministic transducers~\cite{MarusicW15,ShahbazG09:Inferring-Mealy-Machines, BolligHKL09:Angluin-style-learning-of-NFA, HowarSJC:Inferring-canonical-register-automata,AartsFKV15:Learning-Register-Automata-with-Fresh-Value-Generation,
 Vilar96:Query-learning-of-subsequential-transducers,MoermanSSKS17:Learning-nominal-automata,AngluinF16:Learning-regular-omega-languages,BeimelBBKV00,buna-margineanLearningPolynomialRecursive2024}. 
Motivated by  structural similarities in these extensions, several category-theoretic frameworks for learning have been  introduced~\cite{UrbatS20:Automata-Learning:-An-Algebraic-Approach, HeerdtSS20:Learning-Automata-with-Side-Effects, BarloccoKR19:Coalgebra-Learning-via-Duality, colcombetLearningAutomataTransducers2021}.

This  paper  combines  computational and categorical aspects of automata  learning. 
Our main focus is the exact learning of $R$-valued rational functions $L\,:\, \Sigma^* \to R$, that is, those computable by finite automata weighted over some ring~$R$. This generalizes the classical efficient  algorithm of~\cite{BeimelBBKV00} for learning weighted automata over fields, itself one of the most well-known extensions of~$\mathsf{L}^*$.
In this setting, a membership query is adapted so that the oracle  provides the value $L(w)$ for a given word $w \in \Sigma^*$.

The work of~\cite{vanheerdtLearningWeightedAutomata2020} extends~\cite{BeimelBBKV00} by developing an active learning procedure for weighted automata over (semi)rings satisfying certain computability assumptions, including in particular all 
 principal ideal domains (PIDs). For PIDs, the termination of the proposed procedure relies on Noetherian properties, which does not yield a bound on its computational and query complexity.

A more recent work~\cite{buna-margineanLearningPolynomialRecursive2024} reexamines the problem
 of learning over PIDs and reduces it to learning over 
 their fields of fractions. The reduction places the 
 problem  in polynomial time for efficiently computable PIDs, %
 such as $\mathbb{Z}$ and $\mathbb{Q}[x]$. 
 In the setting of $\ZZ$, 
 the reduction acts as a \emph{translator} between a learning 
 procedure for~$\QQ$ and the oracle for $\ZZ$-learning, right before any equivalence query. It takes the hypothesis automaton from the learning procedure and, if possible, converts it into an equivalent minimal $\ZZ$-weighted automaton, as the %
 oracle only accepts such inputs. The translator then returns either the oracle's response if there is an equivalent $\ZZ$-weighted automaton or a counterexample showing the hypothesis is not $\ZZ$-valued, and thus not equivalent to the target.
This reduction also provides an 
efficient construction witnessing that $\ZZ$ is a strong Fatou ring. Here, a ring $R$ with field of fractions $\KK$ is \emph{weak Fatou} if any $R$-weighted  function computable by a finite $\KK$-weighted automaton is also computable by a finite $R$-weighted automaton;  
it is \emph{strong Fatou} if, additionally,  the canonical \emph{minimal} $R$-weighted automaton  has the same number of states as the equivalent minimal $\KK$-weighted automaton~\cite[Ch.~7]{berstelNoncommutativeRationalSeries2010}.

Questions around the Fatou property of variants of weighted automata are  discussed, for example, in the  monographs~\cite{salomaa2012automata,berstelNoncommutativeRationalSeries2010,martin2004grammars}, and in the seminal 1971 textbook of Paz~\cite{paz2014introduction}, who posed the question whether or not the minimal probabilistic automaton computing 
a $\QQ$-valued rational function can be defined over~$\QQ$. 
This question was answered  negatively by a  counterexample constructed in~\cite{ChistikovKMSW16,ChistikovKMSW17}, 
showing that such minimal probabilistic automata
require real algebraic numbers in transition entries.

A closely-related  question is the characterization of minimal weighted automata over number rings, where the transition entries are \emph{algebraic integers}, arguably the most   natural generalization of automata weighted over integers.
A first question is: \emph{are rings of algebraic integers always strong Fatou?} The following example shows that this is not the case; a detailed proof is in~\Cref{app:details-for-intro}.

 \begin{restatable}{example}{exNonFatouNumberRing}
  \label{ex:non-Fatou-number-ring}
  Consider the number ring $R = \adjoined{i\sqrt{5}}$ and its field of fractions $\KK = \adjoinedff{i\sqrt{5}}$.
  The $3$-state $R$-weighted automaton over the alphabet $\{a,b\}$ given in~\Cref{fig:example-wa} is state-minimal but has an equivalent $2$-state $\KK$-weighted automaton.
  \end{restatable}

\begin{figure}[h]
    \centering
    \vspace{-10pt}
    \begin{subfigure}{\linewidth}
        \centering
        \includegraphics[]{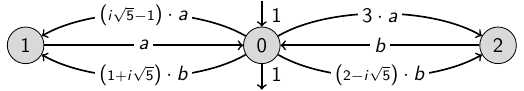}
        \caption{A minimal $\adjoined{i\sqrt{5}}$-weighted automaton.}
        \label{fig:example-wa:ok}
    \end{subfigure}
    
    \medskip
    
    \begin{subfigure}{\linewidth}
        \centering
        \includegraphics[]{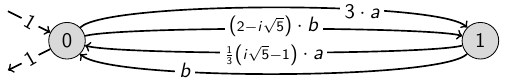}   
        \caption{A minimal $\adjoinedff{i\sqrt{5}}$-weighted automaton.}
        \label{fig:example-wa:k}
    \end{subfigure}
    \caption{Two equivalent $\adjoinedff{i\sqrt{5}}$-weighted automata.}
    \label{fig:example-wa}
\end{figure}

In this paper, we show  that the 
strong Fatou property 
\emph{almost} holds:  the minimal 
automaton over a number ring has at most one more state than  a minimal automaton over the corresponding  number  field. 
This follows from one of our main results: an efficient  exact learning and almost-minimization  algorithm for weighted automata over number rings.

The problem of designing learning algorithms for $R$-weighted automata beyond the case when $R$ is a field can be approached from a completely different angle, namely, by instantiating existing generic learning algorithms such as those provided in~\cite{UrbatS20:Automata-Learning:-An-Algebraic-Approach, colcombetLearningAutomataTransducers2021}. For example, \cite{colcombetLearningAutomataTransducers2021} builds on a view of automata as machines processing some input (such as words over a finite alphabet) and producing some effect (such as values in a semiring). 
This view can be formalized category-theoretically %
by seeing automata as functors that are valued in an output category, which is chosen to model a particular effect~\cite{colcombetpetrisan2017}. For instance, in the case of $R$-weighted automata, the output category is that of free $R$-modules. \cite{colcombetLearningAutomataTransducers2021} gives a generic learning algorithm $\mathsf{FunL}^*$ which is correct and terminates whenever the output category satisfies certain assumptions. These assumptions hold in particular for the category of vector spaces over a field $\KK$, and in this situation the  $\mathsf{FunL}^*$ algorithm instantiates to the existing learning algorithm of $\KK$-weighted automata~\cite{bergadanoLearningBehaviorsAutomata1994}. 

In the case of an arbitrary ring $R$, the category of free $R$-modules typically does not  satisfy the assumptions required by $\mathsf{FunL}^*$. This can be partially remedied by moving to a larger category: the category of all $R$-modules. This leads to the definition of a more general notion of automata, which we call \emph{$R$-modular automata} in this paper, to which $\mathsf{FunL}^*$ readily applies. Even if we obtain a formal minimal $R$-modular automaton in this way, a new difficulty arises: this automaton does not always correspond to an actual $R$-weighted automaton. Another drawback of the generic approach is its complexity. Even if we manage to move from an $R$-modular to an $R$-weighted automaton, $\mathsf{FunL}^*$ might run in exponential time. The concrete representations of the modules involved and the ensuing complexity analysis require additional care. 

In this paper we strive to reconcile the abstract and the concrete perspectives. We next summarize our main contributions.
 
\subsection{Overview of Main Results}
\label{sec:overview}

We briefly introduce the necessary notions required to present a high-level overview of our contributions. 
A complex number $\alpha$ is \emph{algebraic} if it is the root of a  non-zero polynomial in~$\QQ[x]$. 
The \emph{minimal polynomial} of~$\alpha$ %
is the monic polynomial of least degree in~$\QQ[x]$ that has $\alpha$ as a root. The number~$\alpha$ is called an \emph{algebraic integer}
if its minimal polynomial is in $\ZZ[x]$. 
A \emph{number field}~$\KK$  is a finite degree field extension of~$\QQ$, which, by the primitive element theorem, can be obtained by adjoining some algebraic number~$\alpha$ to $\QQ$. The \emph{degree} of $\KK$ is the degree of the minimal polynomial of $\alpha$.
A \emph{number ring} is the subring  of a number field comprised of all its algebraic integers. We denote by~$\OK$ the number ring of $\KK$. 

The main algorithmic question we address is:
\emph{Can a target $\OK$-automaton be learned in polynomial-time?}
 This question is much more challenging than the analogous one for $\ZZ$-weighted automata, due to the structural differences between $\ZZ$ and $\OK$. For instance, in contrast to $\ZZ$, $\OK$  is not in general a principal ideal domain, but only a Dedekind domain: while  a $\ZZ$-submodule of $\ZZ^n$ always has a basis of size at most $n$,  an $\OK$-submodule of $\OK^n$ may not have a basis. 
  In particular, there might be no basis for the usual forward or backward modules  explored while learning.  This  underlines one of the primary difficulties in developing learning procedures over~$\OK$. 
  We can nevertheless prove:

\begin{restatable}{theorem}{theoM}
\label{theo:OK}
Given a full representation of $\OK$, exact learning  of $\OK$-weighted automata is within polynomial time in the size of the target automaton, the logarithm of the length of the longest counterexample, the degree of $\KK$ and the logarithm of its discriminant.   
\end{restatable}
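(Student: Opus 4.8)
The plan is to instantiate the generic reduction in the concrete category $\Vec{\KK}$ of $\KK$-vector spaces and then carry out the attendant computational number theory. Concretely, I would run the efficient learner for $\KK$-weighted automata~\cite{BeimelBBKV00} (equivalently $\mathsf{FunL}^*$ in $\Vec{\KK}$) against a \emph{simulated} field oracle, and interpose, just before each equivalence query, a \emph{translation} step: given the current $\KK$-minimal hypothesis $\A$ of dimension $n$ (which is at most the number of states of the target), either (a) convert $\A$ into an equivalent $\OK$-weighted automaton and forward it to the genuine $\OK$-oracle, or (b) return a word $w$ with $L_\A(w)\notin\OK$ — hence $L_\A\ne L$ — as a counterexample to the field learner. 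Membership queries are answered by the $\OK$-oracle and passed back verbatim, as values in $\OK\subseteq\KK$. Since the field learner, with Rivest--Schapire counterexample handling as in~\cite{buna-margineanLearningPolynomialRecursive2024}, issues polynomially many queries and runs in time polynomial in $n$ and the logarithm of the longest counterexample, the whole procedure is efficient as soon as each translation step is.

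For the translation step I would work with the $\KK$-minimal $\A$: transition maps $\mu(a)\in\KK^{n\times n}$, initial vector $\rho\in\KK^n$, output functional $\tau\in(\KK^n)^\ast$, reachable space all of $\KK^n$. Let $M\subseteq\KK^n$ be the $\OK$-module spanned by $\{\rho\,\mu(w)\}$. If $L_\A$ fails to be $\OK$-valued, a denominator-bounding argument yields a word $w$ of length $O(n)$ with $L_\A(w)\notin\OK$, giving case (b). Otherwise, using that number rings are weakly Fatou, $M$ is a finitely generated full $\OK$-lattice, stable under the $\mu(a)$ and containing $\rho$; I would compute it as the fixpoint of $\OK$-module closure under the $\mu(a)$, shown to stabilise in polynomially many rounds, and check that $\tau$ is $\OK$-valued on a generating set. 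By the structure theorem over the Dedekind domain $\OK$, $M\cong\OK^{n-1}\oplus\mathfrak a$ for a fractional ideal $\mathfrak a$; writing $\mathfrak a=(\alpha_1,\alpha_2)$ produces $n+1$ generators of $M$, and re-expressing $\rho$, the $\mu(a)$ and $\tau$ in them gives an $(n+1)$-state $\OK$-weighted automaton computing $L_\A$. One gets down to $n$ states only when $\mathfrak a$ is principal and a generator is exhibited, i.e. on solving the principal ideal problem — this is the source of the at-most-one-extra-state bound and of its conditional optimality (cf.\ \Cref{ex:non-Fatou-number-ring}). Everything here is pseudo-basis / pseudo-Hermite normal form arithmetic over $\OK$.

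The complexity bound then has two ingredients. First, the field learner makes polynomially many queries with hypotheses of dimension $\le n$; reusing its analysis I would bound the bit-sizes of all the data $\mu(a),\rho,\tau$ polynomially throughout, which is also where the dependence on the longest counterexample becomes logarithmic rather than linear, exactly as for $\ZZ$. Second, each translation step is a bounded number of pseudo-Hermite normal form computations for $\OK$-submodules of $\KK^n$, ideal products and inversions, and $\OK$-linear solves; given a full representation of $\OK$, each of these runs in time polynomial in $d=[\KK:\QQ]$, $\log|\mathrm{disc}(\KK)|$, and the input bit-size, by standard computational algebraic number theory. The product of the two gives the claimed bound, and the automaton returned on success has at most $n+1$ states.

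The step I expect to be the main obstacle is controlling the non-principal, Dedekind-not-PID structure, for both correctness and complexity. One must: (i) prove the fixpoint computing the lattice $M$ terminates after polynomially many rounds and with polynomially bounded data, carrying pseudo-bases (an $n$-tuple of vectors together with coefficient ideals) throughout and bounding the norms of those ideals through every closure step and change of basis; (ii) establish the precise integrality criterion for the existence of an $\OK$-form together with the sharp $n+1$ bound, including the reduction showing that removing the last state in general solves the principal ideal problem; and (iii) pin down the bit-size growth of the intermediate $\KK$-hypotheses so that the overall dependence on counterexample length is genuinely logarithmic. Instantiating the generic reduction and invoking the per-operation complexity bounds from computational algebraic number theory should, by contrast, be comparatively routine.
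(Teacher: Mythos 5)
Your proposal tracks the paper's route almost exactly: interpose a translation step before each equivalence query, compute the forward $\OK$-module of the $\KK$-minimal hypothesis, represent it by a pseudo-basis via pseudo-HNF, use the decomposition $M\cong\OK^{n-1}\oplus\mathfrak{a}$ and a two-element generating set of $\mathfrak{a}$ to conjugate down to an $(n+1)$-state $\OK$-automaton, and tie the possible removal of the extra state to the principal ideal problem. However, there is a genuine gap at the one step that carries the whole complexity theorem. You compute the forward module as ``the fixpoint of $\OK$-module closure under the $\mu(a)$, shown to stabilise in polynomially many rounds'', and then list exactly this termination bound as an open obstacle, without supplying the mechanism that resolves it. A naive one-pass closure can take exponentially many strictly increasing steps: already over $\ZZ$ the rank-$1$ module $\ZZ$ sits atop chains $\langle 2^k\rangle \subsetneq \cdots \subsetneq \langle 1\rangle$ of arbitrary length, and during the rank-deficient phase nothing controls how often the module can grow without the rank increasing. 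The paper's essential device (inherited from the $\ZZ$ case, abstracted in \Cref{alg:reduction} via \Cref{lemma:alg:reduction:second_while_loop_invariant}, and realised concretely in \Cref{alg:compute-OK-generators}) is the \emph{two-pass} search: first add only words that raise the $\KK$-rank (at most $n$ rounds, words of length $<n$, hence polynomially bounded entries), and only afterwards, at fixed full rank, add words that enlarge the module. The chain-length bound $\log\norm{\mathfrak{d}}$ of \Cref{lem:lengthdchain} applies only once the module is full-rank, and $\norm{\mathfrak{d}}$ is polynomially bounded only because the first pass already confined the word lengths. Without this split your ``polynomially many rounds'' claim is unsubstantiated, and this is precisely where the reduction stops being routine.

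A smaller imprecision: the counterexample check will not in general produce a witness of length $O(n)$. In the paper the witness surfaces inside the closure loop, so its length is bounded by the number of closure rounds, which is polynomial in $n$ and the bit-size of $\A$ (through $\log\norm{\mathfrak{d}}$), but not $O(n)$: over $\ZZ$, a one-state automaton with $\mu(a)=1/2$ and $\tau=2^{B}$ has its first non-integral value at length $B+1$.
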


Our learning algorithm for $\OK$-weighted automata relies on a reduction procedure for $\KK$-weighted automata akin to~\cite[Algorithm~6]{buna-margineanLearningPolynomialRecursive2024}, given in \Cref{alg!make-automaton-integral}.
The $\OK$-automaton obtained by  our learning algorithm may not be minimal, but is \emph{almost-minimal}: it has at most one more state than the minimal one.
Furthermore, we show in \Cref{{prop:minimal-pip-hard}}  that deciding minimality of $\OK$-weighted automata allows us to decide the \emph{principal ideal problem}, which has, to the best of our knowledge, \emph{quantum polynomial time complexity}~\cite{biasseEfficientQuantumAlgorithms2015}.

\looseness=-1
The proof of correctness of \Cref{alg!make-automaton-integral} can be obtained at a concrete level, but it also relies on some more abstract principles. Another contribution of this paper is  the generic \Cref{alg:reduction} that generalizes the reduction procedure of $\QQ$-weighted automata to $\ZZ$-weighted automata~\cite{buna-margineanLearningPolynomialRecursive2024}.  Abstracting away from the concrete weighted automata, we can work at more general level with automata represented as functors, following~\cite{colcombetpetrisan2017}. We consider automata valued either in a category $\C$ (that will be instantiated to a category of $R$-modules), or in a category $\D$ (that will be instantiated to the category of $K$-vector spaces). In practice, we want to consider categories $\C$ and $\D$ such that computations are easier in $\D$. The aim of \Cref{alg:reduction} is to transform a $\D$-valued automaton $\A$ into a minimal $\C$-valued one whenever the language accepted by $\A$ factors through $\C$, or otherwise provides some word that witnesses this is not the case. Assuming $\C$ and $\D$ are related by a functor satisfying some reasonable assumptions, we prove in \Cref{thm:alg:reduction:correctness} that \Cref{alg:reduction} is correct and reduces the problem of learning $\C$-automata to that of learning $\D$-automata.

\subsection{Guideline for reading the paper}
After a commutative algebra primer in \Cref{sec:comm-alg-primer}, we explain the categorical view on minimal weighted automata in \Cref{sec:weighted-cat}.  \Cref{sec:learning-and-reduction,sec:categorical-reduction} apply this to obtain a generic algorithm for reducing learning problems. Finally, in~\Cref{sec:computational}, we obtain our main theorem on number rings.

\looseness=-1
We adopt the following colour convention to help readers with various backgrounds navigate through the paper. We describe  \inlineblock{concol}{in yellow} the concrete concepts on weighted automata, \inlineblock{catcol}{in blue} those pertaining to the relevant categories of modules, and \inlineblock{abscol}{in purple} concepts written in arbitrary categories. At times, the same concept is described in equivalent ways at different levels of abstraction, in order to ease the translation of terminology used across different communities.
Readers only interested in the complexity aspects of our results may first skip the more abstract  paragraphs in \Cref{sec:weighted-cat,sec:learning-and-reduction}, and \Cref{sec:categorical-reduction} altogether.

\section{Primer on commutative algebra}\label{sec:comm-alg-primer}
We recall a few definitions and facts about commutative algebra that we will need throughout this paper. Further details are given in \Cref{app:preliminaries:comm-algebra}; also see, e.g.,~\cite[Ch.~4, 6, 10]{broue2024}.

Throughout this paper, \emph{ring} means commutative ring with unit. The notion of \emph{module} generalizes the idea of a vector space, which has scalars in a field, to having scalars in an arbitrary ring.
\emph{Free} modules are essential for the algebraic analysis of $R$-weighted automata. For any (possibly infinite) set $Q$, a \emph{finite $R$-linear combination} is an expression $\sum_{i=1}^n r_i q_i$, with $r_i \in R$ and $q_i \in Q$ for each $1 \leq i \leq n$; said otherwise, it is a finitely supported function from $Q$ to $R$. The set of finite $R$-linear combinations, equipped with the expected addition and $R$-action, is a first example of a free $R$-module; we denote it by $\frMod{R}{Q}$. The elements $1_R \cdot q$, for $q \in Q$, constitute a \emph{canonical basis} for $\frMod{R}{Q}$; we identify each $q$ with its corresponding basis element. The \emph{rank} of $\frMod{R}{Q}$ is the cardinality of $Q$. 

For any $R$-module $M$ and $(x_q)_{q \in Q}$ a $Q$-indexed set of elements of $M$, there exists a unique morphism $\phi \colon \frMod{R}{Q} \to M$ that sends 
$q$ to $x_q$. 
The family $(x_q)_{q \in Q}$ is \emph{free} in $M$ if $\phi$ is injective, \emph{generating} for $M$ if $\phi$ is surjective, and a \emph{basis} for $M$ if $\phi$ is bijective. The module $M$ is called \emph{finitely generated} if it has a finite generating family and \emph{free} if it has a basis. An element $m \in M$ has \emph{torsion} if there exists $r \neq 0$ such that $rm = 0$. An $R$-module $M$ is \emph{torsion-free} if there are no non-zero torsion elements. Free $R$-modules are torsion-free, but the converse is not true; for example, the ideal $(2, 1 + i\sqrt{5})$ in the ring $\adjoined{i\sqrt{5}}$ is not free. Note that a submodule of a torsion-free module is again torsion-free. 

\looseness=-1
Let $R$ be an integral domain.
The \emph{field of fractions} of $R$ is obtained by adding formal inverses for all non-zero elements of $R$: its elements are 
pairs $r/s$, where $r \in R$ and $s \in \inver{R}$, and $r/s$ is identified with $r'/s'$ when $rs' = r's$; note that it is in particular an $R$-module. Let $\KK$ be the field of fractions of $R$. 
Extending this definition to modules, for any $R$-module $M$, there is a $\KK$-vector space  $\localize M$, called the \emph{localization} or \emph{extension of scalars} of $M$. Its elements are formal fractions $m/r$ of an element $m \in M$ and an element $r \in \inver{R}$, where we identify $m/r$ and $m'/r'$ if there is $s
\in \inver{R}$ such that $sr \cdot m' = sr' \cdot m$. For any $R$-linear map $f: M \to N$, we obtain a $\KK$-linear map $\localize f: \localize M \to \localize N$, which maps $m/r$ to $f(m)/r$.
The \emph{rank} of an $R$-module is the dimension of its localization. 

A \emph{fractional ideal} of a ring $R$ is a non-zero $R$-submodule $\mathfrak{a}$ of $\KK$ such that, for some $r \in R \setminus \{0\}$, the submodule $r \mathfrak{a} \coloneq \{ra \ | \ a \in \mathfrak{a}\}$ is contained in $R$. The \emph{dual} of a fractional ideal $\mathfrak{a}$ is the fractional ideal $\mathfrak{a}^{-1} \coloneq \{x \in \KK \ | \ x \mathfrak{a} \subseteq R\}$, and $\mathfrak{a}$ is \emph{invertible} if $\mathfrak{a}^{-1} \mathfrak{a} = R$. An integral domain $R$ is a \emph{Dedekind domain} if, and only if, every fractional ideal is invertible. It is a \emph{principal ideal domain (PID)} when all its ideals are principal.
Any PID is a Dedekind domain, and any Dedekind domain $R$ is \emph{Noetherian}, meaning that any ideal of $R$ is finitely generated. Moreover, if $R$ is a Dedekind domain with field of fractions $\KK$, then every finitely generated torsion-free $R$-module $M$ is a direct sum of rank $1$ submodules, $M = \bigoplus_{i=1}^n E_i$, where each $E_i$ is of the form $\mathfrak{a}_ie_i$, for some fractional ideal $\mathfrak{a}_i$ of $\KK$ and $e_i \in M$, such that the system $(e_i)_{i=1}^n$ is free, see, e.g., \cite[Cor.~10.2.3]{broue2024}. Such a set of rank $1$ submodules $E_1, \dots, E_n$ is called a \emph{pseudo-basis} for $M$.

\section{Minimal  weighted and modular automata}
\label{sec:weighted-cat}

In this section, we recall the fundamental definitions of weighted automata over a ring. We phrase each definition in two ways: using the more traditional \inlineblock{concol}{automata-theoretic} language, and using a \inlineblock{catcol}{category-theoretic} formulation \cite{colcombetpetrisan2017}. More details on the categorical notions are in \Cref{app:cat}.

\subsection{Basic definitions}

Let $R$ be a ring and $\Sigma$ a finite alphabet.
\yellowblock{%
\phantomspacing{}%
Let $Q$, $Q'$ be  sets. Any linear transformation $f\colon \frMod{R}{Q}\to \frMod{R}{Q'}$ is uniquely determined by the values $(f(q))_{q\in Q}$, and can thus be represented by a matrix of size $|Q|\times|Q'|$, whose entry on row $i$ and column $j$ is the coefficient of $q'_j$ in $f(q_i)$.
\phantomspacing{}}

\blueblock{
\phantomspacing{}Let $\FreeMod{R}$ denote the category whose objects are sets $Q, Q',\ldots$ and whose morphisms from $Q$ to $Q'$, denoted $f\colon Q\klto Q'$, are functions from $Q$ to the free $R$-module $\frMod{R}{Q'}$ on $Q'$. The category $\FreeMod{R}$ is equivalent to the category of free $R$-modules.  It is also known as the \emph{Kleisli category} of the monad $\frMod{R}{-}$ of formal $R$-linear combinations.
\phantomspacing{}}
A row vector $\bm{\alpha}\in \frMod{R}{Q}$ uniquely determines a linear map $R\to \frMod{R}{Q}$, and similarly, a column vector $\bm{\beta}\in \frMod{R}{Q}$ uniquely determines a linear map $\frMod{R}{Q}\to R$. In view of this, and to make the correspondence with the categorical definitions more transparent, we slightly rephrase the usual definition of an $R$-weighted automaton as follows:
\yellowblock{\phantomspacing{}
An \emph{$R$-weighted automaton ($R$-WA)} $\A$ over an alphabet $\Sigma$ is a tuple $(Q,\A(\triangleright),(\A(\sigma))_{\sigma\in\Sigma},\A(\triangleleft))$, where
 \begin{itemize}
    \item $Q$ is a set of \emph{states},
     \item $\A(\triangleright)\colon R \to \frMod{R}{Q}$ is a linear map of \emph{initial weights},
 \item $\A(\sigma)\colon \frMod{R}{Q}
\to \frMod{R}{Q}$ is a linear \emph{transition map} for each letter $\sigma \in \Sigma$, and %
 \item $\A(\triangleleft)\colon \frMod{R}{Q} \to R$ is a linear map of \emph{final weights}.\phantomspacing{} %
 \end{itemize}
}

\blueblock{\phantomspacing{}
Consider the category $\I$ free over the multigraph
\[
\begin{tikzcd}[ampersand replacement = \&]
\In \arrow[r, "\triangleright"] \& \St \arrow[r, "\triangleleft"] \arrow["\sigma (\sigma\in\Sigma)", loop, distance=2em, in=55, out=125] \& \Out\,.
\end{tikzcd}
\] An \emph{$R$-weighted automaton} is a functor $\A\colon\I\to \FreeMod{R}$,  such that $\A(\In)= R$ and $\A(\Out)= R$.
\phantomspacing{}}

The two definitions of $R$-weighted automaton are equivalent.
Indeed, a functor $\A\colon\I\to\FreeMod{R}$ uniquely corresponds to an $R$-weighted automaton, in the sense of the first definition, on the set of states $Q \coloneq \A(\St)$. Since $\I$ is a free category on a multigraph, the functor $\A$ is entirely specified by its values on the morphisms $\triangleright,\triangleleft$ and $\sigma$ for $\sigma\in\Sigma$. The $\FreeMod{R}$-morphism $\A(\triangleright)\colon 1\klto Q$ corresponds to a function $1\to \frMod{R}{Q}$, and therefore to a linear map $R\to \frMod{R}{Q}$ of initial weights, which by a small abuse of notation we still denote $\A(\triangleright)$. The same remark applies to $\A(\sigma)$ and $\A(\triangleleft)$.
\yellowblock{\phantomspacing{}
  Note that the function $\sigma \mapsto \A(\sigma)$ from letters to linear maps
  extends uniquely to a homomorphism from the monoid $\Sigma^*$ of words over
  $\Sigma$ to the monoid of linear maps $\frMod{R}{Q} \to\frMod{R}{Q}$.
  Concretely, for any word $w = \sigma_1 \dots \sigma_n$, we define
\[\A(w) \isdef \A(\sigma_1) \cdots \A(\sigma_n),\]
that is the composition of $n$ linear transformations, starting with
$\A(\sigma_1)$ and ending with $\A(\sigma_n)$. Define, for any $w \in
\Sigma^*$,
\begin{align*}
  &\A(\triangleright w) \isdef \A(w)\circ \A(\triangleright)  \colon R \to \frMod{R}{Q}, \\
  &\A(w \triangleleft) \isdef  \A(\triangleleft) \circ \A(w) \colon \frMod{R}{Q} \to R, \quad \text{ and } \\
  &\A(\triangleright w \triangleleft) \isdef \A(\triangleleft) \circ\A(w) \circ\A(\triangleright)\colon R\to R.
\end{align*}

  For any $w \in \Sigma^*$, we write $\sem{\A}(w) \coloneq \A(\triangleright w
  \triangleleft)(1)$, i.e., the scalar 
  corresponding to the linear map $\A(\triangleright w \triangleleft)\colon R\to
  R$. The function $\sem{\A}\colon\Sigma^*\to R$ is the \emph{$R$-weighted language
    computed by} the automaton $\A$. 
\phantomspacing{}}

\blueblock{\phantomspacing{}
  As part of a functor $\A\colon\I\to\FreeMod{R}$,
  we have, for every $w\in\Sigma^*$, the $\FreeMod{R}$-morphisms
\begin{align*}
  & \A(w) \colon Q\klto Q, \quad  & \A(w\triangleleft) \colon Q\klto 1, & \text{ and } \\
  & \A(\triangleright w) \colon 1\klto Q,  & \A(\triangleright w \triangleleft) \colon 1 \klto 1, & 
\end{align*}
which correspond to the linear maps defined above.

  Consider the full subcategory $\O$ of $\I$ on the objects $\In$ and $\Out$
  and denote $\iota\colon\O\to\I$ the inclusion functor. The \emph{weighted
    language computed by} $\A\colon\I\to\FreeMod{R}$ is the functor
  $\A\circ\iota\colon \O\to\FreeMod{R}$.
Note that the morphisms in $\O$ from $\In$ to $\Out$ are $\triangleright w
\triangleleft$, for $w \in \Sigma^*$, and that $\A\circ\iota(\triangleright
w\triangleleft)\colon 1\klto 1$ corresponds to $\A(\triangleright
w\triangleleft)$ as defined above.
\phantomspacing{}}

We recall the notion of weighted automata morphism. 
\yellowblock{\phantomspacing{}
 Let $\A$ and $\A'$ be $R$-weighted automata with sets of states $Q$ and $Q'$, respectively. A \emph{morphism} from $\A$ to $\A'$ is a linear map $\phi\colon \frMod{R}{Q}\to \frMod{R}{Q'}$ such that  $\A'(\sigma)\phi = \phi \A(\sigma)$ for every $\sigma \in
\Sigma$, $\A'(\triangleright) = \phi\A(\triangleright)$ and
$\A'(\triangleleft)\phi = \A(\triangleleft)$.
\phantomspacing{}}

\blueblock{\phantomspacing{}
A \emph{morphism} between $R$-weighted automata $\A$ and $\A'$ is a natural transformation $\phi\colon \A\to\A'$ such that the $\In$- and $\Out$-components of $\phi$ are identities. 
We also denote by $\phi$ the component of this natural transformation at the object $\St$.
\phantomspacing{}}

The two definitions of morphism coincide:  the preservation of initial and final weights and of the transition maps  amounts to the commutativity of the following diagrams, for $\sigma\in\Sigma$:
\[\begin{tikzcd}[cramped, row sep={.1em}]
	& {\,\A(\St)} & [-11pt]{\A(\St)} & {\A(\St)} & [-11pt]{\,\A(\St)} \\
	R &&&&& R \\
	& {\A'(\St)} & [-11pt]{\A'(\St)} & {\A'(\St)} & [-11pt]{\A'(\St)}
	\arrow["\phi", from=1-2, to=3-2]
	\arrow["{\A(\sigma)}", from=1-3, to=1-4]
	\arrow["\phi"', from=1-3, to=3-3]
	\arrow["\phi", from=1-4, to=3-4]
	\arrow["{\A(\triangleleft)}", pos=0.2, from=1-5, to=2-6]
	\arrow["\phi"', from=1-5, to=3-5]
	\arrow["{\A(\triangleright)}", pos=0.8, from=2-1, to=1-2]
	\arrow["{\A'(\triangleright)}"', pos=0.8, from=2-1, to=3-2]
	\arrow["{\A(\sigma)}"', from=3-3, to=3-4]
	\arrow["{\A(\triangleleft)}"', pos=0.2, from=3-5, to=2-6]
\end{tikzcd}\]
Whenever there is a morphism of automata $\A \to \A'$, $\A$ and $\A'$ are said to be \emph{conjugates of one another}. \emph{Conjugating} $\A$ with states $Q$ by a $Q \times Q'$- (resp. $Q' \times Q$) matrix $M$ means constructing an automaton $\A'$ with states $Q'$ such that $M$ is a morphism of automata $\A \to \A'$ (resp. $\A' \to \A$). 
How to construct such a conjugate, when one exists, is standard (it amounts to solving a system of $R$-linear equations).

{}
\subsection{Minimization of \texorpdfstring{$R$}{R}-modular and \texorpdfstring{$R$}{R}-weighted automata}
\label{sec:mincat}

From the categorical point of view, we think of $R$-weighted automata as functors valued in the category $\FreeMod{R}$ of free $R$-modules. In order to investigate minimization and learning algorithms for such automata categorically, we need at times to expand our view on them  and regard them as functors valued in the category $\Mod{R}$ of \emph{all} $R$-modules. This is because, for an arbitrary commutative ring $R$, the category $\FreeMod{R}$ may not have sufficient structure: products and quotients of free modules are not free in general. We now recall the generic definition of word automata valued in an arbitrary category $\C$ from~\cite{colcombetpetrisan2017}. We will then  instantiate it to the case when $\C$ is $\Mod{R}$ to obtain a generalization of $R$-weighted automata that we call \emph{$R$-modular automata}.
\purpleblock{\phantomspacing{}
Let $\C$ be a category and $I,O$ two objects of $\C$. A \emph{$(\C,I,O)$-automaton} is a functor $\A\colon\I\to\C$ such that $\A(\In)=I$ and $\A(\Out)=O$. A \emph{morphism between  $(\C,I,O)$-automata} $\A$ and $\A'$ is a natural transformation $\alpha\colon\A\to\A'$ such that the components $\alpha_{\In}$ and $\alpha_{\Out}$ are identities.
$(\C,I,O)$-automata and their morphisms form a category denoted by $\Auto{\C,I,O}$.\phantomspacing{}
}
We instantiate this definition for $R$-modules:
\blueblock{
\begin{definition} 
\phantomspacing{}An \emph{$R$-modular automaton} over the alphabet $\Sigma$ is a $(\Mod{R}, R,R)$-automaton. \phantomspacing{}
\end{definition}
}
\yellowblock{
\begin{definition}
    \label{def:conc-R-modular-automata}
\phantomspacing{}An \emph{$R$-modular automaton} over the alphabet $\Sigma$ is a tuple $(M,\A(\triangleright),(\A(\sigma))_{\sigma\in\Sigma},\A(\triangleleft))$ consisting of an arbitrary module $M$ and linear maps $\A(\triangleright)\colon R\to M$, $\A(\sigma)\colon M\to M$, for $\sigma \in\Sigma$ and $\A(\triangleleft)\colon M\to R$.\phantomspacing{}
\end{definition}
}

\begin{remark}
For a commutative ring $R$, the $(\FreeMod{R},R,R)$-automata are the $R$-weighted automata in the usual sense.
$R$-modular automata are similar, the only difference being that the state object is  not necessarily a free $R$-module, and thus cannot be represented via its basis elements.
Notice that when $R=\KK$ is a field, the categories $\FreeMod{\KK}$ and $\Mod{\KK}$ are equivalent, since any vector space is isomorphic to $\frMod{\KK}{X}$, where $X$ is any chosen basis. For this reason, minimization and learning of $\KK$-weighted automata are much simpler than for an arbitrary ring.
\end{remark}

\purpleblock{\phantomspacing{}
  A \emph{$(\C,I,O)$-language} is a functor $\L\colon\O\to\C$ such that
  $\L(\In)=I$ and $\L(\Out)=O$. The \emph{language recognized by a
    $(\C,I,O)$-automaton} $\A$ is the composite $\A\circ\iota$.

For a {$(\C,I,O)$-language} $\L\colon\O\to\C$ we denote by $\Auto{\L}$ the category of automata recognizing $\L$. Notice that,
if $\phi\colon\A\to\A'$ is a morphism of $(\C,I,O)$-automata, then the languages recognized by these automata coincide. Hence the categories $\Auto{{\L}}$ are the connected components of the category $\Auto{\C,I,O}$.
\phantomspacing{}}

The crucial idea behind the functorial view of automata is to view transformations between different kinds of automata as 
\emph{liftings of functors} between the respective output categories.
\purpleblock{
\begin{fact}
\label{lem:lifting-functors-to-automata}
\phantomspacing{}Let $\F: \C \to \D$ be a functor and $I$ and $O$ be two objects of $\C$.
If a $(\C,I,O)$-automaton $\A$ recognizes a $(\C,I,O)$-language $\L$, then $\F \circ \A$ is a $(\D,\F I,\F O)$-automaton recognizing the $(\D,\F I,\F O)$-language $\F \circ \L$. 

This defines a functor $\liftF\colon \Auto{\L}\to\Auto{\F\circ\L}$, the \emph{lifting} of $\F$.\phantomspacing{}
\end{fact}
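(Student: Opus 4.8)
The plan is to unwind the definitions and verify that post-composition with $\F$ transports everything along, so this amounts to routine bookkeeping about functors, natural transformations, and their composites. First I would record that, by definition, $\A$ recognizes $\L$ exactly when $\A\circ\iota=\L$ as functors $\O\to\C$, where $\iota\colon\O\to\I$ is the inclusion. Since the composite of functors is again a functor, $\F\circ\A\colon\I\to\D$ is a functor; evaluating on objects gives $(\F\circ\A)(\In)=\F(\A(\In))=\F I$ and $(\F\circ\A)(\Out)=\F O$, so $\F\circ\A$ is a $(\D,\F I,\F O)$-automaton. For the same reason $\F\circ\L\colon\O\to\D$ is a $(\D,\F I,\F O)$-language, and associativity of functor composition yields $(\F\circ\A)\circ\iota=\F\circ(\A\circ\iota)=\F\circ\L$; hence $\F\circ\A$ recognizes $\F\circ\L$.

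Next I would build the functor $\liftF$. On objects it sends an automaton $\A\in\Auto{\L}$ to $\F\circ\A$, which lies in $\Auto{\F\circ\L}$ by the previous paragraph. On a morphism $\phi\colon\A\to\A'$ — a natural transformation with $\phi_{\In}=\Id_I$ and $\phi_{\Out}=\Id_O$ — I would let $\liftF(\phi)$ be the whiskered transformation $\F\phi\colon\F\circ\A\to\F\circ\A'$ whose component at an object $X$ of $\I$ is $\F(\phi_X)$. Naturality of $\F\phi$ follows from naturality of $\phi$ by applying $\F$ to each naturality square. Its $\In$-component is $\F(\phi_{\In})=\F(\Id_I)=\Id_{\F I}$ and, symmetrically, its $\Out$-component is $\Id_{\F O}$, so $\F\phi$ is indeed a morphism of $(\D,\F I,\F O)$-automata. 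Functoriality of $\liftF$ is then checked componentwise: $\F(\Id_{\A})$ has components $\F(\Id_{\A(X)})=\Id_{\F\A(X)}$, so $\liftF(\Id_\A)=\Id_{\F\circ\A}$, and $\F(\psi\circ\phi)$ has components $\F(\psi_X\circ\phi_X)=\F(\psi_X)\circ\F(\phi_X)$, so $\liftF(\psi\circ\phi)=\liftF(\psi)\circ\liftF(\phi)$.

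There is no genuine obstacle here: each step is a direct unfolding of the definitions of $(\C,I,O)$-automaton, morphism of automata, and composite/whiskered natural transformation, together with the fact that a functor preserves identities and composition. The one place that warrants a moment's care is confirming that the whiskered transformation still has identity components at $\In$ and $\Out$ — this is exactly where preservation of identities by $\F$ is used — and, if one prefers a more conceptual phrasing, the whole statement is an instance of the $2$-functoriality of the functor-category construction $\C\mapsto[\I,\C]$, restricted to the subcategories of functors whose values at $\In$ and $\Out$ are the prescribed objects.
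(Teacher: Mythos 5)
Your proof is correct and is exactly the routine verification the paper leaves implicit (it states this as a \emph{Fact} without proof). The one point genuinely worth flagging — that $\F$ preserving identities is what makes the whiskered transformation $\F\phi$ have identity $\In$- and $\Out$-components, so that it really is a morphism in $\Auto{\F\circ\L}$ — you identify and check cleanly.
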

}

\blueblock{\phantomspacing{}
 An $R$-weighted language $L \colon \Sigma^* \to R$ can be seen as a $(\FreeMod{R},1,1)$-language  or as a $(\Mod{R},R,R)$-language.
This is witnessed by a canonical functor $\FreeMod{R} \to \Mod{R}$ which sends an object $Q$ to the module $R^Q$ freely generated by $Q$. Its lifting embeds the category of $R$-weighted automata accepting $L$ as a subcategory of the category of $R$-modular automata accepting $L$.
\phantomspacing{}}

We now recall the construction of forward and backward modules for $R$-weighted automata, and show how this is an instance of the functorial approach to minimization of~\cite{colcombetpetrisan2017}.
In the rest of this section, let $\A$ be an $R$-WA with states $Q$. %
\yellowblock{\phantomspacing{}
A \emph{configuration} of $\A$ is a row vector $r \in \frMod{R}{Q}$; it is \emph{reachable} if
there is a linear combination $\sum_{i = 1}^k \lambda_i w_i$ of words such
that $r = \sum_{i = 1}^k \lambda_i \A(\triangleright w_i)$. The \emph{forward module of $\A$} is the submodule $\Reach{\A}$ of $\frMod{R}{Q}$ consisting of the reachable configurations in $\A$. %
\phantomspacing{}}

When $R$ is a field, $\Reach{\A}$ is the state space of the \emph{forward conjugate} $\overrightarrow{\A}$ of $\A$, see e.g.~\cite{kieferNotesEquivalenceMinimization2020}. However, in general, the forward module is not necessarily free. This justifies the move to $R$-modular automata. 
\blueblock{\phantomspacing{}
Let $\L$ be the $(\Mod{R},R,R)$-language recognized by $\A$.
The category $\Auto{\L}$ of $R$-modular automata recognizing $\L$ has an
initial object $\Ainit$~\cite[Lemma~3.1]{colcombetpetrisan2017}. We recall an
explicit construction: $\Ainit(\St)$ is  the free
$R$-module on the set $\Sigma^*$, the linear map $\Ainit(\triangleright)$ sends
$1$ to $\varepsilon$, the transition map $\Ainit(a)$ sends the generator $w$ to
$wa$ and the final weights linear map $\Ainit(\triangleleft)$ sends $w$ to
$L(w)$. The unique morphism from $\Ainit$ to $\A$ is given by the linear map
$!_{\A}\colon\Ainit(\St)\to\A(\St)$ that sends a generator $w \in \Sigma^*$ to the
configuration $\A(\triangleright w)\in\A(\St)$.
The \emph{forward module} $\Reach{\A}$ is the 
image of $!_{\A}$, and carries the structure of an $R$-modular automaton, since the concept of image lifts to the category $\Auto{\L}$, see \Cref{app:cat:fac-systems}. 
\phantomspacing{}}

\yellowblock{\phantomspacing{}
Let $\powserw{R}$ denote the module of $R$-weighted languages.
Given a configuration $r \in \frMod{R}{Q}$, the $R$-weighted language $\L(\A)_r$ \emph{observed by} $\A$ \emph{starting from} $r$ is defined
as $\L(\A)_r(v) \coloneq \A(v \triangleleft)(r)$, for $v \in \Sigma^*$. 
The \emph{backward module} of $\A$ is the submodule $\Obs{\A}$ of $\powserw{R}$ consisting of the languages observed by $\A$, and is computed by merging the configurations of $\A$ that yield the same observed language.
\phantomspacing{}}
When $R$ is a field, $\Obs{\A}$ is the state space of the \emph{backward conjugate} $\overleftarrow{\A}$ of $\A$ in the sense of~\cite{kieferNotesEquivalenceMinimization2020}.
In general, $\Obs{\A}$ also carries the structure of an R-modular automaton, and again, this is an instance of~\cite[Section~2.2]{petrisanAutomataMinimizationFunctorial2020}.
\blueblock{\phantomspacing{}
The category $\Auto{\L}$ has a final object $\Afinal$, whose state space is the $R$-module $\powserw{R}$. The linear map $\Afinal(\triangleright)$ sends $1$ to $L$, the transition map $\Afinal(a)$ sends $L' \in \powserw{R}$ to the \emph{derivative} language $a^{-1}L'$ -- sending $u\in\Sigma^*$ to $L'(au)$ -- and the final weights linear map $\Afinal(\triangleleft)$ sends $L'\colon\Sigma^*\to R$ to $L'(\varepsilon)$.

The unique morphism from $\A$ to $\Afinal$ is the linear map $\flip_{\A}\colon\A(\St)\to\powserw{R}$ which sends a configuration $r\in\A(\St)$ to $\L(\A)_r$. The module $\Obs{\A}$ is the image of $\flip_{\A}$. %
\phantomspacing{}}

\looseness=-1
When $R$ is a field, the minimization of $\A$ can be described as the forward conjugate of the backward conjugate of $\A$, see e.g.~\cite[Prop.~3.5]{kieferNotesEquivalenceMinimization2020}. 
The minimization can also be described more abstractly and more generally in the categorical framework, provided that the output category $\C$ carries additional structure:
\purpleblock{
\begin{definition}
\phantomspacing{}A category $\C$ is called \emph{good-for-minimization} when $\C$ has countable powers and countable copowers, and is moreover equipped with a factorization system (see \Cref{app:cat:fac-systems,app:cat:co-limits} for these standard definitions).
\end{definition}
A factorization system generalizes factorization of functions: 
it consists of two classes $\E$ and $\M$ of $\C$-morphisms such that in particular every $f: X \to Y$ factors, uniquely up to unique isomorphism, as  
$\begin{tikzcd}[ampersand replacement=\&, column sep=small]
    X \arrow[r,two heads, "e"] \& Z \arrow[r,tail, "m"] \& Y
\end{tikzcd}$ with $e \in \E$ and $m \in \M$. 
We call both $Z$ and $m: Z \rightarrowtail Y$ the \emph{$(\E,\M)$-image} of $f$.

\begin{construction}\label{con:minimal}
If $\C$ is good-for-minimization and $\L$ is a $(\C,I,O)$-language, then the category $\Auto{\L}$ inherits a factorization system from $\C$, and has initial and final objects $\Ainit$ and $\Afinal$. A minimal automaton $\Min{\L}$ is then obtained as the factorization of the unique morphism  $!_{\Afinal} \colon \Ainit\to\Afinal$, see~\cite[Lem.~3.2]{petrisanAutomataMinimizationFunctorial2020}. Furthermore, for any automaton $\A$ in $\Auto{\L}$ we define $\Reach{\A}$ as the  factorization of the unique morphism $!_{\A} \colon \Ainit\to\A$ and $\Obs{\A}$ as the  factorization of the unique morphism $\flip_{\A} \colon \A\to\Afinal$. Then 
$\Min{\L}$ is isomorphic to $\Reach(\Obs{\A})$ and to $\Obs(\Reach{\A})$~\cite[Lem.~2.3]{petrisanAutomataMinimizationFunctorial2020}.\phantomspacing{}
\end{construction}
}

\blueblock{
\begin{fact}
\label{ex:good-for-minimization-categories}
    \phantomspacing{}The following categories are good-for-minimization: the category $\Mod{R}$ of
    $R$-modules and linear maps, its full subcategory $\TFMod{R}$ on torsion-free modules, and in particular the category $\Vec{\KK}$ of vector spaces over a field $\KK$. For all of these categories, we use the factorization system that factors any linear map as a surjective linear map followed by an injective linear map.  By contrast, on $\FreeMod{R}$ for a general ring $R$, there is no obvious factorization system that  yields a meaningful notion of minimization. \phantomspacing{}
\end{fact}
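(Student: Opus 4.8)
The plan is to unfold the definition of \emph{good-for-minimization} and verify its two ingredients — countable powers and copowers, and a factorization system — first for the ambient category $\Mod{R}$, and then to transfer everything to the full subcategories $\TFMod{R}$ and $\Vec{\KK}$ (with $\Vec{\KK}=\Mod{\KK}$) by a short closure argument. For $\Mod{R}$, countable products are the Cartesian products $\prod_{n\in\NN}M_n$ with pointwise operations, and countable coproducts are the direct sums $\bigoplus_{n\in\NN}M_n$ of finitely supported families; the universal properties are standard, and countable powers and copowers are the special case in which all the $M_n$ coincide.

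For the factorization system on $\Mod{R}$, I would take $\E$ to be the class of surjective linear maps and $\M$ that of injective linear maps. Every $f\colon M\to N$ factors as the surjection $M\twoheadrightarrow\im f$ followed by the inclusion $\im f\hookrightarrow N$, which is an $(\E,\M)$-factorization; both classes plainly contain all isomorphisms and are closed under composition. The one genuine check is the unique diagonal fill-in: given a commutative square $mu=ve$ with $e\colon A\twoheadrightarrow B$ surjective and $m\colon C\hookrightarrow D$ injective, the map $d\colon B\to C$ sending $b$ to $u(a)$ for any $e$-preimage $a$ of $b$ is well defined because $m$ is injective, is $R$-linear, satisfies $de=u$ and $md=v$, and is unique because $e$ is an epimorphism. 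Together with the factorization, this makes $(\E,\M)$ a factorization system in the sense recalled in \Cref{app:cat:fac-systems} (uniqueness of factorizations up to unique isomorphism being then automatic).

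To pass to $\TFMod{R}$ and $\Vec{\KK}$, since these are full subcategories of $\Mod{R}$ it suffices to check closure under the constructions above. A product $\prod_n M_n$ of torsion-free modules is torsion-free, since $r\cdot(m_n)_n=0$ with $r\neq 0$ forces $rm_n=0$, hence $m_n=0$, for every $n$; a direct sum $\bigoplus_n M_n$ sits inside this product, so it is torsion-free as a submodule of a torsion-free module, and it still satisfies the coproduct universal property against \emph{all} $R$-modules — in particular against torsion-free ones — so it is the coproduct computed inside the subcategory. For the factorization system, the image $\im f$ of a linear map between torsion-free modules is a submodule of a torsion-free module, hence torsion-free, so the $(\E,\M)$-factorization of $\Mod{R}$ restricts to $\TFMod{R}$, with the classes $\E$, $\M$ and the diagonal property inherited verbatim. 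The case $\Vec{\KK}$ is subsumed, every $\KK$-vector space being torsion-free. Finally, the contrasting claim about $\FreeMod{R}$ is informal rather than a theorem: for a general ring $R$ a submodule of a free module need not be free — e.g. the ideal $(2,1+i\sqrt5)\subseteq\adjoined{i\sqrt5}$ recalled in \Cref{sec:comm-alg-primer} — so the image object that a surjection/injection factorization should produce typically does not even exist in $\FreeMod{R}$, which is exactly why \Cref{con:minimal} must be run in $\Mod{R}$ or $\TFMod{R}$ instead.

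I do not expect a real obstacle here: every step is a direct verification against the standard definitions collected in \Cref{app:cat:fac-systems,app:cat:co-limits}. The only point that warrants a little care is the treatment of $\TFMod{R}$, where one must use that it is closed under \emph{both} subobjects (for the image and for the direct sum inside the product) and products, so that the coproduct of torsion-free modules — a priori only defined in $\Mod{R}$ — indeed remains torsion-free and is computed the same way inside the subcategory; once this bookkeeping is made explicit, the rest is routine.
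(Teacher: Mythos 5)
Your argument is correct and follows the same route the paper takes — the paper does not give a self-contained proof of this Fact but disperses the supporting remarks across Appendix E (explicit products and coproducts in $\Mod{R}$, their restriction to $\TFMod{R}$ via closure under products and submodules, and the surjection/injection factorization system together with the remark that it fails to restrict to $\FreeMod{R}$ because images of free modules need not be free). Your consolidation of those observations, including the small but necessary check that coproducts of torsion-free modules remain torsion-free by sitting inside the torsion-free product, matches the paper's reasoning.
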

}

Instantiating \Cref{con:minimal} to $\Mod{R}$-automata we  obtain a  \emph{minimal $R$-modular automaton} $\Min{\L}$ accepting $\L$ as $\Reach(\Obs(\A))$. However, an important caveat is that the state space of $\Min{\L}$ is an $R$-module which is not necessarily free without further assumptions on the ring $R$.
As a consequence, we cannot hope to always obtain a reasonable finite presentation of the minimal $R$-modular automaton.
In the next subsection, we examine additional assumptions on $R$ that do allow for this, and are satisfied for all number rings.

{}

\subsection{Minimal automata over Dedekind domains}
\label{sec:minimal-automata-structure}
Let $R$ be an integral domain, and let $\KK$ be its field of fractions. A first way to represent an $R$-modular automaton $\A$ is as a $\KK$-weighted automaton with state-space $\localize{\A(\St)}$. This includes in particular the classical view of $R$-weighted automata as $\KK$-weighted automata with weights in $R$. 
We define the \emph{rank} of an $R$-modular automaton $\A$ as the rank of the module $\A(\St)$ and the \emph{rank} of a recognizable $R$-weighted language $L$ as the rank of the module $\Min L$.

The $R$-module $\powserw{R}$ is not necessarily free (even when $R = \mathbb{Z}$, see, e.g., \cite{Sch2008}), but $\powserw{R}$ is torsion-free, and thus any submodule of $\powserw{R}$ is again torsion-free. In particular, for any $R$-weighted automaton $\A$, the  backward module $\Obs{\A}$ is torsion-free, and so is its forward submodule $\Reach(\Obs{\A})$. Thus, if $R$ is an integral domain, then the state module of a minimal $R$-modular automaton is always torsion-free. For a rational $R$-weighted language $L$, $\Min L$ is also always finitely generated \cite[Theorem 5.21]{drosteWeightedAutomata2021}.

In particular, if $R$ is a PID, then all finitely generated torsion-free modules are free, so that it is possible to choose a basis for $\Min L$ and the minimal $R$-modular automaton is in fact $R$-weighted. More generally, if $R$ is a \emph{Dedekind domain}, then the finitely generated torsion-free modules, in particular $\Min L$, do not necessarily have a basis, but they have a pseudo-basis. This is key in proving the following proposition, which motivates this work; we give a proof in \Cref{app:proofs-for-weighted-cat}.

\begin{restatable}{proposition}{almostMinDedekindWA}
\label{prop:almost-minimal-dedekind-wa}
    Let $R$ be a Dedekind domain, and let $L$ be an $R$-weighted language of rank $n$ computed by a finite $R$-WA. There exists an $R$-WA computing $L$ with at most $n+1$ states.
\end{restatable}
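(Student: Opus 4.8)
The plan is to work through the minimal $R$-modular automaton $\Min L$ of \Cref{con:minimal}: although its state module $\Min L(\St)$ need not be free, the structure theory of modules over a Dedekind domain forces it to admit a generating family of size at most $n+1$, and any such family exhibits an $R$-WA on $n+1$ states of which $\Min L$ is a conjugate. So there are two steps: bound the number of generators of $\Min L(\St)$, then transport the automaton structure of $\Min L$ back along a surjection from a free module.

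\emph{Step 1: bounding the generators of $\Min L(\St)$.} Since $L$ is rational, $\Min L(\St)$ is finitely generated by \cite[Theorem~5.21]{drosteWeightedAutomata2021}, it is torsion-free as a submodule of $\powserw{R}$, and it has rank $n$ by definition of the rank of $L$. The case $n=0$ is trivial (then $L$ is the zero language and one state suffices), so assume $n\ge 1$. By the structure theorem for finitely generated torsion-free modules over a Dedekind domain \cite[Cor.~10.2.3]{broue2024}, $\Min L(\St)$ has a pseudo-basis $\bigoplus_{i=1}^{n}\mathfrak a_i e_i$; iterating the standard isomorphism $\mathfrak a\oplus\mathfrak b\cong R\oplus\mathfrak a\mathfrak b$ rewrites this as $R^{\,n-1}\oplus\mathfrak a$ for a single fractional ideal $\mathfrak a$. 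Since every fractional ideal of a Dedekind domain is generated by at most two elements, $\Min L(\St)$ is generated by at most $n+1$ elements.

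\emph{Step 2: conjugation.} Fix a set $Q$ with $|Q|=n+1$ and a surjective linear map $\phi\colon\frMod{R}{Q}\twoheadrightarrow\Min L(\St)$ realizing such a generating family. Define an $R$-WA $\B$ with state set $Q$ by setting $\B(\triangleleft)\coloneq\Min L(\triangleleft)\circ\phi$, letting $\B(\triangleright)(1)$ be an arbitrary $\phi$-preimage of $\Min L(\triangleright)(1)$, and, for each $\sigma\in\Sigma$ and each basis element $q\in Q$, letting $\B(\sigma)(q)$ be an arbitrary $\phi$-preimage of $\Min L(\sigma)(\phi(q))$; all these preimages exist because $\phi$ is surjective. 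Then $\B(\triangleleft)=\Min L(\triangleleft)\circ\phi$ and $\phi\circ\B(\triangleright)=\Min L(\triangleright)$ by construction, and the linear maps $\phi\circ\B(\sigma)$ and $\Min L(\sigma)\circ\phi$ from $\frMod{R}{Q}$ to $\Min L(\St)$ agree on the basis $Q$, hence coincide. Thus $\phi$ is a morphism of $R$-modular automata from $\B$ to $\Min L$ — this is precisely the conjugation recalled in \Cref{sec:weighted-cat}, which always succeeds when one conjugates by a surjection out of a free module. As morphisms of automata preserve the recognized language, $\sem{\B}=\sem{\Min L}=L$, and $\B$ is an $R$-WA with at most $n+1$ states, as required.

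The only non-routine input is the bound in Step~1, and this is exactly where the hypothesis that $R$ is a Dedekind domain (rather than merely Noetherian) is used: through Steinitz's theorem and the fact that ideals in a Dedekind domain need at most two generators. The bound is moreover tight in general, since a non-principal ideal has rank $1$ yet cannot be generated by a single element; this is precisely the obstruction to replacing $n+1$ by $n$, in line with \Cref{ex:non-Fatou-number-ring}.
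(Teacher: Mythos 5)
Your proof is correct and takes essentially the same route as the paper: both reduce to the Dedekind-domain decomposition $\Min L(\St)\cong R^{n-1}\oplus\mathfrak a$, exhibit a surjection from a free module of rank $n+1$, and push the automaton structure of $\Min L$ along it to obtain a conjugate $(n+1)$-state $R$-WA computing $L$. The only cosmetic difference is in how the surjection is produced — you invoke the two-generator property of fractional ideals and then lift the transitions by choosing arbitrary $\phi$-preimages (legitimate since $\frMod{R}{Q}$ is free), while the paper instead builds a \emph{split} surjection from the isomorphism $\mathfrak a\oplus\mathfrak a^{-1}\cong R^2$, giving the conjugation formula $\B(\sigma)=m\circ\Min L(\sigma)\circ e$ directly; your explicit handling of the degenerate case $n=0$ is a small improvement the paper glosses over.
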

We call an automaton as in the conclusion of \Cref{prop:almost-minimal-dedekind-wa} \emph{almost minimal}.
An illustration of \Cref{prop:almost-minimal-dedekind-wa} is given in \Cref{ex:non-Fatou-number-ring}. The goal of the present work is to provide an algorithm computing almost-minimal automata when $R$ is a number ring, and thus an  effective Dedekind domain.

\section{Learning problems and reduction procedures}
\label{sec:learning-and-reduction}

We now tackle the problem of computing the minimal automaton recognizing a language, focusing on  \emph{active learning}. %
\yellowblock{\phantomspacing{}
For a fixed ring $R$,
the goal of the active learning algorithm is to compute the minimal $R$-weighted automaton recognizing a certain
$R$-weighted language $L$ with the only help of an oracle able to answer two
kind of queries:
\begin{itemize}%
\item \emph{value queries}: given an input word $w \in \Sigma^*$, the oracle
  returns the value $L(w)$;
\item \emph{equivalence queries}: given an $R$-weighted automaton $\A$, the
  oracle decides whether $\sem{\A} = L$, and, if this is not the case, outputs a
  counterexample input word $w \in \Sigma^*$ such that $\sem{\A}(w) \neq L(w)$.\phantomspacing{}
\end{itemize}
}
\purpleblock{\phantomspacing{}
More generally, the problem of active learning can be stated for
$(\C,I,O)$-automata, as in \Cref{problem:active-learning} below. 
In this setting, the oracle can answer:
\begin{itemize}
    \item \emph{value queries}: given an input word $w \in \Sigma^*$, the oracle
  returns a morphism $\L(\triangleright w
\triangleleft): I \to O$;
    \item \emph{equivalence queries}: the oracle decides whether a
$(\C,I,O)$-automaton $\A$ recognizes the target $(\C,I,O)$-language $\L$, and, if this is not the case outputs a counterexample $w\in\Sigma^*$ such that $\A\circ\iota(\triangleright w \triangleleft)\neq\L(\triangleright w
\triangleleft)$.\phantomspacing{}
\end{itemize}
}

\begin{problem}
    \caption{Active learning of $(\C,I,O)$-automata}
    \label[problem]{problem:active-learning}
    \begin{algorithmic}
        \REQUIRE an oracle able to answer value and equivalence queries for a
        $(\C,I,O)$-language $\L$
        \ENSURE $\Min \L$
    \end{algorithmic}
\end{problem}

 A generic category-theoretic algorithm for solving \Cref{problem:active-learning} was given in~\cite{colcombetLearningAutomataTransducers2021}. It encompasses, among other things,
the active learning algorithm for learning $\KK$-weighted automata~\cite{bergadanoLearningBehaviorsAutomata1994}, which runs in polynomial time.
However, the generality of the algorithm in~\cite{colcombetLearningAutomataTransducers2021} 
comes at a cost. First, it takes for granted
basic operations of $R$ and its modules 
which could be difficult to compute, or even undecidable. 
Second, it does not take
into account possible optimizations for specific choices of $\C$,
$I$ and $O$. 
This is the case, in particular, for $\ZZ$-weighted automata.
On the one hand, by instantiating the generic algorithm of~\cite{colcombetLearningAutomataTransducers2021} to the case of 
$\ZZ$-weighted automata, one obtains an algorithm that runs in
exponential time. On the other hand, \cite{buna-margineanLearningPolynomialRecursive2024}~gives a  refined polynomial time algorithm, tailor-made for $\ZZ$-weighted automata, obtained 
by a clever reduction to the  learning problem for $\QQ$-weighted automata. %

More generally, we consider a Noetherian integral
domain $R$, and we first tackle the active learning problem for $R$-modular automata, which reduces to that of active learning for automata weighted in $R$'s field of fractions, $\KK$.

The key ingredient of this reduction is a procedure \textsc{Transform}
which, given as input a $\KK$-weighted automaton $\A$
recognizing $L \in \powserw{\KK}$, either (a) outputs a
minimal $R$-modular automaton equivalent to $\A$, 
or, otherwise, (b) outputs a counterexample word
$w \in \Sigma^*$ such that $L(w) \notin R$.
Note that, in case (a), $L$ is actually in $\powserw{R}$, while in case (b), no $R$-modular automaton equivalent to $\A$ can exist.
The procedure \textsc{Transform} will be an instance of the more general \Cref{alg:reduction}. Let us describe how it can be used for implementing the reduction between  the  learning problems.

Let $L_R \colon \Sigma^* \to R$ be an $R$-weighted language, and write $L_\KK \colon \Sigma^* \to \KK$ for the composite of $L_R$ with the  inclusion map of $R$ into its field of fractions. Assume that $\oracle{R}$ is an oracle for the active learning of $L_R$. We implement an oracle $\oracle{\KK}$ for the active learning of $L_{\KK}$ as follows.

\begin{itemize}
   \item  \emph{Value queries}
to $\oracle{\KK}$ are transmitted to $\oracle{R}$.
\item Consider a $\KK$-weighted automaton $\A$ submitted for an \emph{equivalence query} to $\oracle{\KK}$. We run 
\textsc{Transform} with input $\A$. If the output is an $R$-modular automaton equivalent to $\A$, this automaton is in turn 
fed to $\oracle{R}$. Otherwise, if %
\textsc{Transform} outputs a word $w$ such that $\sem{\A}(w)
\notin R$, then $\oracle{\KK}$ returns $w$ as a  counterexample. This is correct, since $L_\KK(w) = L_R(w)\in R$, but $\sem{\A}(w) \not\in R$, so $\sem{\A}(w) \neq L_{\KK}(w)$.
\end{itemize}
Once $\oracle{\KK}$ is implemented, we first learn a minimal $\KK$-weighted automaton recognizing $L_\KK$. Since $L_\KK$ factors through $R$, %
we can use %
\textsc{Transform} once more to transform the minimal $\KK$-weighted automaton into an equivalent $R$-modular automaton, which turns out to be minimal.

\section{Generic reduction of learning problems}
\label{sec:categorical-reduction}

In this section, and as our first main contribution, we reconcile the generic and the tailor-made approaches to learning: we describe additional
conditions on the categorical framework that allow for optimizations similar to those 
described in~\cite{buna-margineanLearningPolynomialRecursive2024} for
$\ZZ$-weighted automata.  
Moreover, this allows us to go beyond $\ZZ$-weighted automata: in \Cref{sec:computational} we instantiate the generic \Cref{alg:reduction} to show that active learning of automata
weighted over number rings can be solved in polynomial time.
\blueblock{\phantomspacing{}
To summarize the discussion in \Cref{sec:learning-and-reduction}, the goal is to learn $R$-modular automata. However, switching to $\KK$-weighted automata makes computations easier. The fact that $R$-modular automata can be seen as $\KK$-weighted automata is witnessed at a category-theoretic level, as an instance of \Cref{lem:lifting-functors-to-automata} via  the lifting of the  localization functor $
\localize -: \Mod{R} \to
\Vec{\KK}
$. The action of the latter functor on $R$-modules and $R$-linear maps was described in~\Cref{sec:comm-alg-primer}. %
\phantomspacing{}}

We abstract this even further as follows. 
\purpleblock{
\begin{assumption}
  \label{assumption:functor}
  \phantomspacing{}$\C$ and $\D$ are two  good-for-minimization categories, 
   with respective factorization systems
   $(\E_\C, \M_\C)$ and $(\E_\D, \M_\D)$, $I$ and $O$ two objects of $\C$, 
  and $\F: \C \to \D$ is a functor.\phantomspacing{}
\end{assumption}
}

The intuition behind \Cref{assumption:functor} is that $\C$ is a category where we want to carry out some
computation, but $\D$ is a category where it is easier to compute. Using the lifting of $\F$ (\Cref{lem:lifting-functors-to-automata}), we aim to compute as much as possible in $\D$ before pulling the result back along
$\F$ and finishing the computation in $\C$. 
\Cref{assumption:functor} can be instantiated to our context by taking $\C \coloneq \Mod{R}$, $\D \coloneq \Vec{\KK}$, and $\F \coloneq \localize -$. 
We now ask: when does the problem of learning
minimal $(\C,I,O)$-automata reduce to the problem of learning minimal $(\D,\F I,
\F O)$-automata? Given sufficient assumptions, we will provide such a reduction
by describing a procedure for solving the following problem. %

\begin{problem}
  \caption{%
  Transforming a $(\D,\F I,\F O)$-automaton into a $(\C,I,O)$-automaton}
  \label[problem]{problem:reduction}
  \begin{algorithmic}
    \REQUIRE a $(\D,\F I,\F O)$-automaton recognizing a language $\L$
    \ENSURE \emph{either} $\Min \L'$, where $\L'$ is a $(\C,I,O)$-language such that $\L = \F \circ \L'$, \emph{or} a word $w \in \Sigma^*$ for which there does not exist $f \colon I \to O$ such that $\F f = \L(\triangleright w
    \triangleleft)$.%
  \end{algorithmic}
\end{problem}

In \Cref{assumption:adjunction} below, we will state additional conditions that we put on the functor $\F$ to be
able to solve \Cref{problem:reduction}. Being able to pull back computations from $\D$ to $\C$ intuitively requires  some form of inverse of the functor $\F$. We require the existence a functor $\G$ which is \emph{right adjoint} to $\F$.
Briefly (see \Cref{app:cat:adjunction} for more details), this means that, for any objects $X$ in $\C$ and $Y$ in $\D$, we have a bijection between the sets of morphisms $\D(\F X,Y)$ and $\C(X,\G Y)$, which is moreover natural in both coordinates.
For any object $X$ of $\C$, the \emph{unit} of the adjunction at $X$ is the morphism $\unit_X \colon X \to \G\F X$  corresponding to the identity morphism $\F X \to \F X$. For any object $Y$ of $\D$, the $\emph{counit}$ of the adjunction at $Y$ is the morphism $\counit_Y \colon \F\G Y \to Y$ corresponding to the identity morphism $\G Y \to \G Y$.
\purpleblock{
\begin{restatable}{assumption}{assumptionAdjunction}
  \label{assumption:adjunction}
  \phantomspacing{}Under \Cref{assumption:functor}, assume moreover that
  \begin{enumerate}
  \item \label{assumption:adjunction:factorization-system-preservation} $\F [\E_\C] \subseteq \E_\D$ and $\F^{-1}[\M_\D] = \M_\C$; %
  \item \label{assumption:adjunction:right-adjoint} $\F$ has a right-adjoint $\G: \D \to \C$;
  \item \label{assumption:adjunction:unit} for every object $X$ of $\C$, $\unit_X \colon X \rightarrow \G \F X$ is in $\M_\C$;%
  \item \label{assumption:adjunction:counit} for every object $Y$ of $\D$, $\counit_X \colon \G \F X \rightarrow X$ is in $\M_\D$.\phantomspacing{}
  \end{enumerate}
\end{restatable}
}

In order to satisfy \Cref{assumption:adjunction}, the localization functor  needs to be restricted to the category $\TFMod{R}$ of \emph{torsion-free} $R$-modules and
linear maps between them, which, just as $\Vec{\KK}$, is a good-for-minimization category, see \Cref{ex:good-for-minimization-categories}.
\blueblock{
\begin{restatable}{lemma}{lemExtScalars}
  \label{lemma:extension-of-scalars-left-adjoint}
  \phantomspacing{}The functor $\localize -: \TFMod{R} \to \Vec{\KK}$ satisfies
  \Cref{assumption:adjunction}.\phantomspacing{}
\end{restatable}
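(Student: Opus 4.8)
The plan is to exhibit the right adjoint $\G$ explicitly and then to verify the four clauses of \Cref{assumption:adjunction}, with $\F$ instantiated as the localization functor $\localize -$, in a convenient order: the existence of the adjoint first, since the unit and counit refer to it, then the unit and counit clauses, and the factorization-system clause last. For $\G \colon \Vec{\KK} \to \TFMod{R}$ I would take restriction of scalars along the inclusion $R \hookrightarrow \KK$: a $\KK$-vector space $V$ is in particular an $R$-module, and it is torsion-free because every nonzero $r \in R$ already acts invertibly on it (with inverse $1/r \in \KK$), so $\G$ does land in $\TFMod{R}$. The required natural bijection $\Vec{\KK}(\localize M, V) \cong \TFMod{R}(M, \G V)$ is then the standard extension/restriction of scalars adjunction $\KK \otimes_R - \dashv \G$ (recall that $\localize M \cong \KK \otimes_R M$), restricted along the full inclusion $\TFMod{R} \hookrightarrow \Mod{R}$; since $\G$ already takes values in $\TFMod{R}$ this restriction is harmless, and naturality in both arguments is inherited. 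This settles clause~\eqref{assumption:adjunction:right-adjoint}.

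Next I would unwind the unit and counit of this adjunction: the unit $\unit_M \colon M \to \localize M$ is the canonical localization map $m \mapsto m/1$, and the counit $\counit_V \colon \localize{(\G V)} \to V$ is scalar multiplication, $v/r \mapsto r^{-1}v$. For clause~\eqref{assumption:adjunction:unit}: if $m/1 = 0$ in $\localize M$, then $sm = 0$ for some nonzero $s \in R$, so $m = 0$ because $M$ is torsion-free; hence $\unit_M$ is injective, that is, it belongs to $\M_\C$. For clause~\eqref{assumption:adjunction:counit}: the counit is in fact an isomorphism, with inverse $v \mapsto v/1$, since it is surjective (as $v/1 \mapsto v$) and injective (as $r^{-1}v = 0$ forces $v = 0$ in a $\KK$-vector space), and in particular it lies in $\M_\D$.

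It remains to treat clause~\eqref{assumption:adjunction:factorization-system-preservation}, for which I would use that $\localize -$, being localization $S^{-1}(-)$ at the multiplicative set $S = R \setminus \{0\}$, is exact. The inclusion $\F[\E_\C] \subseteq \E_\D$ is preservation of surjections: lifting $n \in N$ to $m \in M$ along a surjection $f \colon M \to N$ gives $\localize f(m/r) = n/r$. For $\F^{-1}[\M_\D] = \M_\C$, the inclusion $\M_\C \subseteq \F^{-1}[\M_\D]$ is preservation of injections; for the reverse inclusion, if $\localize f$ is injective then $\localize{(\ker f)} = \ker(\localize f) = 0$, and $\ker f$, being a submodule of the torsion-free module $M$, is itself torsion-free, so $\localize{(\ker f)} = 0$ forces $\ker f = 0$ (a torsion-free module with trivial localization is trivial). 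Along the way I would also record that the $(\E,\M)$-factorization of a map in $\TFMod{R}$ stays inside $\TFMod{R}$, since its image is a submodule of a torsion-free module. The only point above that is not pure bookkeeping is this last implication, that $\localize f$ injective forces $f$ injective: it fails over a general ring (any map with a nonzero torsion kernel violates it), and isolating exactly where torsion-freeness is needed, here and in clause~\eqref{assumption:adjunction:unit}, which is precisely why the lemma is stated for $\TFMod{R}$ and not for $\Mod{R}$, is the main thing to get right; everything else reduces to the standard extension/restriction of scalars adjunction and the exactness of localization.
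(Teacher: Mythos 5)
Your proposal is correct and takes essentially the same route as the paper's proof: restriction of scalars as the right adjoint, the canonical map $m \mapsto m/1$ as the unit (injective precisely because $M$ is torsion-free), the scalar-action map as the counit (an isomorphism, hence in $\M_\D$), and torsion-freeness again to get $\F^{-1}[\M_\D] \subseteq \M_\C$. The only cosmetic difference is that where the paper argues elementwise that $\localize f$ injective forces $f$ injective, you route through $\localize(\ker f) = \ker(\localize f) = 0$ and the fact that a torsion-free module with trivial localization vanishes — same idea, equally valid.
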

}

\begin{remark}
 The conditions in \Cref{assumption:adjunction} are rather mild. For readers aware of fibrations: these conditions are automatically satisfied by functors $\F:
  \C \to \D$ that are opfibrations and for which each fiber along $\F$ has a terminal object. 
  The functor $\localize -$ is an example of one, and such opfibrations also come a dime a dozen in category theory: topological functors
  \cite[\textsection 21]{adamekAbstractConcreteCategories2009}, for
  instance the forgetful functors $\Top \to \Set$ or $\Meas \to \Set$, are a
  prominent example.
  We leave for further work the study of these other concrete examples.
\end{remark}

Before giving the algorithm for solving \Cref{problem:reduction}, let us start
by relating the minimal automata in $\C$ and $\D$. \emph{A priori}, in the
setting of \Cref{assumption:functor}, they could be entirely unrelated, as we
do not impose any relationship between the factorization systems. However, the additional \Cref{assumption:adjunction} ensures that post-composing by $\F$ preserves minimality:
\purpleblock{
\begin{restatable}{proposition}{propMinimalityPreservation}
  \label{prop:minimality-preservation}
  \phantomspacing{}Under \Cref{assumption:adjunction}, let $\L$ be a
  $(\C,I,O)$-language. Then $\F \circ \Min \L \cong \Min( \F \circ \L)$.\phantomspacing{}
\end{restatable}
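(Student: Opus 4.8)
The plan is to prove that the automaton $\F\circ\Min\L$, which by \Cref{lem:lifting-functors-to-automata} is the image $\liftF(\Min\L)$ of $\Min\L$ under the lifting of $\F$ and which recognises $\F\circ\L$, is at the same time \emph{reachable} and \emph{observable} in $\Auto{\F\circ\L}$; by \Cref{con:minimal} this forces it to be isomorphic to $\Min(\F\circ\L)$. Here I say that an automaton $\B$ is reachable if $!_\B$ lies in $\E$ and observable if $\flip_\B$ lies in $\M$. Observe that $\Min\L$ is both: by definition it is the $(\E,\M)$-factorisation of the canonical morphism $\Ainit^\L\to\Afinal^\L$ in $\Auto{\L}$, so $!_{\Min\L}$ is its $\E$-part and $\flip_{\Min\L}$ its $\M$-part.

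The first step is that $\liftF$ carries the initial object $\Ainit^\L$ of $\Auto{\L}$ to the initial object $\Ainit^{\F\circ\L}$ of $\Auto{\F\circ\L}$. Indeed $\Ainit^\L$ is built entirely from colimits: its state object is the copower $\coprod_{w\in\Sigma^*}I$, its transition at $\sigma$ is induced by the reindexing $w\mapsto w\sigma$, the morphism $\Ainit^\L(\triangleright)$ is a coprojection, and $\Ainit^\L(\triangleleft)$ is the copairing of the morphisms $\L(\triangleright w\triangleleft)$. Since $\F$ is a left adjoint (\Cref{assumption:adjunction}, item~(\ref{assumption:adjunction:right-adjoint})) it preserves all of these colimits, and it sends $\L(\triangleright w\triangleleft)$ to $(\F\circ\L)(\triangleright w\triangleleft)$; hence $\liftF(\Ainit^\L)\cong\Ainit^{\F\circ\L}$. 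Together with the facts $\F[\E_\C]\subseteq\E_\D$ and $\F[\M_\C]\subseteq\M_\D$ (both part of item~(\ref{assumption:adjunction:factorization-system-preservation}) of \Cref{assumption:adjunction}, the second because $\F^{-1}[\M_\D]=\M_\C$), this already yields that $\liftF$ preserves reachability: if $!_\A\in\E$, then $!_{\liftF(\A)}$, which under the identification above is $\liftF(!_\A)$, again lies in $\E$.

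The crux is that $\liftF$ preserves observability, and the difficulty is that $\liftF(\Afinal^\L)$ is \emph{not} the final object $\Afinal^{\F\circ\L}$ of $\Auto{\F\circ\L}$: this would require $\F$ to preserve the power $O^{\Sigma^*}$ that underlies $\Afinal^\L$, which a left adjoint does not do in general. There is nonetheless the canonical morphism $p\colon\liftF(\Afinal^\L)\to\Afinal^{\F\circ\L}$ into the final object, whose $\St$-component is the comparison map $p_\St\colon\F(O^{\Sigma^*})\to(\F O)^{\Sigma^*}$ characterised by $\pi_w\circ p_\St=\F(\pi_w)$ for every $w\in\Sigma^*$; it suffices to show $p\in\M$, equivalently $p_\St\in\M_\D$. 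For this I would transpose $p_\St$ across the adjunction $\F\dashv\G$: since the right adjoint $\G$ preserves the power $(\F O)^{\Sigma^*}$, and using naturality of the unit $\unit$, one checks that the transpose $\widehat{p_\St}\colon O^{\Sigma^*}\to\G\bigl((\F O)^{\Sigma^*}\bigr)\cong(\G\F O)^{\Sigma^*}$ is exactly the power $(\unit_O)^{\Sigma^*}$ of the unit at $O$. Now $\unit_O\in\M_\C$ by item~(\ref{assumption:adjunction:unit}) of \Cref{assumption:adjunction}, and $\M_\C$, being the right class of a factorisation system, is closed under products, so $(\unit_O)^{\Sigma^*}\in\M_\C$; applying $\F$ keeps it in $\M_\D$; and the triangle identity gives $p_\St=\counit_{(\F O)^{\Sigma^*}}\circ\F\bigl((\unit_O)^{\Sigma^*}\bigr)$, where $\counit_{(\F O)^{\Sigma^*}}\in\M_\D$ by item~(\ref{assumption:adjunction:counit}) of \Cref{assumption:adjunction}, so $p_\St\in\M_\D$ because $\M_\D$ is closed under composition. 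I expect this comparison-map step to be the main obstacle: it is exactly here that the unit and counit conditions of \Cref{assumption:adjunction}, rather than merely $\F$ being a left adjoint, are indispensable, for without them $p_\St$ need not be monic and $\F$ could collapse the minimal automaton.

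Finally I would assemble the pieces. For any $\A\in\Auto{\L}$, uniqueness of morphisms into the final object gives $\flip_{\liftF(\A)}=p\circ\liftF(\flip_\A)$; hence if $\A$ is observable then $\liftF(\flip_\A)\in\M$ (as $\F[\M_\C]\subseteq\M_\D$) and $p\in\M$, so $\flip_{\liftF(\A)}\in\M$, i.e., $\liftF(\A)$ is observable. Applying both preservation statements to $\A=\Min\L$ shows that $\liftF(\Min\L)$ is reachable and observable in $\Auto{\F\circ\L}$. For such an automaton the factorisations defining $\Reach$ and $\Obs$ are identities, so $\Reach\bigl(\Obs(\liftF(\Min\L))\bigr)\cong\liftF(\Min\L)$, while by \Cref{con:minimal} the left-hand side is $\Min(\F\circ\L)$. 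Therefore $\F\circ\Min\L=\liftF(\Min\L)\cong\Min(\F\circ\L)$, which completes the proof.
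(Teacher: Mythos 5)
Your proposal is correct and takes essentially the same route as the paper: both exploit that $\F$, being a left adjoint, carries $\Ainit^\L$ to $\Ainit^{\F\circ\L}$, and both hinge on decomposing the comparison map $\F\bigl(\prod_{\Sigma^*}O\bigr)\to\prod_{\Sigma^*}\F O$ as $\counit_{\prod_{\Sigma^*}\F O}\circ\F\bigl(\prod_{\Sigma^*}\unit_O\bigr)$ (up to the right adjoint's product-preservation isomorphism) and arguing each factor is in $\M$. The only cosmetic difference is that you phrase the conclusion via ``$\liftF(\Min\L)$ is reachable and observable'' whereas the paper directly checks the two unique morphisms $\Ainit(\F\circ\L)\to\F\circ\Min\L$ and $\F\circ\Min\L\to\Afinal(\F\circ\L)$; and you invoke ``the triangle identity'' where what you actually use is the adjoint-transpose formula $g_\flat=\counit\circ\F g$, but the formula you derive is correct.
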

}

\yellowblock{\phantomspacing{}
What this says in the $R$-weighted setting is that the minimal
$R$-modular automaton recognizing an $R$-weighted language $L_R\colon\Sigma^*\to R$ is also minimal
among all $\KK$-weighted automaton recognizing $L_{\KK}\colon\Sigma^*\to R\hookrightarrow \KK$. This, combined with \Cref{prop:almost-minimal-dedekind-wa} and the fact that N\oe{}therian integral domains are weak Fatou rings, shows that Dedekind domains are what we call \emph{almost-strong} Fatou rings:
\begin{restatable}{corollary}{dedekindAlmostStrongFatou}
    \label{corollary:dedekind-almost-strong-fatou}
    Let $R$ be a Dedekind domain, $\KK$ its field of fractions, and $L$ an $R$-weighted language. If $L$ is computed by a $\KK$-weighted automaton with $n$ states, then it is also computed by an $R$-weighted automaton with $n+1$ states.
\end{restatable}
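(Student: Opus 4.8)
The plan is to combine three facts from earlier in the paper: the weak Fatou property of Noetherian integral domains, the rank comparison furnished by \Cref{prop:minimality-preservation}, and the almost-minimization bound of \Cref{prop:almost-minimal-dedekind-wa}. First, since a Dedekind domain is in particular a Noetherian integral domain, hence a weak Fatou ring, the hypothesis that $L$ is $R$-valued and is computed by a finite $\KK$-weighted automaton already guarantees that $L$ is computed by \emph{some} finite $R$-weighted automaton. This is the only place where we appeal to a fact outside the categorical machinery developed here.

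The key step is to bound $\operatorname{rank}(L)$, the rank of the minimal $R$-modular automaton $\Min L$, which by definition is $\dim \localize{(\Min L)}$. By \Cref{lemma:extension-of-scalars-left-adjoint}, \Cref{assumption:adjunction} holds with $\C \coloneq \TFMod{R}$, $\D \coloneq \Vec{\KK}$, $\F \coloneq \localize -$ and $I = O \coloneq R$ (note that $R$ is torsion-free, being an integral domain); moreover, as recalled in \Cref{sec:minimal-automata-structure}, for an integral domain the state module of the minimal $R$-modular automaton is torsion-free, so the minimal automaton of $L$ computed inside $\TFMod{R}$ is exactly $\Min L$. Now \Cref{prop:minimality-preservation} gives $\F \circ \Min L \cong \Min(\F \circ L)$, that is, $\localize{(\Min L)}$ is the state space of the minimal $\KK$-weighted automaton recognizing $L_{\KK} \coloneq \localize L$. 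Since $L_{\KK}$ is computed by a $\KK$-weighted automaton with $n$ states, this minimal $\KK$-weighted automaton has dimension at most $n$, hence $\operatorname{rank}(L) = \dim \localize{(\Min L)} \le n$.

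Finally, by the two previous steps $L$ is computed by a finite $R$-WA and has rank $r \coloneq \operatorname{rank}(L) \le n$, so \Cref{prop:almost-minimal-dedekind-wa} produces an $R$-weighted automaton computing $L$ with at most $r + 1 \le n + 1$ states; padding with dead states yields exactly $n + 1$. I expect the only delicate point to be the bookkeeping in the middle step: verifying that $L$, regarded in the $\TFMod{R}$ setting of \Cref{assumption:adjunction}, has the same minimal automaton as the $R$-modular one, and that the rank hypothesis of \Cref{prop:almost-minimal-dedekind-wa} is literally $\dim \localize{(\Min L)}$, so that \Cref{prop:minimality-preservation} really does transfer the dimension bound from $\KK$ back to $R$. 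These facts are essentially contained in \Cref{sec:minimal-automata-structure}; the remaining steps are immediate applications of the cited results.
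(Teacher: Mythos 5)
Your proof is correct and follows essentially the same route as the paper: weak Fatou for Noetherian integral domains, \Cref{prop:minimality-preservation} (instantiated via $\localize - : \TFMod{R}\to\Vec{\KK}$) to pin down the rank of $\Min L$, and \Cref{prop:almost-minimal-dedekind-wa} to conclude. The one small improvement you make is treating $n$ as a bound rather than the dimension of the minimal $\KK$-weighted automaton, which handles the hypothesis as literally stated; the paper's proof implicitly takes $n$ minimal and leaves this detail to the reader.
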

This suggests a natural question: Are other rings, beyond Dedekind domains, still almost-strong Fatou rings, possibly with a different number of extra states than $1$? We leave this for future work.
\phantomspacing{}}

Notice how any algorithm that solves \Cref{problem:reduction} implements in
particular minimization of $(\C,I,O)$-automata: given a $(\C,I,O)$-automaton
$\A$, starting with $\F \circ \A$ as input $(\D,\F I,\F O)$-automaton, the
output will be the minimal automaton equivalent to $\A$ (because $\F \circ \A$
recognizes a language that factors through $\F$). It is already known that
minimization procedures can be understood categorically~\cite{colcombetpetrisan2017,Adamek-coalg,Rot2016}. In the functorial framework, to minimize a
$(\C,I,O)$-automaton $\A$, one should compute $\A' = \Obs \A$ (intuitively,
merging equivalent states) and then $\Reach(\A')$ (intuitively, restricting to
reachable states).
{ }
\purpleblock{
Recall from \Cref{con:minimal} that, when working in a good-for-minimization category $\C$ equipped with a factorization system $(\E,\M)$, the automaton $\Reach{\A'}$ is defined as the $(\E,\M)$-image of the unique morphism $!_\A'\colon\Ainit\to\A'$. Its state object is -- in view of~\cite[Lemma~3.2]{petrisanAutomataMinimizationFunctorial2020}  -- the image of a unique morphism $[\A'(\triangleright w)]_{w \in \Sigma^*}\colon\coprod_{\Sigma^*} I \to \A'(\St)$ obtained via the universal property of the coproduct of $\Sigma^*$-many copies of $I$. A generic algorithm for computing $\Reach \A'$ %
is explained in
more detail in \cite[\textsection 3]{aristoteFunctorialApproachMinimizing2023b}:
one aims to find a \emph{finite $I$-generating family of words of $\A'(\St)$}, as we define now. 
\begin{definition}\label{dfn:I-generating}
Given $W\subseteq\Sigma^*$ write  $\wordImage{\A'}{W}$  for the $(\E,\M)$-image of the morphism $[\A'(\triangleright w)]_{w \in W}: \coprod_W I\to \A'(\St)$.  A \emph{finite $I$-generating family of words of $\A'(\St)$} is a finite set $W \subseteq \Sigma^*$ such that $(\Reach
\A')(\St) \cong \wordImage{\A'}{W}$.
\end{definition}

The idea of the generic algorithm for computing $\Reach(\A')$ is to construct an increasing sequence of $\M$-subobjects $\A'(\triangleright W_0)\rightarrowtail\A'(\triangleright W_1) \rightarrowtail \ldots \rightarrowtail\Reach(\A')(\St)$ -- starting  with $W_0 = \{ \varepsilon \}$, and, while
there is some $w \in W_i$ and $\sigma \in \Sigma$ that makes this image
strictly increase, one adds $w\sigma$ to $W_{i+1}$. Under the right assumptions, this sequence stabilizes in finitely many steps and yields $\Reach(\A')$.
}

\blueblock{\phantomspacing{}
For $R$-weighted automata, finding an $I$-generating family for the forward module of an automaton $\A'$ means finding a finite set $W$ of
words such that any reachable configuration of $\A'$ is 
a linear combination of configurations reached by following some word in
$W$. This can be achieved  by starting with $W = \{ \varepsilon \}$ and adding
words to $W$ as long as this strictly increases the module of $W$-reachable
configurations in $\A'(\St)$.
\phantomspacing{}}

\yellowblock{\phantomspacing{}
If $R$ is a field, then the algorithm just described is precisely the usual
minimization algorithm for $R$-weighted automata, and it has polynomial-time
complexity in the dimension of the minimized automaton because any strictly
increasing chain of vector spaces $V_0 \subset \cdots \subset V_n = (\Reach
\A')(\St)$ must have length at most the dimension of $(\Reach \A')(\St)$. But if
$R$ is not a field, say if $R = \ZZ$ and $\Reach(\A')(\St) = \ZZ$, then
there exist chains of strictly increasing submodules of $\ZZ$ of arbitrary lengths, despite $\ZZ$ having rank 1. For example, for all $n>0$, we have the chain $0 \subsetneq \langle 2^n \rangle \subsetneq \langle  2^{n-1}
  \rangle \subsetneq \cdots \subsetneq \langle 1 \rangle = \ZZ$. 

To avoid this complexity pitfall,~\cite{buna-margineanLearningPolynomialRecursive2024} develops the following strategy for $\ZZ$-weighted automata: they first
compute a $\QQ$-generating family of words $W$ for the corresponding
$\QQ$-weighted automaton, so as to first fill-up the rank of the module of
$Q$-reachable configurations, and only afterwards, the set $W$ is completed to an 
$R$-generating family.
\phantomspacing{}}

 In the remainder of this section we generalize this idea beyond $\ZZ$-automata to the  abstract  setting of \Cref{assumption:adjunction}.

The functor $\F$ already allows us to transform a $(\C,I,O)$-automaton into a $(\D,\F I,\F O)$-automaton. 
Furthermore, following~\cite[Lemma~3.4]{petrisanAutomataMinimizationFunctorial2020}, the adjoint functors $\F$ and $\G$ can be lifted to adjoint functors $\Fflat$ and $\G^\sharp$ between the categories $\Auto{\C, I, \G\F O}$ and $\Auto{\D,\F I,\F O}$, for which we recall the definitions in \Cref{app:proofs-categorical-reductions}. To understand the gist of \Cref{alg:reduction}, we just recall that given a $(\D,\F I,\F O)$-automaton $\A$, the $(\C,I,\G\F O)$-automaton $\G^\sharp(\A)$ with state object $\G(\A(\St))$ is given below, where $\A(\triangleright)^\sharp=\G\A(\triangleright)\circ\unit_I$ corresponds to $\A(\triangleright)\colon\F I\to \A(\St)$ via the adjunction:\\
\centerline{$
\begin{tikzcd}[ampersand replacement=\&, column sep=large]
  I \& {\G\A(\St)} \& {\G\F O\,.}
    \arrow["{\A(\triangleright)^\sharp}", from=1-1, to=1-2]
    \arrow["{\G\A(a)}"', from=1-2, to=1-2, loop, in=125, out=55, distance=10mm]
    \arrow["{\G\A(\triangleleft)}", from=1-2, to=1-3]
  \end{tikzcd}
$}

The crucial property of the functor $\G^\sharp$ is that it preserves observable automata, that is, automata $\A$ such that $\A\cong\Obs{\A}$, see \Cref{app:proofs-categorical-reductions} for full proofs. For this reason,  $\Reach(\G^\sharp(\Obs{A}))$ is a minimal automaton, which happens to come from a minimal $(\C,I,O)$-automaton whenever the language accepted by $\A$ factors through $\F$.  We can now state the generic \Cref{alg:reduction}, which aims to compute a generating family of words for $\Reach(\G^\sharp(\Obs{A}))$, starting  in $\D$ and completing it in $\C$. The algorithm only
describes the steps to compute the state-space of the minimal automaton. Its transitions are easily inferred  from the generating family of words, as explained, in the categorical setting, in
\cite[Algorithm 2]{aristoteFunctorialApproachMinimizing2023b}.

\begin{algorithm}
  \caption{(within \Cref{assumption:adjunction}) Transforming a
    $(\D,\F I, \F O)$-automaton into a $(\C,I,O)$-one}
  \label{alg:reduction}
  \begin{algorithmic}[1]
    \REQUIRE a $(\D,\F I,\F O)$-automaton $\A$ recognizing $\L$ %
    \ENSURE if $\L = \F \circ \L'$ for some $(\C,I,O)$-language $\L'$:\\ \hspace{28pt} $(\Min
    \L')(\St)$; \\ \hspace{22pt} otherwise: $w \in \Sigma^*$ such that $\L(\triangleright w
    \triangleleft) \notin \F[\C(I,O)]$

    \COMMENT{Merge equivalent states to obtain $\A'$}
    \STATE compute $\A' = \Obs_{(\E_\D,\M_\D)} \A$ \label{alg:reduction:line:obs}

    \COMMENT{Compute an $\F I$-generating family of words for $\A'(\St)$ in $\D$}
    \STATE $W \coloneq \{ \varepsilon \}$ \label{alg:reduction:line:begin_1st_while}

    \WHILE{there is some $(w, \sigma) \in W \times \Sigma$ s.t. the $(\E_\D,\M_\D)$-images 
      $\A'(\triangleright W)$ and $\A'(\triangleright (W
        \cup \{w\sigma\}))$ are not isomorphic}

    \IFF{$\A'(\triangleright w\sigma \triangleleft) \notin \F[\C(I,O)]$}{\algorithmicreturn{} $w \sigma$}

    \STATE $W \coloneq W \cup \{ w\sigma \}$

    \ENDWHILE \label{alg:reduction:line:end_1st_while}

    \COMMENT{Complete $W$ into an $I$-generating family of words for $\G \A'(\St)$ in $\C$}

    \WHILE{there is some $(w, \sigma) \in W \times \Sigma$ s.t.
      the $(\E_\C,\M_\C)$-images 
      $\G^\sharp(\A')(\triangleright W)$ and $\G^\sharp(\A')(\triangleright (W\cup\{w\sigma\}))$ are not isomorphic}
      \label{alg:reduction:line:begin_2nd_while}

    \IFF{$\A'(\triangleright w\sigma \triangleleft) \notin
      \F[\C(I,O)]$}{\algorithmicreturn{} $w\sigma$}

    \STATE $W \coloneq W \cup \{ w\sigma \}$
    \ENDWHILE \label{alg:reduction:line:end_2nd_while}

    \COMMENT{$(\Min \L')(\St)$ is the space of $W$-reachable states of
      $\G\A'(\St)$ in $\C$}
    \RETURN the $(\E_\C,\M_\C)$-image 
    $\G^\sharp(\A')(\triangleright W)$ 
    \label{alg:reduction:line:extract_basis}
  \end{algorithmic}
\end{algorithm}

\begin{restatable}{theorem}{reductionCorrect}  \label{thm:alg:reduction:correctness}
  \Cref{alg:reduction} is correct\footnotemark and reduces the problem of learning minimal
  $(\C,I,O)$-automata to that of learning minimal $(\D,\F I, \F O)$-automata.
\end{restatable}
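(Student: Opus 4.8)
The plan is to verify that \Cref{alg:reduction} solves \Cref{problem:reduction}; the reduction between learning problems then follows from the oracle construction sketched in \Cref{sec:learning-and-reduction}. I would begin by pinning down what the algorithm computes. Let $\A' \coloneq \Obs_{(\E_\D,\M_\D)}(\A)$ be the automaton produced on \Cref{alg:reduction:line:obs}: it recognizes the same language $\L$ as $\A$ and is observable. Since $\unit$ has all its components in $\M_\C$ (third condition of \Cref{assumption:adjunction}) and $\M_\C$-morphisms are monic, $\F$ is faithful, so a $(\C,I,O)$-language $\L'$ with $\F \circ \L' = \L$ is unique when it exists, and $\Min \L'$ is well-defined. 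Invoking that $\G^\sharp$ sends observable automata to observable automata (\Cref{app:proofs-categorical-reductions}), $\G^\sharp(\A')$ is observable, so by \Cref{con:minimal} $\Reach(\G^\sharp(\A'))$ is the minimal automaton in its component of $\Auto{\C,I,\G\F O}$; and whenever $\L = \F \circ \L'$, comparing recognized languages and using \Cref{prop:minimality-preservation} together with the lifted adjunction $\Fflat \dashv \G^\sharp$ of \cite[Lemma~3.4]{petrisanAutomataMinimizationFunctorial2020}, this minimal automaton is the image of $\Min \L'$ under the faithful embedding $\Auto{\C,I,O} \hookrightarrow \Auto{\C,I,\G\F O}$ that post-composes output maps with $\unit_O$. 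In particular $\Reach(\G^\sharp(\A'))(\St) \cong (\Min \L')(\St)$, the object the algorithm aims to return.

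Next I would show that the two \textbf{while} loops jointly compute a finite $I$-generating family of words $W$ for $\Reach(\G^\sharp(\A'))(\St)$ in the sense of \Cref{dfn:I-generating}, the first loop discharging as much of this as possible in $\D$. The pivotal observation is that $\F$ preserves $(\E_\C,\M_\C)$-factorizations: if $f = m \circ e$ with $e \in \E_\C$, $m \in \M_\C$, then by the first condition of \Cref{assumption:adjunction} $\F e \in \E_\D$ and $\F m \in \M_\D$, so $\F f = \F m \circ \F e$ is the $(\E_\D,\M_\D)$-factorization of $\F f$; moreover $\F$, being a left adjoint, preserves the coproducts $\coprod_W I$. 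Combining this with the fact that $\G^\sharp(\A')(\triangleright w)$ is the adjoint transpose of $\A'(\triangleright w)$, so that $[\A'(\triangleright w)]_{w\in W} = \counit_{\A'(\St)} \circ \F\bigl([\G^\sharp(\A')(\triangleright w)]_{w\in W}\bigr)$ with $\counit_{\A'(\St)} \in \M_\D$ (fourth condition of \Cref{assumption:adjunction}), one gets $\F\bigl(\wordImage{\G^\sharp(\A')}{W}\bigr) \cong \wordImage{\A'}{W}$ as $\M_\D$-subobjects of $\A'(\St)$, naturally in $W$. Hence the first loop — whose test compares $\wordImage{\A'}{W}$ against $\wordImage{\A'}{W\cup\{w\sigma\}}$ — terminates exactly when $W$ makes $\wordImage{\G^\sharp(\A')}{W}$ maximal up to its $\F$-image (``up to rank''), and the standard argument that a $\Sigma$-closed generating family captures all reachable states (\cite[\textsection 3]{aristoteFunctorialApproachMinimizing2023b}) shows $\wordImage{\A'}{W} = \Reach(\A')(\St)$ at that point. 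The second loop then keeps enlarging $W$ in $\C$ until $\wordImage{\G^\sharp(\A')}{W}$ stops growing, i.e.\ equals $\Reach(\G^\sharp(\A'))(\St)$, which is returned on \Cref{alg:reduction:line:extract_basis}.

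To settle correctness I would analyse the conditional returns. If the algorithm ever tests a pair $(w,\sigma)$ with $\A'(\triangleright w\sigma\triangleleft) = \L(\triangleright w\sigma\triangleleft) \notin \F[\C(I,O)]$, it returns $w\sigma$, a valid witness that $\L$ does not factor through $\F$. Conversely, suppose the algorithm halts without ever returning such a word; then every word added to $W$ — and $\varepsilon$, which should be tested at initialization — has $\G^\sharp(\A')(\triangleright w\triangleleft)$ factoring through $\unit_O$. Writing $\kappa\colon \Reach(\G^\sharp(\A'))(\St) \rightarrowtail \G\A'(\St)$ for the inclusion, the family $[\G^\sharp(\A')(\triangleright w)]_{w\in W}$ factors as an $\E_\C$-morphism onto $\Reach(\G^\sharp(\A'))(\St)$ followed by $\kappa$; post-composing with $\G\A'(\triangleleft)$, using that a morphism out of a coproduct factors through a mono iff each component does, and diagonal fill-in against $\unit_O \in \M_\C$ gives that $\G\A'(\triangleleft)\circ\kappa$ factors through $\unit_O$, and reachability propagates this to every word, so $\L(\triangleright w\triangleleft) \in \F[\C(I,O)]$ for all $w$, i.e.\ $\L = \F\circ\L'$ for the unique such $\L'$. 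By the first paragraph the returned object $\wordImage{\G^\sharp(\A')}{W} = \Reach(\G^\sharp(\A'))(\St)$ is then exactly $(\Min\L')(\St)$.

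It remains to establish termination, which I expect to be the main obstacle since it is entangled with completeness of the bad-word test. The first loop is a run of the generic $\D$-minimization procedure, so it terminates by the standing learnability hypothesis on $\D$ (for $\D = \Vec{\KK}$ simply because dimensions of $\M_\D$-subobjects of $\A'(\St)$ are bounded). For the second loop, the tested images form an ascending chain of $\M_\C$-subobjects of $\Reach(\G^\sharp(\A'))(\St)$; when $\L = \F\circ\L'$ this object is $(\Min\L')(\St)$, a finitely generated module over the Noetherian ring $R$ (finite generation following, concretely, from $R$ being weak Fatou and \cite[Theorem~5.21]{drosteWeightedAutomata2021}), so the chain stabilizes and the loop halts. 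The delicate case is $\L \neq \F\circ\L'$, where $\Reach(\G^\sharp(\A'))(\St)$ need not satisfy the ascending chain condition (e.g.\ $\ZZ[1/2]$ inside $\QQ$): here I would use a fair enumeration of candidate pairs — present in any concrete implementation, e.g.\ shortlex — to argue that an infinite run without a bad-word return would make $W$ eventually $\M_\C$-generate all of $\Reach(\G^\sharp(\A'))(\St)$ with every processed word good, at which point the diagonal fill-in argument of the previous paragraph forces $\L = \F\circ\L'$, a contradiction; hence if $\L$ does not factor through $\F$ the loop returns a bad word. This proves \Cref{alg:reduction} solves \Cref{problem:reduction}. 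The reduction of learning problems follows exactly as in \Cref{sec:learning-and-reduction}: given an oracle for a target $(\C,I,O)$-language $\L'$, one simulates an oracle for $\F\circ\L'$ by forwarding value queries and, on an equivalence query for $\A$, running \Cref{alg:reduction} — returning its bad word verbatim, or else reconstructing $\Min\L''$ (its transitions inferred from $W$ as in \cite[Algorithm~2]{aristoteFunctorialApproachMinimizing2023b}), querying the $\L'$-oracle on it and relaying the answer, with faithfulness of $\F$ ensuring relayed counterexamples remain counterexamples — and finally applying \Cref{alg:reduction} once more to the learned $\Min(\F\circ\L')$ to extract $(\Min\L')(\St)$.
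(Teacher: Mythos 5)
Your correctness argument matches the paper's route essentially line for line: you merge states to get $\A' = \Obs\A$, use that $\G^\sharp$ preserves observability (the paper's \Cref{lem:Gsharp-pres-obs}), transfer images along $\F$ and the counit to relate $\F(\wordImage{\G^\sharp(\A')}{W})$ with $\wordImage{\A'}{W}$ (the paper does this inside \Cref{lemma:alg:reduction:second_while_loop_invariant}), appeal to the generating-family characterization of $\Reach$ from \cite{aristoteFunctorialApproachMinimizing2023b}, and close with the diagonal fill-in argument against $\unit_O \in \M_\C$ to show the returned automaton lies in the image of $\unit_*$ and therefore is $(\Min \L')(\St)$ by the paper's \Cref{lem:reach-sharp-obs-is-minimal}. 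The oracle reduction at the end is identical to the paper's proof of the theorem. Your remark that $\varepsilon$ should also be tested at initialization is a fair observation that the paper glosses over.

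Where you go further than the paper is termination, which the paper deliberately excludes via its footnote, and this is exactly where your argument has a gap. For the case $\L \neq \F\circ\L'$ you argue by contradiction: ``an infinite run without a bad-word return would make $W$ eventually $\M_\C$-generate all of $\Reach(\G^\sharp(\A'))(\St)$.'' This is not justified. The while-loop condition only adds (and hence tests) a word $w\sigma$ with $w \in W$ when it strictly grows the $\M_\C$-image. A fair enumeration guarantees every pair $(w,\sigma) \in W \times \Sigma$ is examined, but it does not guarantee the ancestors of a bad word ever enter $W$: entire subtrees of $\Sigma^*$ can be permanently excluded because at no point do they increase the current image, and then the bad word's parent is never in $W$ and $(w,\sigma) = (\text{parent},\tau)$ is never considered. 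In such a run the images can form a strictly increasing chain of $\M_\C$-subobjects forever (precisely the $\ZZ \subsetneq \tfrac12\ZZ \subsetneq \tfrac14\ZZ \subsetneq \cdots$ phenomenon you mention), so the chain never exhausts $\Reach(\G^\sharp(\A'))(\St)$ at any finite stage, and the diagonal fill-in argument, which needs a finite $I$-generating family, never becomes available. The paper sidesteps this by proving only that \emph{if} the algorithm reaches \Cref{alg:reduction:line:extract_basis} then the output is correct, and \emph{if} it returns a word then the word is a valid counterexample, leaving termination to future assumptions (e.g. in the concrete $\OK$ instantiation it follows from the rank/discriminant bound of \Cref{lem:lengthdchain}, which is a genuine extra input, not a consequence of fair enumeration alone).
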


We briefly describe \Cref{alg:reduction} in the case of $R$-weighted setting, i.e. for $\C = \TFMod{R}$ 
and $\D
= \Mod{\KK}$ both equipped with the (surjections, injections) factorization
system, $\F = \localize -$ and $I = O = R$ so that $\F I = \F O = \KK$. We
focus in particular on the case $R = \ZZ$ that was developed in
\cite{buna-margineanLearningPolynomialRecursive2024}. %

The algorithm starts with a $\KK$-weighted automaton $\A$ (equivalently a
$(\Vec{\KK}, \KK, \KK)$-automaton) recognizing a $\KK$-weighted language $\L$.
On \cref{alg:reduction:line:obs}, its equivalent states are first merged, thus computing its backward conjugate $\A'$ -- a standard procedure whose complexity is  linear in the dimension of $\A'(\St)$, as described, for example, in~\cite{kieferNotesEquivalenceMinimization2020}. 

On
\crefrange{alg:reduction:line:begin_1st_while}{alg:reduction:line:end_1st_while},
we start with $W = \{ \varepsilon \}$ and add $w\sigma$ (with $w \in W$ and
$\sigma \in \Sigma$) to $W$ whenever $\A'(\triangleright w\sigma)$ is not in the
$\KK$-span $\langle \A'(\triangleright w') \mid w' \in W \rangle_\KK$. The number
of words added to $W$ then is bounded by the dimension of the vector space
$(\Min \L)(\St)$. If at any point some $w\sigma$ is such that $\A'(\triangleright
w\sigma \triangleleft) \notin R$, the algorithm stops and outputs $w\sigma$ as a
counterexample as $\L(\triangleright w\sigma \triangleleft) \notin R$.

\footnotetext{We do not mention termination here but it could also be encompassed by the categorical framework, in a similar fashion to what is done in \cite{colcombetLearningAutomataTransducers2021}.}

On
\crefrange{alg:reduction:line:begin_2nd_while}{alg:reduction:line:end_2nd_while},
we continue expanding the set $W$ obtained above by adding $w \sigma$ to $W$ whenever
$\A'(\triangleright w\sigma)$ is not in the \emph{$R$-span} $\langle \A'(\triangleright
w') \mid w' \in W \rangle_R$. It is  not obvious how this can actually
be checked: for $R = \ZZ$, the authors of
\cite{buna-margineanLearningPolynomialRecursive2024} make use of the so-called \emph{Smith
  Normal Form} of a matrix, which can be computed in polynomial time. Again, any
counterexample to $\L$ actually being $R$-weighted interrupts the algorithm.
The number of words added to $W$ in the second while loop is now bounded by the
maximum length of certain strict chains of submodules $M_0 \subset \cdots
\subset M_n$, where $M_0$ and $M_n$ are fixed and of rank the dimension of
$(\Min \L)(\St)$. For $R = \ZZ$, this maximum length is bounded polynomially 
in this rank and the bit-size of the encoding of $\A$
\cite{buna-margineanLearningPolynomialRecursive2024}. Requiring more generally that $R$ be Noetherian -- this is the case in particular of PIDs and Dedekind domains -- ensures that such chains will at least always be finite when $\L'$ is computed by a $\KK$-weighted automaton, and thus that the algorithm will terminate.

Finally  \cref{alg:reduction:line:extract_basis} is reached if and only if
$\L = \localize \L'$ for some $R$-weighted language $\L'$. In this case, a
representation of $(\Min \L')(\St)$ is extracted from the generating family $W$.
What this representation will actually be varies depending on $R$: for $R =
\ZZ$ and when $\L'$ is computed by a $\QQ$-WA, $(\Min \L')(\St)$ has a basis which can be computed using the
\emph{Smith Normal Form} \cite{buna-margineanLearningPolynomialRecursive2024}.
But for other rings $R$ such a basis may not exist. 
For $R$ a Dedekind domain we have mentioned in \Cref{sec:minimal-automata-structure} that when $(\Min \L')(\St)$ is finitely-generated with rank $n$, it has a pseudo-basis of size $n$ which gives rise by \Cref{prop:almost-minimal-dedekind-wa} to an $R$-WA with $n+1$ states.

The main optimization of \Cref{alg:reduction} with respect to the usual learning
algorithm lies in the result stated in \Cref{lemma:alg:reduction:second_while_loop_invariant}, which generalizes the following key observation from the $R$-weighted setting:
the $R$-module of $W$-reachable configurations in $\A'$ may increase during the second
\textbf{while} loop, but the corresponding $\KK$-vector space of $W$-reachable configurations does not, therefore the module's rank is also fixed.

\begin{restatable}{lemma}{lemAlgReductionSecondWhileLoopInvariant}
  \label{lemma:alg:reduction:second_while_loop_invariant}
  Let $W_i \subseteq W_{i+1} \subseteq
  \Sigma^*$ be any two consecutive values of $W$ during \Cref{alg:reduction}'s
  second \textbf{while} loop
  (\crefrange{alg:reduction:line:begin_2nd_while}{alg:reduction:line:end_2nd_while}),
  and write $m_i: \G^\sharp(\A')(\triangleright W_i) \rightarrowtail \G^\sharp(\A')(\triangleright
  W_{i+1})$ for the $\M_\C$-morphism between the corresponding
  $(\E_\C,\M_\C)$-images.   Then, $\F m_i$
  is an isomorphism in $\D$. 
\end{restatable}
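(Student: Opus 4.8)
\emph{Plan.} The idea is to push the entire second \textbf{while} loop forward along $\F$ and observe that, on the $\D$-side, nothing changes: the first loop has already saturated the $(\E_\D,\M_\D)$-image before the second loop begins. So I would prove an image-transport identity in $\D$, then invoke the termination invariant of the first loop.

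\emph{Step 1: an image-transport identity.} First I would establish that for every finite $W\subseteq\Sigma^*$ there is an isomorphism $\F\bigl(\G^\sharp(\A')(\triangleright W)\bigr)\cong\A'(\triangleright W)$, natural in $W$ along inclusions. Being a left adjoint (condition~\ref{assumption:adjunction:right-adjoint} of \Cref{assumption:adjunction}), $\F$ preserves coproducts, hence sends the tupling $[\G^\sharp(\A')(\triangleright w)]_{w\in W}\colon\coprod_W I\to\G(\A'(\St))$ to $[\F\G^\sharp(\A')(\triangleright w)]_{w\in W}\colon\coprod_W\F I\to\F\G(\A'(\St))$. Using $\G^\sharp(\A')(\triangleright w)=\G(\A'(\triangleright w))\circ\unit_I$, the naturality of $\counit$ together with the triangle identity $\counit_{\F I}\circ\F\unit_I=\Id_{\F I}$ yields $\counit_{\A'(\St)}\circ\F\G^\sharp(\A')(\triangleright w)=\A'(\triangleright w)$, and therefore $\counit_{\A'(\St)}\circ\F[\G^\sharp(\A')(\triangleright w)]_{w\in W}=[\A'(\triangleright w)]_{w\in W}$. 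Now $\F$ sends $(\E_\C,\M_\C)$-factorizations to $(\E_\D,\M_\D)$-factorizations (by condition~\ref{assumption:adjunction:factorization-system-preservation}: $\F[\E_\C]\subseteq\E_\D$ and $\F^{-1}[\M_\D]=\M_\C$, the latter giving $\F[\M_\C]\subseteq\M_\D$), so the $(\E_\D,\M_\D)$-image object of $\F[\G^\sharp(\A')(\triangleright w)]_{w\in W}$ is $\F(\G^\sharp(\A')(\triangleright W))$; and since $\counit_{\A'(\St)}\in\M_\D$ (condition~\ref{assumption:adjunction:counit}) and $\M_\D$ is closed under composition, precomposing the $\M_\D$-part of that factorization with $\counit_{\A'(\St)}$ exhibits the $(\E_\D,\M_\D)$-factorization of $[\A'(\triangleright w)]_{w\in W}$, whose image object is $\A'(\triangleright W)$. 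So the two image objects coincide, $\F(\G^\sharp(\A')(\triangleright W))\cong\A'(\triangleright W)$; naturality in $W$ follows from uniqueness of factorizations, and under it the comparison $m_i$ is carried to the image-comparison $\A'(\triangleright W_i)\rightarrowtail\A'(\triangleright W_{i+1})$ in $\D$.

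\emph{Step 2: the $\D$-image is already maximal.} If the first \textbf{while} loop (\crefrange{alg:reduction:line:begin_1st_while}{alg:reduction:line:end_1st_while}) exits by returning a counterexample, the second loop is never reached and the statement is vacuous; otherwise its final value $W^{(1)}$ satisfies $\A'(\triangleright W^{(1)})\cong\A'(\triangleright(W^{(1)}\cup\{w\sigma\}))$ for all $w\in W^{(1)}$, $\sigma\in\Sigma$. This is exactly the closure condition making $W^{(1)}$ an $\F I$-generating family of words for $\A'(\St)$, i.e.\ $\A'(\triangleright W^{(1)})\cong(\Reach\A')(\St)$ -- the standard reachability fixpoint argument, see \cite[\textsection 3]{aristoteFunctorialApproachMinimizing2023b}. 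Since $W^{(1)}\subseteq W_i\subseteq W_{i+1}\subseteq\Sigma^*$ and $(\E_\D,\M_\D)$-images are monotone in $W$, we get $\A'(\triangleright W_i)\cong\A'(\triangleright W_{i+1})\cong(\Reach\A')(\St)$, and the image-comparison $\A'(\triangleright W_i)\rightarrowtail\A'(\triangleright W_{i+1})$ is an isomorphism. Combining with Step~1, $\F m_i$ is -- modulo the natural isomorphisms of Step~1 -- this isomorphism, hence an isomorphism.

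\emph{Expected obstacle.} I expect Step~1 to require the most care: one has to check that $\F$ genuinely preserves the relevant $(\E,\M)$-factorizations, and that the counit being merely an $\M_\D$-morphism -- it need not be invertible in the abstract setting, even though for $\localize-$ it is -- still suffices, which it does only thanks to closure of $\M_\D$ under composition; and the factorization-system diagram chase must be done carefully enough to guarantee that the transport is natural in $W$ and carries the comparison $m_i$ to the $\D$-side image-comparison. Step~2 is morally standard but depends on the precise invariant -- established categorically in \cite{aristoteFunctorialApproachMinimizing2023b} and reused in the correctness argument for \Cref{alg:reduction} -- that the first loop terminates with an $\F I$-generating family.
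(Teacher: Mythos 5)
Your proof is correct and takes essentially the same route as the paper's. Step~1 --- transporting the $(\E_\C,\M_\C)$-image along $\F$ using preservation of the factorization system and $\counit_{\A'(\St)}\in\M_\D$ --- is exactly how the paper identifies $\F m_i$ with the $\D$-side image-comparison (the paper phrases it as applying $\F$ to the diagonal-fill-in square defining $m_i$ and post-composing with $\counit_{\A'(\St)}$, which is the same computation). The only variation is in Step~2, and it is mostly cosmetic: the paper invokes a propagation lemma (Lemma~B.4 of \cite{aristoteFunctorialApproachMinimizing2023b}) to carry the first loop's exit condition from $W_0$ to each $W_i$ and then reads off that $\F m_i$ is an iso directly, while you observe that the exit condition makes $W^{(1)}$ an $\F I$-generating family --- so $\A'(\triangleright W^{(1)})\cong(\Reach\A')(\St)$ --- and then sandwich $\A'(\triangleright W_i)\rightarrowtail\A'(\triangleright W_{i+1})$ between isomorphic ends; both are the same fixpoint argument, and your phrasing makes the ``first loop already saturates the $\D$-image'' intuition more explicit. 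One thing worth saying out loud in your sandwich step is that a composable chain of $\M_\D$-morphisms whose composite is an iso consists of isos: this is immediate once $\M_\D$ consists of monomorphisms (true in every instantiation the paper cares about, where $\M_\D$ is the class of injective linear maps), but it is not a formal consequence of the good-for-minimization axioms alone; the paper delegates precisely this kind of bookkeeping to the cited lemma.
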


The genericity of
\Cref{alg:reduction} leaves open three questions that should be addressed to get 
an actual implementation. The answers depend on the category $\C$ in which we work.

\begin{questions}\label{black-boxes}
\begin{enumerate}[leftmargin=*]
\item \label{q:q1} What are the properties of $(\Min \L)(\St)$ (when $\L$ is recognizable)
  and how can these be used to represent it in memory? For $\C = \Mod{\ZZ}$,
  this was the existence of a basis for $(\Min \L)(\St)$.
\item \label{q:q2} How one should compute factorizations: how should it be checked that two $(\E_\C,\M_C)$ images, as on \cref{alg:reduction:line:begin_2nd_while}, coincide 
  and how should the representation of $(\Min
  \L)(\St)$ chosen above be deduced from the generating family of words $W$? In
  $\C = \Mod{\ZZ}$, we mentioned the answer relied on the \emph{Smith Normal Form}. %
\item \label{q:q3} Knowing \Cref{lemma:alg:reduction:second_while_loop_invariant}, can the
  number of words added to $W$ in the second \textbf{while} loop be bounded, so
  that the overall complexity of the reduction can be bounded as well?
\end{enumerate}
\end{questions}

In the next section we answer these questions in the case $\C = \TFMod{\OK}$,
where $\OK$ is a number ring.%

\section{Polynomial-time  Algorithm for the Exact Learning over Number Rings}\label{sec:computational}

We give a concrete implementation of the general algorithm developed in~\Cref{sec:categorical-reduction} in the case of automata weighted over number rings, thereby proving~\Cref{theo:OK}.

Throughout this section, $\KK$ is a number field of degree $d$, and $\OK$ is its ring of integers. See \Cref{app:preliminaries,app:algNF} for extended preliminaries and examples. 
Performing operations  in  $\OK$ typically  requires a suitable representation, such as a compactly represented primitive element. Following~\cite[p.~13]{biasseComputationHNFModule2017}, we  require what we call a \emph{full representation of}~$\OK$, described in more detail in~\Cref{sec:complexity} below.

\theoM*

\algsetup{indent=0.4em}
\begin{algorithm}
    \caption{Computing an $\OK$-WA from a $\KK$-WA}
    \label{alg!make-automaton-integral}
    \begin{algorithmic}[1]
        \REQUIRE a $\KK$-WA $\A=(Q,\A(\triangleright),(\A(\sigma))_{\sigma\in\Sigma},\A(\triangleleft))$  
        
        \algorithmiccomment{{\small\textcolor{blue}{Find a basis B of the backward space}}}
        \STATE \label{alg!make-automaton-integral!line!first}  $W_B = \{ \varepsilon \}$
        \STATE \algorithmicwhile\ there is a $(\sigma, w) \in \Sigma \times W_B$ such that $\A(\sigma w\triangleleft) \notin \langle \A(u\triangleleft ) \mid u \in W_B \rangle_{\KK}$ \algorithmicdo \ $W_B = W_B \cup \{ \sigma w \}$ \algorithmicend
        
        \STATE $B = \left[ \begin{matrix}
            \A(w_1\triangleleft ) & \cdots & \A(w_m\triangleleft )
        \end{matrix}\right]$ if $W_B = \{ w_1, \ldots, w_m \}$
        
        \COMMENT{Conjugate $\A$ with  $B$ to obtain $\A'$  }
        \STATE \label{alg!make-automaton-integral!line!defineforward} Define $\A'=(Q',\A'(\triangleright),(\A'(\sigma))_{\sigma\in\Sigma},\A'(\triangleleft))$ such that, for all $\sigma \in \Sigma$,\[\A'(\triangleright)=\A(\triangleright)B \, , B \A'(\sigma) = \A(\sigma) B \, , B \A'(\triangleleft) = \A(\triangleleft)\]\vspace{-.5cm}

        \COMMENT{Compute a generating set of the forward module of $\A'$ via \Cref{alg:compute-OK-generators}}
        
        \MATCH{the output of \Cref{alg:compute-OK-generators} ran on $\A'$}

        \STATE \textbf{some} $w \in \Sigma^*$\textbf{:} \algorithmicreturn\ $w$
        \CASE{$W \subset \Sigma^*$}
        
        \STATE \label{alg!make-automaton-integral!line!pseudo-basis} Extract a pseudo-basis $\{(\bm{v_i},\mathfrak{a}_i)| 1\leq i\leq \ell\}$ for the forward $\OK$-module $\Reach{\A'}$, using pseudo-HNF.
        
        \COMMENT{Compute an (almost) minimal genera-\\ting set for the forward module of~$\A'$}
        \STATE Call \Cref{alg-psudobasis-generator} on the pseudo-basis, obtaining a generating set $\{\bm{y_i}| 1\leq i\leq \ell+1\}$.
         
         Let $F\coloneq\begin{pmatrix}
            \bm{y_1} &
            \cdots &
            \bm{y_{\ell+1}}
        \end{pmatrix}^t$
        
        \COMMENT{Conjugate $\A'$ with $F$ to obtain $\A''$ }
        \STATE Find $\A''=(Q,\A''(\triangleright),(\A''(\sigma))_{\sigma\in\Sigma},\A''(\triangleleft))$ such that, for all $\sigma \in \Sigma$,  
      \[ \hspace{-3mm} \A''(\triangleright)F=\A'(\triangleright) \, ,  \A''(\sigma)F = F \A'(\sigma) \, ,  F\A''(\triangleleft) = \A(\triangleleft)\]\vspace{-.5cm}
        \RETURN $\A''$
        \ENDCASE
        
        \ENDMATCH
        
    \end{algorithmic}
\end{algorithm}

\Cref{alg!make-automaton-integral} implements a procedure, analogous to \cite[Algorithm 6]{buna-margineanLearningPolynomialRecursive2024}, which computes, given a $\KK$-WA, an equivalent $\OK$-WA if possible, and returns a counterexample otherwise. 
\blueblock{\phantomspacing{}
\Crefrange{alg!make-automaton-integral!line!first}{alg!make-automaton-integral!line!pseudo-basis} of \Cref{alg!make-automaton-integral} implement \Cref{alg:reduction} in the specific case of $(\TFMod{\OK},\OK,\OK)$-automata, by answering Questions~\ref{black-boxes} in this setting. While an instantiation of \Cref{alg:reduction} only computes a minimal $\OK$-modular automaton, in practice we are interested in $\OK$-weighted automata. 
This is why \Cref{alg!make-automaton-integral} performs an additional step and calls \Cref{alg-psudobasis-generator}, in order to transform an $\OK$-modular automaton into an $\OK$-weighted one.
\phantomspacing{}}

We now first give an overview of \Cref{alg!make-automaton-integral}, then give more details in \Cref{sec:number-rings:ideals,sec:number-rings:modules}, and prove its polynomial-time complexity in \Cref{sec:pseudo-basesFM,sec:complexity}.

 Let~$\A=(Q,\A(\triangleright),(\A(\sigma))_{\sigma\in\Sigma},\A(\triangleleft))$ be a $\KK$-weighted automaton, and write $n = |Q|$. Recall from~\Cref{sec:mincat} that
 the forward and backward spaces of $\A$ are the subspaces 
of $\KK^n$ consisting of all reachable and observable vectors, respectively. That is, the forward space is the $\KK$-span of~$\{\A(\triangleright w) \, | \, w\in \Sigma^*\}$ and the backward space is that of $\{\A(w \triangleleft) \, | \, w\in \Sigma^*\}$.  Similarly,  the 
  forward $\OK$-module of $\A$ is the $\OK$-span of~$\{\A(\triangleright w) \, | \, w\in \Sigma^*\}$. 
  
We now give an overview of \Cref{alg!make-automaton-integral} and its ingredients. We refer to the corresponding steps of \Cref{alg:reduction} \inlineblock{catcol}{in blue}.

\inlineblock{catcol}{\textit{\Cref{alg:reduction:line:obs} of \Cref{alg:reduction}.}}
\Cref{alg!make-automaton-integral} first conjugates~$\A$ with a 
matrix $B$, 
whose $m$ columns form a basis of the backward space of~$\A$. Denote by $\A'$ the resulting automaton, defined in \cref{alg!make-automaton-integral!line!defineforward} of \Cref{alg!make-automaton-integral}.  
Assuming that $\A$ has an equivalent $\OK$-WA,
the forward $\OK$-module of $\A'$ is a submodule of $\OK^m$ (this matters when it comes to the complexity of the algorithm). Indeed, for all words~$w\in \Sigma^*$, by definition of~$\A'$, 
$\A'(\triangleright w) = \A(\triangleright w ) B
    = \left[\begin{matrix} \A(\triangleright w w_1 \triangleleft) & \cdots & \A(\triangleright w w_m\triangleleft )\end{matrix}\right]$,
and this vector only contains values from $\OK$, by assumption.

\inlineblock{catcol}{\textit{\Crefrange{alg:reduction:line:begin_1st_while}{alg:reduction:line:end_2nd_while} of \Cref{alg:reduction}.}}
Next, \Cref{alg!make-automaton-integral} calls
\Cref{alg:compute-OK-generators}, which, on an input WA with $m$ states,  
either 
    returns a generating set for the forward $\OK$-module, if this forward module lies within $\OK^m$, 
    or returns a counterexample, otherwise. Note that the generating set may be very large.

\inlineblock{catcol}{\textit{\Cref{alg:reduction:line:extract_basis} of \Cref{alg:reduction}.}}
If it were possible to extract a basis from the generating set for the forward $\OK$-module, then we could simply conjugate $\A'$ with the matrix whose rows form that basis, directly producing an $\OK$-automaton.
This is precisely the approach taken in the case of $\ZZ$ in \cite[Algorithm 6]{buna-margineanLearningPolynomialRecursive2024}.
However, the essential difference, when moving from $\ZZ$ to $\OK$, is that $\OK$-submodules of~$\OK^m$ may fail to have a basis, precisely because $\OK$ is not a PID in general. However, $\OK$ is still a Dedekind domain, and we can use the notion of \emph{pseudo-basis} for $\OK$-modules, recalled in \Cref{sec:comm-alg-primer}.
\blueblock{\phantomspacing{}Pseudo-bases answer Question~\ref{black-boxes}.\ref{q:q1} for number rings: how to represent the minimal $\OK$-modular automaton?\phantomspacing{}}

The \emph{pseudo-Hermite Normal Form} (pseudo-HNF)  is a means to compute a pseudo-basis of an $\OK$-submodule of $\OK^m$, starting from a generating set~\cite[Sec.~1.4]{cohen2012advanced}, also see~\cite[Thm.~2.4]{bosmapohst1991}. This is  analogous to using the HNF for computing a $\ZZ$-basis for a $\ZZ$-submodule of $\ZZ^m$.
We provide an introduction to HNF and $\ZZ$-modules in \Cref{app:HNF}, and we give details about pseudo-HNF and $\OK$-modules in~\Cref{sec:pseudo-basesFM}.
\Cref{alg!make-automaton-integral} uses the pseudo-HNF in \Cref{alg!make-automaton-integral!line!pseudo-basis} to compute a pseudo-basis from the generating set output by \Cref{alg:compute-OK-generators}. %
\blueblock{\phantomspacing{}The pseudo-HNF answers Question~\ref{black-boxes}.\ref{q:q2} for number rings: how to represent images of linear maps computationally?\phantomspacing{}} 

The pseudo-basis obtained in this way gives a minimal representation of the forward module of $\A'$. However, if we were to conjugate $\A'$ with the pseudo-basis directly, then we would not be sure to obtain an $\OK$-weighted automaton, but only a $\KK$-weighted automaton. This is because the fractional ideals in the pseudo-basis may contain elements outside~$\OK$.
A main added ingredient in our algorithm, compared to~\cite[Algorithm 6]{buna-margineanLearningPolynomialRecursive2024},
is that we compute, starting from a
size-$\ell$ pseudo-basis for a submodule $M$ of $\OK^\ell$, an $\OK$-generating set for $M$ of size at most $\ell + 1$ (\Cref{alg-psudobasis-generator}).  
Conjugating $\A'$ 
with a matrix whose rows are formed by this generating set produces an $\OK$-weighted automaton 
whose number of states is at most one more than
the number of states of the canonical minimal $\KK$-weighted automaton equivalent to~$\A$. %
\blueblock{\phantomspacing{}This last added ingredient is not encompassed by \Cref{alg:reduction}: it is an additional step taken to transform the resulting minimal $\OK$-modular automaton into an $\OK$-weighted automaton, and can also be understood as a constructive version of the generic \Cref{prop:almost-minimal-dedekind-wa}, which holds for all Dedekind domains.\phantomspacing{}}

\subsection{Ideals in Number Rings} \label{sec:number-rings:ideals}
The ring~$\OK$ of algebraic integers in $\KK$
is a $\ZZ$-module  of rank~$[\KK:\QQ]$.
An integral basis of $\OK$ is 
a $\ZZ$-basis $\{\omega_1,\cdots, \omega_d\}$ of $\OK$ as a $\ZZ$-module, that is, 
\begin{align}
\label{eq-defintegral}
  \OK=\ZZ \,\omega_1 \oplus \cdots \oplus \ZZ \, \omega_d\,.  
\end{align}

Since $\ZZ$ is a PID, each $\ZZ$-submodule of $\OK$ is finitely generated; by this, each ideal~$\mathfrak{a}\subseteq \OK$ can  be viewed as a finitely generated $\ZZ$-module.  Every ideal~$\mathfrak{a}\subseteq \OK$ is even a \emph{full-rank} $\ZZ$-submodule of~$\OK$; its
\emph{norm},  denoted by $\norm{\mathfrak{a}}$, is defined as $|\OK/\mathfrak{a}|$, see~\cite[Prop.~4.6.3]{cohen2013course}.
The \emph{fractional ideals} of $\OK$ are the finitely generated full-rank  $\ZZ$-submodules $\mathfrak{b}$ of $\KK$.  
For clarity, we sometimes refer to 
 ideals~$\mathfrak{a}\subseteq \OK$ as  \emph{integral ideals}.
 For any fractional ideal~$\mathfrak{b}$, there exists $r\in \OK$ such that $r\mathfrak{b}$ is integral.

The number rings $\OK$ are not in general PIDs, see \Cref{example-cyclotomic,example-quadratic} in \Cref{app:algNF}. 
Still,
 the ring of integers $\OK$ of a number field~$\KK$ is a Dedekind domain, see for instance~\cite[Prop. 1.2.3]{cohen2012advanced}.
Moreover, each fractional ideal~$\mathfrak{b}$ can    be written uniquely as a product of powers of prime ideals.
We provide a detailed discussion about representation and complexity of manipulating  algebraic numbers and ideals in \Cref{app:algoALGF}. 
 Since $\OK$ is a Dedekind domain, the set of fractional ideals of $\KK$ forms a commutative group under ideal multiplication with identity element $\OK$, see for example \cite[Thm.~4.6.14]{cohen2013course}.
We note in passing that ideals in $\OK$ when seen  
as $\OK$-modules are finitely generated, but might not have a basis.

\subsection{Modules Over Number Rings} \label{sec:number-rings:modules}
\label{sec:modules-numberring}
After recalling the notion of pseudo-basis from~\cite{cohenHermiteSmithNormal1996}, we  state several facts we use about  modules over  number rings. 
 Let $M \subseteq \OK^n$ be an $\OK$-module.
Let $\bm{v}_1, \cdots, \bm{v}_k \in \KK^n$ and  
$\mathfrak{a}_1, \cdots, \mathfrak{a}_k$ be fractional ideals, we say that 
$\{(\bm{v}_i,\mathfrak{a}_i) \mid 1\leq i\leq k\}$ is a pseudo-generating set of $M$ if
$M= \mathfrak{a}_1 \bm{v}_1 + \cdots +  \mathfrak{a}_k \bm{v}_k$,
and we say that it is a pseudo-basis of~$M$ if
$M= \mathfrak{a}_1 \bm{v}_1 \oplus \cdots \oplus \mathfrak{a}_k \bm{v}_k$.
By~\cite[Cor.~1.2.25]{cohen2012advanced}, every $\OK$-module $M \subseteq \OK^n$ has a pseudo-basis.

\begin{restatable}[{\noadjust\cite[Lemma 1.2.20]{cohen2012advanced}}]{lemma}{lemisoJJp}
\label{lem-isoJJp}
 If $\mathfrak{a}$ and $\mathfrak{b}$ are fractional ideals of $\OK^n$, there is an isomorphism of $\OK$-modules such that 
 $\mathfrak{a} \oplus \mathfrak{b} \simeq \OK \oplus \mathfrak{a}\mathfrak{b}$.
\end{restatable}

We outline the key elements required to prove~\Cref{lem-isoJJp} in \Cref{app:modules-numberring}. We employ similar arguments to those used in the outline in~\Cref{alg-psudobasis-generator}.
The following proposition follows from iteratively applying~\Cref{lem-isoJJp}. 

\begin{proposition}[{\noadjust\cite[Prop. 1.2.19]{cohen2012advanced}}]
\label{prop-ngen}
    Let $M$ be a finitely generated, torsion-free $\OK$-module of rank~$n$. Then there exists a fractional ideal~$\mathfrak{b}$ of $\OK$ such that $M \simeq \OK^{n-1} \oplus \mathfrak{b}$.
\end{proposition}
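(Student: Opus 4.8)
The plan is to reduce the statement to a finite iteration of \Cref{lem-isoJJp}. First I would invoke the structure theory of finitely generated torsion-free modules over a Dedekind domain recalled in \Cref{sec:comm-alg-primer}: since $\OK$ is a Dedekind domain and $M$ is finitely generated and torsion-free of rank $n$, it admits a pseudo-basis, i.e.\ there are fractional ideals $\mathfrak{a}_1,\dots,\mathfrak{a}_n$ of $\OK$ and an isomorphism of $\OK$-modules
\[
  M \;\simeq\; \mathfrak{a}_1 \oplus \cdots \oplus \mathfrak{a}_n .
\]
Concretely, one first embeds $M$ into $\OK^n$: as $M$ is torsion-free and finitely generated over the Noetherian domain $\OK$, it embeds into $\KK^n$, and clearing denominators of a finite generating set places it inside $\OK^n$; then one applies \cite[Cor.~1.2.25]{cohen2012advanced} to obtain the pseudo-basis. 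Each summand $\mathfrak{a}_i e_i$ of such a decomposition is isomorphic to $\mathfrak{a}_i$ as an $\OK$-module since the $e_i$ form a free system, which gives the displayed form. This already settles the base case $n=1$, where the claim reads $M\simeq\mathfrak{b}$ for a single fractional ideal.

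Next I would collapse the rank-$1$ summands two at a time. For $n\geq 2$, applying \Cref{lem-isoJJp} to the first two summands gives $\mathfrak{a}_1\oplus\mathfrak{a}_2\simeq\OK\oplus\mathfrak{a}_1\mathfrak{a}_2$, hence
\[
  M \;\simeq\; \OK \oplus \bigl(\mathfrak{a}_1\mathfrak{a}_2\bigr) \oplus \mathfrak{a}_3 \oplus \cdots \oplus \mathfrak{a}_n .
\]
Iterating, at step $k$ one merges the (already formed) summand $\mathfrak{a}_1\cdots\mathfrak{a}_k$ with $\mathfrak{a}_{k+1}$ to produce one more copy of $\OK$ together with the product ideal $\mathfrak{a}_1\cdots\mathfrak{a}_{k+1}$. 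After $n-1$ such steps one obtains
\[
  M \;\simeq\; \OK^{n-1} \oplus \mathfrak{b}, \qquad \mathfrak{b} \;=\; \mathfrak{a}_1\mathfrak{a}_2\cdots\mathfrak{a}_n ,
\]
the product being taken in the ideal group of $\KK$. Since a product of fractional ideals is again a fractional ideal, this $\mathfrak{b}$ is of the required form, which proves the proposition.

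The whole argument is a bookkeeping induction once \Cref{lem-isoJJp} and the existence of a pseudo-basis are in hand, so I do not expect a genuine obstacle here. The only point that needs a word of justification is the very first reduction — that a general finitely generated torsion-free $M$ of rank $n$ sits inside a free module $\OK^n$ so that the pseudo-basis machinery applies — but this is standard over a Noetherian domain and can be cited. One may additionally remark that the class of $\mathfrak{b}$ in the ideal class group of $\OK$ does not depend on the chosen pseudo-basis (it is the Steinitz class of $M$), though this uniqueness is not needed for the statement as phrased.
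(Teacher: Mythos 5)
Your argument is exactly the one the paper intends: the text preceding the proposition says it "follows from iteratively applying \Cref{lem-isoJJp}," and you have supplied the two missing ingredients, namely the pseudo-basis decomposition $M \simeq \mathfrak{a}_1 \oplus \cdots \oplus \mathfrak{a}_n$ from the Dedekind structure theory recalled in the preliminaries, and the $(n-1)$-step collapse via \Cref{lem-isoJJp}. The proof is correct and matches the paper's approach.
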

As each fractional ideal $\mathfrak{b}$ of $\OK$ can be generated by two elements, $M$ in the above proposition has a generating set of cardinality~$n+1$. 

To compute a pseudobasis for an $\OK$-module, we recall a definition of pseudo-HNF~\cite[Sec.~1.4]{cohen2012advanced}. 
For simplicity we only give the definition for  full-rank modules; it can be extended to lower-rank modules. %
Let $M$ be a full-rank $\OK$-module of $\OK^n$ and 
$\{(\bm{v_i},\mathfrak{a}_i) \mid 1\leq i\leq k\}$ be a pseudo-generating set for $M$; in particular, $k \geq n$. 
Let $A = [a_{i,j}]_{1 \le i \le n, 1 \le j \le k}$ be the $n\times k$ matrix whose $i$-th column is $\bm{v_i}$.
By~\cite[Thm. 1.4.6]{cohen2012advanced}, 
there exist fractional ideals $\mathfrak{b}_1, \ldots, \mathfrak{b}_k$ and a $k\times k$ matrix $U$ such that 
    (1) $AU=[0|H]$ where $H$ is upper triangular with $1$ on the diagonal, and 
    (2) $\{(\bm{h}_i,\mathfrak{c}_i) \mid 1\leq i\leq n\}$ is a pseudo-basis for $M$, where $\bm{h}_i$ is the $i$-th column of $H$ and $\mathfrak{c}_i\coloneq\mathfrak{b}_{k-n+i}$ for $1\leq i\leq n$.

By~\cite[Thm. 1.4.9]{cohen2012advanced}, the ideals $\mathfrak{c}_i$ in this computation  are unique and only depend on~$M$ and not on the basis element~$\bm{h}_i$. The ideal
$\prod_{i=1}^n \mathfrak{c}_i$ corresponds to the ideal $\mathfrak{b}$ in \Cref{prop-ngen}. It is integral, and can be computed from the \emph{minor ideals} of $A$ and the fractional ideals $\mathfrak{a}_i$.   
The  $r \times r$ minor ideals are defined, for $I \subseteq \{1,\ldots,n\}$ and $J \subseteq \{1,\ldots,k\}$ of size $r$, as 
 \begin{equation}
 \label{eq-primediv}
    \mathfrak{d}_{I,J}\coloneq\det([a_{i,j}]_{i\in I, j\in J}) \prod_{j\in J} \mathfrak{a}_j \, .
 \end{equation}
Then $\prod_{i=1}^n \mathfrak{c}_i$ is the sum of the $\mathfrak{a}_i$'s and all $n\times n$ minor ideals of $A$~\cite[Def. 1.4.8 and Thm. 1.4.9]{cohen2012advanced}.
Given  a full representation of~$\OK$, the main Theorem of~\cite{biasseComputationHNFModule2017} shows that 
the pseudo-HNF for full-rank modules is computable within polynomial-time.
A crucial lemma in our learning algorithm is the computation of a reasonably small generating set of $\OK$-submodules of $\OK^n$ from a pseudo-basis:  

\begin{restatable}{lemma}{lempseudobasisgen}
\label{lem:pseudobasisgen}
    Let $\{(\mathfrak{a}_i,\bm{v_i})| 1\leq i\leq n\}$ be a pseudo-basis for an
    $\OK$-module 
    $M \subseteq \OK^n$. Given 
 a full representation of~$\OK$, 
    a generating set of cardinality at most $n+1$ is computable within polynomial time in the size of the input pseudo-basis.
\end{restatable}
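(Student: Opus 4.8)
The existence of a generating set of size at most $n+1$ is already guaranteed by \Cref{prop-ngen}: one has $M \simeq \OK^{n-1} \oplus \mathfrak{b}$ for some fractional ideal $\mathfrak b$, and every fractional ideal is generated by two elements. The content of the lemma is to produce such a set \emph{effectively and in polynomial time}. The plan is to repeatedly apply an effective version of \Cref{lem-isoJJp} — the one whose key elements are outlined in \Cref{app:modules-numberring} — in order to transform the input pseudo-basis $\{(\mathfrak a_i, \bm v_i)\}_{i=1}^{n}$ into a decomposition of the special shape
\[
M = \OK\,\bm u_1 \oplus \cdots \oplus \OK\,\bm u_{n-1} \oplus \mathfrak b\,\bm w ,
\]
and then to return the $n-1$ vectors $\bm u_1, \dots, \bm u_{n-1}$ together with $\delta_1 \bm w$ and $\delta_2 \bm w$, where $(\delta_1, \delta_2) = \mathfrak b$ is a two-element generating set of the last ideal. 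This is a generating set of $M$ of size $(n-1)+2 = n+1$, and all its members will lie in $M \subseteq \OK^n$.

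The basic step merges two pseudo-summands $\mathfrak a\bm v \oplus \mathfrak b\bm w \subseteq M$ into one free summand plus one ideal summand. Concretely, after reducing $\mathfrak a,\mathfrak b$ to integral ideals and computing $\mathfrak a^{-1}$, $\mathfrak b^{-1}$, $\mathfrak a\mathfrak b$, one computes elements $a \in \mathfrak a$, $b \in \mathfrak b$, $\alpha \in \mathfrak a^{-1}$, $\beta \in \mathfrak b^{-1}$ with $a\alpha + b\beta = 1$ — this is the algorithmic content of \Cref{lem-isoJJp}, and reduces to $\ZZ$-linear algebra on the Hermite normal forms of the ideals — and sets $\bm u = a\bm v + b\bm w$ and $\bm z = \beta\bm v - \alpha\bm w$. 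Then $\mathfrak a\bm v \oplus \mathfrak b\bm w = \OK\,\bm u \oplus \mathfrak a\mathfrak b\,\bm z$. The four required inclusions follow from $\mathfrak a\mathfrak a^{-1} = \mathfrak b\mathfrak b^{-1} = \OK$, and directness of the right-hand sum holds because the change-of-basis matrix $\bigl(\begin{smallmatrix} a & b \\ \beta & -\alpha \end{smallmatrix}\bigr)$ has determinant $-1$, so $\bm u$ and $\bm z$ remain $\KK$-linearly independent. The crucial point is that the new generators land in $M$: $\bm u$ is the sum of an element of $\mathfrak a\bm v$ and one of $\mathfrak b\bm w$, and $\mathfrak a\mathfrak b\,\bm z \subseteq \mathfrak a\bm v + \mathfrak b\bm w$, while $\mathfrak a\bm v,\mathfrak b\bm w \subseteq M \subseteq \OK^n$.

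Iterating this step $n-1$ times — each time merging the running ``$\mathfrak b$-summand'' with the next original pseudo-summand $\mathfrak a_{k+2}\bm v_{k+2}$ — reaches the normal form above, with the final ideal equivalent to $\mathfrak a_1\cdots\mathfrak a_n$; extracting two generators of a fractional ideal from its Hermite normal form is standard and polynomial time, so the final extraction is harmless. Each step uses only a bounded number of ideal operations (inverse, product, extraction of generators) and $\ZZ$-linear-algebra operations, all polynomial time given a full representation of $\OK$ by \cite{biasseComputationHNFModule2017} and the algorithms recalled in \Cref{app:algoALGF}. The main obstacle — the reason a naive iteration is not obviously polynomial — is bounding the bit-sizes of the intermediate ideals and vectors across the $n-1$ rounds, since each new $\bm u_k,\bm w_k$ is a priori a $\KK$-combination of the previous ones. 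I would control this by reducing after every step: the running rank-one piece $\mathfrak b\,\bm w$ is replaced by its pseudo-HNF, a canonically small pseudo-basis of the same rank-one submodule of $\OK^n$, and each $\bm u_i$ is reduced modulo the span of the remaining current pieces (which does not break directness). Since the running ideal always satisfies $\norm{\mathfrak b} \le \prod_i \norm{\mathfrak a_i}$, the quantity $\log\norm{\mathfrak b}$ stays linear in the input size; combined with the size bounds from pseudo-HNF reduction — which also involve the degree and discriminant of $\KK$, consistently with \Cref{theo:OK} — every intermediate object stays of polynomial size, and the whole procedure runs in polynomial time.
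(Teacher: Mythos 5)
Your overall plan is the same as the paper's: iteratively merge pseudo-summands $\mathfrak a\bm v \oplus \mathfrak b\bm w = \OK\,\bm u \oplus \mathfrak a\mathfrak b\,\bm z$ via the transformation underlying \Cref{lem-isoJJp}, arrive at $\OK\bm u_1 \oplus \cdots \oplus \OK\bm u_{n-1} \oplus \mathfrak b\bm w$, and output the $\bm u_i$'s together with $\delta_1\bm w,\delta_2\bm w$ for a two-element generating set $(\delta_1,\delta_2)$ of $\mathfrak b$.

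However, there is a genuine gap in the one place where the actual work lies. You write that finding $a\in\mathfrak a$, $b\in\mathfrak b$, $\alpha\in\mathfrak a^{-1}$, $\beta\in\mathfrak b^{-1}$ with $a\alpha+b\beta=1$ ``reduces to $\ZZ$-linear algebra on the Hermite normal forms of the ideals.'' This is not the case: the constraint $a\alpha+b\beta=1$ is bilinear, not linear. The fact that $1\in\mathfrak a\mathfrak a^{-1}+\mathfrak b\mathfrak b^{-1}$ only yields an expression of $1$ as a \emph{sum of many products} of $\ZZ$-generators, not as a single sum $a\alpha + b\beta$ of two products. What the paper actually does is to first find $a\in\mathfrak a'$ (after reducing $\mathfrak a,\mathfrak b$ to integral $\mathfrak a',\mathfrak b'$) such that $a\mathfrak a'^{-1}$ is an \emph{integral ideal coprime to $\mathfrak b'$}; only \emph{after} that is $e\in a\mathfrak a'^{-1}$, $b\in\mathfrak b'$ with $e+b=1$ found by $\ZZ$-linear algebra (HNF), and then $\beta=-1$, $\alpha=e/a$. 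Finding such an $a$ is the nontrivial step, and the paper's method for it is \emph{ideal factor refinement} (\cite[Alg.~5.6, Prop.~5.7]{ge1994recognizing}) combined with the Chinese Remainder Theorem, which avoids the need for full prime factorization of ideals. This is the essential new algorithmic ingredient of the lemma, and your proof never mentions it.

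The same omission recurs at the final step. Extracting two generators $x_1,x_2$ with $\mathfrak b = x_1\OK + x_2\OK$ is not simply ``standard and polynomial time'' from the HNF: the classical algorithm \cite[Alg.~1.3.15]{cohen2012advanced} is \emph{randomized}. The paper de-randomizes it by, once more, choosing $x_1\in\mathfrak b$ arbitrary and then using factor refinement plus CRT to find $x_2\in\mathfrak b$ with $x_2\mathfrak b^{-1}$ integral and coprime to $x_1\OK$. Your bit-size control via pseudo-HNF reduction after each merge is in the right spirit, but without the factor-refinement argument the claim of deterministic polynomial time is not established, so the proof as written does not go through.
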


\looseness=-1
\Cref{alg-psudobasis-generator} provides a high-level description of the steps required in~\Cref{lem:pseudobasisgen} (the detailed proof can be found in  \Cref{app:modules-numberring}). 
The derived complexity bound relies on  the notion of ideal factor refinement~\cite[Alg. 5.6 and Prop. 5.7]{ge1994recognizing} and an adaptation of~\cite[Lemma 5.2.2]{stein2012algebraic} and~\cite[Props. 1.3.10 and 1.3.12]{cohen2012advanced}.

\subsection{Fast Computation of an Almost Minimal Generating Set}
\label{sec:pseudo-basesFM}

As described in the overview,  \Cref{alg!make-automaton-integral} calls 
\Cref{alg:compute-OK-generators} to compute a generating set  for the forward $\OK$-module of $\A'$, if the forward module lies within $\OK^m$.

In~\Cref{lem:lengthdchain} we establish an upper bound on the length of strictly increasing chains of  $\OK$-submodules of $\OK^m$, which is a key result in the complexity analysis of  \Cref{alg:compute-OK-generators}. 
A similar bound for~$\ZZ$-submodules of $\ZZ^m$ 
is derived in~\cite[Prop. 3.1]{buna-margineanLearningPolynomialRecursive2024} using the elementary divisor theorem.  
There, the proof relies on the fact that if $M\subset N \subseteq \ZZ^m$ are  $\ZZ$-modules  of equal rank~$r \leq m$, there is a basis $\bm{v_1},\cdots,\bm{v_r}$ of $N$ and nonzero scalars $d_1,\cdots,d_r\in \ZZ$
such that 
    $N =\ZZ \bm{v_1} \oplus \cdots \oplus  \ZZ \bm{v_r}$ and 
    $M =  \ZZ  \, d_1  \bm{v_1}\oplus \cdots \oplus  \ZZ \, d_r  \bm{v_r}.$
The scalars~$d_i$ are  the \emph{primary divisors} of the  quotient module $N/M \cong \ZZ/(d_1 \ZZ) \oplus \ldots \oplus \ZZ/(d_r \ZZ)$.
Since $N/M$ is a finite group, using Lagrange’s Theorem, its cardinality gives an upper bound on the length of strictly increasing chains of modules  between $M$ and $N$.  
In~\cite[Prop.~3.1]{buna-margineanLearningPolynomialRecursive2024}, the cardinality of $N/M$
is derived by computation of 
primary divisors through the Smith Normal Form (a combination of an HNF and an HNF of the transpose~\cite[Sec. 2.4.4]{cohen2013course}).

We generalize this to the number ring setting as follows.

\begin{restatable}{lemma}{lemlengthdchain}
\label{lem:lengthdchain}
Let $\{(\mathfrak{a}_i,\bm{v_i})| 1\leq i\leq m\}$ be a pseudo-generating set for a full-rank
    $\OK$-module 
    $M \subseteq \OK^n$.
Let $A$
be the $n\times m$ matrix whose $i$-th column is $\bm{v_i}$.
Let $\mathfrak{d}$ be  the sum of all $n\times n$ minor ideals of $A$ and of the  $\mathfrak{a}_i$'s.
    Then all strictly increasing chains of $\OK$-modules $M=M_1\subset M_2 \subset \cdots \subset M_{k-1} \subset M_k \subseteq \OK^n$
     have length at most $\log(\norm{ \mathfrak{d}})$.
\end{restatable}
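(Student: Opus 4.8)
The plan is to mimic, at the level of Dedekind domains, the $\mathbb{Z}$-argument of \cite[Prop.~3.1]{buna-margineanLearningPolynomialRecursive2024}: bound the length of a strict chain by the size of a single finite quotient. Concretely, let $N \coloneq M_k$ be the top module of the chain. Since $M = M_1 \subseteq N \subseteq \OK^n$ and $M$ is full-rank, every $M_i$ is full-rank, hence each inclusion $M_i \subseteq N$ is an inclusion of full-rank $\OK$-submodules of $\OK^n$, so the quotient $N/M_i$ is a finite $\OK$-module (it is finitely generated and torsion, being full-rank over a Dedekind domain). A strictly increasing chain $M_1 \subsetneq M_2 \subsetneq \cdots \subsetneq M_k$ therefore yields a strictly \emph{decreasing} chain of finite quotients $N/M_1 \supsetneq N/M_2 \supsetneq \cdots \supsetneq N/M_k$, and each strict step removes at least one element; since $|N/M_1|$ is finite this gives $k - 1 \le |N/M_1|$, hence $k \le |N/M| + 1$ \emph{for whichever $N$ bounds the chain}. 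The subtlety — and the reason the statement is phrased the way it is — is that $N = M_k$ is not fixed: it ranges over all full-rank submodules of $\OK^n$ containing $M$. So the first key step is to replace $|N/M|$ by a quantity depending only on $M$ (equivalently on $A$ and the $\mathfrak{a}_i$): I would argue that $|N/M| \le [\OK^n : M]$ for \emph{every} such $N$, because $N/M$ embeds as a subgroup of $\OK^n/M$ (from $M \subseteq N \subseteq \OK^n$), and $\OK^n/M$ is finite of cardinality $[\OK^n:M]$. Thus $k \le [\OK^n : M] + 1$, and since the indices here are finite it suffices to show $[\OK^n : M] \le \norm{\mathfrak{d}}$ — in fact I would aim to identify $[\OK^n:M]$ exactly with $\norm{\mathfrak{d}}$, perhaps losing nothing, so the $+1$ is absorbed and the bound $\log$-form is over $\norm{\mathfrak d}$ rather than $\norm{\mathfrak d}+1$; I would double-check whether the intended reading is literally $k \le \log(\norm{\mathfrak d})$ or $k-1 \le \log_2(\norm{\mathfrak d})$, and state the index identity so the reader can see where each term comes from.

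The second key step is the identification of $[\OK^n : M]$ with $\norm{\mathfrak d}$, where $\mathfrak d$ is the sum (i.e.\ the gcd ideal) of all $n\times n$ minor ideals $\mathfrak d_{I,\{1,\dots,n\}}$ of $A$ together with the $\mathfrak a_i$. This is the module analogue of the classical fact that for a full-rank $\mathbb Z$-lattice the index equals the absolute value of the determinant, and for a full-rank submodule given by a matrix it equals the gcd of the maximal minors (Smith normal form). Over a Dedekind domain the clean route is via the pseudo-HNF recalled just above the lemma: by \cite[Thm.~1.4.6, Thm.~1.4.9]{cohen2012advanced} the module $M$ has a pseudo-basis $\{(\bm h_i, \mathfrak c_i)\}$ with the $\mathfrak c_i$ depending only on $M$, and the \emph{Steinitz class} $\prod_{i=1}^n \mathfrak c_i$ is exactly the sum of the $n\times n$ minor ideals of $A$ and the $\mathfrak a_i$ — i.e.\ it \emph{is} $\mathfrak d$ — as stated in the excerpt right before \Cref{lem:pseudobasisgen}. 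Then, writing $M = \bigoplus_i \mathfrak c_i \bm h_i$ inside $\OK^n = \bigoplus_i \OK \bm e_i'$ (after the unimodular change of basis making $H$ the identity-diagonal triangular matrix), one gets $\OK^n/M \cong \bigoplus_i \OK/\mathfrak c_i$, whence $[\OK^n : M] = \prod_i |\OK/\mathfrak c_i| = \prod_i \norm{\mathfrak c_i} = \norm{\prod_i \mathfrak c_i} = \norm{\mathfrak d}$, using multiplicativity of the ideal norm. The one point needing a line of care is the upper-triangular-with-$1$-on-the-diagonal normal form: after $AU = [0\,|\,H]$ the columns $\bm h_i$ are \emph{not} the standard basis vectors of $\OK^n$, but $H$ being upper triangular with unit diagonal is itself a unimodular (determinant-$1$, integral-inverse over $\KK$... actually over $\OK$) matrix, so $\{\bm h_i\}$ is an $\OK$-basis of $\OK^n$ and $M = \bigoplus \mathfrak c_i \bm h_i$ in these coordinates, which is what licenses the quotient computation.

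Putting it together: $k \le [\OK^n : M] + 1 = \norm{\mathfrak d} + 1$, and then massaging to the stated logarithmic bound — here I would note that a strict chain of submodules all of full rank with top contained in $\OK^n$ can in fact be refined so that consecutive quotients are nontrivial $\OK$-modules, each of which (being finite and nonzero) has order at least $2$; hence the number of strict steps is at most $\log_2 |N/M_1| \le \log_2 \norm{\mathfrak d}$, giving $k \le \log_2(\norm{\mathfrak d}) + 1$ and, absorbing constants in the intended reading, $k \le \log(\norm{\mathfrak d})$. I expect the \textbf{main obstacle} to be precisely this last bookkeeping — reconciling "cardinality of quotient" (which gives a bound $\norm{\mathfrak d}+1$, exponentially worse) with the claimed \emph{logarithmic} bound. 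The resolution must be that every strict inclusion $M_i \subsetneq M_{i+1}$ of full-rank modules forces $|M_{i+1}/M_i| \ge 2$, so the product of these orders, which divides $|N/M_1| = $ a divisor of $\norm{\mathfrak d}$, is at least $2^{k-1}$; hence $2^{k-1} \le \norm{\mathfrak d}$, i.e.\ $k-1 \le \log_2 \norm{\mathfrak d}$. That multiplicativity-of-index step (the index is multiplicative along the chain, and each factor is a nontrivial finite abelian group hence of order $\ge 2$) is the real engine, and I would make sure to state it as a small sublemma. Everything else — finiteness of the quotients, embedding $N/M \hookrightarrow \OK^n/M$, and the minor-ideal computation of the index — is then routine given the pseudo-HNF machinery already recalled.
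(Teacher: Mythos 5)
Your proof follows essentially the same route as the paper: identify $[\OK^n : M]$ with $\norm{\mathfrak{d}}$ via the pseudo-HNF (the paper passes through the iterated decomposition $M\cong\OK\bm{w}_1\oplus\cdots\oplus\OK\bm{w}_{n-1}\oplus\mathfrak{d}\bm{w}_n$, whereas you compute $\prod_i\norm{\mathfrak{c}_i}$ directly via multiplicativity of the ideal norm, but these are the same calculation), and then use multiplicativity of the index along the chain, with each strict step contributing a factor $\ge 2$, to obtain the logarithmic bound. The two caveats you flag are both real but harmless: the off-by-one only shifts the bound by an additive constant, and the unimodularity concern is warranted --- in the pseudo-HNF the matrix $H$ has entries in $\KK$, not $\OK$, so $\{\bm{h}_i\}$ is generally not an $\OK$-basis of $\OK^n$ and the equality $[\OK^n:M]=\norm{\prod_i\mathfrak{c}_i}$ should be justified by the standard index formula for a full-rank pseudo-basis of a submodule of $\OK^n$ (a point the paper's proof also elides).
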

\blueblock{\phantomspacing{}This lemma is the answer, in the number ring setting, to Question~\ref{black-boxes}.\ref{q:q3}: how to bound the length of
strictly increasing chains of $\OK$-submodules?\phantomspacing{}}

We briefly argue that the HNF and pseudo-HNF suffice to establish  complexity bounds for~\cite[Algorithm 5]{buna-margineanLearningPolynomialRecursive2024} and for our analogous \Cref{alg:compute-OK-generators} over number fields.  
Both procedures employ a two-pass search strategy: (1) The first pass identifies elements that increase the rank of the sought-after module.
(2) The second pass augments the  module  until the complete forward module is constructed.

This two-pass approach is noted to be crucial for ensuring polynomial time already in the $\ZZ$ setting. The first pass induces  a strictly  increasing  sequence of vector spaces, and  has length at most $m$. 
Given the $\ZZ$-module $M$ at the end of the first phase, 
the second phase may iterate over as many times as there are strictly increasing $\ZZ$-modules of rank~$r$ equal to that of $M$ and lying between $M$ and $\ZZ^m$.
As described above  computing an upper bound on such chains  was proved by the elementary divisor theorem and the SNF in~\cite[Prop. 3.1]{buna-margineanLearningPolynomialRecursive2024}. 
We argue that $M$ can be assumed to be full-rank in our algorithms, so that $|\ZZ^m/M|$ is computable through a full-rank HNF (see~\Cref{lem:HNF} in the appendix). Because the minimization over both $\QQ$ and $\KK$ is in polynomial time, we can indeed assume without loss of generality that the input automaton is minimal so that its forward module is full-rank. In fact in the reduction of the learning over a number ring to the learning over its field of fractions, this property is even automatically satisfied.

\subsection{Overall Complexity of \texorpdfstring{\Cref{alg!make-automaton-integral,alg:compute-OK-generators}}{Algorithms 4 and 5}}
\label{sec:complexity}

Before stating the complexity of \Cref{alg!make-automaton-integral,alg:compute-OK-generators}, we first outline  the elements needed for the full representation of~$\OK$. 
Let  $C_{\KK}\coloneq d^{4} (\log d+ \log \Delta)$ such that $d=[K:\QQ]$ and $\Delta$ is the discriminant of $\KK$; %
see \Cref{app:algoALGF} for more details on the discriminant of~$\KK$ and the symbolic and regular representation of algebraic numbers.
A \emph{full representation of~$\OK$} consists of 
 an integral basis~$\Omega$ of~$\OK$,  as  in~\eqref{eq-defintegral}, where $1\in \Omega$, together with  
 a primitive element~$\theta$, whose minimal polynomial $m_{\theta}=x^d+\sum_{i=0}^{d-1} a_i x^i$ is such that
 $\log\mathopen{}\left( \prod_{1\leq i\leq j\leq d} |\sigma_i(\theta)-\sigma_j(\theta)|^2 \right)\mathclose{} \leq C_K$,
 where the $\sigma_i(\theta)$ are the $d$ distinct zeros  of $m_\theta$, and
the bit length of $a_i$ is bounded by $C_{\KK}$.
It is shown in \cite[pp. 590--591]{biasseComputationHNFModule2017} that there exists such a primitive element $\theta$, which is a sum of a subset of~$\Omega$.
 The elements of $\Omega$ and $\theta$ are given both in symbolic and regular representations.
We note that while~\cite[pp. 590--591]{biasseComputationHNFModule2017} require more extensive precomputed data, we omit these prerequisites, as they are computable in polynomial time from~$\Omega$ and $\theta$. 

The detailed complexity analysis    of  \Cref{alg!make-automaton-integral,alg:compute-OK-generators}, stated below, can be found in \Cref{app:modules-numberring}. 

\begin{restatable}{lemma}{complexAlg}
\label{complexAlg12}
Given a full representation of $\OK$, \Cref{alg!make-automaton-integral,alg:compute-OK-generators} run within polynomial time in
 the size of the input automaton, the degree of $\KK$ and the logarithm of its discriminant.   
\end{restatable}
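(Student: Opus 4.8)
The plan is to bound the running time of \Cref{alg!make-automaton-integral} by bounding the cost of each of its sub-procedures: the computation of a backward basis~$B$ (\crefrange{alg!make-automaton-integral!line!first}{alg!make-automaton-integral!line!defineforward}), the call to \Cref{alg:compute-OK-generators}, the extraction of a pseudo-basis via pseudo-HNF (\cref{alg!make-automaton-integral!line!pseudo-basis}), the call to \Cref{alg-psudobasis-generator}, and the final conjugation. First I would record the arithmetic of~$\OK$: given a full representation, the bit-size of an element of~$\OK$ is controlled by $C_{\KK}=d^4(\log d+\log\Delta)$, and addition, multiplication, and solving $\OK$-linear systems in~$\KK$ stay polynomial in~$d$, $\log\Delta$, and the bit-size of the inputs (this is standard, and the references for it are~\cite{cohen2013course,biasseComputationHNFModule2017}). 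Conjugating $\A$ by~$B$, or $\A'$ by~$F$, amounts to solving such systems, so those steps are polynomial; the only subtlety is that the entries of the conjugated automata remain polynomially bounded, which follows because $B$ is a submatrix of the backward-observation matrix and its entries are values $\A(\triangleright w_i w_j \triangleleft)$ of bounded size, and because Cramer's rule keeps the inverse's entries polynomially bounded.

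Next I would handle \Cref{alg:compute-OK-generators}. Its first pass builds a strictly increasing chain of $\KK$-vector spaces, hence adds at most $m\le n$ words, each step costing a rank test over~$\KK$, which is polynomial. Its second pass builds a strictly increasing chain of $\OK$-submodules of~$\OK^m$ all of the same (full) rank; by \Cref{lem:lengthdchain} this chain has length at most $\log(\norm{\mathfrak d})$, where $\mathfrak d$ is the sum of the $m\times m$ minor ideals of the backward-conjugated transition data together with the ideals~$\mathfrak a_i$. The key point is that $\log \norm{\mathfrak d}$ is polynomial in the input size: it is bounded by a sum of logarithms of absolute values of $m\times m$ determinants of matrices with polynomially-bounded $\OK$-entries, and Hadamard's inequality bounds each such determinant. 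Each iteration of the second pass requires deciding whether a new vector lies in the $\OK$-span of the current generators, which we do by running a pseudo-HNF computation; by the main theorem of~\cite{biasseComputationHNFModule2017} this is polynomial in~$d$, $\log\Delta$, and the bit-size, given a full representation. Hence the whole call to \Cref{alg:compute-OK-generators} is polynomial, and moreover its output generating set, though possibly of large cardinality (bounded by the chain length plus~$m$), consists of vectors of polynomially-bounded bit-size.

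Then \cref{alg!make-automaton-integral!line!pseudo-basis}: extracting a size-$\ell$ pseudo-basis from this generating set is one more pseudo-HNF computation, polynomial by~\cite{biasseComputationHNFModule2017}, and one must check that the output fractional ideals~$\mathfrak a_i$ have polynomially-bounded representations, which again follows from the minor-ideal formula~\eqref{eq-primediv} and Hadamard's bound. The call to \Cref{alg-psudobasis-generator} is exactly \Cref{lem:pseudobasisgen}, which already asserts polynomial time and output of size $\le \ell+1$; I would just invoke it. Finally, the closing conjugation of $\A'$ by~$F$ is again an $\OK$-linear system solve of polynomially-bounded size. Summing these polynomial bounds, and using that the number of states $m,\ell$ are at most $n$ (the size of the input automaton), yields the claimed polynomial bound in the size of the input automaton, $d=[\KK:\QQ]$, and $\log\Delta$.

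The main obstacle I anticipate is the \emph{bit-size control}, not the step count: one must verify that every intermediate object --- the backward basis~$B$ and its inverse, the (possibly exponentially many) generators produced by \Cref{alg:compute-OK-generators}, the pseudo-basis ideals, and the conjugating matrix~$F$ --- has bit-size polynomial in the input, since otherwise even "polynomially many" arithmetic operations would not give a polynomial-time algorithm. This is where \Cref{lem:lengthdchain} is doing the real work (bounding $\log\norm{\mathfrak d}$), together with Hadamard-type determinant bounds and the polynomial output-size guarantees of pseudo-HNF~\cite{biasseComputationHNFModule2017} and of \Cref{lem:pseudobasisgen}. Once the size bounds are in place, the complexity statement follows by assembling the polynomial costs of the finitely many phases. \Cref{theo:OK} is then obtained by combining \Cref{complexAlg12} with the standard polynomial-time learning algorithm for $\KK$-weighted automata~\cite{bergadanoLearningBehaviorsAutomata1994}, using \Cref{alg!make-automaton-integral} as the translator between the $\KK$-learner and the $\OK$-oracle as described in \Cref{sec:learning-and-reduction}, and noting that the logarithm of the longest counterexample enters only through the bit-size of the automata fed to the translator.
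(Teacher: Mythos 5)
Your plan follows the paper's own proof almost step for step: the same decomposition into phases, the same use of \Cref{lem:lengthdchain} to bound the number of iterations of the second \textbf{while} loop of \Cref{alg:compute-OK-generators}, the same appeal to the pseudo-HNF complexity result of \cite{biasseComputationHNFModule2017}, the same invocation of \Cref{lem:pseudobasisgen} for \Cref{alg-psudobasis-generator}, and the correct identification that bit-size control of intermediate objects is the crux. The one place where your write-up diverges is essentially cosmetic: where you cite Hadamard's inequality to bound minors, the paper instead bounds the representation size $S(D)$ of an $\ell\times\ell$ determinant directly, by expanding the Leibniz sum of $\ell!$ terms and summing in binary-tree fashion to obtain $S(D)\le d\ell^2(B_\A+2C_{\KK}+2\ell d)$, then translates this into a bound on $\log\norm{g}$ via \eqref{eq:norm}. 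Hadamard controls the absolute value of the determinant, not the bit-size of its coefficient vector in the integral basis $\Omega$, so the paper's calculation is slightly more careful, but both routes yield the needed polynomial bound.

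There is, however, one genuine omission: the full-rank restriction on pseudo-HNF. The main theorem of \cite{biasseComputationHNFModule2017} is \emph{proved} only for full-rank modules; for modules of lower rank, polynomial time is asserted but not established there. The paper addresses this explicitly, observing that one may assume w.l.o.g.\ that the input to \Cref{alg!make-automaton-integral} is already $\KK$-minimal (since $\KK$-minimization is polynomial time), so the forward module of $\A'$ is full rank inside $\A'(\St)$, and that after the first \textbf{while} loop of \Cref{alg:compute-OK-generators} the $\OK$-module spanned by $\{\A'(\triangleright w) \mid w \in W\}$ is already of full rank inside that forward module. Hence every pseudo-HNF call that actually occurs --- both the membership tests in the second \textbf{while} loop and the pseudo-basis extraction on \cref{alg!make-automaton-integral!line!pseudo-basis} --- is a full-rank pseudo-HNF computation, and the cited complexity result genuinely applies. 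Without this observation your appeal to \cite{biasseComputationHNFModule2017} is not fully justified; you should add it.
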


Recall that the \emph{principal ideal problem (PIP)} is to decide if an integral ideal $\mathfrak{a} \subseteq \OK$, given by its basis as a $\ZZ$-module, is principal. The PIP can be solved in quantum polynomial time \cite{biasseEfficientQuantumAlgorithms2015}.
To conclude this paper, we relate this problem to computing minimal $\OK$-WA; the proof is in \Cref{app:modules-numberring}.
\begin{restatable}{proposition}{decidingMinimalityPIPHard}
    \label{prop:minimal-pip-hard}
    Deciding whether an $\OK$-WA is state-minimal is PIP-hard.
\end{restatable}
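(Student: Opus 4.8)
The plan is to exhibit a polynomial-time reduction turning any algorithm that decides state-minimality of $\OK$-WAs into one that solves the principal ideal problem, which is exactly PIP-hardness. Given an integral ideal $\mathfrak{a}\subseteq\OK$ by a $\ZZ$-basis, I first compute in polynomial time generators $\mathfrak{a}=(\alpha,\beta)$ with $\alpha\neq 0$ (every ideal of a Dedekind domain is two-generated, and such generators are extracted from a $\ZZ$-basis by standard ideal arithmetic). I then build, in polynomial time, an $\OK$-WA $\A_{\mathfrak a}$ with three states over a two-letter alphabet $\{a,b\}$ recognizing a language $L_{\mathfrak a}$ of rank $2$ whose minimal $\OK$-modular automaton $\Min L_{\mathfrak a}$ has state module isomorphic to $\OK\oplus\mathfrak{a}$; this is a uniform version of \Cref{ex:non-Fatou-number-ring}. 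Concretely one writes down a minimal $\OK$-modular automaton on $M=\OK e_1\oplus\mathfrak{a}e_2\subseteq\KK^2$, with initial configuration $e_1$, a final covector whose values on $M$ lie in $\OK$, and transition maps (valued in $\left(\begin{smallmatrix}\OK&\mathfrak{a}^{-1}\\\mathfrak{a}&\OK\end{smallmatrix}\right)$-matrices) chosen so that $e_1$ together with the transitions generates $M$ over $\OK$ — using $\alpha,\beta\in\mathfrak a$ — and so that distinct configurations of $M$ get distinct observed languages; then $\A_{\mathfrak a}$ is obtained by conjugating along a surjection $\OK^3\twoheadrightarrow M$ coming from a three-element generating set of $M$, exactly as in \Cref{prop:almost-minimal-dedekind-wa} and \Cref{lem:pseudobasisgen}.

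Next I would establish the two directions. If $\mathfrak{a}$ is principal, then $M\cong\OK^2$ is free, so $\Min L_{\mathfrak a}$ is itself an $\OK$-WA with two states recognizing $L_{\mathfrak a}$, and hence the three-state $\A_{\mathfrak a}$ is not state-minimal. Conversely, suppose $\mathfrak{a}$ is non-principal and, for contradiction, that some $\OK$-WA $\B$ with at most two states recognizes $L_{\mathfrak a}$; since $L_{\mathfrak a}$ has rank $2$, $\B$ has exactly two states and $\B(\St)\cong\OK^2$. By \Cref{con:minimal} the forward automaton $\Reach\B$ surjects onto $\Obs(\Reach\B)\cong\Min L_{\mathfrak a}$; all state modules in sight are torsion-free of rank $2$, so the kernel of this surjection has torsion-free rank $0$, hence is $0$, and therefore $\Reach\B\cong\Min L_{\mathfrak a}$. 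Since $\Reach\B$ is an $\M$-subautomaton of $\B$, this yields an injective automaton morphism $\Min L_{\mathfrak a}\hookrightarrow\B$; applying the localization functor $\localize-$ and tracking the standard lattices, this is equivalent to the existence of a \emph{free} rank-$2$ $\OK$-lattice $\Lambda\subseteq\KK^2$ that contains $M$, is stable under both localized transition maps of $\Min L_{\mathfrak a}$, and on which the final covector of $\Min L_{\mathfrak a}$ takes values in $\OK$.

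The heart of the proof — and the step I expect to be the main obstacle — is to choose the transition maps so that no such free $\Lambda$ exists when the ideal class $[\mathfrak{a}]$ is nontrivial. A naive choice fails: if, e.g., $L_{\mathfrak a}$ is supported on words of length $\leq 1$ it always admits a two-state $\OK$-WA regardless of $\mathfrak{a}$, so the non-freeness of $\Min L_{\mathfrak a}(\St)$ must be made to genuinely obstruct a two-state realization. One arranges this by making the transition maps have entries with real denominators, so that the lattice $\OK^2\supseteq M$ is \emph{not} transition-stable, and then shows that every transition-stable $\OK$-lattice containing $M$ is confined between $M$ and $\operatorname{End}_\OK(M)\cdot M$; equivalently, that the $\OK$-order generated by the two transition maps is not contained in any $GL_2(\KK)$-conjugate of $M_2(\OK)$ when $\mathfrak{a}$ is non-principal, which precludes a free stable lattice (the extra requirement that the final covector remain integral on $\Lambda$ is what reduces the relevant invariant to triviality of the ideal class rather than a weaker condition). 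This is precisely the kind of module-theoretic computation already carried out for a single ideal in \Cref{ex:non-Fatou-number-ring}, here made uniform in $\mathfrak{a}$. Granting it, $\mathfrak{a}$ non-principal forces $\A_{\mathfrak a}$ to be state-minimal, so $\A_{\mathfrak a}$ is state-minimal if and only if $\mathfrak{a}$ is non-principal; since $\A_{\mathfrak a}$ is computed from $\mathfrak{a}$ in polynomial time, composing any state-minimality decider with this map and a negation yields a decision procedure for the principal ideal problem, which is the asserted PIP-hardness.
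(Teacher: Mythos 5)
Your overall strategy matches the paper's: reduce PIP to minimality by building, from generators $x,y$ of $\mathfrak{a}$, a three-state $\OK$-WA $\A$ whose $\Min L$ has state module isomorphic to $\OK\oplus\mathfrak{a}$, so that a two-state equivalent exists precisely when this module is free, i.e., when $\mathfrak{a}$ is principal. Your derivation of the lower bound $\Min L\hookrightarrow\B(\St)$ via the rank argument on $\Reach\B\twoheadrightarrow\Obs(\Reach\B)\cong\Min L$ is sound and close in spirit to the paper's \Cref{lemma:sandwich-minimal-automata}.

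However, there is a genuine gap, and you flag it yourself: the converse direction, that non-principal $\mathfrak{a}$ forces every automaton computing $L$ to have at least three states. You cast this as ruling out a free transition-stable $\OK$-lattice $\Lambda\supseteq M$ in $\KK^2$ and propose to engineer this by putting ``real denominators'' in the transitions and then analyzing the $\OK$-order they generate; you do not carry this out and concede it is the main obstacle. The paper does something simpler and more decisive: it chooses the \emph{language} so that $\Min L$ is \emph{saturated} in $\powserw{\OK}$. Concretely, $L$ is supported in a way that makes it zero on odd-length words, makes $a^{-1}L$ and $b^{-1}L$ span a copy of $\mathfrak{a}$ inside $\OK L_1\oplus\OK L_2$ (where $L_1,L_2$ are the languages from the two non-initial states), and makes $w^{-1}L\in\OK\cdot L$ for all $w$ of length two. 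One then checks directly that $\Min L\cong\OK\oplus\mathfrak{a}$ is saturated in $\OK L\oplus\OK L_1\oplus\OK L_2$ and hence in $\powserw{\OK}$. Combined with $\Min L\subseteq\B(\St)\subseteq\sat_{\powserw{\OK}}(\Min L)$, which is exactly \Cref{lemma:sandwich-minimal-automata}, this pins $\B(\St)=\Min L$ with no wiggle room: any two-state $\B$ would have to satisfy $\OK^2\cong\B(\St)\cong\OK\oplus\mathfrak{a}$, contradicting non-principality. Note also that the paper keeps every transition entry in $\OK$ throughout; your suggestion of denominators in the transitions would have to be undone when conjugating along $\OK^3\twoheadrightarrow M$ to produce an $\OK$-WA, and the order-theoretic analysis you propose (non-containment in a conjugate of $M_2(\OK)$, plus integrality of the final covector) is both harder and, as written, incomplete. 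In short: the missing idea is saturation of $\Min L$ in $\powserw{\OK}$, achieved by careful choice of the recognized language, not by denominators in the transitions.
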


Number rings play a central role in cryptography, and Proposition 26 can be seen in this light: a number of cryptographic schemes rely on the hardness of finding the generator of a principal ideal \cite[\textsection 6]{biasseEfficientQuantumAlgorithms2015}.

\begingroup
  \renewcommand\thefootnote{}\footnote{This research has received financial support from the Agence Nationale de la Recherche (ANR), grants ANR-23-CE48-0012-01 and ANR-22- CE48-0005, from the European Research Council (ERC) under the European Union’s Horizon research and innovation program (grant agreement No. 101167526), and from United Kingdom Research and Innovation (UKRI), grant EP/X033813/.}%
  \addtocounter{footnote}{-1}%
\endgroup

\newpage

 \bibliographystyle{IEEEtran} 
 \bibliography{bibtex}

\begin{thebibliography}{10}
\providecommand{\url}[1]{#1}
\csname url@samestyle\endcsname
\providecommand{\newblock}{\relax}
\providecommand{\bibinfo}[2]{#2}
\providecommand{\BIBentrySTDinterwordspacing}{\spaceskip=0pt\relax}
\providecommand{\BIBentryALTinterwordstretchfactor}{4}
\providecommand{\BIBentryALTinterwordspacing}{\spaceskip=\fontdimen2\font plus
\BIBentryALTinterwordstretchfactor\fontdimen3\font minus
  \fontdimen4\font\relax}
\providecommand{\BIBforeignlanguage}[2]{{%
\expandafter\ifx\csname l@#1\endcsname\relax
\typeout{** WARNING: IEEEtran.bst: No hyphenation pattern has been}%
\typeout{** loaded for the language `#1'. Using the pattern for}%
\typeout{** the default language instead.}%
\else
\language=\csname l@#1\endcsname
\fi
#2}}
\providecommand{\BIBdecl}{\relax}
\BIBdecl

\bibitem{Angluin87:Learning-regular-sets-from-queries-and-counterexamples}
\BIBentryALTinterwordspacing
D.~Angluin, ``Learning regular sets from queries and counterexamples,''
  \emph{Information and Computation}, vol.~75, no.~2, pp. 87--106, 1987.
  [Online]. Available:
  \url{https://www.sciencedirect.com/science/article/pii/0890540187900526}
\BIBentrySTDinterwordspacing

\bibitem{valiant1984}
L.~G. Valiant, ``A theory of the learnable,'' \emph{Communications of the ACM},
  vol.~27, no.~11, pp. 1134--1142, 1984.

\bibitem{vaandrager2017model}
F.~Vaandrager, ``Model learning,'' \emph{Communications of the ACM}, vol.~60,
  no.~2, pp. 86--95, 2017.

\bibitem{PeledVY02}
D.~Peled, M.~Vardi, and M.~Yannakakis, ``Black box checking,'' \emph{J. Autom.
  Lang. Comb.}, vol.~7, no.~2, pp. 225--246, 2002.

\bibitem{higuera2010}
C.~Higuera, \emph{Grammatical inference : learning automata and
  grammars}.\hskip 1em plus 0.5em minus 0.4em\relax Cambridge New York:
  Cambridge University Press, 2010.

\bibitem{BergGJLRS05}
T.~Berg, O.~Grinchtein, B.~Jonsson, M.~Leucker, H.~Raffelt, and B.~Steffen,
  ``On the correspondence between conformance testing and regular inference,''
  in \emph{{FASE} 2005}, ser. LNCS, vol. 3442, 2005, pp. 175--189.

\bibitem{ChaluparPPR14}
G.~Chalupar, S.~Peherstorfer, E.~Poll, and J.~de~Ruiter, ``Automated reverse
  engineering using lego{\textregistered},'' in \emph{{WOOT} '14, San Diego,
  CA, USA, August 19, 2014}.\hskip 1em plus 0.5em minus 0.4em\relax {USENIX}
  Association, 2014.

\bibitem{Leucker2006}
M.~Leucker, ``Learning meets verification,'' in \emph{International Symposium
  on Formal Methods for Components and Objects}.\hskip 1em plus 0.5em minus
  0.4em\relax Springer, 2006, pp. 127--151.

\bibitem{MarusicW15}
\BIBentryALTinterwordspacing
I.~Marusic and J.~Worrell, ``Complexity of equivalence and learning for
  multiplicity tree automata,'' \emph{J. Mach. Learn. Res.}, vol.~16, pp.
  2465--2500, 2015. [Online]. Available:
  \url{https://dl.acm.org/doi/10.5555/2789272.2912078}
\BIBentrySTDinterwordspacing

\bibitem{ShahbazG09:Inferring-Mealy-Machines}
M.~Shahbaz and R.~Groz, ``Inferring mealy machines,'' in \emph{FM 2009: Formal
  Methods}, A.~Cavalcanti and D.~R. Dams, Eds.\hskip 1em plus 0.5em minus
  0.4em\relax Berlin, Heidelberg: Springer Berlin Heidelberg, 2009, pp.
  207--222.

\bibitem{BolligHKL09:Angluin-style-learning-of-NFA}
B.~Bollig, P.~Habermehl, C.~Kern, and M.~Leucker, ``Angluin-style learning of
  nfa,'' in \emph{Proceedings of the 21st International Joint Conference on
  Artificial Intelligence}, ser. IJCAI'09.\hskip 1em plus 0.5em minus
  0.4em\relax San Francisco, CA, USA: Morgan Kaufmann Publishers Inc., 2009, p.
  1004–1009.

\bibitem{HowarSJC:Inferring-canonical-register-automata}
F.~Howar, B.~Steffen, B.~Jonsson, and S.~Cassel, ``Inferring canonical register
  automata,'' in \emph{Verification, Model Checking, and Abstract
  Interpretation}, V.~Kuncak and A.~Rybalchenko, Eds.\hskip 1em plus 0.5em
  minus 0.4em\relax Berlin, Heidelberg: Springer Berlin Heidelberg, 2012, pp.
  251--266.

\bibitem{AartsFKV15:Learning-Register-Automata-with-Fresh-Value-Generation}
F.~Aarts, P.~Fiterau-Brostean, H.~Kuppens, and F.~Vaandrager, ``Learning
  register automata with fresh value generation,'' in \emph{Theoretical Aspects
  of Computing - ICTAC 2015}, M.~Leucker, C.~Rueda, and F.~D. Valencia,
  Eds.\hskip 1em plus 0.5em minus 0.4em\relax Cham: Springer International
  Publishing, 2015, pp. 165--183.

\bibitem{Vilar96:Query-learning-of-subsequential-transducers}
\BIBentryALTinterwordspacing
J.~M. Vilar, ``Query learning of subsequential transducers,'' in
  \emph{Grammatical Inference: Learning Syntax from Sentences, 3rd
  International Colloquium, ICGI-96, Montpellier, France, September 25-27,
  1996, Proceedings}, ser. Lecture Notes in Computer Science, L.~Miclet and
  C.~de~la Higuera, Eds., vol. 1147.\hskip 1em plus 0.5em minus 0.4em\relax
  Springer, 1996, pp. 72--83. [Online]. Available:
  \url{https://doi.org/10.1007/BFb0033343}
\BIBentrySTDinterwordspacing

\bibitem{MoermanSSKS17:Learning-nominal-automata}
\BIBentryALTinterwordspacing
J.~Moerman, M.~Sammartino, A.~Silva, B.~Klin, and M.~Szynwelski, ``Learning
  nominal automata,'' in \emph{Proceedings of the 44th ACM SIGPLAN Symposium on
  Principles of Programming Languages}, ser. POPL '17.\hskip 1em plus 0.5em
  minus 0.4em\relax New York, NY, USA: Association for Computing Machinery,
  2017, p. 613–625. [Online]. Available:
  \url{https://doi.org/10.1145/3009837.3009879}
\BIBentrySTDinterwordspacing

\bibitem{AngluinF16:Learning-regular-omega-languages}
\BIBentryALTinterwordspacing
D.~Angluin and D.~Fisman, ``Learning regular omega languages,''
  \emph{Theoretical Computer Science}, vol. 650, pp. 57--72, 2016, algorithmic
  Learning Theory. [Online]. Available:
  \url{https://www.sciencedirect.com/science/article/pii/S0304397516303760}
\BIBentrySTDinterwordspacing

\bibitem{BeimelBBKV00}
A.~Beimel, F.~Bergadano, N.~H. Bshouty, E.~Kushilevitz, and S.~Varricchio,
  ``Learning functions represented as multiplicity automata,'' \emph{J. {ACM}},
  vol.~47, no.~3, pp. 506--530, 2000.

\bibitem{buna-margineanLearningPolynomialRecursive2024}
\BIBentryALTinterwordspacing
A.~{Buna-Marginean}, V.~Cheval, M.~Shirmohammadi, and J.~Worrell, ``On
  {{Learning Polynomial Recursive Programs}},'' \emph{Proceedings of the ACM on
  Programming Languages}, vol.~8, no. POPL, pp. 34:1001--34:1027, Jan. 2024.
  [Online]. Available: \url{https://dl.acm.org/doi/10.1145/3632876}
\BIBentrySTDinterwordspacing

\bibitem{UrbatS20:Automata-Learning:-An-Algebraic-Approach}
\BIBentryALTinterwordspacing
H.~Urbat and L.~Schr\"{o}der, ``Automata learning: An algebraic approach,'' in
  \emph{Proceedings of the 35th Annual ACM/IEEE Symposium on Logic in Computer
  Science}, ser. LICS '20.\hskip 1em plus 0.5em minus 0.4em\relax New York, NY,
  USA: Association for Computing Machinery, 2020, p. 900–914. [Online].
  Available: \url{https://doi.org/10.1145/3373718.3394775}
\BIBentrySTDinterwordspacing

\bibitem{HeerdtSS20:Learning-Automata-with-Side-Effects}
\BIBentryALTinterwordspacing
G.~v. Heerdt, M.~Sammartino, and A.~Silva, ``Learning automata with
  side-effects,'' in \emph{Coalgebraic Methods in Computer Science: 15th IFIP
  WG 1.3 International Workshop, CMCS 2020, Colocated with ETAPS 2020, Dublin,
  Ireland, April 25–26, 2020, Proceedings}.\hskip 1em plus 0.5em minus
  0.4em\relax Berlin, Heidelberg: Springer-Verlag, 2020, p. 68–89. [Online].
  Available: \url{https://doi.org/10.1007/978-3-030-57201-3_5}
\BIBentrySTDinterwordspacing

\bibitem{BarloccoKR19:Coalgebra-Learning-via-Duality}
S.~Barlocco, C.~Kupke, and J.~Rot, ``Coalgebra learning via duality,'' in
  \emph{Foundations of Software Science and Computation Structures},
  M.~Boja{\'{n}}czyk and A.~Simpson, Eds.\hskip 1em plus 0.5em minus
  0.4em\relax Cham: Springer International Publishing, 2019, pp. 62--79.

\bibitem{colcombetLearningAutomataTransducers2021}
\BIBentryALTinterwordspacing
T.~Colcombet, D.~Petri{\c s}an, and R.~Stabile, ``Learning {{Automata}} and
  {{Transducers}}: {{A Categorical Approach}},'' in \emph{29th {{EACSL Annual
  Conference}} on {{Computer Science Logic}} ({{CSL}} 2021)}, ser. Leibniz
  {{International Proceedings}} in {{Informatics}} ({{LIPIcs}}), C.~Baier and
  J.~{Goubault-Larrecq}, Eds., vol. 183.\hskip 1em plus 0.5em minus 0.4em\relax
  Dagstuhl, Germany: Schloss Dagstuhl--Leibniz-Zentrum f{\"u}r Informatik,
  2021, pp. 15:1--15:17. [Online]. Available:
  \url{https://drops.dagstuhl.de/opus/volltexte/2021/13449}
\BIBentrySTDinterwordspacing

\bibitem{vanheerdtLearningWeightedAutomata2020}
G.~v. Heerdt, C.~Kupke, J.~Rot, and A.~Silva, ``Learning {{Weighted Automata}}
  over {{Principal Ideal Domains}},'' in \emph{Foundations of {{Software
  Science}} and {{Computation Structures}}}, J.~{Goubault-Larrecq} and
  B.~K{\"o}nig, Eds.\hskip 1em plus 0.5em minus 0.4em\relax Cham: Springer
  International Publishing, 2020, pp. 602--621.

\bibitem{berstelNoncommutativeRationalSeries2010}
\BIBentryALTinterwordspacing
J.~Berstel and C.~Reutenauer, \emph{Noncommutative {{Rational Series}} with
  {{Applications}}}, ser. Encyclopedia of {{Mathematics}} and Its
  {{Applications}}.\hskip 1em plus 0.5em minus 0.4em\relax Cambridge: Cambridge
  University Press, 2010. [Online]. Available:
  \url{https://www.cambridge.org/core/books/noncommutative-rational-series-with-applications/CADC75E4ADDA69E99BB8B0D8FE9AD119}
\BIBentrySTDinterwordspacing

\bibitem{salomaa2012automata}
A.~Salomaa and M.~Soittola, \emph{Automata-theoretic aspects of formal power
  series}.\hskip 1em plus 0.5em minus 0.4em\relax Springer Science \& Business
  Media, 2012.

\bibitem{martin2004grammars}
C.~Mart{\'\i}n-Vide and V.~Mitrana, \emph{Grammars and Automata for String
  Processing: From Mathematics and Computer Science to Biology, and
  Back}.\hskip 1em plus 0.5em minus 0.4em\relax CRC Press, 2004, vol.~9.

\bibitem{paz2014introduction}
A.~Paz, \emph{Introduction to probabilistic automata}.\hskip 1em plus 0.5em
  minus 0.4em\relax Academic Press, 1971.

\bibitem{ChistikovKMSW16}
\BIBentryALTinterwordspacing
D.~Chistikov, S.~Kiefer, I.~Marusic, M.~Shirmohammadi, and J.~Worrell, ``On
  restricted nonnegative matrix factorization,'' in \emph{43rd International
  Colloquium on Automata, Languages, and Programming, {ICALP} 2016, July 11-15,
  2016, Rome, Italy}, ser. LIPIcs, I.~Chatzigiannakis, M.~Mitzenmacher,
  Y.~Rabani, and D.~Sangiorgi, Eds., vol.~55.\hskip 1em plus 0.5em minus
  0.4em\relax Schloss Dagstuhl - Leibniz-Zentrum f{\"{u}}r Informatik, 2016,
  pp. 103:1--103:14. [Online]. Available:
  \url{https://doi.org/10.4230/LIPIcs.ICALP.2016.103}
\BIBentrySTDinterwordspacing

\bibitem{ChistikovKMSW17}
\BIBentryALTinterwordspacing
------, ``On rationality of nonnegative matrix factorization,'' in
  \emph{Proceedings of the Twenty-Eighth Annual {ACM-SIAM} Symposium on
  Discrete Algorithms, {SODA} 2017, Barcelona, Spain, Hotel Porta Fira, January
  16-19}, P.~N. Klein, Ed.\hskip 1em plus 0.5em minus 0.4em\relax {SIAM}, 2017,
  pp. 1290--1305. [Online]. Available:
  \url{https://doi.org/10.1137/1.9781611974782.84}
\BIBentrySTDinterwordspacing

\bibitem{colcombetpetrisan2017}
\BIBentryALTinterwordspacing
T.~Colcombet and D.~Petrisan, ``Automata minimization: a functorial approach,''
  in \emph{7th Conference on Algebra and Coalgebra in Computer Science, {CALCO}
  2017, June 12-16, 2017, Ljubljana, Slovenia}, ser. LIPIcs, F.~Bonchi and
  B.~K{\"{o}}nig, Eds., vol.~72.\hskip 1em plus 0.5em minus 0.4em\relax Schloss
  Dagstuhl - Leibniz-Zentrum f{\"{u}}r Informatik, 2017, pp. 8:1--8:16.
  [Online]. Available: \url{https://doi.org/10.4230/LIPIcs.CALCO.2017.8}
\BIBentrySTDinterwordspacing

\bibitem{bergadanoLearningBehaviorsAutomata1994}
F.~Bergadano and S.~Varricchio, ``Learning behaviors of automata from
  multiplicity and equivalence queries,'' in \emph{Algorithms and
  {{Complexity}}}, ser. Lecture {{Notes}} in {{Computer Science}},
  M.~Bonuccelli, P.~Crescenzi, and R.~Petreschi, Eds.\hskip 1em plus 0.5em
  minus 0.4em\relax Berlin, Heidelberg: Springer, 1994, pp. 54--62.

\bibitem{biasseEfficientQuantumAlgorithms2015}
\BIBentryALTinterwordspacing
J.-F. Biasse and F.~Song, ``Efficient quantum algorithms for computing class
  groups and solving the principal ideal problem in arbitrary degree number
  fields,'' in \emph{Proceedings of the 2016 {{Annual ACM-SIAM Symposium}} on
  {{Discrete Algorithms}} ({{SODA}})}, ser. Proceedings.\hskip 1em plus 0.5em
  minus 0.4em\relax {Society for Industrial and Applied Mathematics}, Dec.
  2015, pp. 893--902. [Online]. Available:
  \url{https://epubs.siam.org/doi/10.1137/1.9781611974331.ch64}
\BIBentrySTDinterwordspacing

\bibitem{broue2024}
M.~Broué, \emph{From Rings and Modules to Hopf Algebras}.\hskip 1em plus 0.5em
  minus 0.4em\relax Springer, 2024.

\bibitem{kieferNotesEquivalenceMinimization2020}
\BIBentryALTinterwordspacing
S.~Kiefer, ``Notes on {{Equivalence}} and {{Minimization}} of {{Weighted
  Automata}},'' Sep. 2020. [Online]. Available:
  \url{http://arxiv.org/abs/2009.01217}
\BIBentrySTDinterwordspacing

\bibitem{petrisanAutomataMinimizationFunctorial2020}
\BIBentryALTinterwordspacing
T.~Colcombet and D.~Petri{\c s}an, ``Automata {{Minimization}}: A {{Functorial
  Approach}},'' \emph{Logical Methods in Computer Science}, vol. Volume 16,
  Issue 1, Mar. 2020. [Online]. Available:
  \url{https://lmcs.episciences.org/6213/pdf}
\BIBentrySTDinterwordspacing

\bibitem{Sch2008}
S.~Schröer, ``Baer's result: The infinite product of the integers has no
  basis,'' \emph{Amer. Math. Monthly}, vol. 115, pp. 660--663, 2008.

\bibitem{drosteWeightedAutomata2021}
\BIBentryALTinterwordspacing
M.~Droste and D.~Kuske, ``Weighted automata,'' in \emph{Handbook of {{Automata
  Theory}}}, J.-{\'E}. Pin, Ed.\hskip 1em plus 0.5em minus 0.4em\relax European
  Mathematical Society Publishing House, Z{\"u}rich, Switzerland, 2021, pp.
  113--150. [Online]. Available: \url{https://doi.org/10.4171/Automata-1/4}
\BIBentrySTDinterwordspacing

\bibitem{adamekAbstractConcreteCategories2009}
J.~Ad{\'a}mek, H.~Herrlich, and G.~E. Strecker, \emph{Abstract and {{Concrete
  Categories}}: {{The Joy}} of {{Cats}}}, ser. Dover Books on Advanced
  Mathematics.\hskip 1em plus 0.5em minus 0.4em\relax Dover Publications, Aug.
  2009.

\bibitem{Adamek-coalg}
J.~Ad{\'a}mek, F.~Bonchi, M.~H{\"u}lsbusch, B.~K{\"o}nig, S.~Milius, and
  A.~Silva, ``A coalgebraic perspective on minimization and determinization,''
  in \emph{Foundations of Software Science and Computational Structures},
  L.~Birkedal, Ed.\hskip 1em plus 0.5em minus 0.4em\relax Berlin, Heidelberg:
  Springer Berlin Heidelberg, 2012, pp. 58--73.

\bibitem{Rot2016}
\BIBentryALTinterwordspacing
J.~Rot, ``Coalgebraic minimization of automata by initiality and finality,''
  \emph{Electronic Notes in Theoretical Computer Science}, vol. 325, pp.
  253--276, 2016, the Thirty-second Conference on the Mathematical Foundations
  of Programming Semantics (MFPS XXXII). [Online]. Available:
  \url{https://www.sciencedirect.com/science/article/pii/S1571066116300937}
\BIBentrySTDinterwordspacing

\bibitem{aristoteFunctorialApproachMinimizing2023b}
\BIBentryALTinterwordspacing
Q.~Aristote, ``Functorial approach to minimizing and learning deterministic
  transducers with outputs in arbitrary monoids,'' Nov. 2023. [Online].
  Available: \url{https://ens.hal.science/hal-04172251v3}
\BIBentrySTDinterwordspacing

\bibitem{biasseComputationHNFModule2017}
\BIBentryALTinterwordspacing
J.-F. Biasse, C.~Fieker, and T.~Hofmann, ``On the computation of the {{HNF}} of
  a module over the ring of integers of a number field,'' \emph{Journal of
  Symbolic Computation}, vol.~80, pp. 581--615, May 2017. [Online]. Available:
  \url{https://www.sciencedirect.com/science/article/pii/S0747717116300736}
\BIBentrySTDinterwordspacing

\bibitem{cohen2012advanced}
H.~Cohen, \emph{Advanced topics in computational number theory}.\hskip 1em plus
  0.5em minus 0.4em\relax Springer, 2000.

\bibitem{bosmapohst1991}
\BIBentryALTinterwordspacing
W.~Bosma and M.~Pohst, ``Computations with finitely generated modules over
  dedekind rings,'' in \emph{Proceedings of the 1991 International Symposium on
  Symbolic and Algebraic Computation}, ser. ISSAC '91.\hskip 1em plus 0.5em
  minus 0.4em\relax New York, NY, USA: Association for Computing Machinery,
  1991, p. 151–156. [Online]. Available:
  \url{https://doi.org/10.1145/120694.120714}
\BIBentrySTDinterwordspacing

\bibitem{cohen2013course}
H.~Cohen, \emph{A course in computational algebraic number theory}.\hskip 1em
  plus 0.5em minus 0.4em\relax Springer, 1993.

\bibitem{cohenHermiteSmithNormal1996}
\BIBentryALTinterwordspacing
------, ``Hermite and {{Smith}} normal form algorithms over {{Dedekind}}
  domains,'' \emph{Mathematics of Computation}, vol.~65, no. 216, pp.
  1681--1699, 1996. [Online]. Available:
  \url{https://www.ams.org/mcom/1996-65-216/S0025-5718-96-00766-1/}
\BIBentrySTDinterwordspacing

\bibitem{ge1994recognizing}
G.~Ge, ``Recognizing units in number fields,'' \emph{Mathematics of
  computation}, vol.~63, no. 207, pp. 377--387, 1994.

\bibitem{stein2012algebraic}
W.~Stein, ``Algebraic number theory, a computational approach,'' \emph{Harvard,
  Massachusetts}, 2012.

\bibitem{maclane1971}
S.~Mac~Lane, \emph{Categories for the working mathematician}.\hskip 1em plus
  0.5em minus 0.4em\relax Springer-Verlag New York Heidelberg Berlin, 1971.

\bibitem{domichHermiteNormalForm1987}
\BIBentryALTinterwordspacing
P.~D. Domich, R.~Kannan, and L.~E. Trotter, ``Hermite {{Normal Form Computation
  Using Modulo Determinant Arithmetic}},'' \emph{Mathematics of Operations
  Research}, vol.~12, no.~1, pp. 50--59, Feb. 1987. [Online]. Available:
  \url{https://pubsonline.informs.org/doi/10.1287/moor.12.1.50}
\BIBentrySTDinterwordspacing

\bibitem{sonn1967theorem}
J.~Sonn and H.~Zassenhaus, ``On the theorem on the primitive element,''
  \emph{The American Mathematical Monthly}, vol.~74, no.~4, pp. 407--410, 1967.

\bibitem{bach1993factor}
E.~Bach, J.~Driscoll, and J.~Shallit, ``Factor refinement,'' \emph{Journal of
  Algorithms}, vol.~15, no.~2, pp. 199--222, 1993.

\bibitem{blomer1998probabilistic}
J.~Bl{\"o}mer, ``A probabilistic zero-test for expressions involving roots of
  rational numbers,'' in \emph{Algorithms—ESA’98: 6th Annual European
  Symposium Venice, Italy, August 24--26, 1998 Proceedings 6}.\hskip 1em plus
  0.5em minus 0.4em\relax Springer, 1998, pp. 151--162.

\bibitem{buchmann1999factor}
J.~Buchmann and F.~Eisenbrand, ``On factor refinement in number fields,''
  \emph{Mathematics of computation}, vol.~68, no. 225, pp. 345--350, 1999.

\end{thebibliography}

\newpage
\appendices

\crefalias{section}{appendix}
\crefalias{subsection}{subappendix}

\onecolumn

\section{Extended Algebraic Preliminaries}
\label{app:preliminaries}
\subsection{Commutative Algebra}
\label{app:preliminaries:comm-algebra}

\subsubsection{Rings}

Throughout the paper, `ring' means `commutative ring with unit'. That is, a ring $R$ is a set $R$ equipped with binary operations, addition 
and multiplication, satisfying the following conditions: $R$ is a commutative group under addition, $R$ is a commutative monoid under multiplication, and multiplication is distributive with respect to addition. An element $r$ of $R$ is \emph{invertible} if there exists $s \in R$ such that $rs = 1$. We write $\inver{R}$ for the set of invertible elements of $R$.

\medskip
\subsubsection{Modules}

Let $R$ be a ring. An \emph{$R$-module} is an additive group $M$ equipped with an action $R \times M \to M$, called \emph{scalar multiplication}, which is unital, associative, and bilinear: for any $r, s \in R$ and $m, n \in M$: $1_R m = m$, $r(sm) = (rs)m$, and both $(r+s)m = rm + sm$ and $r(m+n) = rm + rn$. A \emph{submodule} of a module $M$ is a subgroup $N$ of $M$ that is invariant under  scalar multiplication: for any $r \in R$ and $n \in N$, we have $rn \in N$. 

For any subset $S$ of $M$, the \emph{submodule generated by} $S$ is the smallest submodule of $M$ that contains $S$. The \emph{sum} of two submodules $M_1, M_2$ of $M$ is the submodule generated by the union $M_1 \cup M_2$, and is denoted $M_1 + M_2$. If moreover $M_1 \cap M_2 = \{0_M\}$, then $M_1 + M_2$ is a \emph{direct sum}, and it is then denoted $M_1 \oplus M_2$. Note that $R$ itself can be seen as an $R$-module. Its submodules are exactly the \emph{ideals} of $R$, and the ideals generated by a singleton are called \emph{principal}. A \emph{morphism} or \emph{$R$-linear map} between $R$-modules $M$ and $N$ is a group homomorphism $\phi \colon M \to N$ that respects the actions, i.e., $\phi(rm) = r\phi(m)$ for all $r \in R$ and $m \in M$.

\smallskip

\emph{Free} modules are essential for the algebraic analysis of $R$-weighted automata. For any (possibly infinite) set $Q$, a \emph{finite $R$-linear combination} is an expression $\sum_{i=1}^n r_i q_i$, with $r_i \in R$ and $q_i \in Q$ for each $1 \leq i \leq n$; said otherwise, it is a finitely supported function from $Q$ to $R$. The set of finite $R$-linear combinations, equipped with the expected addition and $R$-action, is a first example of a free $R$-module; we denote it by $\frMod{R}{Q}$. The elements $1_R \cdot q$, for $q \in Q$, constitute a \emph{canonical basis} for $\frMod{R}{Q}$; we identify each $q$ with its corresponding basis element. The \emph{rank} of $\frMod{R}{Q}$ is the cardinality of $Q$. 

For any $R$-module $M$ and $(x_q)_{q \in Q}$ a $Q$-indexed set of elements of $M$, there exists a unique morphism $\phi \colon \frMod{R}{Q} \to M$ that sends 
$q$ to $x_q$. 
The family $(x_q)_{q \in Q}$ is \emph{free} in $M$ if $\phi$ is injective, \emph{generating} for $M$ if $\phi$ is surjective, and a \emph{basis} for $M$ if $\phi$ is bijective. The module $M$ is called \emph{finitely generated} if it has a finite generating family and \emph{free} if it has a basis. An element $m \in M$ has \emph{torsion} if there exists $r \neq 0$ such that $rm = 0$. An $R$-module $M$ is \emph{torsion-free} if there are no non-zero torsion elements. Free $R$-modules are torsion-free, but the converse is not true; for example, the ideal $(2, 1 + i\sqrt{5})$ in the ring $\adjoined{i\sqrt{5}}$ is not free. Note that a submodule of a torsion-free module is again torsion-free. 

\medskip
\subsubsection{Field of fractions}

A ring $R$ is an \emph{integral domain} if it has no zero divisors, i.e., for any $r, s \in R$, if $rs = 0$, then $r = 0$ or $s = 0$. Let $R$ be an integral domain. 
The \emph{field of fractions} of $R$ is obtained by adding formal inverses for all non-zero elements of $R$: its elements are 
pairs $r/s$, where $r \in R$ and $s \in \inver{R}$, and $r/s$ is identified with $r'/s'$ when $rs' = r's$; note that it is in particular an $R$-module. Let $\KK$ be the field of fractions of $R$. 
Extending this definition to modules, for any $R$-module $M$, there is a $\KK$-vector space  $\localize M$, called the \emph{localization} or \emph{extension of scalars} of $M$. Its elements are formal fractions $m/r$ of an element $m \in M$ and an element $r \in \inver{R}$, where we identify $m/r$ and $m'/r'$ if there is $s
\in \inver{R}$ such that $sr \cdot m' = sr' \cdot m$. For any $R$-linear map $f: M \to N$, we obtain a $\KK$-linear map $\localize f: \localize M \to \localize N$, which maps $m/r$ to $f(m)/r$.
The \emph{rank} of an $R$-module is the dimension of its localization.  

\medskip
\subsubsection{Ideals}

Given a ring~$R$, a subset $I\subseteq R$ is said to be an \emph{ideal} if $I$ is a subgroup of $R$ under addition, and is closed under multiplication by elements of~$R$, i.e., for all $r\in R$ and $a\in I$, we have $ra\in I$.
An ideal $I$ of $R$ is \emph{proper} if it is not equal to $R$. A proper ideal~$I$ is called a \emph{prime} ideal if for any $a,b\in R$ such that $ab\in I$, at least one of $a$ or $b$ is in $I$. The ring $R$ is called a \emph{principal ideal domain (PID)} if every ideal $I\in R$ is generated by a single element, that is, there exists $a \in R$ such that $I=\{ra \,| \, r\in R\}=(a)$.  

A \emph{fractional ideal} of a ring $R$ is a non-zero $R$-submodule $\mathfrak{a}$ of $\KK$ such that, for some $r \in R \setminus \{0\}$, the submodule $r \mathfrak{a} \coloneq \{ra \ | \ a \in \mathfrak{a}\}$ is contained in $R$. The \emph{dual} of a fractional ideal $\mathfrak{a}$ is the fractional ideal $\mathfrak{a}^{-1} \coloneq \{x \in \KK \ | \ x \mathfrak{a} \subseteq R\}$, and $\mathfrak{a}$ is \emph{invertible} if $\mathfrak{a}^{-1} \mathfrak{a} = R$. An integral domain $R$ is a \emph{Dedekind domain} if, and only if, every fractional ideal is invertible.

\medskip
\subsubsection{On PIDs and Dedekind domains}

Any PID is a Dedekind domain, and any Dedekind domain $R$ is \emph{Noetherian}, meaning that any ideal of $R$ is finitely generated. An $R$-module $M$ is \emph{Noetherian} if every submodule of $M$ is finitely generated. Equivalently, a ring $R$ is Noetherian if every ascending chain of ideals in $R$ stablizes, and an $R$-module $M$ is Noetherian if every ascending chain of submodules of $M$ stabilizes. In a Noetherian ring, all finitely generated modules are Noetherian.

In $R$ is a PID, then every finitely generated torsion-free $R$-module is free, and thus has a basis. If $R$ is a Dedekind domain with field of fractions $\KK$, then every finitely generated torsion-free $R$-module $M$ is a direct sum of rank $1$ submodules, $M = \bigoplus_{i=1}^n E_i$, where each $E_i$ is of the form $\mathfrak{a}_ie_i$, for some fractional ideal $\mathfrak{a}_i$ of $\KK$ and $e_i \in M$, such that the system $(e_i)_{i=1}^n$ is free, see, e.g., \cite[Cor.~10.2.3]{broue2024}. Such a set of rank $1$ submodules $E_1, \dots, E_n$ is called a \emph{pseudo-basis} for $M$.

\section{Algebraic Number Fields}
\label{app:algNF}
Denote by $\QQ[x]$ the ring of univariate polynomials with rational coefficients. 
A  non-zero polynomial in $\QQ[x]$ is \emph{monic}  when its leading coefficient is equal to 1. 

\medskip 

Recall that a complex number $\alpha\in \CC$ is algebraic if it is the root of a non-zero polynomial in~$\QQ[x]$. The minimal polynomial of~$\alpha$ (over~$\QQ$) is the  monic polynomial of least degree in~$\QQ[x]$ that has $\alpha$ as a root. The degree of $\alpha$ is defined to be the degree of its minimal polynomial. The roots of the
minimal polynomial of $\alpha$  are called the Galois conjugates of~$\alpha$. 
The (absolute) norm of~$\alpha$,
denoted $\norm{\alpha}$, is the product of the Galois conjugates of~$\alpha$.

A number field~$\KK$ is a subfield of~$\CC$ having finite degree over~$\QQ$, meaning it can be viewed as a finite dimensional vector space over~$\QQ$. By the primitive element theorem,  every number field has the form $\QQ[\theta]$ for some algebraic number~$\theta$, where
\[\QQ[\theta]\coloneq\left\{\sum_{i=1}^d a_i \theta^i ~\middle|~ a_i \in \QQ\right\}\]
and $d$ is the degree of~$\theta$. We say that~$\theta$ is a primitive element for~$\KK$, and define the degree of~$\KK$ over~$\QQ$, denoted by~$[\KK : \QQ]$, to be~$d$.

An algebraic integer~$\alpha$ is \emph{integral} (over~$\ZZ$) if its minimal polynomial is in $\ZZ[x]$. 
The set of all algebraic integers $\mathbb{A} \subset \CC$ is closed under addition, subtraction and multiplication and therefore is a commutative ring.

Given any number field~$\KK$, its ring of integers~$\OK\coloneq\KK \cap \mathbb{A}$ can be characterized as a $\ZZ$-module  of rank~$[\KK:\QQ]$.
Since $\ZZ$ is a PID, each $\ZZ$-submodule of $\OK$ is finitely generated; by this, each ideal~$\mathfrak{a}\subseteq \OK$ can  be viewed as a finitely generated $\ZZ$-module. Indeed, 
by~\cite[Proposition 4.6.3]{cohen2013course}, each   ideal~$\mathfrak{a}\subseteq \OK$ is even a full-rank $\ZZ$-submodules of~$\OK$. 
The finitely generated full-rank $\ZZ$-modules $\mathfrak{b}$ of $\KK$ are called \emph{fractional ideals}.  
For clarity, we sometimes refer to 
 ideals~$\mathfrak{a}\subseteq \OK$ as  \emph{integral ideals}.
 The set of fractional ideals of $\KK$ forms a commutative group~\cite[Theorem 4.6.14]{cohen2013course} under ideal multiplication with identity element $\OK$. Hence,
  the inverse of $\mathfrak{a}$ is defined as $\mathfrak{a}^{-1}\coloneq\{\alpha \in \KK \, |\, \alpha  \mathfrak{a} \subseteq \OK\}$. Moreover, for each fractional ideal~$\mathfrak{b}$, there exists $r\in \ZZ$ such that $r\mathfrak{b}$ is integral.

The number rings $\OK$ are not in general unique factorization domains, where 
every (non-zero) element can be uniquely written as a product of irreducible elements; see \Cref{example-cyclotomic} and \Cref{example-quadratic} for instance. Ideals in $\OK$ when seen  
as $\OK$-modules are finitely generated, but might not have a basis. In this manner number rings differ from PIDs which, on the contrary, form a strict subset of unique factorization domains, and have all their ideals have a basis by definition.

In fact, by~\cite[Proposition 1.2.3]{cohen2012advanced}, number rings are Dedekind domains. In particular, each fractional ideal is equal to a unique product of powers of prime ideals and can be generated by at most two elements.

\begin{example}[Cyclotomic fields]
\label{example-cyclotomic}
An $n$-th root of unity $\alpha$ is a root of the polynomial~$x^n-1$. It is called \emph{primitive} if it is not an $m$-th root of unity for any $m<n$.

Let $\zeta_n$ be a primitive  $n$-th root
of unity.
The minimal polynomial~$\Phi_n$ of $\zeta_n$ is known as the $n$-th \emph{cyclotomic} polynomial. When $n$ is a (rational) prime, $\Phi_n$ takes the form $\Phi_n(x)=1+x+x^2+ \ldots+x^{n-1}$.

The $n$-th cyclotomic field is $\adjoinedff{\zeta_n}$, whose ring of integers is $\adjoined{\zeta_n}$.
An example is the Gaussian integers~$\adjoined{i}$ which is the ring of integers of $\adjoinedff{i}$. 
The  ideal $(1 + 2i)\adjoined{i}$ of $\adjoined{i}$ can be viewed as
\begin{itemize}
    \item  a $\adjoined{i}$-module generated by $(1+2i)$, 
    \item  a $\ZZ$-module generated by $1+2i$  and $-2+i$. 
\end{itemize}

The ring $\adjoined{i}$ is a PID, but in general the cyclotomic rings $\mathbb {Z} [\zeta_{n}]$ are not PIDs. For example, $\mathbb {Z} [\zeta_{23}]$ 
does not satisfy unique factorization of elements, meaning it is not a UFD and therefore cannot be a PID. 
All number rings are Dedekind domains, where the unique factorization of elements is replaced by the unique factorization of ideals.
\end{example}

\begin{example}[Quadratic fields]
\label{example-quadratic}
A number field~$\KK$ is quadratic if there exists a square-free
$\beta$ such that $K=\QQ[\sqrt{\beta}]$. 
The ring of integers~$\OK$ of $\KK$ 
is  of the form~$\adjoined{\theta}$,
where 
\[\theta=\begin{cases}
    \sqrt{\beta} & \text{if } \beta \not \equiv 1 \pmod{4}\,, \\
    \frac{\sqrt{\beta}-1}{2} & \text{if } \beta  \equiv 1 \pmod{4} \,.
\end{cases}\]
For example, the ring of integers $\OK$ of~$\KK=\adjoinedff{\sqrt{5}}$ 
is~$\adjoined{\frac{1+\sqrt{5}}{2}}$. 
The subring $\adjoined{\sqrt{5}}$ is called an order (as it has finite index in $\OK$). 
The order
$\adjoined{\sqrt{5}}$ is not a PID, as shown by the factorizations 
\[4 \;=\; 2 \cdot 2 \;= \; (3+\sqrt{5})(3-\sqrt{5})\,.\]
\end{example}

An order is a subring of $\KK$ whose additive group is free of rank~$n$. The maximal order in $\KK$ is simply the ring $\OK$ of integers. While the number rings $\OK$ are Dedekind Domains, their proper orders are not necessarily.
This is because Dedekind domains require all nonzero ideals to factor uniquely into prime ideals, whereas proper orders may fail this property due to the presence of non-invertible ideals\footnote{Due to this fact, our learning algorithm for WAs over $\OK$, in the current form, cannot be used for learning WAs over orders of $\OK$.}.

\section{Proofs for \texorpdfstring{\Cref{sec:intro}}{Section 1}}
\label{app:details-for-intro}

\exNonFatouNumberRing*

\begin{proof}
It is easy to see that these two automata are equivalent once we notice that they compute the same values for words of length 2, using that $\left(2 - i\sqrt{5}\right)\left(i\sqrt{5} - 1\right) = 3\left(1 + i\sqrt{5}\right)$. The automaton of \Cref{fig:example-wa:ok} is the one constructed in the proof of \Cref{prop:minimal-pip-hard} for the ideal $\mathfrak{a} = \left(3, 2 - i\sqrt{5}\right)$. It is indeed easy to check that \[ \mathfrak{a}^{-1} = \tfrac{1}{3} \mathopen{}\left(1 - i\sqrt{5}, 2 + i\sqrt{5}\right)\mathclose{} \]
and there is an isomorphism
\[ \phi: (3, 2-i\sqrt{5}) \oplus \tfrac{1}{3}\mathopen{}\left(1 - i\sqrt{5}, 2 + i\sqrt{5}\right)\mathclose{} \cong \adjoined{i\sqrt{5}}^2 \]
given by the formula 
\[ \begin{pmatrix} a \\ b \end{pmatrix} \mapsto a \begin{pmatrix}\frac{i\sqrt{5} - 1}{3} \\ 1 \end{pmatrix} + b \begin{pmatrix} i\sqrt{5} - 2 \\ 3 \end{pmatrix} \]
(notice that $\phi \begin{pmatrix} 3 \\ -1 \end{pmatrix} = \begin{pmatrix} 1 \\ 0 \end{pmatrix}$ and $\phi \begin{pmatrix} 2 - i\sqrt{5} \\ \frac{i\sqrt{5} - 1}{3}\end{pmatrix} = \begin{pmatrix} 0 \\ 1 \end{pmatrix}$ so that the image of $\phi$ is indeed precisely $\adjoined{i\sqrt{5}}^2$).

We now show that $\mathfrak{a}$ is not principal and thus, by the proof of \Cref{prop:minimal-pip-hard}, that this $\adjoined{i\sqrt{5}}$-weighted automaton is indeed minimal. An integral basis for $\adjoined{i\sqrt{5}}$ is given by the two elements $1$ and $i\sqrt{5}$. In this integral basis, the two generators of $\mathfrak{a}$ are the vectors $\begin{pmatrix}
    3 \\ 0
\end{pmatrix}$ and $\begin{pmatrix}
    2 \\ -1
\end{pmatrix}$. By \cite[\textsection 4.6.1]{cohen2013course}, putting these two vectors in a matrix, the norm $\norm{\mathfrak{a}}$ of $\mathfrak{a}$ is the absolute value of the determinant of this matrix, namely $3$. And by \emph{loc. cit.}, if $\mathfrak{a}$ were principal and generated by some $a + bi\sqrt{5}$, we would then have $3 = \norm{\mathfrak{a}} = a^2 + 5b^2$ with $a, b \in \ZZ$: this is clearly impossible.
\end{proof}

\section{Proofs for \texorpdfstring{\Cref{sec:weighted-cat}}{Section 3}}
\label{app:proofs-for-weighted-cat}
\almostMinDedekindWA*

\begin{proof}
    We have already argued that $\Min L$ has a pseudo-basis. In fact, by \cite[Theorem 1.2.19]{cohen2012advanced} we may even choose this pseudo-basis so that all except one of the fractional ideals are $R$: we write $\Min L \cong R^{n-1} \oplus \mathfrak{a}$ for some fractional ideal $\mathfrak{a}$. Because this is an isomorphism, the $R$-modular automaton structure on $\Min L$ can be conjugated to equip $R^{n-1} \oplus \mathfrak{a}$ with an $R$-modular automaton structure as well. Write $\A$ for this $R$-modular automaton.

    By \cite[Lemma 1.2.20]{cohen2012advanced}, there is an isomorphism $\mathfrak{a} \oplus \mathfrak{a}^{-1} \cong R^2$. Extend this isomorphism to get an injection $m: R^{n-1} \oplus \mathfrak{a} \rightarrowtail R^{n+1}$ and a surjection $e: R^{n+1} \twoheadrightarrow R^{n-1} \oplus \mathfrak{a}$ such that $e \circ m$ is the identity. We can now conjugate $\A$ into an $R$-modular automaton $\A'$ such that $\A'(\St) = R^{n+1}$ by setting $\A'(\triangleright) = m \circ \A(\triangleright)$, $\A'(\sigma) = m \circ \A(\sigma) \circ e$ for all $\sigma \in \Sigma$ and $\A'(\triangleleft) = \A(\triangleleft) \circ e$. 

    Because $\A'(\St)$ is a free module of rank $n+1$ and because $e$ is a morphism of automata $\A' \to \A$, $\A'$ is an $R$-weighted automaton with $n+1$ states that computes $L$.
\end{proof}

\section{Category theory preliminaries}
\label{app:cat}

In this section, we recall the notions from category theory used in this paper. More details can be found in any standard text on category theory, such as~\cite{maclane1971}~or~\cite{adamekAbstractConcreteCategories2009}. As a running example we recall in particular the functorial framework for automata minimization of \cite{colcombetpetrisan2017}.

\subsection{Categories, functors, natural transformations, and examples}
A \emph{category} $\C$ is a pair of classes $(\ob \C, \mor \C)$, called the \emph{objects} and \emph{morphisms} of $\C$, respectively, equipped with the following structure:
\begin{itemize}
    \item for any morphism $f$ of $\C$, a \emph{domain} object $\dom(f)$ and a \emph{codomain} object $\cod(f)$; the notation $f \colon A \to B$ means that $\dom(f) = A$ and $\cod(f) = B$;
    \item for any object $C$ of $\C$, a distinguished morphism $\id_C \colon C \to C$, the \emph{identity} on $C$;
    \item for any morphisms $f \colon A \to B$ and $g \colon B \to C$ of $\C$, a morphism $g \circ f \colon A \to C$, the \emph{composition} of $f$ and $g$; 
\end{itemize}
satisfying the following axioms:
\begin{itemize}
    \item composition is associative: for any $f \colon A \to B$, $g \colon B \to C$, $h \colon C \to D$, we have $h \circ (g \circ f) = (h \circ g) \circ f$;
    \item the identities are neutral elements: for any $f \colon A \to B$, $f \circ \id_A = f = \id_B \circ f$.
\end{itemize}
For objects $A$ and $B$ in a category $\C$, we write $\C(A,B)$ for the class of morphisms from $A$ to $B$. A category is called \emph{locally small} if $\C(A,B)$ is a set for any objects $A, B$, and \emph{small} if the class $\mor \C$ is a set.

A \emph{subcategory} $\D$ of a category $\C$ is a pair $(\ob \D, \mor \D)$, where $\ob \D$ is a subclass of $\ob \C$ and $\mor \D$ is a subclass of $\mor \C$, such that for any composable pair of morphisms $f, g \in \mor \D$, the composition $g \circ f$ is also in $\mor \D$. The subcategory $\D$ is called \emph{full} if, for any objects $A, B \in \ob \D$, we have $\D(A,B) = \C(A,B)$.

A morphism $f \colon A \to B$ is a \emph{monomorphism}, or just \emph{mono}, if it is left-cancellable, i.e., for any $g_1,g_2 \colon A' \to A$, if $f \circ g_1 = f \circ g_2$ then $g_1 = g_2$. Analogously, a morphism is an \emph{epimorphism}, or just \emph{epi}, if it is right-cancellable. A morphism $f \colon A \to B$ is an \emph{isomorphism}, or just \emph{iso}, if it admits a two-sided inverse, i.e., there exists $g \colon B \to A$ such that $g \circ f = \id_A$ and $f \circ g = \id_B$; in this case, the objects $A$ and $B$ are \emph{isomorphic}, and we write $A \cong B$. In any category, isomorphisms are both mono and epi, but there may in general be morphisms that are both mono and epi without being iso.

\begin{example}\label{example:categories}
The following are examples of locally small categories that are relevant to this paper:
    \begin{enumerate}
        \item The category $\Set$ of sets, i.e., $\ob \Set$ is the class of sets and, for any sets $A,B$, $\Set(A,B)$ is the set of morphisms from $A$ to $B$.
        \item For any ring $R$, the category $\Mod{R}$ of $R$-modules, i.e., $\ob \Mod{R}$ is the class of $R$-modules and, for any $R$-modules $M$ and $N$, $\Mod{R}(M,N)$ is the set of $R$-linear maps from $M$ to $N$.
        \item For any field $\KK$, the category $\Vec{\KK}$ of $\KK$-vector spaces, which is just the category $\Mod{R}$ in the special case $R = \KK$.
        \item For any ring $R$, the category $\FreeMod{R}$ of free $R$-modules, i.e., the full subcategory of $\Mod{R}$ whose objects are the free $R$-modules.
        \item For any ring $R$, the category $\TFMod{R}$ of torsion-free $R$-modules, i.e., the full subcategory of $\Mod{R}$ whose objects are the torsion-free $R$-modules.
        \item A \emph{directed multigraph} is a structure $G = (V,E)$, where $V$ is a set of nodes and $E$ is a multiset of pairs of nodes. For any directed multigraph $G$, there is a \emph{free category} $\FreeCat{G}$ on $G$, which can be realized concretely as follows: the objects of $\FreeCat{G}$ are the nodes of $G$, and for any nodes $v,w$ of $G$, the set of morphisms  $\FreeCat{G}(v,w)$ consists of the finite paths from $v$ to $w$ in $G$. 
        For any node $v$ of $G$, the identity $\id_v$ on $v$ is the empty path. The composition of paths $e \colon v \to w$ and $f \colon w \to x$ is defined by concatenation.
        \item In particular, for fixed finite alphabet $\Sigma$, we denote by $\I$ the category free over the multigraph
\[
\begin{tikzcd}[ampersand replacement = \&]
\In \arrow[r, "\triangleright"] \& \St \arrow[r, "\triangleleft"] \arrow["\sigma (\sigma\in\Sigma)", loop, distance=2em, in=55, out=125] \& \Out\, .
\end{tikzcd}
\]
Concretely, the morphisms of $\I$ can be described as follows: in addition to the three identity morphisms $\id_{\In}$, $\id_{\St}$ and $\id_{\Out}$,  for any word $w \in \Sigma^+$, we have four morphisms: $w \colon \St \to \St$,  $\triangleright w \colon \In \to \St$, $w\triangleleft \colon \St \to \Out$, and $\triangleright w \triangleleft \colon \In \to \Out$. 

We denote by $\O$ the full subcategory of $\I$ on the objects $\In$ and $\Out$: it consists in these two objects, and, identities excluded, a morphism $\triangleright w \triangleleft: \In \to \Out$ for every $w \in \Sigma^*$.
    \end{enumerate}
\end{example}

Let $\C$ and $\D$ be categories. A \emph{functor} $\F \colon \C \to \D$ is a pair of mappings, one from $\ob \C$ to $\ob \D$ and one from $\mor \C$ to $\mor \D$, such that the following properties hold:
\begin{itemize}
    \item for any morphism $f \colon A \to B$ in $\C$, we have $\F f \colon \F A \to \F B$, i.e., $\dom(\F f) = FA$ and $\cod(\F f) = \F B$;
    \item for any object $A$ of $\C$, $\F(\id_A) = \id_{\F A}$;
    \item for any morphisms $f \colon A \to B$, $g \colon B \to C$ in $\C$, we have $\F(g \circ f) = \F g \circ \F f$.
\end{itemize}
If $\F \colon \C \to \D$ and $\G \colon \D \to \E$ are functors, then their \emph{composition} is the functor $\G \circ \F \colon \C \to \E$ which sends an object $A$ of $\C$ to the object $\G\F A$ and a morphism $f \colon A \to B$ of $\C$ to the morphism $\G\F f$. The \emph{identity functor} on a category $\C$ is the functor $\C \to \C$ which is the identity on both objects and morphisms. 

\begin{example}\label{exa:functors}
    \begin{enumerate}
        \item \label{it:forgetful} There is a functor $\mathcal{U} \colon \Mod{R} \to \Set$ which sends an $R$-module $M$ to its underlying set, and an $R$-linear map $M \to N$ to its underlying function. This is called the \emph{forgetful functor} from $\Mod{R}$ to $\Set$. There are many other categories whose objects are `sets with additional structure' and whose morphisms are `functions preserving the additional structure', and they always admit an analogous forgetful functor to $\Set$. %
        \item Let $R$ be a ring with field of fractions $\KK$. The construction which associates to a torsion-free $R$-module $M$ the localization $\localize M$ and to an $R$-linear map $\phi \colon M \to N$ the vector space map $\localize \phi \colon \localize M \to \localize N$ (see \Cref{sec:comm-alg-primer}) is the \emph{localization functor} $\localize - \colon \TFMod{R} \to \Vec{\KK}$.
        
        In the other direction, we have a functor $-_R: \Vec{\KK} \to \TFMod{R}$, called \emph{restriction of scalars}, which views a $\KK$-vector space as only being an $R$-module and a $\KK$-linear    transformation as only being $R$-linear.
        \item Let $\C$ be a category and $I, O$ two objects of $\C$. Recall that a $(\C, I, O)$-automaton is, by definition, a functor $\A \colon \I \to \C$ such that $\A(\In) = I$ and $\A(\Out) = O$.
        \item A subcategory $\C$ of a category $\D$ comes with a functor that sends every object of $\C$ to the corresponding object of $\D$, and every morphism of $\C$ to the corresponding morphism of $\D$. 
        
        Write $\iota: \O \to \I$ for this functor in the case where $\C = \O$ and $\D = \I$. For any $(\C,I,O)$-automaton given as a functor $\A: \I \to \C$, the composite functor $\L = \A \circ \iota: \O \to \C$ is the $(\C,I,O)$-\emph{language recognized} or \emph{computed} by $\A$.
    \end{enumerate}
\end{example}

    Let $\C$ and $\D$ be categories and $\F, \G \colon \C \to \D$ functors. A \emph{natural transformation} $\alpha$ from $\F$ to $\G$, written $\alpha \colon \F \Rightarrow \G$, is an $\ob \C$-indexed family of $\D$-morphisms, $(\alpha_C \colon \F C \to \G C)_{C \in \ob \C}$, such that, for any morphism $f \colon A \to B$ in $\C$, we have $(\G f) \circ \alpha_A = \alpha_B \circ (\F f)$, i.e., the following diagram commutes:
\[
\begin{tikzcd}
\F A \arrow[r, "\F f"] \arrow[d, "\alpha_A"] & \F B  \arrow[d, "\alpha_B"] \\
\G A \arrow[r, "\G f"] & \G B
\end{tikzcd}
\]
For any categories $\C$ and $\D$, the \emph{functor category} $\Fun(\C,\D)$ has as its objects the functors from $\C$ to $\D$, and, given two such functors $\F$ and $\G$, the class of morphisms $\Fun(\C,\D)(\F,\G)$ consists of the natural transformations from $\F$ to $\G$. For morphisms $\alpha \colon \F \to \G$ and $\beta \colon \G \to \mathcal{H}$ in $\Fun(\C,\D)$, the composition $\beta \circ \alpha$ is defined to be the family $(\beta_C \circ \alpha_C)_{C \in \ob \C}$. For any functor $\F$, the identity $\id_\F \colon \F \Rightarrow \F$ is the family $(\id_{\F C})_{C \in \ob C}$.

\begin{remark}
For $\C$ a category and $I,O$ two objects of $\C$, the category of automata, $\Auto{\C,I,O}$, is a subcategory of $\Fun(\I,\C)$, whose objects are the functors that send $\In$ to $I$ and $\Out$ to $O$, and whose morphisms are the natural transformations $\alpha$ for which $\alpha_{\In} = \id_{I}$ and $\alpha_{\Out} = \id_O$. Concretely, such a natural transformation is uniquely given by its component $\alpha_{\St}$, and the naturality diagrams are then the ones given immediately after the definition of automaton morphism in the main text. Given a $(\C,I,O)$-language $\L$, we also write $\Auto{\L}$ for the full subcategory of $\Auto{\C,I,O}$ whose objects are those $(\C,I,O)$-automata computing $\L$.
\end{remark}

\subsection{Products, coproducts, powers, copowers, final and initial objects}
\label{app:cat:co-limits}
\subsubsection{Products, powers, final objects}
    Let $\C$ be a category and let $(A_i)_{i \in I}$ be a family of objects indexed by a set $I$. A \emph{product} of the objects $(A_i)_{i\in I}$ in $\C$ is an object $P$, together with a family of morphisms $(\pi_i \colon P \to A_i)_{i \in I}$, such that, for any object $B$ and $I$-indexed family of morphisms $(f_i \colon B \to A_i)_{i \in I}$, there exists a unique morphism $f \colon B \to P$ such that $\pi_i \circ f = f_i$ for every $i \in I$. In this case, we denote the object $P$ by $\prod_{i\in I} A_i$, and the morphism $f$ by $\langle f_i \rangle_{i \in I}$. In the special case where there is an object $A$ such that $A_i = A$ for all $i \in I$, we denote such an object $P$ by $\prod_{I} A$, and call it the \emph{$I$-fold power of the object $A$}.

To expand upon this definition a bit more, consider the case where $I = \{1,2\}$. The product of two objects $A_1$ and $A_2$ is then denoted $A_1 \times A_2$, and it is an object of $\C$ equipped with two morphisms $p_1 \colon A_1 \times A_2 \to A_1$ and $p_2 \colon A_1 \times A_2 \to A_2$, such that, for any pair of morphisms $f_1 \colon B \to A_1$, $f_2 \colon B \to A_2$, there exists a unique $f \colon B \to A_1 \times A_2$ such that $p_1 \circ f = f_1$ and $p_2 \circ f = f_2$, as in the following diagram:
\[
\begin{tikzcd}
& B \arrow[d,dashed,"f"] \arrow[ddl,bend right,swap,"f_1"] \arrow[ddr,bend left,"f_2"] & \\
& A_1 \times A_2 \arrow[ld, "\pi_1"] \arrow[rd, swap, "\pi_2"] & \\
A_1 & & A_2
\end{tikzcd}
\]
For example, in the category $\Set$, note that this property indeed holds for the Cartesian product of sets, and indeed could be used to define it uniquely, up to a bijection.

Now also consider the special case where $I = \emptyset$. In this case, a product of the empty family in $\C$ is just an object $P$ such that, for any object $B$, there is a unique morphism from $B$ to $P$. Such an object $P$ is called a \emph{final object} of $\C$, and we denote the unique morphism from $B$ to $P$ by $\flip_B$. In the category $\Set$, any singleton set is a final object.

In a general category, one may prove that, if $P$ and $P'$ are both products of the same family of objects, then $P \cong P'$. This is an instance of the general fact that \emph{objects defined by universal properties are unique up to isomorphism}. We can thus harmlessly (for the purposes of this paper) speak about `the' product of the family $(A_i)_{i \in I}$. However, note that, in an arbitrary category, the product of a family of objects may fail to exist; for instance, the category $\I$ above does not have a final object.

\begin{remark}
    Most category-theoretic literature writes $A^I$ for the power, instead of $\prod_{I} A$. However, this clashes with common notation for free modules: although it is true that, in the category $\Mod{R}$, when $Q$ is a \emph{finite} set, the free $R$-module $R^{Q}$ is $\prod_{Q} R$ in the category $\Mod{R}$, when $Q$ is infinite, this is no longer true in general. The free $R$-module on $Q$ only contains the \emph{finitely supported} functions from $Q$ to $R$, and is therefore still a submodule of the categorical power $\prod_{Q} R$, but the latter fails to be a free module in general, see, e.g.,~\cite{Sch2008}.
\end{remark}

Let $\C$ be a category with fixed objects $I$ and $O$ such that $\C$ has countable powers of $O$, and let $\L: \O \to \C$ be a $(\C,I,O)$-language. The final object of $\Auto{\L}$ exists and is given by the following formul\ae:
\begin{align*}
    \Afinal(\L)(\St) &= \prod_{\Sigma^*} O \\
    \Afinal(\L)(\triangleright) &= \langle \L(\triangleright w \triangleleft) \rangle_{w \in \Sigma^*} \\
    \Afinal(\L)(\sigma) &= \langle \pi_{\sigma w} \rangle_{w \in \Sigma^*} \\
    \Afinal(\L)(\triangleleft) &= \pi_{\varepsilon}
\end{align*}

For any other $(\C,I,O)$-automaton $\A$ computing $\L$, the underlying $\C$-arrow $\A(\St) \to \Afinal(\St) = \prod_{\Sigma^*} O$ of the unique morphism $\flip_\A: \A \to \Afinal(\L)$ is given by
\[ (\flip_\A)_\St = \langle\A(w \triangleleft)\rangle_{w\in\Sigma^*} \]

\begin{example}
    Given a ring $R$, the category $\Mod{R}$ has all products. The product of a family of modules $(M_i)_{i \in I}$ is the module whose elements are families $(m_i)_{i \in I}$ with $m_i \in M_i$, equipped with component-wise addition and scalar multiplication. Note that this differs from the usual direct sum, whose elements are families $(m_i)_{i \in I}$ with $m_i \in M_i$ but such that only a finite numbers of the $m_i$'s are non-zero.

    The final $(\Mod{R},R,R)$-automaton computing a $(\Mod{R},R,R)$-language thus exists. $\Afinal(\L)(\St)$ is the module $\powserw{R}$ of power-series, whose elements are functions $\Sigma^* \to R$. $\Afinal(\L)(\triangleright)$ is the linear transformation $R \to \powserw{R}$ that sends $1 \in R$ to the series $w \mapsto \L(\triangleright w \triangleleft)$. $\Afinal(\L)(\sigma)$ sends a series $s \in \powserw{R}$ to its derivative $\sigma^{-1}s: w \mapsto s(\sigma w)$. $\Afinal(\L)(\triangleleft)$ sends a series $s \in \powserw{R}$ to the scalar $s(\varepsilon)$.

    Given any other $(\Mod{R},R,R)$-automaton $\A$ computing $\L$, the unique morphism of $(\Mod{R},R,R)$-automata $\A \to \Afinal(\L)$ is given by the linear transformation $\A(\St) \to \powserw{R}$ that sends some $m \in \A(\St)$ to the series $w \mapsto \A(w\triangleleft)(m)$. In particular if $\A(\St)$ is free and has $Q$ for a basis, this linear transformation can be represented by the infinite $Q \times \Sigma^*$ matrix whose $(q,w)$ entry is $\A(w\triangleleft)(q)$.

    When $R$ is an integral domain, $\powserw{R}$ is a torsion-free module. More generally, a product of torsion-free modules is again torsion-free, so that $\TFMod{R}$ has all products as well, and these products are computed just as in $\Mod{R}$. In particular, the final $(\Mod{R},R,R)$- and $(\TFMod{R},R,R)$-automata computing a language coincide.
\end{example}

\subsubsection{Coproducts, copowers, initial objects}
Dual to products, powers, and final objects, we have coproducts, copowers, and initial objects, which we define now.
    Let $\C$ be a category and let $(A_i)_{i \in I}$ be a family of objects indexed by a set $I$. A \emph{coproduct} of the objects $(A_i)_{i\in I}$ in $\C$ is an object $C$, together with a family of morphisms $(j_i \colon A_i \to C)_{i \in I}$, such that, for any object $B$ and $I$-indexed family of morphisms $(f_i \colon A_i \to B)_{i \in I}$, there exists a unique morphism $f \colon C \to B$ such that $f \circ j_i = f_i$ for every $i \in I$. In this case, we denote the object $C$ by $\coprod_{i\in I} A_i$, and the morphism $f$ by $[f_i]_{i \in I}$. In the special case where there is an object $A$ such that $A_i = A$ for all $i \in I$, we denote such an object $C$ by $\coprod_{I} A$, and call it the \emph{$I$-fold copower of the object $A$}.

In $\Set$, the coproduct of a family of sets is its disjoint union. In $\Mod{R}$, the coproduct of a family of modules is their direct sum. For instance, for any set $Q$, the free $R$-module $R^Q$ is the $Q$-fold copower of the module $R$ in the category $\Mod{R}$, and can thus also be denoted $\coprod_{Q} R$. A coproduct of the empty family is called an \emph{initial object}, i.e., an object $C$ such that, for every object $B$, there is a unique morphism $!_B \colon C \to B$.

Let $\C$ be a category with fixed objects $I$ and $O$ such that $\C$ has countable copowers of $I$, and let $\L: \O \to \C$ be a $(\C,I,O)$-language. The initial object of $\Auto{\L}$ exists and is given by the following formul\ae:
\begin{align*}
    \Ainit(\L)(\St) &= \coprod_{\Sigma^*} I \\
    \Ainit(\L)(\triangleleft) &= \left[ \L(\triangleright w \triangleleft) \right]_{w \in \Sigma^*} \\
    \Ainit(\L)(\sigma) &= \left[ j_{w\sigma} \right]_{w \in \Sigma^*} \\
    \Ainit(\L)(\triangleright) &= j_{\varepsilon}
\end{align*}

For any other $(\C,I,O)$-automaton $\A$ computing $\L$, the underlying $\C$-arrow $\coprod_{\Sigma^*} I = \Ainit(\St) \to \A(\St)$ of the unique morphism $!_\A: \Ainit(\L) \to \A$ is given by
\[ (!_\A)_\St =  \left[\A(\triangleright w)\right]_{w \in \Sigma^*} \]

\begin{example}
    Given a ring $R$, the category $\Mod{R}$ has all coproducts. The coproduct of a family of modules $(M_i)_{i \in I}$ is the usual direct sum of these modules, i.e. the module whose elements are families $(m_i)_{i \in I}$ with $m_i \in M_i$ such that only a finite number of the $m_i$'s, equipped with component-wise addition and scalar multiplication.

    The initial $(\Mod{R},R,R)$-automaton computing a $(\Mod{R},R,R)$-language thus exists. $\Ainit(\L)(\St)$ is the free module $\frMod{R}{\Sigma^*}$, whose elements are finitely-supported functions $\Sigma^* \to R$. $\Ainit(\L)(\triangleright)$ is the linear transformation $R \to \frMod{R}{\Sigma^*}$ that sends $1 \in R$ to the function that sends $\varepsilon$ onto $1$ and every other word onto $0$. $\Ainit(\L)(\sigma)$ sends a function $f \in \frMod{R}{\Sigma^*}$ to the function that sends $w\in\Sigma^*$ to $f(w\sigma)$.     
 $\Ainit(\L)(\triangleleft)$ sends a function $f$ to the (finite) sum $\sum_{w \in \Sigma^*} f(w) \L(\triangleright w \triangleleft)$.

    Given any other $(\Mod{R},R,R)$-automaton $\A$ computing $\L$, the unique morphism of $(\Mod{R},R,R)$-automata $\Ainit(\L) \to \A$ is given by the linear transformation $R^{\Sigma^*} \to \A(\St)$ that sends a finitely-supported function $f: \Sigma^* \to R$ to the (finite) sum $\sum_{w \in \Sigma^*} f(w) \A(\triangleright w)$. If $\A(\St)$ is free and has $Q$ for a basis, this linear transformation can be represented by the infinite $\Sigma^* \times Q$ matrix whose $w$-indexed row is $\A(\triangleright w)$ (written as a row vector in the basis $Q$).

    A coproduct of torsion-free modules is again torsion-free, so that $\TFMod{R}$ has all coproducts as well, and these coproducts are computed just as in $\Mod{R}$. In particular, the initial $(\Mod{R},R,R)$- and $(\TFMod{R},R,R)$-automata computing a language coincide.
\end{example}

\subsection{Factorization systems}\label{app:cat:fac-systems}
    A \emph{factorization system} on a category $\C$ is a pair $(\E,\M)$ of subclasses of $\mor \C$, each closed under composition and containing all the isomorphisms, such that, for any morphism $f: X \to Y$, there exists a factorization $f = m \circ e$ with $e: X \twoheadrightarrow Z$ in $\E$ and $m: Z \rightarrowtail Y$ in $\M$ which is unique up to isomorphism: if $f = m' \circ e'$ with $e': X \twoheadrightarrow Z'$ and $m': Z' \rightarrowtail Y$ is another such factorization, there is a unique isomorphism $\chi: Z \cong Z'$ such that $\chi \circ e = e'$ and $m' \circ \chi = m$.

    Notice in particular how $\E$-morphisms are depicted with $\twoheadrightarrow$ and $\M$-morphisms are depicted with $\rightarrowtail$. It is natural to call the $\M$-morphism $m: Z \rightarrowtail Y$ in the factorization of a morphism $f: X \to Y$ its \emph{$(\E,\M)$-image}, and we do so here.

    \begin{example} The following are examples of factorization systems relevant to this paper:
    \begin{itemize}
        \item the category $\Set$ is equipped with the factorization system $(\Surj,\Inj)$ of surjections and injections: every function factors through its image, the latter being unique up to bijections;
        \item given any ring $R$, the category $\Mod{R}$ is equipped with the factorization system $(\Surj,\Inj)$ of surjective linear transformations and injective linear transformations: every linear transformation factors through its image, the latter being unique up to bijections;
        \item when $R$ is an integral domain, the factorization system $(\Surj,\Inj)$ on $\Mod{R}$ restricts to a factorization system on $\TFMod{R}$: this is because the image of a linear transformation between torsion-free modules, as a submodule of a torsion-free module, is again torsion-free;
        \item unless $R$ is a field, in which case every $R$-module is free, the factorization system $(\Surj,\Inj)$ on $\Mod{R}$ does not restrict to a factorization system on $\FreeMod{R}$: this is because the image of a linear transformation between free modules need not be a free module.
    \end{itemize}
    \end{example}

    We recall a few useful facts about factorization systems, also see, e.g.~\cite[Ch.~14]{adamekAbstractConcreteCategories2009}. Suppose $\C$ is equipped with a factorization system $(\E,\M)$. Then the following \emph{diagonal fill-in property} holds: for any morphisms $e \colon A \twoheadrightarrow B$ in $\E$, $m \colon C \rightarrowtail D$ in $\M$, and arbitrary morphisms $f \colon A \to C$, $g \colon B \to D$ in $\C$, if $m \circ f = g \circ e$, then there exists a unique morphism $d \colon B \to C$ such that both $d \circ e = f$ and $m \circ d = g$. This morphism is then called a \emph{diagonal fill-in} for the commutative square $m \circ f = g \circ e$; the following diagram shows why:
    \[
    \begin{tikzcd}
        A \arrow[r,"e",two heads] \arrow[d,swap,"f"] & B \arrow[d,"g"] \arrow[ld,dashed,swap,"d"] \\
        C \arrow[r,swap,"m", tail] & D
    \end{tikzcd}
    \]

    The intersection $\E \cap \M$ is exactly the class of all isomorphisms. If a composite $g \circ f$ is in $\E$ and $f$ is in $\E$ as well, then $g$ is also in $\E$. Dually, if a composite $g \circ f$ is in $\M$ and $g$ is in $\M$ as well, then $f$ is also in $\M$. %
    Together with the diagonal fill-in property, one may deduce from this the following fact: there is a (necessarily unique) $\M$-morphism between the $(\E,\M)$-image of a composite $g \circ f$ and that of $g$ that makes the following diagram commute:
    \[\begin{tikzcd}
	Y & {\im\, g} & Z \\
	X & {\im (g \circ f)}
	\arrow[two heads, from=1-1, to=1-2]
	\arrow["g", curve={height=-12pt}, from=1-1, to=1-3]
	\arrow[tail, from=1-2, to=1-3]
	\arrow["f", from=2-1, to=1-1]
	\arrow[two heads, from=2-1, to=2-2]
	\arrow[dashed, tail, from=2-2, to=1-2]
	\arrow[tail, from=2-2, to=1-3]
\end{tikzcd}\]
    This generalizes the basic fact that the image of a composite function $g \circ f$ is a subset of the image of the function $g$.

    A final, more involved example of factorization system is the following. 
    Let $\C$ and $\D$ be two categories, and let $(\E,\M)$ be a factorization system on $\D$. In $\Fun(\C,\D)$, let $\E_\Fun$ denote the class of those natural transformations $\alpha$ whose every component $\alpha_X$, with $X$ an object of $\C$, is in $\E$. Dually, let $\M_\Fun$ denote the class of those natural transformations $\alpha$ whose every component $\alpha_X$, with $X$ an object of $\C$, is in $\M$. 
    
    Then $(\E_\Fun,\M_\Fun)$, which we also just write $(\E,\M)$, forms a factorization system on $\Fun(\C, \D)$, for which natural transformations are factored component-wise. More precisely, given $\alpha: \F \Rightarrow \G$ and some $\C$-morphism $f: X \to Y$, denote by $\begin{tikzcd}
        \F X & \mathcal{H} X & \G X
        \arrow["\beta_X", two heads, from=1-1, to=1-2]
        \arrow["\gamma_X", tail, from=1-2, to=1-3]
    \end{tikzcd}$ the $(\E,\M)$ factorization of $\alpha_X$, and similarly along $Y$.
    Denote finally by $\mathcal{H} f$ the diagonal fill-in:
    \[\begin{tikzcd}
	{\F Y} & {\mathcal{H} Y} & {\G Y} \\
	{\F X} & {\mathcal{H} X} & {\G X}
	\arrow["{\beta_Y}", two heads, from=1-1, to=1-2]
	\arrow["{\gamma_Y}", tail, from=1-2, to=1-3]
	\arrow["{\F f}", from=2-1, to=1-1]
	\arrow["{\beta_X}"', two heads, from=2-1, to=2-2]
	\arrow[dashed, from=2-2, to=1-2]
	\arrow["{\gamma_X}"', tail, from=2-2, to=2-3]
	\arrow["{\G f}"', from=2-3, to=1-3]
    \end{tikzcd}\]  
    By uniqueness of the diagonal fill-in, this definition makes $\mathcal{H}$ into a functor and $\beta$ and $\gamma$ into natural transformations. The $(\E_\Fun,\M_\Fun)$-factorization of $\alpha$ is then given by $\alpha = \gamma \circ \beta$.
    
    Taking in particular $\C$ to be the freely generated category $I$ defined in \Cref{example:categories}, this factorization system on $\Fun(\I,\D)$ restricts to a factorization system on $\Auto{\D,I,O}$ where $\I$ and $\O$ are any two objects of $\D$.
    In concrete words: morphisms of $(\D,I,O)$-automata can be factored by factoring their underlying $\D$-morphisms, and the $(\E,\M)$-image of a morphism of $(\D,I,O)$-automata is itself canonically equipped with the structure of a $(\D,I,O)$-automaton. In a diagram, the image $\im \A$ of a morphism of automata $\A \to \A'$ is given as

\[\begin{tikzcd}[column sep=large]
	& {\A(\St)} & {\A(\St)} & {\A(\St)} & {\A(\St)} \\
	I & {(\im \A)(\St)} & {(\im \A)(\St)} & {(\im \A)(\St)} & {(\im \A)(\St)} & O \\
	& {\A'(\St)} & {\A'(\St)} & {\A'(\St)} & {\A'(\St)}
	\arrow[two heads, from=1-2, to=2-2]
	\arrow["{\A(\sigma)}", from=1-3, to=1-4]
	\arrow[two heads, from=1-3, to=2-3]
	\arrow[two heads, from=1-4, to=2-4]
	\arrow[two heads, from=1-5, to=2-5]
	\arrow["{\A(\triangleleft)}", curve={height=-12pt}, from=1-5, to=2-6]
	\arrow["{\A(\triangleright)}", curve={height=-12pt}, from=2-1, to=1-2]
	\arrow["{(\im \A)(\triangleright)}", dashed, from=2-1, to=2-2]
	\arrow["{\A'(\triangleright)}"', curve={height=12pt}, from=2-1, to=3-2]
	\arrow[tail, from=2-2, to=3-2]
	\arrow["{(\im \A)(\sigma)}", dashed, from=2-3, to=2-4]
	\arrow[tail, from=2-3, to=3-3]
	\arrow[tail, from=2-4, to=3-4]
	\arrow["{(\im \A)(\triangleleft)}", dashed, from=2-5, to=2-6]
	\arrow[tail, from=2-5, to=3-5]
	\arrow["{\A'(\sigma)}"', from=3-3, to=3-4]
	\arrow["{\A'(\triangleleft)}"', curve={height=12pt}, from=3-5, to=2-6]
\end{tikzcd}\]

If $\A$ is a $(\D,I,O)$-automaton computing $\L$, we denote in particular by $\Reach \A$ the image of the unique morphism $!_\A: \Ainit(\L) \to \A$, by $\Obs \A$ the image of the unique morphism $\flip_\A: \A \to \Afinal(\L)$ and by $\Min \L$ the image of the unique morphism $\flip_{\Ainit(\L)} = {!_{\Afinal(\L)}} : \Ainit(\L) \to \Afinal(\L)$, and the uniqueness of a factorization entails that $\Reach (\Obs \A) \cong \Obs (\Reach \A) \cong \Min \L$.

\begin{example}
    We describe $\Reach$, $\Obs$ and $\Min$ for $(\Mod{R},R,R)$-automata. Because $\TFMod{R}$ shares the coproducts, products and the factorization system of $\Mod{R}$, this description also applies to $(\TFMod{R},R,R)$-automata. Let $\A$ be a $(\Mod{R},R,R)$-automaton recognizing a $(\Mod{R},R,R)$-language $\L$.
    
    $(\Reach \A)(\St)$ is the submodule of \emph{reachable configurations} of $\A(\St)$, also called its forward module: its elements are those elements $m \in \A(\St)$ for which there is an $R$-linear combination $\sum_{i = 1}^n r_i w_i$ such that $m = \sum_{i = 1}^n r_i \A(\triangleright w_i)(1)$. The injective morphism of automata $\Reach \A \rightarrowtail \A$ is the embedding of this submodule into the bigger module $\A(\St)$.

    $(\Obs \A)(\St)$ is the quotient of $\A(\St)$ by the equivalence relation $m \sim m' \iff \forall w \in \Sigma^*, \A(w \triangleleft)(m) = \A(w \triangleleft)(m')$, also called its backward module -- it is a module because this equivalence relation is a congruence, i.e. is compatible with the module structure of $\A(\St)$. The surjective morphism of automata $\A \twoheadrightarrow \Obs \A$ is the linear transformation that sends some $m \in \A(\St)$ to its equivalence class.

    Finally, notice that the unique morphism $\Ainit(\L) = R^{\Sigma^*} \to \powserw{R} = \Afinal(\L)$ sends a word $w \in \Sigma^*$ -- an element of the basis of $R^{\Sigma^*}$ -- to the derivative $w^{-1} \L$, the series given by $w' \mapsto \L(\triangleright ww' \triangleleft)$. It can be represented by an infinite $\Sigma^* \times \Sigma^*$ matrix, whose $(w,w')$-indexed entry is $\L(\triangleright ww' \triangleleft)$. This matrix is called the \emph{Hankel matrix} of $\L$ in the literature. The image $(\Min \L)(\St)$ of this morphism is thus the submodule of $\powserw{R}$ generated by all the derivatives $\{ w^{-1} \L \mid w \in \Sigma^* \}$, the rows of the Hankel matrix.
    \end{example}

\subsection{Adjunctions}
\label{app:cat:adjunction}

We recall the definition of an \emph{adjunction} and introduce the notations we need. See, e.g.,~\cite[p.~78--81]{maclane1971} for more details.

Let $\C$ and $\D$ be locally small categories. 
An \emph{adjunction} between $\C$ and $\D$ is a pair of functors, $\F \colon \C \to \D$ and $\G \colon \D \to \C$, together with, for any objects $A$ of $\C$ and $B$ of $\D$, a bijection
\[ -^{\sharp} \colon \D(\F A,B) \to \C(A, \G B)\]
that is \emph{natural in both coordinates}, meaning that, for  any $f \in \D(\F A, B)$:
\begin{itemize}
    \item for any $\C$-morphism $c \colon A' \to A$ we have $(f \circ \F c)^\sharp = f^{\sharp} \circ c$, and 
    \item for any $\D$-morphism $d \colon B \to B'$, we have $(d \circ f)^\sharp = \G d \circ f^{\sharp}$. 
\end{itemize}
In this situtation, the functor  $\F$ is called \emph{left adjoint} to $\G$, and the functor $\G$ is called \emph{right adjoint} to $\F$.
\begin{example}
    The forgetful functor $U \colon \Mod{R} \to \Set$ given in \Cref{exa:functors}.\ref{it:forgetful} has a left adjoint: it is the functor $\frMod{R}{-}$ which sends any set $Q$ to the free $R$-module over $Q$, and  any function $f \colon Q \to Q'$ the unique $R$-linear map $\frMod{R}{f} \colon \frMod{R}{Q} \to \frMod{R}{Q'}$ that sends each basis element $q \in \frMod{R}{Q}$ to the basis element $f(q)$ of $\frMod{R}{Q'}$.

    In this case, for any set $Q$ and any module $M$, the function $-^{\sharp} \colon \Mod{R}(\frMod{R}{Q}, M) \to \Set(Q, UM)$ is given by restricting an $R$-linear map $f \colon \frMod{R}{Q} \to M$ to the basis $Q$, giving a function $Q \to UM$. The fact that $-^{\sharp}$ is a bijection is precisely the familiar fact that any $R$-linear map $\frMod{R}{Q} \to M$ is uniquely described by its action on the basis elements. The first naturality property then says that, for any function $c \colon Q' \to Q$, the restriction of $f \circ \frMod{R}{c}$ to the basis $Q'$ is the same function $Q' \to UM$ as the one obtained by first doing $c$, followed by the restriction of $f^{\sharp}$ to the basis $Q$. The second naturality property says that, for any $R$-linear map $d \colon M \to M'$, the restriction of $d \circ f$ to the basis $Q$ is the same function $Q \to UM'$ as the one obtained by first doing the restriction of $f$ to the basis $Q$, followed by the function underlying $d$.
\end{example}

We denote the inverse to $-^{\sharp}$ by $-_{\flat}$. Explicitly, for any objects $A$ of $\C$ and $B$ of $\D$, we have a function \[-_{\flat} \colon \C(A, \G B) \to \D(\F A,B)\] in such a way that $(f^{\sharp})_{\flat} = f$ and $(g_{\flat})^{\sharp} = g$ for any $f \in \D(\F A,B)$ and $g \in \C(A,\G B)$. It follows from the naturality of $-^{\sharp}$ that $-_{\flat}$ is also natural in both coordinates, which in this case means that, for any $g \in \C(A,\G B)$:
\begin{itemize}
    \item for any $\C$-morphism $c \colon A' \to A$ we have $(g \circ c)_{\flat} = g_\flat \circ \F c$, and 
    \item for any $\D$-morphism $d \colon B \to B'$, we have $(\G d \circ g)_{\flat} = d \circ g_\flat$. 
\end{itemize}
The \emph{unit} and \emph{counit} of an adjunction are important natural transformations, defined as follows. For any object $A$ of $\C$, let $\unit_A \coloneqq (\id_A)^{\sharp}$, which is a $\C$-morphism $A \to \G\F A$. For any object $B$ of $\D$, let $\counit_B \coloneqq (\id_B)_{\flat}$, which is a $\D$-morphism $\F\G B \to B$. It follows from the naturality properties of $-^{\sharp}$ and $-_{\flat}$ that $\unit$ is a natural transformation $\id_{\C} \Rightarrow \G \circ \F$, and $\eta$ is a natural transformation $\F \circ \G \Rightarrow \id_{\D}$. 

The following fact is standard, but we use it repeatedly in the categorical proofs in this paper, so we give a proof for the reader's convenience.%
\begin{lemma}\label{lem:trianglelaws}
    Let $\F \colon \C \leftrightarrows \D \colon \G$ be an adjunction, with $-^{\sharp}$, $-_{\flat}$, $\unit$, and $\counit$ as above. 
    \begin{enumerate}
        \item For any $\D$-morphism $f \colon \F A \to B$, we have $f^{\sharp} = \G f \circ \unit_A$. 
        \item For any $\C$-morphism $g \colon A \to \G B$, we have $g_{\flat} = \counit_B \circ \F g$.
    \end{enumerate}
\end{lemma}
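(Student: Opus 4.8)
The plan is to obtain both identities as immediate consequences of the naturality of the bijections $-^{\sharp}$ and $-_{\flat}$, together with the defining equations $\unit_A = (\id_{\F A})^{\sharp}$ and $\counit_B = (\id_{\G B})_{\flat}$; no further property of the adjunction is needed. The only delicate point is to apply each naturality axiom with the arrows pointing the correct way.

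For part~(1), fix $f \colon \F A \to B$. I would apply the naturality of $-^{\sharp}$ in its codomain coordinate, as stated just above in the excerpt, to the \emph{identity} $\id_{\F A} \in \D(\F A, \F A)$ post-composed with $f$. Concretely, the axiom $(d \circ h)^{\sharp} = \G d \circ h^{\sharp}$ (for $h \in \D(\F A, B')$ and $d \colon B' \to B''$) is instantiated with $h \coloneqq \id_{\F A}$ and $d \coloneqq f$, giving $(f \circ \id_{\F A})^{\sharp} = \G f \circ (\id_{\F A})^{\sharp}$. Since $f \circ \id_{\F A} = f$ and $(\id_{\F A})^{\sharp} = \unit_A$ by definition, this is exactly $f^{\sharp} = \G f \circ \unit_A$.

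For part~(2), fix $g \colon A \to \G B$, and argue dually. The naturality of $-_{\flat}$ in its domain coordinate, as recorded in the excerpt, reads $(h \circ c)_{\flat} = h_{\flat} \circ \F c$ for $h \in \C(A', \G B)$ and $c \colon A'' \to A'$; I would instantiate $h \coloneqq \id_{\G B} \in \C(\G B, \G B)$ and $c \coloneqq g \colon A \to \G B$, obtaining $(\id_{\G B} \circ g)_{\flat} = (\id_{\G B})_{\flat} \circ \F g$, that is $g_{\flat} = \counit_B \circ \F g$, using $\id_{\G B} \circ g = g$ and $(\id_{\G B})_{\flat} = \counit_B$. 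Alternatively, once~(1) is established one can deduce~(2) by applying $-_{\flat}$ to the equation of~(1) taken at $f \coloneqq g_{\flat}$ and simplifying, but the direct computation is shorter.

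I do not expect any genuine obstacle: each claim is a one-line unwinding of the definitions of $\unit$, $\counit$ and the naturality equations already spelled out in the excerpt. The only thing to be careful about is bookkeeping of variances — for instance, ensuring that in~(1) one uses naturality in the \emph{codomain} object of $-^{\sharp}$ (applied to $\id_{\F A}$ followed by $f$), rather than naturality in the domain object.
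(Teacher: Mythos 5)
Your proof is correct and follows essentially the same argument as the paper: instantiate the naturality of $-^{\sharp}$ (resp.\ $-_{\flat}$) in the codomain (resp.\ domain) coordinate with the identity morphism and $f$ (resp.\ $g$), then use the defining equations of the unit and counit. In fact, you are slightly more careful than the paper's printed proof, which writes $(\id_A)^{\sharp}$ where it should be $(\id_{\F A})^{\sharp}$ for the types to match; your version uses $\id_{\F A}$ and $\id_{\G B}$ as needed.
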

\begin{proof}
    Using first the definition of $\unit_A$ and then the second naturality property of $-^{\sharp}$, we have
    \[ \G f \circ \unit_A = \G f \circ (\id_A)^{\sharp} = (f \circ id_A)^{\sharp} = f^{\sharp},\]
    establishing (1). 
    The proof of (2) is similar, using the first naturality property of $-_{\flat}$.
\end{proof}

\section{Proofs for \texorpdfstring{\Cref{sec:categorical-reduction}}{Section 5}}
\label{app:proofs-categorical-reductions}

\lemExtScalars*
\begin{proof} First let us recall that $\TFMod{R}$ and $\Vec{\KK}$ are both equipped with factorization systems consisting of (surjective linear maps, injective linear maps).
  \begin{enumerate}
    \item \Cref{assumption:adjunction}-\labelcref{assumption:adjunction:factorization-system-preservation} first requires that $\localize -$ preserves surjections. Assume $f\colon M\to N$ is surjective and $n/r\in\localize{N}$. There exists $m\in f^{-1}(n)$. Then $\localize{f}(m/r)=n/r$. More abstractly preservation of surjections also follows from $\F$ being a left adjoint. %

    The assumption also requires that if $f: M \to N$ in $\TFMod{R}$ is such that $\localize{f}$ is injective, then $f$ itself is injective. This is true
    because $M$ is torsion-free: if $m \in M$ is such that $f(m) = 0$, then
    $f(m/1) = 0$ hence $m/1 = 0$ and thus $s \cdot m = 0$ for some $s \in R$,
    implying that  $m$ itself is zero.
    
  \item  The right adjoint to $\localize -: \TFMod{R} \to \Vec{\KK}$ is the functor
    $-_R: \Vec{\KK} \to \TFMod{R}$, called \emph{restriction of scalars}, that
    views a $\KK$-vector space as only being an $R$-module and a $\KK$-linear
    transformation as only being $R$-linear. Given an $R$-module $M$ and a
    $\KK$-vector space $V$, notice that an $R$-linear transformation $f: M \to
    V_R$ indeed extends uniquely to a $\KK$-linear transformation $\localize
    M \to V$ by $f(m/r) = 1/r \cdot f(m)$.
  \item  The unique morphism $\unit_M: M \to (\localize M)_R$ that
    extends to the identity $\localize M \to \localize M$ is the
    embedding of $M$ into $\localize M$, sending some $m \in M$ to $m / 1$.
    This embedding is an injection as long as $M$ is torsion-free.
  \item  The identity $V_R \to V_R$ extends uniquely to the
    morphism $\counit_V: \localize V_R \to V$ which sends some $v/r \in
    \localize V_R$ to $1/r \cdot v$. $\counit_V$ is easily seen to be an isomorphism, and hence an injection.  \qedhere
  \end{enumerate}    
\end{proof}

\propMinimalityPreservation*

\begin{proof}
  By definition, $\Min \L$ is the unique (up to
  isomorphism) $(\C,I,O)$-automaton such that, in $\Auto{\L}$, the unique morphism $\Ainit(\L) \to
  \Min \L$ is in $\E_\C$ and the unique morphism $\Min \L \to \Afinal(\L)$ is
  in $\M_\C$. By
  \Cref{assumption:adjunction}-\labelcref{assumption:adjunction:factorization-system-preservation},
  we get, in $\Auto{\F \circ \L}$, an $\E_\D$-morphism $\F \circ \Ainit(\L)
  \to \F \circ \Min \L$ and an $\M_\D$-morphism $\F \circ \Min \L \to \F \circ
  \Afinal(\L)$. (Note that the second part of \Cref{assumption:adjunction}-\labelcref{assumption:adjunction:factorization-system-preservation} entails that $\F$ maps morphisms in $\M_\C$ to morphisms in $\M_\D$.)

  Because $\F$ is a left adjoint by
  \Cref{assumption:adjunction}-\labelcref{assumption:adjunction:right-adjoint},
  there is an isomorphism $\Ainit(\F \circ \L) \cong \F \circ \Ainit(\L)$:
  this can be seen either by the fact that $\Ainit(\F \circ \L)(\St) =
  \coprod_{\Sigma^*} \F I \cong \F \mathopen{}\left(\coprod_{\Sigma^*} I\right)\mathclose{}$ (left adjoints
  preserve coproducts), or more generally by the fact that $\Ainit$ is always
  a left Kan extension \cite[Lemma
  2.9]{petrisanAutomataMinimizationFunctorial2020}, and composing by left
  adjoints preserves left Kan extensions. Hence the unique morphism
  $\Ainit(\F \circ \L) \to \F \circ \Min \L$ is in $\E_\D$.

  It is not true that $\Afinal(\F \circ \L)$ and $\F \circ \Afinal(\L)$
  are isomorphic. Still, the unique morphism $\F \circ \Afinal(\L) \to
  \Afinal(\F \circ \L)$ is given on $\St$ by the canonical morphism \[ \langle \F \pi_w \rangle_{w \in \Sigma^*}: \F\mathopen{}\left( \prod_{\Sigma^*} O \right)\mathclose{} \to
    \prod_{\Sigma^*} \F O \]    
    which can be written
  as the composite
  \[ \F \prod_{\Sigma^*} O \xrightarrow{\F \prod_{\Sigma^*} \unit_O}
    \F \prod_{\Sigma^*} \G\F O \cong \F\G \prod_{\Sigma^*} \F O
    \xrightarrow{\counit_{\prod_{\Sigma^*} \F O}} \prod_{\Sigma^*} \F O \]

  (for the isomorphism in the middle, recall that $\G$, as a right adjoint,
  preserves products).
  
  The proof that this is indeed the morphism we are
    looking for is given by the universal property of the product and the
    following commuting diagram, where the vertical arrows are the projections.
\[\begin{tikzcd}[column sep=scriptsize]
	{\F \prod_{\Sigma^*} F} && {\F \prod_{\Sigma^*} \G \F F} && {\F\G \prod_{\Sigma^*} \F F} && {\prod_{\Sigma^*} \F F} \\
	{\F F} &&& {\F\G\F F} &&& {\F F}
	\arrow["{\F \prod_{\Sigma^*} \unit_F}", from=1-1, to=1-3]
	\arrow[from=1-1, to=2-1]
	\arrow[from=1-3, to=2-4]
	\arrow["\cong"{description}, draw=none, from=1-3, to=1-5]
	\arrow["{\counit_{\prod_{\Sigma^*} \F F}}", from=1-5, to=1-7]
	\arrow[from=1-5, to=2-4]
	\arrow[from=1-7, to=2-7]
	\arrow["{\F \unit_F}", from=2-1, to=2-4]
	\arrow[curve={height=24pt}, Rightarrow, no head, from=2-1, to=2-7]
	\arrow["{\counit_{\F F}}", from=2-4, to=2-7]
\end{tikzcd}\]

Each term in this decomposition is in $\M_\D$, because, from left to right:
$\unit_F$ is in $\M_\C$ by
\Cref{assumption:adjunction}-\labelcref{assumption:adjunction:unit}, $\M_\C$ is
stable under products \cite[Proposition
14.15]{adamekAbstractConcreteCategories2009} and $\F$ sends $\M_\C$-morphisms to
$\M_\D$-ones by
\Cref{assumption:adjunction}-\labelcref{assumption:adjunction:factorization-system-preservation};
$\M_\D$ contains all isomorphisms; $\counit_{\prod_{\Sigma}^* \F O}$ is in
$\M_\D$ by
\Cref{assumption:adjunction}-\labelcref{assumption:adjunction:counit}.

Hence the unique morphism $\F \circ \Afinal(\L) \to \Afinal(\F \circ \L)$ is in
$\M_\D$, and so is the unique morphism $\F \circ \Min \L \to \Afinal(\F \circ
\L)$ as the composite $\F \circ \Min \L \to \F \circ \Afinal(\L) \to \Afinal(\F
\circ \L)$.

It follows that $\F \circ \Min \L$ must be $\Min(\F \circ \L)$ up to
isomorphism.
\end{proof}

\dedekindAlmostStrongFatou*

\begin{proof}
    Let $n$ be the number of states of the minimal $\KK$-weighted automaton recognizing $L$. By \Cref{prop:minimality-preservation}, $\localize{(\Min L)} \cong \KK^n$ and $\Min L$ has rank $n$. Dedekind domains are Noetherian and integral, and thus in particular weak Fatou rings \cite[Chapter 7, Corollary 4.3]{berstelNoncommutativeRationalSeries2010}: this entails that $L$ is computed by a finite $R$-weighted automaton. Applying \Cref{prop:almost-minimal-dedekind-wa}, it follows that there is an $R$-weighted automaton with $n+1$ states computing $L$.
\end{proof}

To properly understand what happens under the hood of \Cref{alg:reduction},
\cite[Lemma 3.4]{petrisanAutomataMinimizationFunctorial2020} can be very useful.
Informally, it says that adjunctions lift to categories of automata. We first spell out how the adjunction gives a correspondence at the level of languages.

\begin{lemma}
\label{lem:language-correspondance}
 Let $\F: \C \to \D$ be left adjoint to $\G: \D \to \C$ and fix two objects $C$ in $\C$ and $D$ in $\D$. Then $(\C,C,\G D)$-languages are
  in one-to-one correspondence with $(\D,\F C,D)$-languages. 
\end{lemma}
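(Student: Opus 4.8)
The plan is to peel off all the categorical decoration and observe that, for both sides of the claimed correspondence, a language is nothing more than a $\Sigma^*$-indexed family of morphisms, so that the statement is an immediate consequence of the defining bijection of the adjunction applied pointwise. First I would recall the structure of the category $\O$: it has exactly the two objects $\In$ and $\Out$, and, apart from the two identity morphisms, its only morphisms are the arrows $\triangleright w \triangleleft \colon \In \to \Out$ for $w \in \Sigma^*$, with no non-trivial composites among them. Hence $\O$ is free on the graph consisting of two vertices and $\Sigma^*$-many parallel edges $\In \to \Out$, and a functor $\L \colon \O \to \C$ amounts to no more than a choice of objects $\L(\In), \L(\Out)$ together with an \emph{arbitrary} family $\bigl(\L(\triangleright w \triangleleft) \colon \L(\In) \to \L(\Out)\bigr)_{w \in \Sigma^*}$ of $\C$-morphisms; functoriality imposes nothing further. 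In particular, a $(\C, C, \G D)$-language is precisely a family $(f_w)_{w \in \Sigma^*}$ with each $f_w \in \C(C, \G D)$, and a $(\D, \F C, D)$-language is precisely a family $(g_w)_{w \in \Sigma^*}$ with each $g_w \in \D(\F C, D)$.

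Next I would invoke the adjunction $\F \dashv \G$ for the fixed objects $C$ and $D$: the transposition $-^{\sharp} \colon \D(\F C, D) \to \C(C, \G D)$ is a bijection with inverse $-_{\flat}$. Applying it separately in each coordinate sends a $(\D, \F C, D)$-language $(g_w)_{w}$ to the $(\C, C, \G D)$-language $(g_w^{\sharp})_{w}$, and conversely a $(\C, C, \G D)$-language $(f_w)_{w}$ to $\bigl((f_w)_{\flat}\bigr)_{w}$; the only thing to check is that these assignments land in the right place, which is immediate since $-^{\sharp}$ has type $\D(\F C, D) \to \C(C, \G D)$ with $C$ and $D$ held fixed, so the source object stays $C$ (resp. $\F C$) and the target stays $\G D$ (resp. $D$). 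Since $-^{\sharp}$ and $-_{\flat}$ are mutually inverse, so are these two assignments, giving the asserted one-to-one correspondence.

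I do not expect any genuine obstacle here: the lemma is essentially a bookkeeping statement whose whole content is the freeness of $\O$, which removes any hidden coherence conditions that a transposed family would have to satisfy. For the bare bijection stated, naturality of $-^{\sharp}$ is not even needed; it becomes relevant only if one wants the correspondence to be compatible with the lifted adjoint functors $\Fflat$ and $\G^{\sharp}$ used in \Cref{alg:reduction}, in which case one additionally appeals to naturality in both coordinates together with \Cref{lem:trianglelaws}. I would make that remark explicit but keep it outside the proof proper.
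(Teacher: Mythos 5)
Your proof is correct and takes essentially the same route as the paper: one transports a language by applying $-^{\sharp}$ and $-_{\flat}$ pointwise to the family $\bigl(\L(\triangleright w \triangleleft)\bigr)_{w \in \Sigma^*}$. Your explicit appeal to the freeness of $\O$ (so that functoriality imposes no further constraints and a language really is just an arbitrary $\Sigma^*$-indexed family of hom-set elements) and your remark that naturality of the transposition is not needed for the bare bijection are both correct observations that the paper leaves tacit.
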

\begin{proof}
Recall the natural isomorphism $-^\sharp\colon\D(\F C, D)\to\C(C,\G D)$ and its inverse$-_\flat\colon\C(C,\G D)\to\D(\F C, D)$. 

Let $\L$ be a $(\D,\F C,D)$-language. Then, given a word $w\in\Sigma$, we have a morphism $\L(\triangleright w \triangleleft)\colon \F C \to D$. We define $\L^\sharp$ to be the language defined by $\L^\sharp(\triangleright w \triangleleft) = (\L(\triangleright w \triangleleft))^\sharp$. Conversely, for a $(\C,C,\G D)$-language $\L$, we define $\L_\flat$ to be the $(\D,\F C,D)$-language defined by $\L_\flat(\triangleright w\triangleleft)=(\L(\triangleright w\triangleleft))_\flat$. 
\end{proof}

Through the remaining of this section, we make a  small abuse of notation and use  $\L^\sharp$ and $\L_\flat$ to describe languages obtained via the above correspondence.

For an example of this, consider $\localize{-}\colon \Mod{R} \to \Vec{\KK}$ and,
as seen in \Cref{lemma:extension-of-scalars-left-adjoint}, its right adjoint
$-_R: \Vec{\KK} \to \Mod{R}$. Then $\localize{R} = \KK$ and $\KK_R = \KK$,
and so $(\Mod{R},R,\KK)$-languages are in one-to-one correspondence with
$(\Vec{\KK},\KK,\KK)$-languages: in both cases, we just specify an element
$\L(\triangleright w \triangleleft) \in \KK$ for every word $w \in \Sigma^*$,
and this extends to an $R$-linear map from $R$ or a $\KK$-linear map from $\KK$.

\begin{proposition}[{\cite[Lemma 3.4]{petrisanAutomataMinimizationFunctorial2020}}] Let $\F: \C \to \D$ be left adjoint to $\G: \D \to \C$ and fix two objects $C$ in $\C$ and $D$ in $\D$.
  Let $\L$ be a $(\C,C,\G D)$-language, and let $\L_\flat$ be the corresponding $(\D, \F C, D)$-language. There is a functor $\F_\flat\colon \Auto{\L}
  \to \Auto{\L_\flat}$ with a right adjoint $\G^\sharp: \Auto{\L_\flat} \to \Auto{\L}$.
\end{proposition}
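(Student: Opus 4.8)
The plan is to write down the two functors by hand and then obtain the adjunction by restricting the hom-set bijection of $\F \dashv \G$ to morphisms of automata. Recall that an object of $\Auto{\L}$ is a functor $\A \colon \I \to \C$ with $\A(\In) = C$, $\A(\Out) = \G D$ and $\A \circ \iota = \L$, and that, since $\I$ is free on a multigraph, such a functor is determined by its values on $\St$, $\triangleright$, $\triangleleft$ and the letters $\sigma \in \Sigma$. I would define $\F_\flat(\A)$ to be the $(\D,\F C,D)$-automaton with state object $\F(\A(\St))$, transition maps $\F(\A(\sigma))$, initial map $\F(\A(\triangleright)) \colon \F C \to \F(\A(\St))$, and final map $\counit_D \circ \F(\A(\triangleleft)) \colon \F(\A(\St)) \to \F\G D \to D$; on a morphism $\phi \colon \A \to \A'$, whose $\In$- and $\Out$-components are identities, I would set $\F_\flat(\phi) \coloneqq \F\phi$. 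Dually, for $\B \in \Auto{\L_\flat}$, I would define $\G^\sharp(\B)$ with state object $\G(\B(\St))$, transition maps $\G(\B(\sigma))$, initial map $\G(\B(\triangleright)) \circ \unit_C \colon C \to \G\F C \to \G(\B(\St))$, and final map $\G(\B(\triangleleft))$, and on morphisms $\G^\sharp(\psi) \coloneqq \G\psi$. In short, $\F_\flat$ corrects the output endpoint using the counit and $\G^\sharp$ corrects the input endpoint using the unit.

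Next I would check well-definedness: that $\F_\flat(\A)$ recognizes $\L_\flat$, that $\G^\sharp(\B)$ recognizes $\L$, and that $\F_\flat$ and $\G^\sharp$ are functors. Both recognition claims reduce, word by word, to the triangle identities of \Cref{lem:trianglelaws}: using functoriality of $\F$ and \Cref{lem:trianglelaws}(2),
\[ \F_\flat(\A)(\triangleright w \triangleleft) = \counit_D \circ \F(\A(\triangleright w \triangleleft)) = (\A(\triangleright w \triangleleft))_\flat = (\L(\triangleright w \triangleleft))_\flat = \L_\flat(\triangleright w \triangleleft), \]
where the last step is the definition of $\L_\flat$ from \Cref{lem:language-correspondance}; symmetrically, $\G^\sharp(\B)(\triangleright w \triangleleft) = \G(\B(\triangleright w \triangleleft)) \circ \unit_C = (\B(\triangleright w \triangleleft))^\sharp = (\L_\flat(\triangleright w \triangleleft))^\sharp = \L(\triangleright w \triangleleft)$ by \Cref{lem:trianglelaws}(1), \Cref{lem:language-correspondance} and the fact that $(-)^\sharp$ and $(-)_\flat$ are mutually inverse. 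Functoriality of $\F_\flat$ and $\G^\sharp$ is immediate: $\F\phi$ satisfies the automaton-morphism equations because $\phi$ does and $\F$ is a functor (and likewise for $\G\psi$), and identities and composites are preserved because $\F$ and $\G$ preserve them while the $\In$- and $\Out$-components in play are identities.

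Finally I would establish $\F_\flat \dashv \G^\sharp$ by showing that the natural bijection $(-)^\sharp \colon \D(\F(\A(\St)), \B(\St)) \to \C(\A(\St), \G(\B(\St)))$ of $\F \dashv \G$ restricts to a bijection $\Auto{\L_\flat}(\F_\flat\A, \B) \cong \Auto{\L}(\A, \G^\sharp\B)$, with inverse $(-)_\flat$. A morphism $\phi \colon \F_\flat\A \to \B$ is a $\D$-arrow $\phi \colon \F(\A(\St)) \to \B(\St)$ with $\phi \circ \F(\A(\sigma)) = \B(\sigma)\circ\phi$, $\phi \circ \F(\A(\triangleright)) = \B(\triangleright)$ and $\B(\triangleleft)\circ\phi = \counit_D\circ\F(\A(\triangleleft))$; I claim $\phi^\sharp$ is then a morphism $\A \to \G^\sharp\B$, and conversely $\psi_\flat$ is a morphism $\F_\flat\A \to \B$ for any $\psi\colon\A\to\G^\sharp\B$. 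Each of the three conditions transports across $(-)^\sharp$ using the two naturality properties of the adjunction bijection plus \Cref{lem:trianglelaws} for the two boundary conditions: first naturality gives $\phi^\sharp\circ\A(\sigma) = (\phi\circ\F(\A(\sigma)))^\sharp = (\B(\sigma)\circ\phi)^\sharp$ and $\phi^\sharp\circ\A(\triangleright) = (\B(\triangleright))^\sharp = \G(\B(\triangleright))\circ\unit_C$; second naturality then rewrites $(\B(\sigma)\circ\phi)^\sharp = \G(\B(\sigma))\circ\phi^\sharp$ and $\G(\B(\triangleleft))\circ\phi^\sharp = (\B(\triangleleft)\circ\phi)^\sharp = (\counit_D\circ\F(\A(\triangleleft)))^\sharp = ((\A(\triangleleft))_\flat)^\sharp = \A(\triangleleft)$, which are exactly the equations saying $\phi^\sharp$ is a morphism $\A\to\G^\sharp\B$. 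The converse is the mirror-image argument using naturality of $(-)_\flat$, and naturality of the resulting bijection in $\A$ and $\B$ is inherited from naturality of $(-)^\sharp$ in both coordinates. There is no genuine obstacle here — this is \cite[Lemma 3.4]{petrisanAutomataMinimizationFunctorial2020} — the only point requiring care is bookkeeping: keeping straight which endpoint each of $\F_\flat$ and $\G^\sharp$ must correct, and, in the adjunction step, recognizing $\counit_D \circ \F(\A(\triangleleft))$ as $(\A(\triangleleft))_\flat$ so that it cancels against $(-)^\sharp$ in the final-weight condition, the transition and initial-weight conditions being routine applications of the two naturality squares.
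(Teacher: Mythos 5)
Your proposal is correct and takes the same route as the paper: the paper records precisely your formulas for $\F_\flat$ and $\G^\sharp$ (with $(\A(\triangleleft))_\flat = \counit_D \circ \F(\A(\triangleleft))$ and $(\A'(\triangleright))^\sharp = \G(\A'(\triangleright)) \circ \unit_C$, matching via \Cref{lem:trianglelaws}) and then defers the adjunction check to the cited reference, whereas you fill it in by restricting the hom-set bijection of $\F \dashv \G$ along the three automaton-morphism equations, which is exactly the intended argument. Your observation that the bijection restricts correctly also recovers the paper's remark that the unit and counit of $\F_\flat \dashv \G^\sharp$ have $\St$-components $\unit_{\A(\St)}$ and $\counit_{\A'(\St)}$.
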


The action of $\F_\flat$ on a $(\C,C,\G D)$-automaton $\A$ and of $\G^\sharp$
on a $(\D,\F C, D)$-automaton $\A'$ are depicted below:
\begin{align*}
  \begin{tikzcd}[ampersand replacement=\&, column sep=large]
    C \& {\A(\St)} \& {\G D}
    \arrow["{\A(\triangleright)}", from=1-1, to=1-2]
    \arrow["{\A(a)}"', from=1-2, to=1-2, loop, in=125, out=55, distance=10mm]
    \arrow["{\A(\triangleleft)}", from=1-2, to=1-3]
  \end{tikzcd}
  && 
  \begin{tikzcd}[ampersand replacement=\&, column sep=large]
    {} \& {}
    \arrow["{\F_\flat}", maps to, from=1-1, to=1-2, line width=.75pt]
  \end{tikzcd}
  &&
  \begin{tikzcd}[ampersand replacement=\&, column sep=large]
    {\F C} \& {\F\A(\St)} \& D
    \arrow["{\F\A(\triangleright)}", from=1-1, to=1-2]
    \arrow["{\F\A(a)}"', from=1-2, to=1-2, loop, in=125, out=55, distance=10mm]
    \arrow["{(\A(\triangleleft))_\flat}", from=1-2, to=1-3]
  \end{tikzcd}&
  \\
  \begin{tikzcd}[ampersand replacement=\&, column sep=large]
    C \& {\G\A'(\St)} \& {\G D}
    \arrow["{(\A'(\triangleright))^\sharp}"', from=1-1, to=1-2]
    \arrow["{\G\A'(a)}", from=1-2, to=1-2, loop, in=235, out=305, distance=10mm]
    \arrow["{\G\A'(\triangleleft)}"', from=1-2, to=1-3]
  \end{tikzcd}
  && 
  \begin{tikzcd}[ampersand replacement=\&, column sep=large]
    {} \& {}
    \arrow["{\G^\sharp}", maps to, from=1-2, to=1-1, line width=.75pt]
  \end{tikzcd}
  &&
  \begin{tikzcd}[ampersand replacement=\&, column sep=large]
    {\F C} \& {\A'(\St)} \& D
    \arrow["{\A'(\triangleright)}"', from=1-1, to=1-2]
    \arrow["{\A'(a)}", from=1-2, to=1-2, loop, in=235, out=305, distance=10mm]
    \arrow["{\A'(\triangleleft)}"', from=1-2, to=1-3]
  \end{tikzcd}&
\end{align*}

The unit and counit of this adjunction respectively have for $\St$-components
$\A(\St) \to \G\F\A(\St) = (\G^\sharp\F_\flat\A)(\St)$ and
$(\F_\flat\G^\sharp\A')(\St) = \F\G\A'(\St) \to \A'(\St)$ the unit and counit of the
adjunction between $\F$ and $\G$.

Recall that $\Auto{\L}$ and $\Auto{\L_\flat}$ are subcategories of $\Auto{\C,C,\G D}$ and $\Auto{\D,\F C, D}$, respectively. The functors $\Fflat$ and $\G^\sharp$ easily extend to an adjunction between these larger categories. These functors of course preserve the correspondence of languages described in  \Cref{lem:language-correspondance}. As an instance of this, when $C=I$ and $D = \F O$ for some $\C$-objects $I$ and $O$, we obtain the following adjunction 
\[
\begin{tikzcd}[row sep=tiny, ampersand replacement=\&]
{\Auto{\D,\F I,\F O}} \arrow[rr, "\G^\sharp"{description}, bend left, out = 15, in = 165, looseness=0.3] 
\&
\&
{\Auto{\C,I,\G\F O}} \arrow[ll, "\Fflat"{description}, bend left, looseness=0.3, in = 165, out =15] %
\end{tikzcd}
\]

Note that $\Fflat$ and $\G^\sharp$, do not act on automata by mere functor composition, as in \Cref{lem:lifting-functors-to-automata}. Adjoint transposes $(-)^\sharp$ and $(-)_\flat$ are used in the  definitions of $\Fflat$ and $\G^\sharp$ to get the input and output object right -- thus justifying the notation  we adopted for these functors.  Nevertheless, the functor $\liftF$ as defined in \Cref{lem:lifting-functors-to-automata} does play a role in \Cref{alg:reduction}. Recall also from \Cref{sec:categorical-reduction} the  identity-on-morphisms functor $\unit_*\colon\Auto{\C,I,O}\to\Auto{\C,I,\G\F O}$ defined as follows:

\begin{align*}
  \begin{tikzcd}[ampersand replacement=\&, column sep=large]
    I \& {\A(\St)} \& {O}
    \arrow["{\A(\triangleright)}", from=1-1, to=1-2]
    \arrow["{\A(a)}"', from=1-2, to=1-2, loop, in=125, out=55, distance=10mm]
    \arrow["{\A(\triangleleft)}", from=1-2, to=1-3]
  \end{tikzcd}
  && 
  \begin{tikzcd}[ampersand replacement=\&, column sep=large]
    {} \& {}
    \arrow["{\unit_*}", maps to, from=1-1, to=1-2, line width=.75pt]
  \end{tikzcd}
  &&
  \begin{tikzcd}[ampersand replacement=\&, column sep=large]
    {I} \& {\A(\St)} \& O \& \G\F O
    \arrow["{\A(\triangleright)}", from=1-1, to=1-2]
    \arrow["{\A(a)}"', from=1-2, to=1-2, loop, in=125, out=55, distance=10mm]
    \arrow["{\A(\triangleleft)}", from=1-2, to=1-3]
    \arrow["\unit_O", from=1-3, to=1-4]
  \end{tikzcd}&
  \end{align*}

The categories of automata appearing in \Cref{alg:reduction} can be summarized as follows.

\begin{proposition}
    \label{prop:functors-in-reduction-alg}
    In the next diagram, $\F_\flat$ is left adjoint to $\G^\sharp$,  $\liftF = \Fflat\circ\unit_*$, but  $\unit_*\neq\G^\sharp\circ\liftF$.
    \[
    \begin{tikzcd}[row sep=tiny, ampersand replacement=\&]
{\Auto{\D,\F I,\F O}} \arrow[rr, "\G^\sharp"{description}, bend left, out = 15, in = 165, looseness=0.3] 
\&[-25pt]
\&[-25pt]
{\Auto{\C,I,\G\F O}} \arrow[ll, "\Fflat"{description}, bend left, looseness=0.3, in = 165, out =15] %
\\[20pt]
\& {\Auto{\C, I, O}} \arrow[lu, bend left, out=15, "\liftF"] \arrow[ru, bend right, out=-15, "\unit_*"'] 
\&                                   
\end{tikzcd}
\]
\end{proposition}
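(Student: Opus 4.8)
The plan is to treat the three claims in turn. The adjunction $\Fflat \dashv \G^\sharp$ between $\Auto{\C,I,\G\F O}$ and $\Auto{\D,\F I,\F O}$ is, when restricted to the subcategories of automata recognising a fixed language, exactly \cite[Lemma~3.4]{petrisanAutomataMinimizationFunctorial2020}; I would observe that that proof never invokes the choice of language and hence applies verbatim here. Recall that $\Fflat\A$ has state object $\F\A(\St)$ and legs $\F\A(\triangleright)$, $\F\A(a)$, $(\A(\triangleleft))_\flat$, that $\G^\sharp\A'$ has state object $\G\A'(\St)$ and legs $(\A'(\triangleright))^\sharp$, $\G\A'(a)$, $\G\A'(\triangleleft)$, and that both functors act on a morphism by applying $\F$, respectively $\G$, to its $\St$-component. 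The single step to carry out is that the adjoint-transpose bijection $(-)^\sharp \colon \D(\F\A(\St),\A'(\St)) \to \C(\A(\St),\G\A'(\St))$ restricts to a bijection between morphisms $\Fflat\A \to \A'$ and morphisms $\A \to \G^\sharp\A'$: a $\D$-morphism $\phi$ is compatible with the $\triangleright$-legs iff $\phi^\sharp$ is, by the first naturality square for $(-)^\sharp$; compatible with each $a$-leg iff $\phi^\sharp$ is, by both naturality squares; and compatible with the $\triangleleft$-legs iff $\phi^\sharp$ is, by the second naturality square together with $((\A(\triangleleft))_\flat)^\sharp = \A(\triangleleft)$. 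Naturality of the resulting bijection in $\A$ and $\A'$ is inherited from that of $(-)^\sharp$.

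Next I would prove $\liftF = \Fflat \circ \unit_*$ by a direct computation on a $(\C,I,O)$-automaton $\A$. The automaton $\unit_*\A$ has state object $\A(\St)$, legs $\A(\triangleright)$ and $\A(a)$, and output leg $\unit_O \circ \A(\triangleleft) \colon \A(\St) \to \G\F O$; applying $\Fflat$ turns the state object into $\F\A(\St)$ and the first two kinds of legs into $\F\A(\triangleright)$ and $\F\A(a)$, already those of $\liftF\A = \F \circ \A$. For the output leg, naturality of $(-)_\flat$ in its first argument and $(\unit_O)_\flat = ((\id_{\F O})^\sharp)_\flat = \id_{\F O}$ give
\[ (\unit_O \circ \A(\triangleleft))_\flat = (\unit_O)_\flat \circ \F\A(\triangleleft) = \F\A(\triangleleft), \]
which is exactly $(\liftF\A)(\triangleleft)$. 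Since $\unit_*$ is the identity on morphisms and $\Fflat$ applies $\F$ to $\St$-components, the composite agrees with $\liftF = \F\circ-$ on morphisms too, so $\Fflat\circ\unit_* = \liftF$ on the nose.

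Finally, for $\unit_* \neq \G^\sharp \circ \liftF$ it is enough to exhibit one automaton on which the two functors disagree. In general $\G^\sharp\liftF\A$ has state object $\G\F\A(\St)$, while $\unit_*\A$ has state object $\A(\St)$, and the comparison map between them is the $\St$-component $\unit_{\A(\St)} \colon \A(\St) \to \G\F\A(\St)$ of the unit $\unit_*\A \to \G^\sharp\Fflat\unit_*\A = \G^\sharp\liftF\A$ of the adjunction $\Fflat \dashv \G^\sharp$, which need not be an isomorphism. Taking $\C = \TFMod{\ZZ}$, $\D = \Vec{\QQ}$, $\F = \localize{-}$ and $\G = {-}_{\ZZ}$ as in \Cref{lemma:extension-of-scalars-left-adjoint}, and $\A$ the $(\TFMod{\ZZ},\ZZ,\ZZ)$-automaton with state object $\ZZ$ and $\A(\triangleright) = \A(a) = \A(\triangleleft) = \id_\ZZ$, we get $(\unit_*\A)(\St) = \ZZ$ but $(\G^\sharp\liftF\A)(\St) = \QQ$ (the $\ZZ$-module $\QQ$), and $\ZZ \not\cong \QQ$ as $\ZZ$-modules. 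I expect the only delicate point to be the first claim — checking that, under the transpose, each of the three coherence conditions on a morphism maps to the corresponding one — but since this is pure bookkeeping with the two naturality squares of $(-)^\sharp$, and is already written out in \cite{petrisanAutomataMinimizationFunctorial2020} for the language-indexed case, it poses no real obstacle.
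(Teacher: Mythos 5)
Your proposal is correct and follows essentially the same approach as the paper. The paper's own proof environment only verifies the identity $\liftF = \Fflat\circ\unit_*$ (with exactly the computation $(\unit_O\circ\A(\triangleleft))_\flat = (\unit_O)_\flat\circ\F\A(\triangleleft) = \F\A(\triangleleft)$ that you give), deferring the adjunction $\Fflat\dashv\G^\sharp$ to the cited \cite[Lemma~3.4]{petrisanAutomataMinimizationFunctorial2020} and its extension in the preceding discussion, and leaving $\unit_*\neq\G^\sharp\circ\liftF$ as an unproved assertion; your proof is thus more explicit, in particular the observation that the comparison $\unit_*\A\to\G^\sharp\liftF\A$ is just the unit of $\Fflat\dashv\G^\sharp$ at $\unit_*\A$, and the concrete counterexample with $\C=\TFMod{\ZZ}$, $\D=\Vec{\QQ}$, $\A(\St)=\ZZ$ (where $\ZZ\not\cong\QQ$ as $\ZZ$-modules) is a nice touch that the paper omits.
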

\begin{proof}
    We verify that $\liftF = \Fflat\circ\unit_*$. Given an $(\C,I,O)$-automaton $\A$, $\Fflat\circ\unit_*(\A)$ is the automaton described below
    \[
      \begin{tikzcd}[ampersand replacement=\&, column sep=large]
    \F I \& {\F \A(\St)} \&[10pt] {\F O}
    \arrow["{\F\A(\triangleright)}", from=1-1, to=1-2]
    \arrow["{\F\A(a)}"', from=1-2, to=1-2, loop, in=125, out=55, distance=10mm]
    \arrow["{(\unit_O\circ\A(\triangleleft))_\flat}", from=1-2, to=1-3]
  \end{tikzcd}
    \]

    An easy computation shows that $(\unit_O\circ\A(\triangleleft))_\flat = (\unit_O)_\flat\circ\F\A(\triangleleft) = \F\A(\triangleleft)$, thus proving that the automaton above is in fact $\liftF(\A)$.
\end{proof}

In \Cref{problem:reduction}, we thus start with an automaton $\A$ (in the
top-left corner) recognizing a $(\D,\F I, \F O)$-language $\L$, and we want to
compute some $(\C,I,O)$-automaton $\Min \L'$ (in the bottom corner) such
that $\F \circ \L' = \L$, i.e. such that $\F(\Min \L')$ is equivalent to $\A$.
The strategy of \Cref{alg:reduction} can be summed up as: first compute $\A' =
\Obs \A$, and then compute $\Reach\mathopen{}\left(\G^\sharp\A'\right)\mathclose{}$ in
$\Auto{\L^\sharp}$ (in the bottom-right corner). For efficiency purposes (\Cref{lemma:alg:reduction:second_while_loop_invariant}), this second computation is done in two steps, which correspond to the two \textbf{while} loops. If no counterexample word was produced in the process, the resulting automaton happens to come from the minimal automaton $\Min{\L'}$. This is proved below in \Cref{lem:reach-sharp-obs-is-minimal}.

In the $R$-weighted setting, we consider the localization functor $\F = \localize{-}\colon\TFMod{R}\to\Vec{K}$ and we instantiate the diagram of \Cref{prop:functors-in-reduction-alg} as follows.
\[
\begin{tikzcd}[row sep=tiny, ampersand replacement=\&]
{\cat{Auto}\Big(\Vec{\KK},\KK,\KK\Big)} \arrow[rr, "\G^\sharp"{description}, bend left, out = 30, in = 145, looseness=0.3] 
\&[-35pt]
\&[-35pt]
{\cat{Auto}\Big(\TFMod{R},R,\KK\Big)} \arrow[ll, "\Fflat"{description}, bend left, looseness=0.3, in = 145, out =30] %
\\[20pt]
\& {\Auto{{\TFMod{R}}, R, R}} \arrow[lu, bend left, out=15, "\liftF"] \arrow[ru, bend right, out=-15, "\unit_*"'] 
\&                                   
\end{tikzcd}
\]

The 
functor  $\liftF$  views an $R$-modular automaton as a $\KK$-weighted
one. The diagonal functor $\G^\sharp$ restricts a $\KK$-weighted automaton to an
extended kind of $R$-weighted automaton: its state-space is still a
vector-space, but seen as an $R$-module, and its output weights are still in
$\KK$, not in $R$. Finally, the  functor $\unit_*$ takes an
$R$-weighted automaton and sees it as having output weights in $\KK$.

For the rest of this section we work under the \Cref{assumption:adjunction} which we recall below.

\assumptionAdjunction*

We state some consequences of these assumptions, that will be needed in the following.

\begin{lemma}
\label{lem:easy-prop-from-assumption-adjunction}
Assuming \Cref{assumption:adjunction},  the following hold
\begin{enumerate}
    \item  $m\colon C\to \G D$ is in $\M_\C$ if and only if $m_\flat\colon \F C \to D$ is in $\M_\D$;
    \item  $m\colon \F C\to D$ is in $\M_\D$ in and only if $m^\sharp$ is in $\M_\C$;
    \item if $m\colon D\to D'$ is in $\M_D$, then $\G m\in\M_\C$.
\end{enumerate}
\end{lemma}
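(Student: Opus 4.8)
The plan is to derive all three statements from the triangle identities of \Cref{lem:trianglelaws}, the naturality of the unit and counit, and the two basic properties of a factorization system recalled in \Cref{app:cat:fac-systems}: the class $\M$ is closed under composition and enjoys left-cancellation, i.e.\ if $g\circ f\in\M$ and $g\in\M$ then $f\in\M$. I will also use repeatedly that the hypothesis $\F^{-1}[\M_\D]=\M_\C$ of \Cref{assumption:adjunction}-\labelcref{assumption:adjunction:factorization-system-preservation} is a genuine biconditional: a $\C$-morphism $m$ lies in $\M_\C$ if and only if $\F m$ lies in $\M_\D$.

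First I would establish item (3), which is the workhorse. Let $m\colon D\to D'$ be in $\M_\D$. By the biconditional above it suffices to show $\F\G m\in\M_\D$. Naturality of the counit gives $\counit_{D'}\circ\F\G m = m\circ\counit_D$; the right-hand side lies in $\M_\D$ because $m\in\M_\D$, $\counit_D\in\M_\D$ by \Cref{assumption:adjunction}-\labelcref{assumption:adjunction:counit}, and $\M_\D$ is closed under composition. Since also $\counit_{D'}\in\M_\D$, left-cancellation yields $\F\G m\in\M_\D$, hence $\G m\in\M_\C$.

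Next I would prove item (2), recalling that $m^\sharp=\G m\circ\unit_C$ by \Cref{lem:trianglelaws}(1). If $m\in\M_\D$, then $\G m\in\M_\C$ by item (3) and $\unit_C\in\M_\C$ by \Cref{assumption:adjunction}-\labelcref{assumption:adjunction:unit}, so $m^\sharp\in\M_\C$ by closure under composition. Conversely, if $m^\sharp\in\M_\C$, then $\F m^\sharp\in\M_\D$ by the biconditional, and since $m=(m^\sharp)_\flat=\counit_D\circ\F m^\sharp$ (using \Cref{lem:trianglelaws}(2) and that $\sharp$ and $\flat$ are mutually inverse), $m$ is a composite of $\M_\D$-morphisms, hence $m\in\M_\D$. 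Item (1) is then immediate: given $m\colon C\to\G D$, apply item (2) to $m_\flat\colon\F C\to D$, whose adjoint transpose is $(m_\flat)^\sharp=m$.

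There is no serious obstacle here; the whole argument is a short diagram chase. The only points that need care are getting the direction of the biconditional $\F^{-1}[\M_\D]=\M_\C$ right, and remembering that $\M$-classes satisfy left-cancellation (not right-cancellation) — which is precisely what makes the nontrivial direction of each equivalence go through, after the triangle identities reduce it to a factorization through a counit.
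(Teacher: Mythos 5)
Your proof is correct and uses exactly the same ingredients as the paper --- the triangle laws, naturality of the counit, left-cancellation for $\M$-morphisms, the biconditional $\F^{-1}[\M_\D]=\M_\C$, and the assumptions on the (co)unit --- just with the items derived in the opposite order: you prove (3), then (2), then (1), whereas the paper establishes (1) directly from $m_\flat = \counit_D\circ\F m$ and derives (2) and (3) from it. The reshuffling does not change the substance of the argument.
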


\begin{proof}
\begin{enumerate}
    \item Assume $m\colon C\to\G D$ is in $\M_C$. Recall that $m_\flat=\counit_D\circ \F m$. Since $\counit_D\in \M_\D$ (by \Cref{assumption:adjunction:counit}) and  $\F m\in \M_\D$ (by \Cref{assumption:adjunction}-\ref{assumption:adjunction:factorization-system-preservation}), so is their composition $m_\flat$.

    In other direction, assume $m_\flat\in \M_\D$. Using \cite[Proposition
  14.9.(2)]{adamekAbstractConcreteCategories2009}, we infer from the fact that $m_\flat = \counit_D\circ \F m \in \M_D$ and $\counit_D\in \M_\D$ that $\F m\in \M_\D$. By \Cref{assumption:adjunction}-\ref{assumption:adjunction:factorization-system-preservation} we conclude that $m\in\M_\C$.
  \item This follows trivially from the previous item, since $(m_\flat)^\sharp = m = (m^\sharp)_\flat$.
  \item Assume $m\colon D\to D'$ is in $\M_\D$. By the first item, it suffices to show that $(\G m)_\flat$ is in $M_D$. Using the naturality of $\counit$, we can show that $(\G m)_\flat = \counit_{D'}\circ \F\G m = m\circ\counit_D$. Since both $m$ and $\counit_D$ are in $\M_\D$, it follows that so is their composition $(\G m)_\flat$. 
\end{enumerate} 
\end{proof}

\begin{lemma}
    The functor $\Fflat$ also satisfies  \Cref{assumption:adjunction}.
\end{lemma}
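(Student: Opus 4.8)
The plan is to instantiate \Cref{assumption:adjunction} with $\C$ replaced by $\C'\coloneq\Auto{\C,I,\G\F O}$, with $\D$ replaced by $\D'\coloneq\Auto{\D,\F I,\F O}$, both equipped with the factorization systems inherited component-wise from $(\E_\C,\M_\C)$ and $(\E_\D,\M_\D)$ as described in \Cref{app:cat:fac-systems}, and with $\F$ replaced by $\Fflat$. The observation that makes everything go through is that $\Fflat$ acts as $\F$ on state objects: for a morphism of automata $\alpha\colon\A\to\A'$ in $\C'$, which is a natural transformation whose $\In$- and $\Out$-components are identities, $\Fflat(\alpha)$ is again the identity on its $\In$- and $\Out$-components and has $\F(\alpha_\St)$ as its $\St$-component; one checks this directly from the definition of $\Fflat$, the only component needing an argument being the final-weights one, which is handled by naturality of $-_\flat$ in its first coordinate. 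Since the $\In$- and $\Out$-components of every morphism in $\C'$ and $\D'$ are identities, and identities lie in $\E\cap\M$, a morphism of $\C'$ lies in $\E_{\C'}$ (resp.\ $\M_{\C'}$) exactly when its $\St$-component lies in $\E_\C$ (resp.\ $\M_\C$), and similarly for $\D'$; I would record this reduction to $\St$-components first.

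With that in hand, the four clauses of \Cref{assumption:adjunction} for $\Fflat$ follow from the corresponding clauses for $\F$. For clause~\labelcref{assumption:adjunction:factorization-system-preservation}: if $\alpha\in\E_{\C'}$ then $\alpha_\St\in\E_\C$, so $\F(\alpha_\St)\in\E_\D$ and hence $\Fflat(\alpha)\in\E_{\D'}$; and $\Fflat(\alpha)\in\M_{\D'}$ iff $\F(\alpha_\St)\in\M_\D$ iff $\alpha_\St\in\M_\C$ (using $\F^{-1}[\M_\D]=\M_\C$, which also gives $\F[\M_\C]\subseteq\M_\D$) iff $\alpha\in\M_{\C'}$, so $\Fflat^{-1}[\M_{\D'}]=\M_{\C'}$. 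Clause~\labelcref{assumption:adjunction:right-adjoint} is nothing but \cite[Lemma~3.4]{petrisanAutomataMinimizationFunctorial2020} as recalled above, together with the remark there that the adjunction $\Fflat\dashv\G^\sharp$ extends to the full categories $\C'$ and $\D'$. For clauses~\labelcref{assumption:adjunction:unit} and~\labelcref{assumption:adjunction:counit} I would invoke the explicit description, recalled above, of the unit and counit of $\Fflat\dashv\G^\sharp$: the unit at an automaton $\A$ of $\C'$ has $\unit_{\A(\St)}$ as its $\St$-component, which lies in $\M_\C$ by clause~\labelcref{assumption:adjunction:unit} for $\F$, so the unit lies in $\M_{\C'}$; dually, the counit at an automaton $\A'$ of $\D'$ has $\counit_{\A'(\St)}$ as its $\St$-component, which lies in $\M_\D$ by clause~\labelcref{assumption:adjunction:counit} for $\F$, so the counit lies in $\M_{\D'}$. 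The residual "good-for-minimization" requirement of \Cref{assumption:functor} only enters through the inherited factorization systems, which I already have, and through the initial and final objects inside each connected component, which are supplied by \Cref{con:minimal}.

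I do not expect a genuine obstacle: the whole argument is the mechanical transport of each clause of \Cref{assumption:adjunction} from $\F$ to $\Fflat$, and the only step requiring any care is the first one — pinning down that factorizations in $\C'$ and $\D'$ are computed component-wise with trivial $\In$- and $\Out$-components, so that membership in $\E$ or $\M$ for a morphism of automata is detected on its $\St$-component, and that $\Fflat$ really is "$\F$ on state objects" functorially.
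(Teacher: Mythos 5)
Your proof is correct and takes essentially the same approach as the paper: reduce membership in $\E$ or $\M$ for automata morphisms to the $\St$-component, observe that $\Fflat$ acts as $\F$ on $\St$-components (via naturality of $-_\flat$ for the final-weights square), invoke $\G^\sharp$ for the right adjoint, and read off clauses~\labelcref{assumption:adjunction:unit} and~\labelcref{assumption:adjunction:counit} from the fact that the unit and counit of $\Fflat\dashv\G^\sharp$ have $\unit$ and $\counit$ as their $\St$-components. You simply spell out the details that the paper's terser proof ("readily follows", "easy to check") leaves implicit.
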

\begin{proof}
 The factorizations system we consider on $\Auto{\C,C, \G D}$ and $\Auto{\D,\F C, D}$ are inherited from $\C$, respectively $\D$, in the sense that, for example an $\Auto{\C,C, \G D}$-morphism in $\E_{\Auto{\C,C, \G D}}$ iff its $\St$-component is in $\E_C$. Since $\Fflat$ is defined on morphisms just as $\F$, \Cref{assumption:adjunction}-\ref{assumption:adjunction:factorization-system-preservation} readily follows. For \Cref{assumption:adjunction}-\ref{assumption:adjunction:right-adjoint} recall the functor $\G^\sharp$. Since the units of this adjunction are defined on the $\St$-component as the units of the adjunction $\F\dashv\G$, the remaining two items are also easy to check. 
\end{proof}

Using the above we can show that the right adjoint $\G^\sharp$ preserves observable automata.

\begin{lemma}
\label{lem:Gsharp-pres-obs}
Under the \Cref{assumption:adjunction}, given an automaton $\A$ in $\Auto{\D,\F I, \F O}$ accepting a language $\L$ the $(\C, I, \G\F O)$-automaton  $\G^\sharp(\Obs{\A})$ is observable, that is, the unique morphism  from $\G^\sharp(\Obs{\A})\to\Afinal(\L^\sharp)$ has an underlying $\M_\C$-morphism.
\end{lemma}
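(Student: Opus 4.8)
The plan is to deduce observability of $\G^\sharp(\Obs{\A})$ from the defining property of $\Obs{\A}$ together with two facts about the lifted adjunction $\Fflat\dashv\G^\sharp$: that $\G^\sharp$ sends $\M$-morphisms of automata to $\M$-morphisms, and that $\G^\sharp$ preserves final objects. First I would fix the $(\E_\D,\M_\D)$-factorization $\flip_\A = o\circ e$ of the canonical morphism $\flip_\A\colon\A\to\Afinal(\L)$, with $e\colon\A\twoheadrightarrow\Obs{\A}$ and $o\colon\Obs{\A}\rightarrowtail\Afinal(\L)$, so that $o\in\M_\D$ holds by construction of $\Obs{\A}$. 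Applying $\G^\sharp$ then yields a morphism $\G^\sharp(o)\colon\G^\sharp(\Obs{\A})\to\G^\sharp(\Afinal(\L))$ in $\Auto{\C,I,\G\F O}$.

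The next step is to check that $\G^\sharp(o)$ is an $\M$-morphism of automata. A morphism of automata lies in the lifted class $\M$ exactly when each of its components does; here the $\In$- and $\Out$-components of $\G^\sharp(o)$ are identities, hence in $\M_\C$, and its $\St$-component is $\G$ applied to the $\St$-component of $o$, which is in $\M_\C$ by the third claim of \Cref{lem:easy-prop-from-assumption-adjunction}. (Alternatively, since the preceding lemma shows that $\Fflat$ again satisfies \Cref{assumption:adjunction} with right adjoint $\G^\sharp$, one may apply \Cref{lem:easy-prop-from-assumption-adjunction} directly to the adjunction $\Fflat\dashv\G^\sharp$.)

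It then remains to identify $\G^\sharp(\Afinal(\L))$ with $\Afinal(\L^\sharp)$. I would observe that $\G^\sharp$ is compatible with the language correspondence of \Cref{lem:language-correspondance}, restricting to a functor $\Auto{\L}\to\Auto{\L^\sharp}$: a short computation using \Cref{lem:trianglelaws} gives $\G^\sharp(\A')(\triangleright w\triangleleft)=\G(\A'(\triangleright w\triangleleft))\circ\unit_I=(\A'(\triangleright w\triangleleft))^\sharp$ for every $\A'$ recognizing $\L$ and every $w\in\Sigma^*$, and symmetrically $\Fflat$ restricts to $\Auto{\L^\sharp}\to\Auto{\L}$, the unit and counit of $\Fflat\dashv\G^\sharp$ restricting along the way. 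Hence $\G^\sharp\colon\Auto{\L}\to\Auto{\L^\sharp}$ is a right adjoint, so it preserves the final object, giving $\G^\sharp(\Afinal(\L))\cong\Afinal(\L^\sharp)$. Composing $\G^\sharp(o)$ with this isomorphism produces, by uniqueness, the canonical morphism $\G^\sharp(\Obs{\A})\to\Afinal(\L^\sharp)$; since $\M_\C$ contains the isomorphisms and is closed under composition, this morphism — and in particular its $\St$-component — lies in $\M_\C$, which is precisely what observability of $\G^\sharp(\Obs{\A})$ demands.

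I do not anticipate a genuine obstacle here, as the substance has been isolated into \Cref{lem:easy-prop-from-assumption-adjunction} and into the fact that $\Fflat$ again satisfies \Cref{assumption:adjunction}. The one point that deserves care is the bookkeeping around the language correspondence: checking that $\G^\sharp$ really lands in the connected component $\Auto{\L^\sharp}$ and that $\Fflat\dashv\G^\sharp$ restricts to these subcategories, so that ``a right adjoint preserves the terminal object'' may be invoked inside $\Auto{\L^\sharp}$ rather than only in the ambient category $\Auto{\C,I,\G\F O}$.
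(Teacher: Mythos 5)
Your proposal is correct and follows essentially the same route as the paper's proof: both obtain $\G^\sharp(\Afinal(\L))\cong\Afinal(\L^\sharp)$ by restricting the adjunction $\Fflat\dashv\G^\sharp$ to $\Auto{\L^\sharp}\to\Auto{\L}$ so that the right adjoint preserves the final object, and both apply the third claim of \Cref{lem:easy-prop-from-assumption-adjunction} to the $\M_\D$-morphism $\Obs{\A}\rightarrowtail\Afinal(\L)$ to conclude. The extra bookkeeping you supply on the language correspondence is a spelled-out version of a step the paper asserts more briefly, not a different argument.
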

\begin{proof}
    This is an immediate consequence of the following two facts: When restricted to the  subcategories $\Auto{\L}\to\Auto{\L^\sharp}$, $\G^\sharp$ is still a right adjoint and hence preserves final objects. That is $\G^\sharp(\Afinal(\L))\cong\Afinal(\L^\sharp)$. We now use the third item of \Cref{lem:easy-prop-from-assumption-adjunction} applied to the $\M_\D$-morphism $\Obs{\A}\rightarrowtail\Afinal{\L}$. It follows that  we have an $\M_\C$-morphism $\G^\sharp(\Obs{\A})\rightarrowtail\G^\sharp(\Afinal{\L})\cong\Afinal(\L^\sharp)$.
\end{proof}

As an immediate corollary  we obtain:

\begin{corollary}
\label{cor:minimal-automaton-in-C}
    Under \Cref{assumption:adjunction}, for any $(\D,\F I,\F O)$-automaton $\A$, the $(\C,I,\G\F O)$-automata $\Reach_\C(\G^\sharp(\Obs_\D(\A)))$ and $\Reach_\C(\G^\sharp(\Reach_\D(\Obs_\D(\A))))$ are minimal, and thus isomorphic. 
\end{corollary}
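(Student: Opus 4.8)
The plan is to deduce this corollary from \Cref{lem:Gsharp-pres-obs} together with the standard minimization facts recalled in \Cref{con:minimal}. Write $\L$ for the $(\D,\F I,\F O)$-language recognized by $\A$ and $\L^\sharp$ for the corresponding $(\C,I,\G\F O)$-language, as in \Cref{lem:language-correspondance}. The first step is to extract from the proof of \Cref{lem:Gsharp-pres-obs} the slightly more general statement that $\G^\sharp$ sends \emph{every} observable automaton of $\Auto{\L}$ to an observable automaton of $\Auto{\L^\sharp}$: if $\B\rightarrowtail\Afinal(\L)$ is the defining $\M_\D$-morphism of an observable automaton $\B$, then, using that $\G^\sharp\colon\Auto{\L}\to\Auto{\L^\sharp}$ is a right adjoint (so that $\G^\sharp(\Afinal(\L))\cong\Afinal(\L^\sharp)$) together with item~(3) of \Cref{lem:easy-prop-from-assumption-adjunction}, one obtains an $\M_\C$-morphism $\G^\sharp(\B)\rightarrowtail\Afinal(\L^\sharp)$, i.e.\ $\G^\sharp(\B)$ is observable.

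The second step is the elementary observation that, for an observable $(\C,I,\G\F O)$-automaton $\B'$ recognizing $\L^\sharp$, one has $\B'\cong\Obs_\C(\B')$, and hence $\Reach_\C(\B')\cong\Reach_\C(\Obs_\C(\B'))\cong\Min_\C(\L^\sharp)$ by \Cref{con:minimal}.

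Finally I would apply these two steps to the two automata in the statement. On the one hand, $\Obs_\D(\A)$ recognizes $\L$ and is observable by construction, so $\G^\sharp(\Obs_\D(\A))$ is observable in $\Auto{\L^\sharp}$ and therefore $\Reach_\C(\G^\sharp(\Obs_\D(\A)))\cong\Min_\C(\L^\sharp)$. On the other hand, $\Reach_\D(\Obs_\D(\A))\cong\Min_\D(\L)$ is the $(\E_\D,\M_\D)$-image of $\Ainit(\L)\to\Afinal(\L)$, so its morphism into $\Afinal(\L)$ lies in $\M_\D$ and it is observable as well, while it still recognizes $\L$; hence $\Reach_\C(\G^\sharp(\Reach_\D(\Obs_\D(\A))))\cong\Min_\C(\L^\sharp)$. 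Both automata are therefore minimal, and, since the minimal automaton of a language is unique up to isomorphism, they are isomorphic.

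The argument is mostly bookkeeping once \Cref{lem:Gsharp-pres-obs} is in hand; the one place that needs a little attention is the first step, namely upgrading that lemma so that $\G^\sharp$ is seen to preserve observability of arbitrary automata of $\Auto{\L}$ rather than merely of $\Obs_\D(\A)$, which is precisely what allows the second automaton to be treated on the same footing as the first.
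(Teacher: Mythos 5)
Your proof is correct and takes essentially the same route as the paper: the paper's one-line argument for the second automaton is "recall that $\Reach_\D(\Obs_\D(\A)) = \Min \L$, but $\Min \L \cong \Obs(\Min \L)$, hence \Cref{lem:Gsharp-pres-obs} applies," which is precisely your observability-preservation observation, just packaged via $\Min \L \cong \Obs(\Min \L)$ rather than as an explicit upgrade of the lemma. Your version makes the reusable fact ($\G^\sharp$ preserves observability, using that $\G^\sharp$ is a right adjoint and $\G$ sends $\M_\D$ into $\M_\C$) more transparent, but the underlying argument is identical.
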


\begin{proof}
    For the second automaton, recall that $\Reach_\D(\Obs_\D(\A))$ is none other than $\Min \L$, where $\L$ is the language computed by $\A$. But $\Min \L \cong \Obs(\Min \L)$, hence \Cref{lem:Gsharp-pres-obs} applies.
\end{proof}

Similar to \Cref{prop:minimality-preservation}, we can show that the functor $\unit_*$ preserves minimality.

\begin{proposition}
\label{prop:unit-reserves-minimality}
Let $\Min{\L}$ be the minimal $(\C, I, O)$-automaton for a language $\L$. Then $\unit_*(\Min{\L})$ is a minimal automaton for the language $\unit_*(\L)$, defined by $\unit_*(\L)(\triangleright w\triangleleft)=\unit_O\circ\L(\triangleright w\triangleleft)$.
\end{proposition}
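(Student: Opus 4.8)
The plan is to invoke the characterization of minimal automata from \Cref{con:minimal}: since $\C$ is good-for-minimization, a $(\C,I,O)$-automaton $\B$ recognizing a language is minimal, up to isomorphism, exactly when it is both \emph{reachable} — the underlying $\C$-morphism of the unique map $!_\B\colon\Ainit\to\B$ lies in $\E_\C$ — and \emph{observable} — the underlying $\C$-morphism of the unique map $\flip_\B\colon\B\to\Afinal$ lies in $\M_\C$. Writing $\A\coloneq\Min\L$, I would first record that $\unit_*(\A)$ indeed recognizes $\unit_*(\L)$: for every $w\in\Sigma^*$ one has $\unit_*(\A)(\triangleright w\triangleleft)=\unit_O\circ\A(\triangleleft)\circ\A(w)\circ\A(\triangleright)=\unit_O\circ\A(\triangleright w\triangleleft)=\unit_O\circ\L(\triangleright w\triangleleft)=\unit_*(\L)(\triangleright w\triangleleft)$. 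It then remains to show that $\unit_*(\A)$ inherits reachability and observability.

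For reachability, I would check that $\unit_*$ sends the initial automaton on the nose, i.e. $\unit_*(\Ainit(\L))=\Ainit(\unit_*(\L))$: the functor $\unit_*$ leaves the state object $\coprod_{\Sigma^*}I$, the initial map $j_\varepsilon$ and the transitions $[j_{w\sigma}]_w$ unchanged, and turns the final map $[\L(\triangleright w\triangleleft)]_w$ into $\unit_O\circ[\L(\triangleright w\triangleleft)]_w=[\unit_O\circ\L(\triangleright w\triangleleft)]_w$, which is precisely $\Ainit(\unit_*(\L))(\triangleleft)$. Since $\unit_*$ is the identity on the underlying $\C$-morphisms of automata maps, $\unit_*(!_\A)$ is a morphism $\Ainit(\unit_*(\L))\to\unit_*(\A)$, hence equals $!_{\unit_*(\A)}$ by uniqueness, and its underlying $\C$-morphism coincides with that of $!_\A$, which is in $\E_\C$ because $\A$ is minimal. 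So $\unit_*(\A)$ is reachable.

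For observability, the key is the identity, at the level of underlying $\C$-morphisms,
\[ \flip_{\unit_*(\A)}=\bigl\langle\unit_O\circ\A(w\triangleleft)\bigr\rangle_{w\in\Sigma^*}=\left(\prod_{\Sigma^*}\unit_O\right)\circ\bigl\langle\A(w\triangleleft)\bigr\rangle_{w\in\Sigma^*}=\left(\prod_{\Sigma^*}\unit_O\right)\circ\flip_\A, \]
which I would verify by postcomposing with the projections $\pi_w$; here $\prod_{\Sigma^*}\unit_O\colon\prod_{\Sigma^*}O\to\prod_{\Sigma^*}\G\F O$ denotes the $\Sigma^*$-fold power of $\unit_O$. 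Now $\flip_\A\in\M_\C$ since $\A$ is minimal, hence observable, and $\prod_{\Sigma^*}\unit_O\in\M_\C$ because $\unit_O\in\M_\C$ by \Cref{assumption:adjunction}-\labelcref{assumption:adjunction:unit} and $\M_\C$ is closed under products \cite[Prop.~14.15]{adamekAbstractConcreteCategories2009}. As $\M_\C$ is closed under composition, $\flip_{\unit_*(\A)}\in\M_\C$, so $\unit_*(\A)$ is observable. Being reachable and observable, $\unit_*(\A)=\unit_*(\Min\L)$ is a minimal automaton for $\unit_*(\L)$. I expect the only mildly delicate point to be the power-factorization identity for $\flip_{\unit_*(\A)}$ together with the appeal to closure of $\M_\C$ under powers; the rest is routine bookkeeping with the functor $\unit_*$.
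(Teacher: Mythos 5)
Your proof is correct and takes essentially the same approach as the paper: both rest on the observations that $\unit_O\in\M_\C$, that $\M_\C$ is closed under (countable) powers and composition, and that $\unit_*$ fixes the initial automaton and postcomposes the observability map with $\prod_{\Sigma^*}\unit_O$. The paper phrases this as a direct $(\E_\C,\M_\C)$-factorization of $\Ainit(\unit_*\L)\to\Afinal(\unit_*\L)$, whereas you verify reachability and observability separately, but these are the same argument repackaged.
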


\begin{proof}
    Note that $\Min(\St)$ is obtained as the following factorization
    \begin{tikzcd}
        \coprod_{\Sigma^*} I & \Min{\L}(\St) & \prod_{\Sigma^*} O 
        \arrow[two heads, from=1-1, to=1-2]
        \arrow[tail, from=1-2, to=1-3]
    \end{tikzcd}
Note that, by virtue of  \cite[Proposition
14.15]{adamekAbstractConcreteCategories2009}, $\M_\C$ is
stable under products. Since $\unit_O\in\M_\C$, it follows that  the map $\prod_{\Sigma^*}\unit_O\colon \prod_{\Sigma^*} O\to \prod_{\Sigma^*} \G\F O$ is also in $\M_\C$. Therefore, we obtain a factorization 
\[
    \begin{tikzcd}
        \coprod_{\Sigma^*} I & (\Min{\L})(\St) & \prod_{\Sigma^*} O  & \prod_{\Sigma^*} \G\F O 
        \arrow[two heads, from=1-1, to=1-2]
        \arrow[tail, from=1-2, to=1-3]
        \arrow[tail, from=1-3, to=1-4]
    \end{tikzcd}
\]
proving that $(\Min{\L})(\St)$ is also  the state object of the minimal automaton for $\unit_*(\L)$. The rightmost map also underlies a morphism of automata, the unique one $\eta_*(\Afinal(\L)) \to \Afinal(\eta_* \L)$. It follows that $\eta_*(\Min \L)$ is the factorization of the unique morphism $\eta_*\Ainit(\L) \cong \Ainit(\eta_*\L) \to \Afinal(\eta_*\L)$, that is $\Min(\eta_* \L)$.
\end{proof}

\begin{lemma}
\label{lem:reach-sharp-obs-is-minimal}
  For any $(\D,\F I,\F O)$-automaton $\A$ recognizing a language $\L$ and any
  $(\C,I,O)$-language $\L'$, $\L = \F \circ \L'$ if and only if $\Reach\mathopen{}\left(
    \G^\sharp(\Obs \A) \right)\mathclose{}$ is in the image of $\unit_*$, and in that case $\Reach\mathopen{}\left( \G^\sharp (\Obs \A) \right)\mathclose{} \cong \eta_*(\Min \L')$.
\end{lemma}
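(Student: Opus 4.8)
The plan is to identify the language recognized by $\Reach(\G^\sharp(\Obs\A))$ with the language $\L^\sharp$ associated to $\L$ by \Cref{lem:language-correspondance} (taking $C = I$ and $D = \F O$), then to invoke \Cref{cor:minimal-automaton-in-C} to see that this automaton is in fact $\Min(\L^\sharp)$, and finally to reduce the whole statement to a purely language-level equivalence: $\L^\sharp$ lies in the image of $\unit_*$ exactly when $\L = \F\circ\L'$ for some $(\C,I,O)$-language $\L'$.

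First I would record that neither $\G^\sharp$ nor $\Reach$ changes the recognized language beyond passing through the correspondence of \Cref{lem:language-correspondance}: since $\Obs\A$ lies in $\Auto{\L}$ and $\G^\sharp$ restricts to a functor $\Auto{\L} \to \Auto{\L^\sharp}$ (it is a right adjoint between these subcategories, as noted in the proof of \Cref{lem:Gsharp-pres-obs}), the automaton $\G^\sharp(\Obs\A)$, hence also its reachable part $\Reach(\G^\sharp(\Obs\A))$, recognizes $\L^\sharp$. By \Cref{cor:minimal-automaton-in-C} this reachable part is moreover minimal, so $\Reach(\G^\sharp(\Obs\A)) \cong \Min(\L^\sharp)$.

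Next I would prove the language-level equivalence $\L = \F\circ\L' \iff \L^\sharp = \unit_*(\L')$. For a word $w$ one has $\L^\sharp(\triangleright w\triangleleft) = (\L(\triangleright w\triangleleft))^\sharp$ by definition of $\L^\sharp$, and $\unit_*(\L')(\triangleright w\triangleleft) = \unit_O \circ \L'(\triangleright w\triangleleft)$ by definition of $\unit_*$. If $\L(\triangleright w\triangleleft) = \F(\L'(\triangleright w\triangleleft))$, then writing $g = \L'(\triangleright w\triangleleft)\colon I\to O$ and using \Cref{lem:trianglelaws}(1) together with naturality of $\unit$, we get $(\F g)^\sharp = \G\F g\circ\unit_I = \unit_O\circ g$, so $\L^\sharp = \unit_*(\L')$. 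Conversely, if $\L^\sharp(\triangleright w\triangleleft) = \unit_O\circ\L'(\triangleright w\triangleleft)$, then applying $(-)_\flat$, using \Cref{lem:trianglelaws}(2) and the triangle identity $\counit_{\F O}\circ\F\unit_O = \Id_{\F O}$, we obtain $\L(\triangleright w\triangleleft) = (\unit_O\circ\L'(\triangleright w\triangleleft))_\flat = \counit_{\F O}\circ\F\unit_O\circ\F\L'(\triangleright w\triangleleft) = \F(\L'(\triangleright w\triangleleft))$, i.e. $\L = \F\circ\L'$. (Such an $\L'$ is unique when it exists, since $\unit_O \in \M_\C$ is monic, so $\eta_*(\Min\L')$ is well defined.)

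Finally I would bridge between $\Min(\L^\sharp)$ and $\L^\sharp$. If $\L = \F\circ\L'$, then $\L^\sharp = \unit_*(\L')$ by the previous step, and \Cref{prop:unit-reserves-minimality} gives $\Min(\L^\sharp) = \Min(\unit_*\L') \cong \unit_*(\Min\L')$; combined with $\Reach(\G^\sharp(\Obs\A)) \cong \Min(\L^\sharp)$, this shows both that $\Reach(\G^\sharp(\Obs\A))$ lies in the image of $\unit_*$ and that $\Reach(\G^\sharp(\Obs\A)) \cong \eta_*(\Min\L')$. Conversely, if $\Reach(\G^\sharp(\Obs\A)) \cong \unit_*(\mathcal B)$ for some $(\C,I,O)$-automaton $\mathcal B$, let $\L'$ be the language recognized by $\mathcal B$; since $\unit_*$ preserves recognized languages and isomorphic automata recognize the same language, $\L^\sharp = \unit_*(\L')$, hence $\L = \F\circ\L'$ by the equivalence above, and then $\Reach(\G^\sharp(\Obs\A)) \cong \Min(\L^\sharp) \cong \unit_*(\Min\L')$ as before. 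The only genuinely delicate point is the bookkeeping with the adjoint transposes $(-)^\sharp$ and $(-)_\flat$, the unit $\unit$, and the triangle identities; everything else is an assembly of \Cref{cor:minimal-automaton-in-C,prop:unit-reserves-minimality,lem:language-correspondance,lem:trianglelaws}.
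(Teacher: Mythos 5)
Your proof is correct and takes essentially the same route as the paper's: both rely on the fact that $\G^\sharp$ restricts to a functor $\Auto{\L}\to\Auto{\L^\sharp}$, on \Cref{cor:minimal-automaton-in-C} to identify $\Reach(\G^\sharp(\Obs\A))$ with $\Min(\L^\sharp)$, and on \Cref{prop:unit-reserves-minimality} to pass from $\Min(\unit_*\L')$ to $\unit_*(\Min\L')$. The one organizational difference is that you isolate the language-level equivalence $\L = \F\circ\L' \iff \L^\sharp = \unit_*(\L')$ as a clean standalone step with both directions verified via \Cref{lem:trianglelaws} and a triangle identity, whereas the paper derives the two directions of that computation in the course of proving the two halves of the biconditional (the forward direction via the observation that $\Obs\A$ recognizes both $\L$ and $(\unit_*\L')_\flat$, and the converse via $\L^\sharp = (\F\circ\L')^\sharp = ((\unit_*\L')_\flat)^\sharp = \unit_*\L'$); the underlying calculations are the same.
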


\begin{proof}
  Assume $\Reach\mathopen{}\left(
    \G^\sharp(\Obs \A) \right)\mathclose{}$ is in the image of $\unit_*$. Then $\Reach\mathopen{}\left(
    \G^\sharp(\Obs \A)\mathclose{}\right)$ and also $\G^\sharp(\Obs \A)$ recognize a language of the form $\unit_*\L'$ for some $(\C, I, O)$-language $\L'$. This entails that $\Obs \A$ recognizes the language $(\unit_*\L')_\flat$. But by assumption $\Obs \A$ also recognizes $\L$, hence for any word $w\in\Sigma^*$ we have
    \begin{align*}
       \L(\triangleright w \triangleleft) &= (\unit_*\L')_\flat(\triangleright w \triangleleft) \\ 
       & =  \counit_{\F O}\circ \F((\unit_*\L')(\triangleright w \triangleleft))\\
       & = \counit_{\F O}\circ \F(\unit_O\circ \L'(\triangleright w \triangleleft)) \\
       & = \counit_{\F O}\circ \F(\unit_O)\circ \F(\L'(\triangleright w \triangleleft))\\
       & = \F(\L'(\triangleright w \triangleleft))
    \end{align*}
    Hence $\L = \F\circ \L'$.
    
 Conversely, assume $\L = \F \circ \L'$. By \Cref{cor:minimal-automaton-in-C}, $\Reach\mathopen{}\left( \G^\sharp (\Obs \A) \right)\mathclose{}$ is the minimal automaton for the language $\L^\sharp$.
 But $\L^\sharp= (\F \circ \L')^\sharp = ((\unit_*\L')_\flat)^\sharp=\unit_*\L'$. On the other hand, 
\Cref{prop:unit-reserves-minimality} establishes that the minimal automaton for $\unit_*\L'$ is $\unit_*(\Min{\L'})$. Hence, $\Reach \mathopen{}\left( \G^\sharp (\Obs \A) \right)\mathclose{} \cong \eta_*(\Min \L')$.
\end{proof}

The correctness of \Cref{alg:reduction}, claimed in
\Cref{thm:alg:reduction:correctness}, follows:

\begin{lemma} \label{lemma:alg:reduction:correct}
  \Cref{alg:reduction} either outputs some $w \in \Sigma^*$ such that
  $\L(\triangleright w \triangleleft) \notin \F[\C(I,O)]$, or there is some
  $(\C,I,O)$-language $\L'$ such that $\L = \F \circ \L$ in which case
  \Cref{alg:reduction} computes an $I$-generating family for and outputs $(\Min
  \L')(\St)$.
\end{lemma}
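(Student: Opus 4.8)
The plan is to treat separately the two ways \Cref{alg:reduction} can terminate, using throughout that $\A' = \Obs \A$ computed on \cref{alg:reduction:line:obs} is a quotient of $\A$ and hence recognizes the same $(\D,\F I,\F O)$-language $\L$, together with the two structural facts proved above: $\G^\sharp$ preserves observable automata (\Cref{lem:Gsharp-pres-obs}) and $\Reach(\G^\sharp(\Obs \A))$ is in the image of $\unit_*$ exactly when $\L$ factors through $\F$ (\Cref{lem:reach-sharp-obs-is-minimal}). \emph{Case 1: the algorithm returns a word.} A word $w\sigma$ is returned only from inside one of the two \textbf{while} loops, via the test $\A'(\triangleright w\sigma\triangleleft)\notin\F[\C(I,O)]$. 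Since $\A'$ recognizes $\L$, we get $\L(\triangleright w\sigma\triangleleft)=\A'(\triangleright w\sigma\triangleleft)\notin\F[\C(I,O)]$, which is the required output. (The loops never test $w=\varepsilon$; one should run that test once at the start, and I would add that line.)

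\emph{Case 2: the algorithm reaches \cref{alg:reduction:line:extract_basis}.} First I would note that $\A'=\Obs \A$ is observable, so $\G^\sharp(\A')$ is observable by \Cref{lem:Gsharp-pres-obs} and therefore $\Reach(\G^\sharp(\A'))$ is a minimal automaton, recognizing the $(\C,I,\G\F O)$-language $\L^\sharp$ (this is \Cref{cor:minimal-automaton-in-C}). Next, when the second \textbf{while} loop exits, no pair $(w,\sigma)\in W\times\Sigma$ makes the $(\E_\C,\M_\C)$-image grow; since $\varepsilon\in W$ and pre-populating $W$ with the words produced by the first loop is harmless (the images are monotone in $W$), the standard closure argument for computing reachable subobjects (see \cite[\textsection 3]{aristoteFunctorialApproachMinimizing2023b}) shows that $W$ is then a finite $I$-generating family of words for $\G^\sharp(\A')(\St)$, i.e. $\G^\sharp(\A')(\triangleright W)\cong(\Reach(\G^\sharp(\A')))(\St)$. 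This object is precisely what \cref{alg:reduction:line:extract_basis} returns.

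It remains to prove that $\Reach(\G^\sharp(\A'))$ is in the image of $\unit_*$, i.e. that its output-weights map factors through $\unit_O\colon O\to\G\F O$ (which lies in $\M_\C$ by \Cref{assumption:adjunction}). Because \cref{alg:reduction:line:extract_basis} was reached, every $w\in W$ passed the test, so $\L(\triangleright w\triangleleft)=\A'(\triangleright w\triangleleft)\in\F[\C(I,O)]$; by the triangle identities (\Cref{lem:trianglelaws}) this is equivalent to the adjoint transpose $(\L(\triangleright w\triangleleft))^\sharp\colon I\to\G\F O$ factoring through $\unit_O$. Factor $[\G^\sharp(\A')(\triangleright w)]_{w\in W}$ as a morphism $e\in\E_\C$ onto $(\Reach(\G^\sharp(\A')))(\St)$ followed by the $\M_\C$-inclusion into $\G^\sharp(\A')(\St)$. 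Post-composing the output-weights map of $\Reach(\G^\sharp(\A'))$ with $e$ yields the copairing $[(\L(\triangleright w\triangleleft))^\sharp]_{w\in W}$, which factors through $\unit_O$ since each of its components does; a diagonal fill-in against $e\in\E_\C$ and $\unit_O\in\M_\C$ then shows the output-weights map itself factors through $\unit_O$. Hence $\Reach(\G^\sharp(\A'))$ is in the image of $\unit_*$, and \Cref{lem:reach-sharp-obs-is-minimal} supplies a $(\C,I,O)$-language $\L'$ with $\L=\F\circ\L'$ and $\Reach(\G^\sharp(\Obs\A))\cong\unit_*(\Min\L')$. Passing to state objects, which $\unit_*$ leaves unchanged, the returned object $\G^\sharp(\A')(\triangleright W)$ equals $(\Min\L')(\St)$, and $W$ is the promised $I$-generating family.

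The step I expect to be the main obstacle is the last paragraph: turning the purely syntactic record of which words the algorithm tested into the semantic statement that $\Reach(\G^\sharp(\Obs\A))$ lands in the image of $\unit_*$. This relies on (i) the correctness of the reachability sub-procedure, so that it suffices to have tested the output weights only on the finite set $W$, and (ii) the diagonal fill-in property, which lets one transport the factorization through $\unit_O$ from the copower $\coprod_W I$ down to $(\Reach(\G^\sharp(\A')))(\St)$. A minor but genuine care point, glossed over in Case 1 above, is the empty word, which the two loops never examine and which therefore needs its own explicit test; once that is added, nothing else changes. Termination of the loops is not part of the claim and would be handled separately, as in \cite{colcombetLearningAutomataTransducers2021}.
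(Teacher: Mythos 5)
Your proof follows the paper's own argument essentially step for step: Case 1 is immediate since $\A'$ recognizes $\L$, and in Case 2 you use the reachability argument of \cite[\textsection 3]{aristoteFunctorialApproachMinimizing2023b} to conclude $W$ is an $I$-generating family, show $\Reach(\G^\sharp\A')$ lands in the image of $\unit_*$ via a factorization through $\unit_O$ and a diagonal fill-in, and finish with \Cref{lem:reach-sharp-obs-is-minimal}. Your observation that neither \textbf{while} loop ever tests $w=\varepsilon$ is correct and the paper glosses over it: the paper's own proof silently posits $l'_w$ with $\L(\triangleright w\triangleleft)=\F l'_w$ for \emph{every} $w\in W$ including $\varepsilon$, so for \Cref{alg:reduction} to satisfy the lemma exactly as stated, the one-line initial test you propose is genuinely needed. (A small wording slip: ``post-composing the output-weights map with $e$'' should be pre-composing, i.e.\ the composite $\bigl(\Reach\G^\sharp\A'\bigr)(\triangleleft)\circ e$.)
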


\begin{proof}
  Again, in the setting of \Cref{alg:reduction}, write $\A' = \Obs \A$. If
  \Cref{alg:reduction} stops and outputs some $w \in \Sigma^*$, then it is clear
  that $\L(\triangleright w \triangleleft) \notin \F[\C(I,O)]$ as $\A'$
  recognizes $\L$.

  Assume now that \Cref{alg:reduction} does not output any word and reaches
  \cref{alg:reduction:line:extract_basis}. Let $W \subseteq \Sigma^*$ be the set
  of generating words obtained at the end of the second \textbf{while} loop (\Cref{alg:reduction:line:end_2nd_while}). 
  
  Then by \cite[Theorem 3.6]{aristoteFunctorialApproachMinimizing2023b}, 
  \[ \mathopen{}\left(\G^\sharp \A'\right)\mathclose{}(\triangleright W) \cong \Reach \mathopen{}\left(\G^\sharp \A'\right)\mathclose{}(\St) \rightarrowtail \G^\sharp(\A')(\St) \]
  For a proof sketch of this, notice that at the end of this second \textbf{while} loop, for every $w \in W$ and $\sigma \in \Sigma$, $\mathopen{}\left(\G^\sharp \A'\right)\mathclose{}(\triangleright W) \cong \mathopen{}\left(\G^\sharp \A'\right)\mathclose{}(\triangleright (W \cup \{ w\sigma \}))$. By \cite[Lemma B.4]{aristoteFunctorialApproachMinimizing2023b}, we deduce that, unsurprisingly $\mathopen{}\left(\G^\sharp \A'\right)\mathclose{}(\triangleright W) \cong \mathopen{}\left(\G^\sharp \A'\right)\mathclose{}(\triangleright (W\cup W\Sigma))$. It then follows by \cite[Proposition B.3]{aristoteFunctorialApproachMinimizing2023b} that 
  \begin{align*}
      \mathopen{}\left(\G^\sharp \A'\right)\mathclose{}(\triangleright W) 
      &\cong \mathopen{}\left(\G^\sharp \A'\right)\mathclose{}\mathopen{}\left(\triangleright W\Sigma^{\le 1}\right)\mathclose{} 
      \\
      &\cong \mathopen{}\left(\G^\sharp \A'\right)\mathclose{}\mathopen{}\left(\triangleright W\Sigma^{\le 2}\right)\mathclose{} 
      \\
      &\cong \ldots 
      \\
      &\cong \mathopen{}\left(\G^\sharp \A'\right)\mathclose{}\mathopen{}\left(\triangleright W\Sigma^{\le k}\right)\mathclose{} 
      \\
      &\cong \ldots 
      \\ 
      &\cong \mathopen{}\left(\G^\sharp \A'\right)\mathclose{}\mathopen{}\left(\triangleright \Sigma^*\right)\mathclose{}
      \\
      &\cong \Reach \mathopen{}\left(\G^\sharp \A'\right)\mathclose{}(\St)
  \end{align*}
  The second-to-last isomorphism relies on the fact that $W$ is prefix-closed, i.e. contains $\varepsilon$ and is such that if $w\sigma \in W$, then $w \in W$ as well.

  Note here that the first \textbf{while} loop (\crefrange{alg:reduction:line:begin_1st_while}{alg:reduction:line:end_1st_while}) does not play a role in the correctness, and could be skipped. It is only there to ensure that \Cref{lemma:alg:reduction:second_while_loop_invariant} holds.

  \bigskip

Write $l'_w: I
  \to O$ to be such that $\L(\triangleright w \triangleleft) = \F l'_w$ for
  every $w \in W$.
  We now show that $\Reach \mathopen{}\left(\G^\sharp\A'\right)\mathclose{}$ is in the image of $\unit_*$. By \Cref{lem:reach-sharp-obs-is-minimal}, this will entail that \Cref{alg:reduction} indeed computes
  $(\Min \L')(\St) \cong \left(\Reach \mathopen{}\left(\G^\sharp\A'\right)\right)\mathclose{}(\St)$.
  
  Because $W$ is an $I$-generating family of $\G^\sharp \A'$, $\mathopen{}\left(\Reach \mathopen{}\left(\G^\sharp\A'\right)\mathclose{}\right)\mathclose{}(\St)$ is obtained as the factorization 
  \[ \coprod_{W} I \twoheadrightarrow \mathopen{}\left(\Reach \mathopen{}\left(\G^\sharp\A'\right)\mathclose{}\right)\mathclose{}(\St) \rightarrowtail \G^\sharp (\A')(\St) \]
  In particular, by \cite[Theorem 3.6]{aristoteFunctorialApproachMinimizing2023b}, $\mathopen{}\left(\Reach \mathopen{}\left(\G^\sharp\A'\right)\mathclose{}\right)\mathclose{}(\triangleleft)$ is obtained as the composite of the right cell in the diagram below:
  \[\begin{tikzcd}[row sep=large]
      {\coprod_W I} & {\mathopen{}\left(\Reach\mathopen{}\left(\G^\sharp\A'\right)\mathclose{}\right)\mathclose{}(\St)} & \G^\sharp (\A')(\St) \\
       & {\G \F O}
      \arrow[two heads, from=1-1, to=1-2]
      \arrow["{\left[\L(\triangleright w \triangleleft)^\sharp\right]_{w \in W}}"',
      curve={height=12pt}, from=1-1, to=2-2]
      \arrow["{\mathopen{}\left(\Reach\mathopen{}\left(\G^\sharp\A'\right)\mathclose{}\right)\mathclose{}(\triangleleft)}"{description}, from=1-2, to=2-2]
      \arrow[tail, from=1-2, to=1-3]
      \arrow["\G^\sharp (\A')(\triangleleft)", curve={height=-12pt}, from=1-3, to=2-2]
    \end{tikzcd}\]
    Because the whole diagram commutes, it follows that the left cell of this same diagram also commutes.

  By assumption, for all $w\in W$ we have $\L(\triangleright w \triangleleft)^\sharp = \G\F l'_w \circ
  \unit_I = \unit_O \circ l'_w$. This left cell can be rewritten as the following commuting square:
  \[\begin{tikzcd}
      {\coprod_W I} & {\mathopen{}\left(\Reach\mathopen{}\left(\G^\sharp\A'\right)\mathclose{}\right)\mathclose{}(\St)} \\
      O & {\G \F O}
      \arrow[two heads, from=1-1, to=1-2]
      \arrow["{[l'_w]_{w \in W}}"', from=1-1, to=2-1]
      \arrow[dashed, from=1-2, to=2-1]
      \arrow["{\mathopen{}\left(\Reach\mathopen{}\left(\G^\sharp\A'\right)\mathclose{}\right)\mathclose{}(\triangleleft)}", from=1-2, to=2-2]
      \arrow["{\unit_O}"', tail, from=2-1, to=2-2]
    \end{tikzcd}\] 
    
 Since $\unit_O$ is in
  $\M_\C$ by
  \Cref{assumption:adjunction}-\labelcref{assumption:adjunction:unit}, we can use the the diagonal fill-in property to deduce that $\left(
    \Reach \G^\sharp\A' \right)(\triangleleft)$ factors through some $\C$-morphism
  $\mathopen{}\left( \Reach\mathopen{}\left(\G^\sharp\A'\right)\mathclose{}\right)\mathclose{}(\St) \to O$: this means precisely that
  $\Reach \mathopen{}\left(\G^\sharp\A'\right)\mathclose{}$ is in the image of $\unit_*$.%
\end{proof}

\reductionCorrect*

\begin{proof}
    We proved the correctness of \Cref{alg:reduction} in \Cref{lemma:alg:reduction:correct}. We now argue that it is indeed the required reduction -- the proof is very similar to the explanation given in \Cref{sec:learning-and-reduction} for $\ZZ$-weighted automata.

    We reduce the problem of active learning $(\C,I,O)$-automata to that of learning $(\D,\F I, \F O)$-automata. Suppose thus given an oracle $\oracle{\C}$ able to answer value and equivalence queries for a $(\C, I, O)$-language $\L$. 

    One can first implement an oracle $\oracle{\D}$ able to answer value and equivalence queries for the $(\D, \F I, \F O)$-language $\F \circ \L$:
    \begin{itemize}
        \item for a value query for the word $w \in \Sigma^*$, ask $\oracle{\C}$ for $\L(\triangleright w \triangleleft)$ and apply $\F$;
        \item for an equivalence query for a $(\D,\F I, \F O)$-automaton $\A$, apply \Cref{alg:reduction} to $\A$: the output is either a counterexample word such that $\A(\triangleright w \triangleleft) \notin \F[\C(I,O)]$ and hence $\A(\triangleright w \triangleleft) \neq \L(\triangleright w \triangleleft)$; or a $(\C,I,O)$-automaton computing $\L$, which can then be given for an equivalence query to $\oracle{\C}$.
    \end{itemize}
    One may now learn a $(\D, \F I, \F O)$-automaton $\A$ recognizing $\F \circ \L$ using the aforementioned $\oracle{\D}$, and then use \Cref{alg:reduction} on $\A$ to construct a $(\C, I, O)$-automaton recognizing $\L$.
\end{proof}

\lemAlgReductionSecondWhileLoopInvariant*

\begin{proof}
  Recall that $m_i$ is obtained as the diagonal fill-in of the following diagram
  \[\begin{tikzcd}
      {\coprod_{W_{i+1}} I} & {\wordImage{\G^\sharp(\A')}{W_{i+1}}} & {\G^\sharp (A')(\St)} \\
      {\coprod_{W_i} I} & \wordImage{\G^\sharp(\A')}{W_{i}} & {\G^\sharp (\A')(\St)}
      \arrow[two heads, from=1-1, to=1-2]
      \arrow["{\mathopen{}\left[\G^\sharp \left(\A'\right)(\triangleright w)\right]\mathclose{}_{w \in W_{i+1}}}", curve={height=-12pt}, from=1-1, to=1-3]
     \arrow[tail, from=1-2, to=1-3]
      \arrow[from=2-1, to=1-1]
      \arrow[two heads, from=2-1, to=2-2]
      \arrow["{\mathopen{}\left[\G^\sharp (\A')(\triangleright w)\right]\mathclose{}_{w \in W_i}}"', curve={height=12pt}, from=2-1, to=2-3]
      \arrow["{m_i}", dashed, tail, from=2-2, to=1-2]
      \arrow[tail, from=2-2, to=2-3]
      \arrow[Rightarrow, no head, from=2-3, to=1-3]
    \end{tikzcd}\] Applying $\F$, which sends $\E_\C$-morphisms to $\E_\D$-ones
  and $\M_\C$-morphisms to $\M_\D$-ones, and post-composing by the
  $\M_\C$-morphism $\counit_{\A'(\St)}: \F\G\A'(\St) \rightarrowtail \A'(\St)$, we
  get that $\F m_i$ is the diagonal fill-in of the diagram
\[\begin{tikzcd}
	{\coprod_{W_{i+1}} \F I} & {\F\mathopen{}\left(\G^\sharp(\A')(\triangleright W_{i+1})\right)\mathclose{}} & {\A'(\St)} \\
	{\coprod_{W_i} \F I} & {\F\mathopen{}\left(\G^\sharp(\A')(\triangleright W_i)\right)\mathclose{}} & {\A'(\St)}
	\arrow[two heads, from=1-1, to=1-2]
	\arrow["{[\A'(\triangleright w)]_{w \in W_{i+1}}}", curve={height=-12pt}, from=1-1, to=1-3]
	\arrow[tail, from=1-2, to=1-3]
	\arrow[from=2-1, to=1-1]
	\arrow[two heads, from=2-1, to=2-2]
	\arrow["{[\A'(\triangleright w)]_{w \in W_i}}"', curve={height=12pt}, from=2-1, to=2-3]
	\arrow["{\F m_i}", dashed, tail, from=2-2, to=1-2]
	\arrow[tail, from=2-2, to=2-3]
	\arrow[Rightarrow, no head, from=2-3, to=1-3]
\end{tikzcd}\]
and in particular $\F\mathopen{}\left(\G^\sharp(\A')(\triangleright W_i)\right)\mathclose{} \cong \A'(\triangleright W_i)$.
Because the first \textbf{while} loop has stopped, $\A'(\triangleright W_0) \cong \A'(\triangleright (W_0 \cup \{ w\sigma \}))$ for every $w \in W$ and $\sigma \in \Sigma$, and so in particular $\F m_0$ is an isomorphism ($W_1$ is obtained from $W_0$ by adding a single word). By \cite[Lemma B.4]{aristoteFunctorialApproachMinimizing2023b}, because every $W_i$ contains $W_0$, we also get for every $i$ that $\A'(\triangleright W_i) \cong \A'(\triangleright (W_i \cup \{ w\sigma \}))$ for every $w \in W$ and $\sigma \in \Sigma$, and so similarly $\F m_i$ is also an isomorphism.
\end{proof}

\section{Algorithmic Framework for Number Rings}
\label{app:algoALGF}

We first recall the Hermite Normal Form of integer matrices, which is key in computing with ideals in number rings.

\subsection{Hermite Normal Form of Integer Matrices}
\label{app:HNF}
  
The  Hermite Normal Form (HNF for short) of integer matrices is analogous to the (column) Echelon Form  for rational matrices. Similarly, 
the HNF has  applications in solving linear systems of equations over integers. Furthermore, it plays a key role in algorithmic algebraic number theory, particularly in  ideals computation,  module theory, and  linear algebra over integers and number rings.

An $n\times m$ integer matrix   in the (column) HNF  can be decomposed as an $n\times r$ zero matrix on the left, concatenated  by an $n \times (m-r)$  integer matrix~$H$ on the right as $
     \begin{pmatrix}
    \bm{0}_{n \times r} \, |\,   H_{n\times (m-r)} 
\end{pmatrix}
$, 
where for $H=[h_{i,j}]$, there is a strictly increasing map 
$f:\{1,\ldots,m-r\} \to \{1,\ldots,n\}$ which assigns each column $j$ of $H$ to a row index $f(j)$, such that
\begin{itemize}
    \item $h_{i,j}=0$ for all $i>f(j)$, %
    \item $h_{f(j),j}\geq 1$,
    \item $0\leq h_{f(i),j} < h_{f(i),i}$ for all $i<j$. 
\end{itemize}

We note in passing that if the matrix $A$ has rank $n$ over~$\QQ$, then 
$H$ will be an $n\times n$ upper triangular matrix with all diagonal entries equal to~$1$; in other words,  $f$ will be the identity map. 
In the general setting, since $f$ is strictly increasing necessarily  $m-r \leq n$ holds.
By~\cite[Theorem 2.4.3]{cohen2013course}, for all $n\times m$ integer matrices~$A$, there exists a unique $n\times m$ matrix~$B$ in HNF such that $B=AU$, where $U$ is an invertible $m\times m$ matrix with $\det(U)=\pm 1$. 
The matrix $H$, formed by the non-zero columns of $B$ as described  above, is called the HNF of~$A$.
The HNF computation of  integer matrices can be done in polynomial time, in the dimension of the matrix and the  bit length of matrix entries written in binary~\cite{domichHermiteNormalForm1987}. 

\medskip

\noindent \textbf{Applications of HNFs.}
Let $M\subseteq \ZZ^n$ be a $\ZZ$-module, and  let $A$ be a $n \times m$ matrix whose columns generates $M$. The columns of the HNF  of $A$ form a unique $\ZZ$-basis of $M$ such that the  associated matrix (of the basis) is in HNF. Denote by $\det$  the matrix determinant. 

\begin{lemma}
\label{lem:HNF}
   Let $M\subseteq \ZZ^n$ be a full-rank $\ZZ$-module, and $A$ an integer  matrix  whose columns generate~$M$. Then $|\ZZ^n/M|=\det(A)$.
\end{lemma}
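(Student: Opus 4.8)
The plan is to reduce the statement to the Hermite Normal Form of $A$ and then count cosets directly, so that the lemma becomes a clean instance of the familiar fact that the index of a full-rank sublattice equals the volume of a fundamental parallelepiped. First I would invoke \cite[Theorem~2.4.3]{cohen2013course} to get a unimodular $m \times m$ matrix $U$ with $AU = (\bm{0} \mid H)$, where $H$ is the HNF of $A$. Since passing from $A$ to $AU$ does not change the column span, $M$ is the $\ZZ$-span of the columns of $H$; the zero block contributes nothing, and the pivot structure of an HNF makes the columns of $H$ $\ZZ$-linearly independent. As $M$ has rank $n$, there must be exactly $n$ columns in $H$, so $H$ is $n \times n$, the strictly increasing pivot map $f$ is the identity on $\{1,\dots,n\}$, $H$ is upper triangular, and its diagonal entries $h_{11},\dots,h_{nn}$ satisfy $h_{ii} \ge 1$. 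Here $\det A$ is to be read as $\det H = \prod_{i=1}^n h_{ii}$; if $A$ is itself square then $AU = H$ and $|\det A| = |\det(AU)| = \det H$, so the two readings agree. It remains to prove $|\ZZ^n/M| = \prod_{i=1}^n h_{ii}$.

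For the "enough representatives" direction I would show that the box $B \coloneq \prod_{i=1}^n \{0,1,\dots,h_{ii}-1\}$ meets every coset. Given $v \in \ZZ^n$, I reduce coordinates from $n$ down to $1$: column $n$ of $H$ is the only one with a nonzero entry in row $n$ (namely $h_{nn}$), so subtracting an appropriate integer multiple of it moves the $n$-th coordinate into $\{0,\dots,h_{nn}-1\}$; column $n-1$ has a zero in row $n$ and $h_{n-1,n-1}$ in row $n-1$, so an appropriate multiple of it fixes the $(n-1)$-st coordinate without changing the $n$-th; iterating down to row $1$ yields an element of $B$ in the coset of $v$.

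For the "no redundancy" direction, suppose $b,b' \in B$ with $b-b' \in M$, say $b-b' = Hc$ with $c \in \ZZ^n$. Reading off row $n$ gives $b_n-b'_n = h_{nn}c_n$ with $|b_n-b'_n| < h_{nn}$, forcing $c_n = 0$; then row $n-1$ gives $b_{n-1}-b'_{n-1} = h_{n-1,n-1}c_{n-1}$, forcing $c_{n-1}=0$; proceeding upward gives $c = 0$, hence $b = b'$. Thus $B$ is a complete and irredundant system of coset representatives, so $|\ZZ^n/M| = |B| = \prod_{i=1}^n h_{ii} = \det H = \det A$.

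I do not expect a genuine obstacle; the only points requiring care are the bookkeeping in the triangular reduction (choosing the coordinate order so that each reduction step leaves the previously fixed coordinates untouched) and being explicit about what "$\det A$" means when $A$ is not square. As an alternative one could argue through the Smith Normal Form and the elementary divisor theorem, writing $\ZZ^n/M \cong \bigoplus_i \ZZ/d_i\ZZ$ with $\prod_i d_i$ equal to the absolute value of the relevant $n\times n$ minor of $A$, but the HNF argument is more self-contained and matches the surrounding material.
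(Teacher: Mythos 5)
Your proof is correct and follows essentially the same route as the paper's: pass to the HNF $H$ of $A$, observe that the box $\prod_{i=1}^n\{0,\dots,h_{ii}-1\}$ is a transversal of $\ZZ^n/M$, and conclude $|\ZZ^n/M|=\det H$. You supply more detail (both the surjectivity and injectivity of the coset map) and, usefully, flag that the lemma implicitly takes $A$ square and that $\det(A)$ should be read up to sign as $|\det A|=\det H$, a point the paper glosses over.
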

\begin{proof}
Denote by $H=[h_{i,j}]_{1\leq i,j\leq n}$  the HNF of~$A$. Since $H$ provides a $\ZZ$-basis for~$M$,  the image of $\ZZ^n$  under $H$, denoted by $\ZZ^nH$, is  $M$. The cosets of $\ZZ^nH$ in $\ZZ^n$ are $\bm{v}+\ZZ^nH$  where   the  $i$-th component of $\bm{v}\in \ZZ^n$ is in $\{0,\cdots,h_{i,i}-1\}$. Hence $|\ZZ^n/M|=\det(H)=\det(A)$.     
\end{proof}

The uniqueness of the HNF basis of modules allows to test whether  two input $\ZZ$-submodules of $\ZZ^n$ are equivalent. 
Given the linear system $A \bm{x}=\bm{b}$  of equations where $A$ and $\bm{b}$ have integer entries, clearly,
the variables $\bm{x}$ can be chosen integer if the $\ZZ$-modules generated by  columns of $A$ and by columns of $[A | \bm{b}]$ are equivalent. 
If so, the HNF of~$A$ combined with a simple backward substitution will give a simple way to compute the integer solutions~$\bm{x}$. 

\subsection{Computing in number rings}

The \emph{symbolic representation} of an algebraic number~$\alpha$  consists of its minimal polynomial~$m_{\alpha} \in \QQ[x]$ combined with a triple~$(a,b,R)\in \QQ^3$ such that~$\alpha$ is the unique root of~$m_{\alpha}$ that lies within the $R$-radius circle centered at $(a,b)$ in the complex plane~\cite[Section 4.2.1]{cohen2013course}.
The size of the symbolic representation of~$\alpha$ is the sum of the degree of its minimal polynomial and the bit length of all these numbers written in binary (where rational numbers are expressed as pairs of integers).

We analyze the complexity of our algorithms for a number field~$\KK$ with degree $d$ over~$\QQ$. 
Recall that an integral basis of $\OK$ is 
a $\ZZ$-basis $\{\omega_1,\cdots, \omega_d\}$ of $\OK$ as a $\ZZ$-module (that is,~$\OK=\ZZ \,\omega_1 \oplus \cdots \oplus \ZZ \, \omega_d$) and of $\KK$ as a $\QQ$-vector space (that is,~$\KK=\QQ\, \omega_1 \oplus \cdots \oplus \QQ \, \omega_d$).
In our algorithms, we  
assume that an integral basis $\Omega\coloneq\{\omega_1,\ldots,\omega_d\}$ of the ring of integer~$\OK$, with~$\omega_1=1$ and the $\omega_i$ represented symbolically, is provided a priori. 
For each $\omega_i$, we also assume that its \emph{regular representation}, the
$d\times d$-matrix $M_{i}$ corresponding to the multiplication map by~$\omega_i$ in the integral basis~$\Omega$, is given. 
There are  classical algorithms for the computation of integral bases for ring of integers from the primitive element of the field~\cite[Chapter 6]{cohen2013course}.

Let $\theta$ be a primitive element of $\KK$, that is, $\KK=\adjoinedff{\theta}$. Recall that   $[K:\QQ]=d$.
Then there are exactly $d$ distinct $\QQ$-linear embedding  $\sigma_i \, :\, \KK \to \mathbb{C}$ with  $i\in \{1,\ldots,d\}$.
Moreover, the
elements $\sigma_i(\theta)$ are precisely the $d$ distinct zeros  of the minimum polynomial $m_{\theta}$ of $\theta$. 
Given the  integral basis~$\Omega\coloneq\{\omega_1,\ldots,\omega_d\}$,  
the discriminant of $\KK$ is defined as
\[\Delta_{\KK}\coloneq\det \begin{pmatrix}
    \sigma_1(\omega_1) & \sigma_1(\omega_2) & \cdots & \sigma_1(\omega_d)\\
    \vdots & \vdots & \ddots & \vdots\\
    \sigma_d(\omega_1) & \sigma_d(\omega_2) & \cdots & \sigma_d(\omega_d)
\end{pmatrix}^2.
\]
The discriminant $\Delta_{\KK}$ is independent of the choice of~$\Omega$ and depends only on the number field~$\KK$. 
Define $C_{\KK}\coloneq d^{4} (\log d+ \log \Delta_{\KK})$ to be the \emph{complexity measure} of~$\KK$, where $\log$ is in based $2$. 

Since we use the main procedure in~\cite{biasseComputationHNFModule2017} for the computation of the HNF over $\OK$, we closely follow the algorithmic framework introduced therein. In particular, as required in~\cite{biasseComputationHNFModule2017}, we will need a primitive element~$\theta \in \OK$ to be given a priori, where its minimal polynomial $m_{\theta}=x^d+\sum_{i=0}^{d-1} a_i x^i$ satisfies that
\begin{itemize}
\item $\log(|\mathrm{disc}(m_{\theta})|)\leq C_{\KK}$, where
\[\mathrm{disc}(m_{\theta})\coloneq\prod_{1\leq i\leq j\leq d} |\sigma_i(\theta)-\sigma_j(\theta)|^2\,,\] 
    \item the bit length of $a_i$ is bounded by $C_{\KK}$. 
\end{itemize}
By a variant of the primitive element theorem~\cite{sonn1967theorem}, there exist $c_1, \ldots,c_d\in \{0,1\}$ such that 
$\sum_{i=1}^d c_i \omega_i$, with $\omega_i\in \Omega$, is a primitive element of $\KK$. It is shown in~\cite{biasseComputationHNFModule2017} that among such primitive elements there exists one that satisfies the required condition.

Given the integral basis~$\Omega$, an algebraic number $\alpha$ can be written uniquely as 
$\alpha\coloneq\sum_{1\leq i\leq d} c_i \omega_i$;   we call the vector of coefficients   
$[c_1 , \, \cdots \, , c_d] \in \QQ^d$ the \emph{vector representation} of $\alpha$.
The size of the vector representation of $\alpha$, denoted by $S(\alpha)$, is the sum of 
the bit length of the $c_i$'s. 
Clearly, if $\alpha \in \OK$ its vector  representation is in~$\ZZ^{d}$. 
In our algorithms we  work with vector representations of algebraic numbers.
The complexity measure $C_{\KK}$ of 
is defined such that 
\begin{equation}
\label{eq:norm}
\log(|\norm{\alpha}|) \leq  d (\log(C_{\KK})+S(\alpha))
\end{equation}
for all $\alpha\in \OK$; see~\cite[Inequalities~(1-3) on Page 9]{biasseComputationHNFModule2017} for more details. 

Given the vector representation of algebraic numbers in $\KK$ with respect to~$\Omega$, we adapt ~\cite[Proposition 4]{biasseComputationHNFModule2017} to show that 
for all~$\alpha,\beta\in K$ and $m\in \ZZ$, the following holds
\begin{enumerate}
    \item $S(m\alpha)=d \log(m)+S(\alpha)$,
    \item $S(\alpha\beta)=S(\alpha)+S(\beta)+C$
    \item $S(\frac{1}{\alpha})=d S(\alpha)+C$,
    \item $S(\alpha+\beta) \leq \max(S(\alpha),S(\beta))+d$, if  $\alpha+\beta \in \OK$,
\end{enumerate}
where $C_{\KK}$ is the complexity measure of~$\KK$. We note in passing that the constant $C_{\KK}$ accounts for the effect of~$\Omega$ in the computation. These operations on algebraic numbers can be performed in polynomial time in the complexity measure~$C_{\KK}$.

Following~\cite{biasseComputationHNFModule2017} the fractional ideals~$\mathfrak{a}$ of $\KK$, as  free $\ZZ$-modules of rank~$d$, are represented by a $\ZZ$-basis matrix $N_{\mathfrak{a}} \in \QQ^{d\times d}$. 
For algorithmic purposes, the matrix $N_{\mathfrak{a}}$ is assumed to be in HNF. 
The size of $\mathfrak{a}$ is  the sum of the bit length of all coefficients in~$N_{\mathfrak{a}}$ (plus the overhead due to symbolic representation of the integral basis). 
The sum of two fractional ideals $\mathfrak{a}$ and $\mathfrak{b}$ is defined as $\mathfrak{a} + \mathfrak{b} \coloneq\{ a+b \, |\,  a \in\mathfrak{a} , b \in \mathfrak{b} \}$. It  can be computed as the sum of $\ZZ$-modules, whose HNF representation is obtained by the computation of the HNF of the matrix $[N_{\mathfrak{a}} \, |\, N_{\mathfrak{b}}]$. For integral ideals, the ideal $\mathfrak{a} + \mathfrak{b}$ can be seen as the greatest common divisor of $\mathfrak{a}$  and $\mathfrak{b}$.
The product $\mathfrak{a} \mathfrak{b}\coloneq\{ ab | a \in\mathfrak{a} , b \in \mathfrak{b} \}$ can similarly be computed through the HNF representation of the $\ZZ$-modules of $\mathfrak{a}$ and $\mathfrak{b}$. 
The above ideal operations in $\OK$  can be performed in polynomial time in the complexity measure~$C_{\KK}$.

The norm of an integral ideal~$\mathfrak{a}$, denoted by $\norm{\mathfrak{a}}$, is defined as $|\OK/\mathfrak{a}|$; it can be computed in polynomial time as the absolute value of $\det(N_{\mathfrak{a}})$, where $\det$ denotes the determinant of the input matrix. Since the norm is multiplicative, the norm of $\mathfrak{a}^{-1}$ is $\norm{\mathfrak{a}}^{-1}$. 
Since all fractional ideals can be written as $\mathfrak{a}\mathfrak{b}^{-1}$, where $\mathfrak{a}$ and $\mathfrak{b}$ are integral,
their norm can be computed in polynomial time as well.

\section{Proofs for \texorpdfstring{\Cref{sec:computational}}{Section 6}}
\label{app:modules-numberring}

\begin{algorithm*}[t]
    \caption{Computing generators of the forward module or a counterexample.}
    \label{alg:compute-OK-generators}
    \begin{algorithmic}[1]
        \REQUIRE a $\KK$-weighted automaton $\A=(Q,\A(\triangleright),(\A(\sigma))_{\sigma\in\Sigma},\A(\triangleleft))$ with $m$ states
        \STATE $W = \{ \varepsilon \}$
        \COMMENT{Finding words that increase the rank}
        \WHILE{there is $(w,\sigma) \in  W_B \times \Sigma$ such that $\A(\triangleright w\sigma )  \notin \langle \A(\triangleright u ) \mid u \in W_B \rangle_{\KK}$} \label{alg:compute-OK-generators:line:begin-1st-while}
        \STATE $W \coloneq W \cup \{ w\sigma \}$
        \COMMENT{alternatively,  check whether $\A( \triangleright w \sigma  ) \notin \OK^m$}
        \IF{$\A(\triangleright w \sigma \triangleleft ) \notin \OK$}
        \RETURN $w \sigma$
        \ENDIF
        \ENDWHILE \label{alg:compute-OK-generators:line:end-1st-while}
        \COMMENT{Finding words that augment the module}
        \WHILE{there is $(w,\sigma) \in W_B \times \Sigma$ such that $\A(\triangleright w \sigma)  \notin \langle \A(\triangleright u) \mid u \in W_B \rangle_{\OK}$} \label{alg:compute-OK-generators:line:begin-2nd-while}
        \STATE $W \coloneq W \cup \{ w\sigma \}$
        \COMMENT{alternatively, check whether $\A( \triangleright w\sigma ) \notin \OK^m$}
        \IF{$\A(\triangleright w\sigma \triangleleft ) \notin \OK$}
        \RETURN $w \sigma$
        \ENDIF
        \ENDWHILE \label{alg:compute-OK-generators:line:end-2nd-while}
        \RETURN $W$
    \end{algorithmic}
\end{algorithm*}

\lemisoJJp*

\begin{proof}[Sketch of the proof of \Cref{lem-isoJJp}]
Since for all fractional ideals $\mathfrak{d}$ there exists an integer $r\in \ZZ$ such that $r\mathfrak{d}$ is integral, without loss of generality, we can assume that $\mathfrak{a}$ and $\mathfrak{b}$ are integral.
Recall that two ideals   are coprime if their sum is the whole ring.  
By~\cite[Corollary 1.2.11]{cohen2012advanced}
there exists a non-zero $\alpha \in \KK$ such that $\alpha \mathfrak{a}$ is integral and coprime to $\mathfrak{b}$.
Again, without loss of generality, we can assume that $\mathfrak{a}$ and $\mathfrak{b}$ are coprime, meaning that $\mathfrak{a} + \mathfrak{b}=\OK$. 
By~\cite[Proposition 1.3.12]{cohen2012advanced} we can find elements 
\[a\in \mathfrak{a}  \qquad b\in \mathfrak{b}  \qquad   c\in \mathfrak{b}^{-1}  \qquad d \in \mathfrak{a}^{-1}\]  such that 
$ad-bc=1$. 
The proof follows by observing that 
\[ \begin{pmatrix}
\OK & (\mathfrak{a}  \mathfrak{b})^{-1}
\end{pmatrix} = \begin{pmatrix}
\mathfrak{a}^{-1}  & \mathfrak{b}^{-1}
\end{pmatrix} \begin{pmatrix}
a& c\\
b& d
\end{pmatrix}\,. \]
\end{proof}

\lempseudobasisgen*
\begin{proof}
To achieve this complexity bound,  we use the ideal factor refinement algorithms, introduced for integers in~\cite{bach1993factor}  and generalized to number fields in~\cite{ge1994recognizing}.
The factor refinement of integers computes, for inputs $a_1, \ldots, a_k \in \ZZ$,
 a set of pairwise-coprime factors $m_1,\ldots , m_{\ell}\in  \ZZ$  of the $a_i$’s such that each $a_i$ can be written as a product
 of these factors, that is, $a_i=\prod_{j=1}^{\ell} m_j^{e_j}$ with the $e_j \in \NN$. Denoting by $a =\mathrm{lcm}(a_1, \ldots , a_k)$, the factor refinement algorithm runs in time $\mathcal{O}(\log^2(a))$; see \cite{bach1993factor} and~\cite[Lemma 3.1]{blomer1998probabilistic}. The ideal factor refinement~\cite[Algorithm 5.6]{ge1994recognizing} is a generalization, which, given 
  input ideals $\mathfrak{I}_1, \ldots, \mathfrak{I}_k \subseteq \OK$, 
  computes 
  pairwise-coprime ideals $\mathfrak{m}_1,\ldots , \mathfrak{m_{\ell}} \subseteq \OK$
 such that each $\mathfrak{I}_i$ can be written as a product of these ideal factors, that is, $\mathfrak{I}_i=\prod_{j=1}^{\ell} \mathfrak{m}_j^{e_j}$ with the $e_j \in \NN$.  This algorithm runs in polynomial time in the size of the input ideals in their HNF representation, and in the complexity measure of~$\KK$~\cite[Proposition 5.7]{ge1994recognizing}. We refer the reader to~\cite[Section 5]{ge1994recognizing} and \cite{buchmann1999factor} for more details\footnote{
Algorithm 5.6 in~\cite{ge1994recognizing} is stated for orders of $\OK$: it computes a factor refinement of input ideals of an order $\mathcal{O}$ into a larger order $\mathcal{O'} \subseteq \OK$.  
 }.

\begin{algorithm}[t]
    \caption{Computing an (almost) minimal generating set from a pseudo-basis}
    \label{alg-psudobasis-generator}
    \begin{algorithmic}[1]
        \REQUIRE A pseudo-basis   $\{(\mathfrak{a}_i,\bm{v_i})| 1\leq i\leq \ell\}$ defining an $\OK$-module  $M$ 

        \COMMENT{Iteratively apply a constructive version of \Cref{lem-isoJJp} relying on ideals factor refinement and the Chinese Remainder Theorem}
        \STATE \label{alg-psudobasis-generator:line-compute-generating-set} compute a pseudo-generating set for $M$ in the form  
        \[ \{(\OK,\bm{y_i})| 1\leq i\leq \ell-1\} \, \cup \,  \left\{\left(\prod_{i=1}^{\ell} \,\mathfrak{a}_i, \bm{z}\right)\right\} \]
        for some $\bm{z} \in \OK^n$. 

        \COMMENT{By ideal factor refinement}
        \STATE \label{alg-psudobasis-generator:line-compute-two-generators} find two elements $x_1$ and $x_2$ such that $\prod_{i=1}^{\ell} \mathfrak{a}_i=x_1 \OK + x_2 \OK$. 
        \RETURN the generating set~$\{\bm{y_i} \mid 1\leq i\leq \ell+1\}$  where 
       \[\bm{y_{\ell}} \coloneq x_1 \bm{z} \qquad \text{ and } \qquad \bm{y_{\ell+1}} \coloneq x_2 \bm{z}\]
    \end{algorithmic}
\end{algorithm}

The steps required to carry out the task at hand -- computing a generating set for $M$ of cardinality at most $n+1$ -- are shown in~\Cref{alg-psudobasis-generator}. 
\underline{First}, on \Cref{alg-psudobasis-generator:line-compute-generating-set}, to compute a pseudo-generating set in the form
\[ \{(\OK,\bm{y_i})| 1\leq i\leq n-1\} \, \cup \,  \left\{\left(\prod_{i=1}^n \,\mathfrak{a}_i, \bm{z}\right)\right\} \, ,\] we inductively apply the linear transformation constructed in the proof of~\Cref{lem-isoJJp}. That is, given fractional ideals~$\mathfrak{a}$ and~$\mathfrak{b}$ we find elements 
    \begin{equation}
    \label{eq:abcd}
        a\in \mathfrak{a} \qquad \qquad b\in \mathfrak{b} \qquad \qquad   c\in \mathfrak{b}^{-1} \qquad \qquad d \in \mathfrak{a}^{-1}
    \end{equation}  such that 
$ad-bc=1$.
    Then  $\mathfrak{a} \bm{x}+ \mathfrak{b} \bm{y}=
\OK \bm{x}'+ \mathfrak{ab} \bm{y}'$ where
\[ \begin{pmatrix}
\bm{x}' & \bm{y}'
\end{pmatrix} = \begin{pmatrix}
\bm{x}  & \bm{y}
\end{pmatrix} \begin{pmatrix}
a& c\\
b& d
\end{pmatrix}\,,\]
see~\cite[Corollary 1.3.6]{cohen2012advanced}
for more details.
    The crucial argument in the complexity analysis of this transformation is to show that the elements $a,b,c$ and $d$ satisfying the required conditions can be found in polynomial time. 
Write~$d$ for a common denominator of the generators of $\mathfrak{a}$, and define $\mathfrak{a'}\coloneq d\mathfrak{a}$. Observe that $\mathfrak{a'}$ is integral.
    Repeat this  for $\mathfrak{b}$ and define the integral ideal $\mathfrak{b'}$, analogously.
Next, we adapt~\cite[Lemma 5.2.2]{stein2012algebraic} to find $a \in \mathfrak{a'}$ such that $a \mathfrak{a'}^{-1}$ is integral and coprime to~$\mathfrak{b'}$, but through ideal factor refinement rather than the prime ideal factorization used in the proof therein.  More precisely, by~\cite[Algorithm 5.6]{ge1994recognizing}, we compute pairwise-coprime ideals $\mathfrak{m}_1,\ldots , \mathfrak{m_{\ell}} \subseteq \OK$ such that $\mathfrak{a'}$ and $\mathfrak{b'}$ can be written as products of these ideals.   
 We assume without loss of generality that all factors of $\mathfrak{b'}$ appear among~$\mathfrak{m}_1,\ldots, \mathfrak{m}_r$ for some $r\leq \ell$, that is, 
      \[\mathfrak{a'}\coloneq\prod_{1\leq i \leq \ell} \mathfrak{m}_i^{e_i} \qquad \text{ and } \qquad \qquad \mathfrak{b'}\coloneq\prod_{1\leq i \leq r} \mathfrak{m}_i^{g_i} \,,\]
         with the $e_i,g_i \in \NN$. 
For all $i\in \{1,\cdots,r\}$, let $a_i$ be an element such that $a_i\in \mathfrak{m}_i^{e_i} \setminus \mathfrak{m}_i^{e_i+1}$.
By the generalized Chinese Remainder Theorem~\cite[Theorem 5.1.4]{stein2012algebraic}, there exists an element $a\in \OK$ such that 
\begin{align*}
    a & \equiv a_1 \qquad \pmod{\mathfrak{m}_1^{e_1+1}}\\
    & \vdots\\
    a & \equiv a_r \qquad \pmod{\mathfrak{m}_r^{e_r+1}}\\
    a &\equiv 0 \qquad \pmod{\prod_{r\leq i\leq \ell}\mathfrak{m}_i^{e_i}} 
\end{align*}
Following an identical reasoning to~\cite[Lemma 5.2.2]{stein2012algebraic}, $a\mathfrak{a'}^{-1}$ is integral and coprime to~$\mathfrak{b'}$. 
The procedure of finding $a$ can be done within polynomial time through the HNF representation of ideals~$\mathfrak{m}_1^{e_1}$ and $\mathfrak{m}_1^{e_1+1}$. Moreover,  given the coprime ideals $a \mathfrak{a'}^{-1}$ and ~$\mathfrak{b'}$,  we can find $e\in a\mathfrak{ a'}$ and $b\in \mathfrak{b'}$ such that $e+b=1$. Clearly,
$c=-1$ and $d=e /a$ satisfy \eqref{eq:abcd} and $ad-bc=1$. 
     
\underline{Second}, on \Cref{alg-psudobasis-generator:line-compute-two-generators}, we find two elements $x_1$ and $x_2$ such that $\prod_{i=1}^n \mathfrak{a}_i=x_1 \OK + x_2 \OK$. For this task, a randomized polynomial-time procedure in~\cite[Alg 1.3.15]{cohen2012advanced} is given.
To perform this task in polynomial time, we follow~\cite[Proposition  5.2.3]{stein2012algebraic} and again rely on ideal factor refinement. 
Given an an element~$x_1$ of $\prod_{i=1}^n \mathfrak{a}_i$, 
we find $x_2 \in I$ such that $x_2(\prod_{i=1}^n \mathfrak{a}_i)^{-1}$ is integral and coprime to $x_1 \OK$ (we follow the detailed procedure explained above for \Cref{alg-psudobasis-generator:line-compute-generating-set}). 
Clearly, the equality $\prod_{i=1}^n \mathfrak{a}_i = x_1 \OK + x_2 \OK$ holds. 

\end{proof}

\lemlengthdchain*
\begin{proof}
By the properties of the pseudo-HNF of~$A$ and of ideals~$\mathfrak{a}_i$, the ideal
$\mathfrak{d}$ is integral. 
Below, we argue that $|\OK^n/M|=|\OK / \mathfrak{d}|$. Having shown this, the definition of ideal norm gives that 
$|\OK^n/M|=\norm{\mathfrak{d}}$. Since $\ZZ^n$ is commutative, all its subgroups are normal, and we may in particular consider the group quotients $\OK^n/M_i$. By the third isomorphism theorem,
 each $\OK^n/M_i$ is a subgroup of $\OK^n/M_{i+1}$. By Lagrange’s Theorem, the order of a subgroup of a finite group is always a
divisor of the order of the group, so that $|\OK^n/M|$ strictly divides $|OK^n/M_2|$, which strictly divides in turn $|\OK^n/M_3|$, etc. all the way up to $|\OK^n/M_k|$ (the divisibility is strict because the inclusions are strict). It follows that $|\OK^n/M| \ge 2^k$, and the claimed bound follows.

It remains to prove that $|\OK^n/M|=|\OK / \mathfrak{d}|$. 
Using the pseudo-HNF, there exists $U$ and fractional ideals~$\mathfrak{c}_1, \ldots, \mathfrak{c}_n$ such that $AU=H$ and 
$M=\mathfrak{c}_1 \bm{h}_1 \oplus \ldots \oplus \mathfrak{c}_n \bm{h}_n$  where the $\bm{h_i}$ is the $i$-th column of $H$.
 Furthermore,  $\prod_{i=1}^n \mathfrak{c}_i=\mathfrak{d}$ (recall ~\cite[Definition 1.4.8 and Theorem 1.4.9]{cohen2012advanced}). 
Recall from the proof of~\Cref{lem-isoJJp} that, for all fractional ideals $\mathfrak{a}$ and $\mathfrak{b}$, 
there are elements 
\[a\in \mathfrak{a}  \qquad b\in \mathfrak{b} \qquad    c\in \mathfrak{b}^{-1}  \qquad d \in \mathfrak{a}^{-1}\]  such that 
$ad-bc=1$. Then  $\mathfrak{a} \bm{x}+ \mathfrak{b} \bm{y}=
\OK \bm{x}'+ \mathfrak{ab} \bm{y}'$ where
\[ \begin{pmatrix}
\bm{x}' & \bm{y}'
\end{pmatrix} = \begin{pmatrix}
\bm{x}  & \bm{y}
\end{pmatrix} \begin{pmatrix}
a& c\\
b& d
\end{pmatrix}\,,\]
see~\cite[Corollary 1.3.6]{cohen2012advanced}
for more details. In particular, extending this inductively, there are vectors 
$\{\bm{w_i} |  1\leq i\leq n\}$ such that the
module $M$ can be written as $\OK \bm{w_1} \oplus \cdots \oplus \OK \bm{w_{n-1}} \oplus \mathfrak{d} \bm{w_{n}}$ and hence $|\OK^n/M|=|\OK / \mathfrak{d}|$ indeed.  
\end{proof}

\complexAlg*

\begin{proof}
Let $\A=(Q,\A(\triangleright),(\A(\sigma))_{\sigma\in\Sigma},\A(\triangleleft))$ be the input automaton to these algorithms. 
We analyze the complexity of the algorithms under the assumption that the
entries of $\A(\triangleright)$, $\A(\triangleleft)$ and $\A(\sigma)$ for $\sigma\in \Sigma$ are given in their vector representations with respect to the fixed integral basis $\Omega=\{w_1=1, \ldots, w_d\}$ of $\OK$. 
Write all fractions in these vector representations over a common dominator. Consider the max of the numerators and the denominator of the resulting fractions, and denote by $B_\A$ its bit length, which is polynomially bounded in the size (of the vector representation) of $\A$.

Recall that we can assume that the forward-module of the automaton $\A$ is full-rank inside $\A(\St)$, because minimization is a polynomial-time procedure. This is important because it then allows us to restrict to the use of full-rank pseudo-HNFs: the non-full-rank pseudo-HNFs are not proven to be computable in polynomial-time in \cite{biasseComputationHNFModule2017}, they are only claimed so.

\bigskip

\noindent \textbf{Complexity Analysis of  \Cref{alg:compute-OK-generators}:} Recall that we write $m$ for the numbers of states of $\A$. We consider the entries of $\A(\triangleright)$ and the $\A(w)$'s, where $w$ ranges over all words in $\Sigma^*$, as integers in~$\OK$ divided by a common rational integer in $\ZZ$. By a simple induction on the length of the words, for $\A(\triangleright w) = \frac{1}{q}[a_1 \cdots a_n]$, 
\begin{itemize}
    \item  the bit length of  the common  denominator $q\in \ZZ$ is  bounded by $B_\A(|w|+1)$, and 
    \item  $S(a_i)\leq (|w|+1)(B_\A+C_{\KK}+md)$ for all $i\in \{1,\ldots,n\}$,
\end{itemize}
 where $C_{\KK}$ is the complexity measure of $\KK$.

Let $W$ be the set of words found in the first \textbf{while} loop on \Crefrange{alg:compute-OK-generators:line:begin-1st-while}{alg:compute-OK-generators:line:end-1st-while}, so that $\{\A(\triangleright w) | w\in W\}$ is a basis for the forward $\KK$-vector space of $\A$. Clearly, the dimension of the forward $\KK$-vector space is at most~$m$ and thus 
the first \textbf{while} loop iterates at most~$m$ times. Moreover, 
 every word in $W$ has length at most~$m-1$. 
 At the end of the loop (if it wasn't interrupted) the entries of $\A(\triangleright w)$ for $w\in W$ are in $\OK$.
 As computed above, $S((\A(\triangleright w)_i)$ is bounded by 
 $m(B_\A+C_{\KK}+md)$.

Let $W=\{w_1=\varepsilon,\ldots, w_\ell \}$ for some $\ell \leq m$. Define $M\coloneq\OK \, \A(\triangleright w_1) \oplus \ldots \oplus \OK \, \A(\triangleright w_\ell)$.
Let $A$ be the $\ell \times m$ matrix whose $i$-th column is $\A(\triangleright w_i)$.
Let $g$ be the sum of all $\ell \times \ell$ minor ideals of $A$. 
By~\Cref{lem:lengthdchain}, all strictly increasing chains of $\OK$-modules $M_1\subset M_2 \subset \cdots  \subset M_k$
    with $M=M_1$ 
have length $k$ bounded by the number of prime divisors of $\norm{g}$.
The entries $a_{i,j}$ of $A$ are in $\OK$, and $S(a_{i,j})$ is bounded by 
 $m(B_\A+C_{\KK}+md)$, 
as computed above. 
Given an $\ell \times \ell$ matrix with  such entries the determinant $D$ can be computed as a summation of $\ell$
terms, each being a product of $\ell$ distinct matrix entries. To avoid bit-explosion, we perform the $\ell!$ summation in a binary-tree approach:
first, summations are done in pairs; then, the sums of these pairs are recursively summed to form the parent nodes. This process continues until a single summation (the determinant) is computed.
Then \[S(D) \leq d \ell \log \ell +\ell^2(B_\A+C_{\KK}+\ell d)+ \ell C \leq d\ell^2(B_\A+2C_{\KK}+2\ell d)\,.\]
The greatest common   divisor $g$ of all $\ell \times \ell$ minors of $A$  also satisfies $S(g) \leq d\ell^2(B_\A+2C_{\KK}+2\ell d)$.
By inequality~\eqref{eq:norm},
\[\log(|\norm{g}|) \leq  d (\log(C_{\KK})+S(g))\,\]
which is polynomial in the complexity measure~$C_{\KK}$ of the number field, 
the number of states $m$ of the input automaton, and the bit length~$B_\A$ obtained from the vector representation of  entries  in $\A(\triangleright)$, $\A(\triangleleft)$ and the $\A(\sigma)$  over a common dominator.  
Hence, the second \textbf{while} loop  at Line 11 iterates polynomial many times in 
our input size.

The manipulation of the algebraic numbers throughout the algorithms is in polynomial time in $C_{\KK}$, $B_\A$ and $m$.
It remains to  analyze the complexity of the test whether 
$\A(\triangleright w \sigma) \notin \langle \A(\triangleright u) \mid u \in W\rangle_{\OK}$ at Line 11. 
For this, we test whether two modules
\[M\coloneq\bigoplus_{u\in W} \OK \, \A(\triangleright u) \qquad \text{and} \qquad 
N\coloneq M + \OK \, \A(\triangleright w\sigma) \]
are equivalent. 
This test can be performed through a pseudo-HNF computation. Indeed, if $M \subsetneq N$ then~$|\OK^m/M|$ is strictly larger than~$|\OK^m/N|$. Now, on the one hand, the forward module of $\A'$ is full-rank (inside $\A'(\St) \cong \OK^m)$, and, on the other hand, the rank of the module spanned by the vectors corresponding to words in $W$ at the end of the first \textbf{while} loop is full-rank (inside the forward module of $\A'$). It follows that the two modules $M$ and $N$ above are full-rank, and that we can use a full-rank pseudo-HNF to check this: this is thus a polynomial-time operation, as proved in the main Theorem of~\cite{biasseComputationHNFModule2017}. 
This concludes the complexity analysis of \Cref{alg:compute-OK-generators}.

\bigskip

\noindent \textbf{Complexity Analysis of \Cref{alg!make-automaton-integral}:}
Let $n$ be the number of states of the input~$\A$. 
The manipulation of the algebraic numbers throughout the algorithms is in polynomial time in $C_{\KK}$, $B_\A$ and $n$.
The computation of the pseudo-basis on \Cref{alg!make-automaton-integral!line!pseudo-basis} can be done through a full-rank pseudo-HNF computation (because the pseudo-basis is that of the full-rank forward module of $\A'$), which was placed in polynomial time in the main Theorem of~\cite{biasseComputationHNFModule2017}. By~\Cref{lem:pseudobasisgen} a generating family of cardinality of at most $n+1$ can be constructed from the pseudo-basis in polynomial time. The claimed complexity bound for \Cref{alg!make-automaton-integral} follows.

\end{proof}

\decidingMinimalityPIPHard*

Recall that for an integral domain $R$ and two $R$-modules $M \subseteq N$, $M$'s saturation into $N$, written $\sat_N M$, is the intersection $\localize M \cap N$ (within $\localize N$). $M$ is \emph{saturated} into $N$ when $\sat_N M = M$, or, in other words, when for every $n \in N$ such that $\lambda n \in M$ for some non-zero $\lambda \in R$, then $n \in M$ as well.

\begin{proof}
    Given an ideal $\mathfrak{a} \subseteq \OK$, we construct in polynomial time an $\OK$-WA that is state-minimal if and only if $\mathfrak{a}$ is not principal.

    By \Cref{lem:pseudobasisgen}, compute first, in polynomial time a size-$2$ generating set $\{x,y\}$ of $\mathfrak{a}$, so that $\mathfrak{a} = x \OK + y \OK$. Compute moreover, again in polynomial time, $\mathfrak{a}^{-1}$ \cite[Proposition 13]{biasseComputationHNFModule2017} and the isomorphism $\mathfrak{a} \oplus \mathfrak{a}^{-1} \cong \OK \oplus \OK$ (as shown to be possible in the proof of \Cref{lem:pseudobasisgen}). Let $\begin{pmatrix}
        x_1 \\ x_2
    \end{pmatrix}$ and $\begin{pmatrix}
        y_1 \\ y_2
    \end{pmatrix}$ be the respective images of $x$ and $y$ under this isomorphism. Construct finally the $\OK$-WA $\A$ on the alphabet $\Sigma = \{a,b\}$ depicted in \Cref{fig:ok-wa}.
    \begin{figure}[h]
        \centering
        \includegraphics[width=0.25\linewidth]{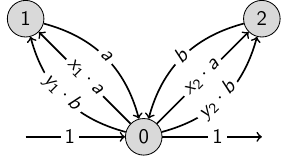}
        \caption{An $\OK$-WA encoding the ideal $\mathfrak{a}$.}
        \label{fig:ok-wa}
    \end{figure}

    Let $L\in\powser{\OK}{a,b}$ be the language recognized by $\A$. Writing $L_1$ and $L_2$ for the languages recognized by $\A$ starting from states $1$ and $2$, we have \begin{align*}
        a^{-1} L &= x_1 L_1 + x_2 L_2
        &b^{-1} L = y_1 L_1 + y_2 L_2
    \end{align*}
    Moreover $L_1$ is zero on $b\{a,b\}^*$ while $L_2$ is zero on $a\{a,b\}^*$, hence $L_1$ and $L_2$ are linearly independent and \[ \OK \cdot (x_1 L_1 + x_2 L_2) + \OK \cdot (y_1 L_1 + y_2 L_2) \cong \mathfrak{a} \]
    Note that $L$ is zero on words of odd length, so that $\OK \cdot L \cap (\OK \cdot a^{-1} L + \OK \cdot b^{-1} L) = \{0\}$. Furthermore, $w^{-1}L \in \OK \cdot L$ for any $w \in \{aa,ab,ba,bb\}$. It follows that the module spanned by the rows of the Hankel matrix of $L$ is the direct sum 
    \[ \OK \cdot L \oplus [\OK \cdot (x_1 L_1 + x_2 L_2) + \OK \cdot (y_1 L_1 + y_2 L_2)] \] 
    which is in turn isomorphic to $\OK \oplus \mathfrak{a}$. Therefore, $\Min L$ has rank $2$. 

    We now show that this module, $\Min L$, is saturated into $\powser{\OK}{a,b}$. We have that $\Min L \subseteq \OK \cdot L \oplus \OK \cdot L_1 \oplus \OK \cdot L_2$.  Now $\OK \cdot L \oplus \OK \cdot L_1 \oplus \OK \cdot L_2$ is saturated into $\powser{\OK}{a,b}$: if $\lambda f = \alpha L + \beta L_1 + \gamma L_2$ where $\lambda, \alpha, \beta, \gamma \in \OK$ and $\lambda$ is non-zero, then it is easy to see that $f = f(\varepsilon) L + f(a)L_1 + f(b) L_2$. We therefore only need to show that $\Min L$ is saturated into $\OK \cdot L \oplus \OK \cdot L_1 \oplus \OK \cdot L_2$. But under the isomorphism $\mathfrak{a} \oplus \mathfrak{a}^{-1} \cong \OK \cdot L_1 \oplus \OK \cdot L_2$, this is equivalent to saying that $\OK \oplus \mathfrak{a}$ is saturated into $\OK \oplus \mathfrak{a} \oplus \mathfrak{a}^{-1}$, which is obviously true since $\OK \oplus \mathfrak{a}$ and $\mathfrak{a}^{-1}$ are in direct sum there. $\Min L$ is thus indeed saturated into $\powser{\OK}{a,b}$.

    We now have all the ingredients to conclude. Any $2$-state automaton $\B$ computing $L$ must be such that $\Min L \subseteq \B(\St) \subseteq \sat_{\powserw{\OK}}(\Min L)=\Min L$ by \Cref{lemma:sandwich-minimal-automata} below, and hence $\OK^2 = \B(\St) \cong \Min L \cong \OK \oplus \mathfrak{a}$. This holds only if $\mathfrak{a}$ is principal, and hence such a $\B$ may thus only exists when $\mathfrak{a}$ is principal. Conversely if $\mathfrak{a}$ is principal the minimal $\OK$-modular automaton recognizing $L$, having state-space $\OK \oplus \mathfrak{a} \cong \OK^2$, is in fact a $2$-state $\OK$-WA computing $L$.
\end{proof}

\begin{lemma}
    \label{lemma:sandwich-minimal-automata}
    Let $R$ be an integral domain with field of fractions $\KK$, and let $L: \Sigma^* \to R$ be a function whose minimal $\KK$-weighted automaton has $n$ states. If $\B$ is an rank-$n$ $R$-modular automaton computing $L$ (this includes in particular $n$-state $R$-weighted automata), then $\Min L \subseteq \B(\St) \subseteq \sat_{\powserw{\OK}}(\Min L)$.
\end{lemma}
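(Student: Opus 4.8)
The plan is to compare $\B$ with the $\KK$-weighted automaton $\A \coloneq \localize{\B}$ obtained by extension of scalars. Since $\B$ computes $L$ and the scalar $\A(\triangleright w\triangleleft)(1)=\B(\triangleright w\triangleleft)(1)=L(w)$ is unchanged by localization, $\A$ computes $L_\KK$, and $\dim\A(\St)=\operatorname{rank}\B(\St)=n$. As $n$ is, by hypothesis, the number of states of the minimal $\KK$-weighted automaton computing $L_\KK$, no $\KK$-WA can do it with fewer states, so $\A$ has the minimal number of states; a dimension count through $\Reach\A$ and $\Obs\A$ (over a field, dimensions can only drop under $\Obs$ and $\Reach\A\subseteq\A(\St)$, $\Reach(\Obs\A)=\Min(L_\KK)$) then forces $\A$ to be both reachable ($!_\A$ an $\E$-morphism) and observable ($\flip_\A$ an $\M$-morphism, i.e. injective).

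Next I would identify $\B(\St)$ with a submodule of $\powserw{R}$. The underlying linear map $\flip_\B\colon\B(\St)\to\powserw{R}$ of the unique morphism $\B\to\Afinal(L)$ localizes, up to the embedding $\localize{(\powserw{R})}\hookrightarrow\powserw{\KK}$, to $\flip_\A$, which is injective by the previous paragraph. Since $\powserw{R}$ is torsion-free, $\ker\flip_\B$ consists of torsion elements, so when $\B(\St)$ is torsion-free — in particular when $\B$ is $R$-weighted, so $\B(\St)$ is free — the map $\flip_\B$ is injective and identifies $\B(\St)$ with the submodule $\Obs\B\subseteq\powserw{R}$ (in the remaining generality one reads $\Obs\B$ for $\B(\St)$ throughout). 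From now on $\B(\St)=\Obs\B\subseteq\powserw{R}$.

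The lower inclusion $\Min L\subseteq\B(\St)$ holds for any $\B$ computing $L$, with no rank hypothesis: by \Cref{con:minimal}, $\Min L\cong\Reach(\Obs\B)$, which is the $(\E,\M)$-image of the unique morphism $\Ainit(L)\to\Obs\B$, hence an $\M$-subobject of $\Obs\B$; composing with the embedding $\Obs\B\rightarrowtail\Afinal(L)$ and using uniqueness of factorizations, this subobject is precisely the canonical embedding of $(\Min L)(\St)$ into $\powserw{R}$, so $\Min L\subseteq\Obs\B$ as submodules of $\powserw{R}$. For the upper inclusion I would use that $\Min L$ and $\Obs\B$ both have rank $n$: by \Cref{prop:minimality-preservation} (applicable via \Cref{lemma:extension-of-scalars-left-adjoint}) one has $\localize{(\Min L)}\cong\Min(L_\KK)(\St)\cong\KK^n$, while $\localize{(\Obs\B)}$ is the image of $\flip_\A=\localize{\flip_\B}$, i.e. $\Obs(\A)(\St)\cong\A(\St)$, of dimension $n$ since $\A$ is observable. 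Localizing $\Min L\subseteq\Obs\B$ then gives two $n$-dimensional subspaces of $\powserw{\KK}$, one contained in the other, hence equal; therefore $\B(\St)=\Obs\B\subseteq\localize{(\Obs\B)}\cap\powserw{R}=\localize{(\Min L)}\cap\powserw{R}=\sat_{\powserw{R}}(\Min L)$, which finishes the proof.

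The main obstacle I anticipate is bookkeeping rather than conceptual: I must be sure that the canonical embeddings of $\Min L$ and of $\Obs\B$ into $\powserw{R}=\Afinal(L)(\St)$ are genuinely compatible, so that ``$\Min L\subseteq\Obs\B$'' is a literal inclusion of submodules and not merely an isomorphism, and that extension of scalars commutes on the nose with $\Reach$, $\Obs$ and $\Min$. Both follow from $\localize{-}$ being exact — hence image- and colimit-preserving — together with uniqueness of $(\E,\M)$-factorizations, but they deserve to be spelled out carefully.
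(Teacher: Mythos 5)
Your proof takes essentially the same route as the paper: localize $\B$, note that $\localize\B$ has dimension $n$ and hence is the minimal $\KK$-WA computing $L$, use this to deduce that $\flip_\B$ is injective, and then derive the sandwich from the lower inclusion together with the equality of ranks. You are in fact slightly more careful than the paper in one place — you point out explicitly that $\localize -$ reflects injectivity only on torsion-free modules (the paper invokes "$\localize-$ reflects injections" silently, although the lemma's hypothesis allows an arbitrary rank-$n$ $R$-modular automaton), which is a genuine, if small, refinement. One minor slip in the write-up: the reason $\ker\flip_\B$ consists of torsion elements is not that $\powserw{R}$ is torsion-free but that $\localize\flip_\B$ is injective (being a left factor of the injective $\flip_\A$), so any $m\in\ker\flip_\B$ satisfies $m/1=0$ in $\localize\B(\St)$ and is therefore torsion.
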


\begin{proof}
    $\localize \B$ is (by definition of the rank) an $n$-state $\KK$-weighted automaton computing $L$, and is thus in fact the minimal $\KK$-weighted automaton computing $L$. The image of the morphism of automata $\B \to \Afinal(L)$ by the functor $\localize -$ must thus be an injection (since composed with the canonical $\KK$-linear transformation $\localize(\powserw{\OK}) \to \powserw{\KK}$, it yields the injective morphism $\localize \B \to \Afinal(\localize L)$) and hence so must be the morphism $\B \to \Afinal(L)$ itself since $\localize -$ reflects injections. It follows that \[ \Min L \subseteq \B(\St) \subseteq \powserw{\OK} \]
    Because $\Min L$ and $\B(\St)$ have the same rank, $\localize(\Min L) \cong \localize(\B(\St))$, and so 
    \[ \B(\St) \subseteq \sat_{\powserw{\OK}}(\B(\St)) = \sat_{\powserw{\OK}}(\Min L) \qedhere \]
\end{proof}

\end{document}